\spnewtheorem{definition}{Definition}{\bfseries}{\rmfamily}
\let\c@definition\relax\makeatother
\newaliascnt{definition}{theorem}
\crefname{section}{Sec.}{Sec.(s)}
\crefname{appendix}{Appx.}{Appx.(s)}
\crefname{figure}{Fig.}{Fig.(s)}
\crefname{theorem}{Thm.}{Thm.(s)}
\crefname{definition}{Def.}{Def.(s)}
\crefname{proposition}{Prop.}{Prop.(s)}
\crefname{corollary}{Cor.}{Cor.(s)}
\crefname{lemma}{Lem.}{Lem.(s)}
\crefname{example}{Ex.}{Ex.(s)}
\crefname{remark}{Rem.}{Rem.(s)}
\crefname{equation}{}{}
\newcommand\cutout[1]{}
\begin{document}

\title{Pushdown Normal-Form Bisimulation: A Nominal Context-Free Approach to Program Equivalence
\thanks{This publication has emanated from research supported in part by a grant from Science Foundation Ireland under Grant number 13/RC/2094\_2; and the Cisco University Research Program Fund, a corporate advised fund of Silicon Valley Community Foundation.}}
\titlerunning{Pushdown Normal-Form Bisimulation}
%
\author{Vasileios Koutavas\inst{1} \and Yu-Yang Lin\inst{1} \and Nikos Tzevelekos\inst{2}}
\institute{Trinity College Dublin \and Queen Mary University of London}
%
%
\maketitle
\begin{abstract}
We propose Pushdown Normal Form (PDNF) Bisimulation to verify contextual equivalence in higher-order functional programming languages with local state.
Similar to previous work on Normal Form (NF) bisimulation, PDNF Bisimulation is sound and complete with respect to contextual equivalence.
However, unlike traditional NF Bisimulation, PDNF Bisimulation is also decidable for a class of program terms that reach bounded configurations but can potentially have unbounded call stacks and input an unbounded number of unknown functions from their context.
Our approach relies on the principle that, in model-checking for reachability, pushdown systems can be simulated by finite-state automata designed to accept their initial/final stack content.
We embody this in a stackless Labelled Transition System (LTS), together with an on-the-fly saturation procedure for call stacks, upon which bisimulation is defined.
To enhance the effectiveness of our bisimulation, we develop up-to techniques and confirm their soundness for PDNF Bisimulation.
We develop a prototype implementation of our technique which is able to verify equivalence in examples from practice and the literature that 
were out of reach for previous work.
\end{abstract}

 \section{Introduction}
 \label{sec:intro}
 \newcommand\Ed{\mathcal{E}}
\newcommand\Edb{\Ed}
\enlargethispage{10mm}

The problem of contextual equivalence for programming languages aims at determining whether two program terms exhibit the same operational behaviour within any given program context~\cite{Morris68}.
Although an undecidable problem, relatively recent work is pushing the frontier of decidable equivalence verification
in languages incorporating functional, higher-order paradigms, where the behaviour of a term can depend on external unknown code provided by the context as an argument \cite{hector,coneqct,syteci,KoutavasLT22,KoutavasLT23}.

Normal-Form (NF) bisimulation is a technique that treats unknown code (provided as higher-order arguments) symbolically.
The technique was originally defined for characterising L\'evy-Longo tree equivalence for the lazy lambda calculus~\cite{Sangiorgi:lazylambda} and adapted to languages with call-by-name~\cite{LassenHNF}, call-by-value~\cite{LassenENFB}, nondeterminism~\cite{LassenNondet}, aspects~\cite{JagadeesanPitcherRiely:aspects}, recursive types~\cite{LassenL07}, polymorphism~\cite{LassenLevy:nfbisimpoly}, control with state~\cite{StorvingLassenPOPL07}, state-only \cite{BiernackiLP19}, and control-only~\cite{BiernackiLengletNFB2}. More recently, it was used to create equivalence verification techniques for call-by-value functional languages with and without state~\cite{KoutavasLT22,KoutavasLT23}.

However, even NF bisimulations are prone to unbounded behaviour that needs to be explored to verify equivalence. A main source of such behaviour is the potential repeated nested calls between term and context which lead to unbounded stacks of term continuations being created by the bisimulation {exploration}. 
Such behaviour is common when programming with callback functions, as is the case in instances of the Observer Pattern \cite{design-patterns-book}, shown in the following ML example which models event listeners inspired by JavaScript. Similar examples have been showcased in the literature of program equivalence~\cite{DreyerNB10}.

\begin{example}\label{ex:motiv}
    $M,N : (\Unit \arrow \Unit) * (\Unit \arrow \Unit) \arrow \Unit \arrow \Unit$\\
  \begin{tabular}{@{}l@{\hspace{2ex}}|@{\hspace{2ex}}l@{}}
\begin{lstlisting}[boxpos=m]
${\ma{M=}}$ let createElement (onstart,onend) =
       ref flag = false in
       let event () =
         flag$\,$:=$\,$true; onstart ();
         flag$\,$:=$\,$false; onend (); !flag
       in event
     in createElement
\end{lstlisting}
%
  &
\begin{lstlisting}[boxpos=m]
${\ma{N=}}$ let createElement
         (onstart,onend) =
       let event () =
         onstart ();
         onend (); 0
       in event
     in createElement
\end{lstlisting}
%
\end{tabular}
\end{example} 
{In a proof of equivalence of $M$ and $N$, we can get an unbounded sequence of nested calls to \lstinline{event}, caused by the unknown functions \lstinline{onstart} and \lstinline{onend}.
  This makes the equivalence non-trivial as \lstinline{flag} may change values an arbitrary amount of times. However, each assignment of \lstinline{flag} to true is matched by one setting it back to false because each call of \lstinline{onstart} is matched by a return of this function. In other words, calls and returns of \lstinline{onstart} are well-bracketed and, hence, updates of \lstinline{flag} to true and then false are also well-bracketed.}

  Reasoning with such examples, for instance using Normal-Form bisimulation, requires the creation of an infinite candidate relation (due to the unbounded stack of nested calls) and then prove it a bisimulation~\cite{BiernackiLP19}. Although effective for hand-crafted proofs, such an approach would not work for a verification tool of equivalence, which would need to explore all tuples in the candidate relation.
In previous equivalence verification techniques such as~\cite{KoutavasLT22}, stacks of effectively pure functions were bounded with up-to techniques which however were unable to finitise the exploration\,---\,and thus prove equivalence\,---\, of stateful examples such as the one above.

In this work we propose
\emph{Pushdown Normal Form (PDNF) Bisimulation} {to finitise the exploration of such examples. This is}
an alternative NF bisimulation for a higher-order functional programming language with local state (\cref{sec:lang}) that
abstracts away stacks without losing precision, by relying on the fact that
traces of such interactions form a \emph{context-free language} and, when model-checking for reachability, they
can be simulated {precisely} by finite-state automata designed to accept their initial/final stack content~\cite{BouajjaniEM97,FinkelWW97}.

{We develop PDNF bisimulation on a behavioural LTS of a core-ML language.}
Contrary to the LTS in~\cite{KoutavasLT22} (reviewed in \cref{sec:lts}), the LTS we design here (\cref{sec:the-stackless-lts}) is \emph{stackless} and the definition of PDNF bisimulation incorporates a so-called \emph{saturation procedure}~\cite{BouajjaniEM97,FinkelWW97}, albeit performed on the fly as the bisimulation {exploration} evolves (\cref{sec:pdnf-bisim}).
Our approach follows exact-stack analyses used in Control-Flow Analysis (\cite{CFA2-jour,PDCFA,AAC} and in particular~\cite{P4F}), which similarly remove the need for an explicit continuation stack without losing precision.

This approach allows us to adapt a decidability result from nominal pushdown automata~\cite{ChengK98,MurawskiRT17} to program equivalence of higher-order stateful languages.
PDNF bisimulation equivalence is decidable between program terms that reach bounded stackless configurations, even though they may input an unbounded number of unknown functions from their context
and their corresponding {suppressed} stacks may be unbounded 
(\cref{sec:decidability}). This result is further amenable to up-to techniques, for example considering configurations up to garbage collection.

We establish that PDNF bisimulation is fully abstract for contextual equivalence by relating it to the NF bisimulation of~\cite{KoutavasLT22} (\cref{sec:soundness,sec:completeness}).
Furthermore, we increase the strength of our tool in proving equivalences, and similarly to~\cite{BiernackiLP19,KoutavasLT22},
we develop a number of bisimulation up-to techniques and prove their soundness for PDNF bisimulation (\cref{sec:up-to}).
These are powerful rules that allow us to reduce the size of the relation that we examine for bisimulation.
In particular, apart from simple techniques such as up to garbage collection, name permutation, and beta reductions, we develop up to separation and name reuse.
These two techniques, besides being sound,  are also complete in the sense that if after applying them an inequivalence is found, this is a real inequivalence and no backtracking is needed by the bisimulation verification procedure.

We modified the \Hobbit{} tool of~\cite{KoutavasLT22} to implement a bounded equivalence checker called \SLHobbit{} {(\cref{sec:imp})}, which remains bounded complete, i.e.\ it finds all inequivalences given sufficiently large bounds {and divergence detection.}
Our tool and this work
have the advantage of being able to finitise, and therefore prove, the otherwise infinite NF bisimulation exploration of equivalences that are beyond the reach of \Hobbit{} (\cref{sec:motivating-example}).
Of course, not all cases can be finitised this way, as
contextual equivalence in a Turing complete language is undecidable.
Finally, we discuss related and future work (\cref{sec:outro}).

\section{Motivating Examples}\label{sec:motivating-example}

We presented \Cref{ex:motiv} as a motivating instance of equivalence that can be resolved via PDNF bisimulation.
To give an intuitive understanding of the method, we next look at a simplified version of \Cref{ex:motiv}  and how its NF bisimulation game~\cite{BiernackiLP19,KoutavasLT22} {becomes} infinite because of nested context calls.
%
%
For the next example, and to prepare for the developments in the main body of the paper,
  we shall
follow more closely the style of presentation in~\cite{KoutavasLT22} and
describe the interactions between a term and its context, and the ensuing LTS, using terminology taken from  game semantics~\cite{AJM,HO,Nickau}.\footnote{The terms ``proponent'', ``opponent'' and ``move'' will be all the terminology we use from game semantics in this paper;
 the term ``game'' will almost exclusively refer to the bisimulation game, which {is traditionally} between  \emph{Challenger} and \emph{Defender}.}
In particular,
we shall refer to the examined term as
the \emph{Proponent}, whereas its syntactic context will be the \emph{Opponent}.
The two parties, i.e.\
proponent and opponent, can interact by issuing \emph{moves}, which are simply calls to functions provided by the opposite party and their corresponding returns. The bisimulation game is based on the matching of these moves.

\begin{figure}[t]
  \hspace{-3mm}\includegraphics[scale=0.4]{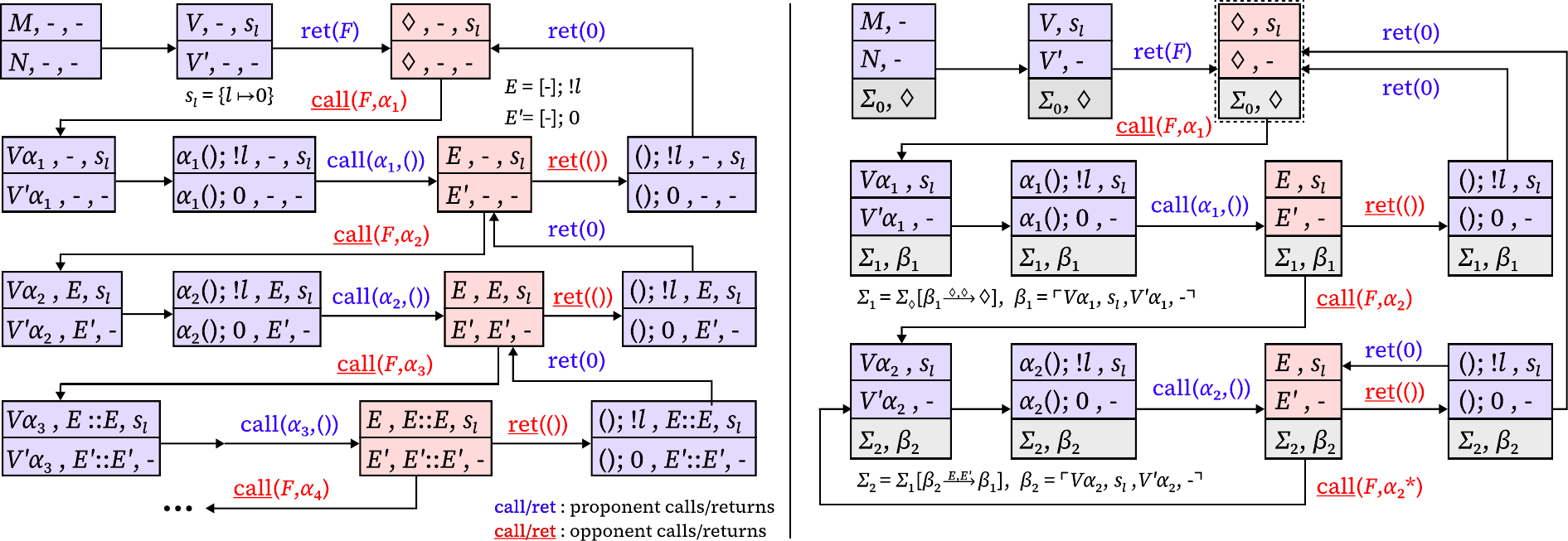}
  \caption{NF bisimulations for terms $M$ and $N$ (\cref{ex:dummy}); standard/stacked (left) and pushdown/stackless (right). We use ``$\diamond$'' to denote the top-level continuation and entry point, and set $\Sigma_0$ to be the empty continuation graph. We write ``-'' for the empty store and continuation stack.}
  \label{fig:hard}
\end{figure}

\begin{example}\label{ex:dummy}
Consider the following equivalent terms, the NF bisimulation game of which is depicted in \cref{fig:hard} (left).
\medskip

\begin{tabular}{@{\qquad}l@{\qquad}|@{\qquad}l}
\begin{lstlisting}[boxpos=m]
$M=$ let x = ref 0 in
${\color{white}M=}$ fun f -> f(); !x

$N=$ fun f -> f(); 0
\end{lstlisting}
  &
\begin{lstlisting}[boxpos=m]
$V=$ fun f -> f(); !l
$(\text{for location }$l$)$

$V'=$ fun f -> f(); 0
\end{lstlisting}
\end{tabular}
\medskip

\noindent
The game involves
pairs of configurations\footnote{For expository reasons, the notation used here for configurations, their components and the LTS is a simplified version of the one used later on when these notions are formally defined.} of the form $(\Phi,K,s)$, where $\Phi$ is either a term of the language (in configurations where proponent plays next) or a continuation (when opponent plays next), $K$ is a continuation stack, and $s$ is a local store. Initially,
proponent returns an abstract function $F$ representing respectively
the functions $V$ and $V'$ (move $\lpropret{F}$).
Next, the bisimulation game can engage in a series of moves as on the left below, where opponent repeatedly calls $F$ with (fresh) arguments $\alpha_1,\alpha_2,\dots$,
\[
  \lopapp{F}{\alpha_1}\,\lpropapp{\alpha_1}{()}\,\lopapp{F}{\alpha_2}\,\lpropapp{\alpha_2}{()}\,\dots\quad
  K = E{::}E{::}\ldots,\; K'=E'{::}E'{::}\ldots
\]
thus leading to unbounded continuation stacks $K$ and $K'$ respectively as on the right.

In \cref{fig:hard} (right) we can see the PDNF bisimulation game.\footnote{The loop transition at the bottom right, labelled $\lopapp{F}{\alpha_2^*}$, represents a transition \emph{for each fresh} $\alpha_2$. This representation is informal and used here for economy to demonstrate finiteness. In fact, the pushdown NF bisimulation would not be finite but, instead, \emph{orbit-finite} (cf.\ \Cref{ex:revisited}).} We observe that now configurations are \emph{stackless} pairs $(\Phi,s)$, and that we have incorporated an additional \emph{environment} component $(\Sigma,\beta)$. The latter is an over-approximation of the (combined) stack structure which records:
\begin{itemize}
\item
  The opponent call that is currently being evaluated, by means of an \emph{entry point} $\beta$: this is simply the pair of configurations $(C_1,C_2)$ that the call led to and, in this case, there is only one such pair
  with terms $V\alpha$ and $V'\alpha$ (and corresponding stores).
\item
  The possible sequencings of entry points $\beta$, using a \emph{continuation graph} $\Sigma$.
  Edges in $\Sigma$ are of the form $\beta'\trans{\Ed_1,\Ed_2}\beta$ which denote that, starting from $\beta$, we are led to an opponent call 
  with continuations $\Ed_1,\Ed_2$ and resulting entry point $\beta'$.
\end{itemize}
We assume by convention that there is a top-level opponent call with $\beta=\diamond$ which starts the bisimulation game. We can see in \cref{fig:hard} that every path in the graph on the left has a corresponding path in the graph on the right. In other words, PDNF bisimulation is sound.
On the other hand, the graph on the right has infeasible paths. For example we can form a path from the initial vertex to the highlighted top-level one with trace:
\[
\lpropret{F}\,\lopapp{F}{\alpha_1}\,\lpropapp{\alpha_1}{()}\,\lopapp{F}{\alpha_2}\,\lpropapp{\alpha_2}{()}\,\lopret{()}\,\lpropret{0}
\]
which breaks the stack discipline (we reach the top level while doing more pushes than pops). However, such spurious paths are harmless and do not affect completeness of our method: all pairs of configurations that are spuriously reached can also be reached by real paths. In fact, the  (highlighted) vertex reached by the path above was already reached after the first move $\lpropret{F}$.
\end{example}



  \section{Language and Semantics}
  \label{sec:lang}
  \newcommand\Nam{\mathsf{ANam}}
\newcommand\Perm{\mathsf{Perm}}
\newcommand\Names{\mathsf{Names}}
\newcommand\orb[1]{\mathsf{orb}(#1)}

\begin{figure*}[t]
  \footnotesize
  \[\begin{array}{r@{\;\;}r@{\;\;}c@{\;\;}l}
    {\Loc:} & l,k &
    \multicolumn{2}{l}{
      \hspace{3em}
    {\Var:}\;\;  x,y,z       
      \hspace{3em}
    {\Con: }\;\;  c 
      \hspace{3em}
    {\Nam: }\;\;  \alpha 
    }\\
    {\Typ:} & T           & \mis & \Bool \mor \Int \mor \Unit \mor T \arrow T \mor T_1 * \ldots * T_n \\
      {\Exp: } & e,M           & \mis & v \mor (\vec e)\mor \arithop{\vec e} \mor \app e e \mor \cond{e}{e}{e} \mor \new l v e \mor \deref l \mor l \asgn e
 \mor \elet{(\vec x)}{e}{e} \\                                             
    {\Val:} & u,v       & \mis & c \mor x\mor \alpha_{T\to T}\mor \lam[f_{T\to T}] x e   \mor (\vec v)                                                                            \\ 
    {\EC:} & E        & \mis & \hole_T \morcondensed (\vec v,E,\vec e) \morcondensed \arithop{\vec v,E,\vec e} \morcondensed \app E e \morcondensed \app v E \morcondensed l \asgn E \morcondensed \cond E e e \morcondensed \elet{(\vec x)}{E}{e} \\
    {\Cxt:} & D           & \mis & \hole_{i,T} \mor e \mor (\vec D) \mor \arithop{\vec D} \mor \app D D \mor l \asgn D \mor \cond D D D \mor \lam[f_{T\to T}] x D
                                    \\
      &&& \mor \new l D D\mor \elet{(\vec x)}{D}{D}       \\
    \end{array}\]
  \[\begin{array}{@{}l@{\;\,}l@{\;\,}lll}
    \redconf{s}{\arithop{\vec c}}    & \redbase & \redconf{s}{w}                      & \text{if } \mathop{op}^{\textsf{arith}}(\vec c) = w & 
    \multirow{2}{*}{\fbox{\;\;\parbox{.2\linewidth}{$\redconf{s}{e}\in\Exp\times\St$\\  
${\St}=\Loc \overset{\mathsf{fin}}{\rightharpoonup} \Val$}}}
\\
    \redconf{s}{\app{(\lam[f] x e)}  v} & \redbase & \redconf{s}{e\sub{x}{v}\sub{f}{\lam[f] x e}}  \\
    \redconf{s}{\elet{(\vec x)}{(\vec v)}{e}} & \redbase & \redconf{s}{e\sub{\vec x}{\vec v}}  \\
    \redconf{s}{\new l v e}          & \redbase & \redconf{s\stupd{l}{v}}{e}          & \text{if } l \not\in\dom{s}\\
    \redconf{s}{\deref l}            & \redbase & \redconf{s}{v}                      & \text{if } s(l)=v \\
    \redconf{s}{l \asgn v}           & \redbase & \redconf{s\stupd{l}{v}}{()}   \\
    \redconf{s}{\cond{c}{e_1}{e_2}}  & \redbase & \redconf{s}{e_i}                    & \multicolumn{2}{l}{\text{if } (c,i) \in \{(\true,1), (\false,2)\}} \\
    \redconf{s}{E\hole[e]}           & \red     & \redconf{s'}{E\hole[e']}            & \text{if } \redconf{s}{e} \redbase \redconf{s'}{e'}
  \end{array}\vspace{1mm}\]
  \hrule\vspace{-1mm}
  \caption{Syntax and reduction semantics of the language \lang.}\label{fig:lang}
\end{figure*}

We work with \lang, a simply-typed call-by-value lambda calculus with local state~\cite{KoutavasLT22}.
The syntax and operational semantics are shown in \cref{fig:lang}.
Expressions (\Exp) include the standard lambda expressions with recursive functions ($\lam x e$), together with location creation ($\new l v e$), dereferencing ($\deref l$), and assignment ($l \asgn e$), as well as standard base type constants ($c$) and operations ($\arithop{\vec e}$).
Locations are mapped to values, including function values, in a store (\St). We write \emptyS for the empty store
and let $\fl{X}$ denote the set of free locations in syntactic or semantic object $X$.
Values consist of boolean, integer, and unit constants, functions and arbitrary length tuples of values. Functions consist of standard functions ($\lam x e$) as well as \emph{abstract} ones ($\alpha$) sourced from a typed-indexed set of countably infinite sets of \emph{abstract names} $\Nam=\biguplus_{T,T'}\Nam_{T\to T'}$.
These correspond to environment (unknown) functions and are used in the open-term LTS used in NF bisimulation. Given an object $X$, we write $\an{X}$ for the set of abstract names appearing in $X$.


The language \lang is simply-typed with typing judgements of the form $\typing{\Delta}{\Lambda}{e}{T}$, where $\Delta$ is a type environment (omitted when empty), $\Lambda$ a store typing and $T$ a type (\Typ).
Abstract functions are explicitly typed in terms, and we assume that said typing is consistent within terms.
The rules of the type system are standard and omitted here. We call an expression $e$ \emph{closed} when $\Lambda\vdash e:T$.

The reduction semantics is by small-step transitions between configurations containing a store and an expression, $\redconf{s}{e} \red \redconf{s'}{e'}$,
defined using single-hole evaluation contexts ($\EC$) over a base relation $\redbase$.
Holes $\hole_T$ are annotated with the type $T$ of closed values they accept, which we may omit to lighten notation.
Stores map locations to closed values; the latter are uniquely typed and, thus, each store $s$ yields a store typing $\Lambda_s$.
Beta substitution of $x$ with $v$ in $e$ is written as $e\sub{x}{v}$.
We write $\redconf{s}{e}\trm$ to denote $\redconf{s}{e} \red^* \redconf{t}{v}$ for some $t$, $v$.
We write $\vec X$ to mean a syntactic sequence, and assume standard syntactic sugar from the lambda calculus.
In our examples we assume an ML-like syntax and implementation of the type system, which is also the concrete
syntax of our tool (same syntax as that used in \Hobbit~\cite{KoutavasLT22}).
We write $\bot$ for a diverging computation.


Contexts $D$ contain multiple, non-uniquely indexed holes $\hole_{i,T}$, where $T$ is the type of value that can replace the hole (each index can have one associated type).
A context is called \emph{canonical} if its holes are indexed $1,\dots,n$, for some $n$.
Given a canonical context $D$ and a sequence of typed expressions $\Lambda\vdash\vec e:\vec T$, notation $D[\vec e]$ denotes the context $D$ with each hole $[\cdot]_{i,T_i}$ replaced by $e_i$.
We omit hole types and indices where possible.
We assume the Barendregt convention for locations, thus replacing context holes avoids location capture (note $\mathsf{ref}$ is a binder).
Standard contextual equivalence~\cite{Morris68} follows.

\begin{definition}[Contextual Equivalence]\label{def:cxt-equiv}
  Expressions $\vdash e_1:T$ and $\vdash e_2:T$
with $\an{e_1}=\an{e_2}=\emptyset$
  are \emph{contextually equivalent}, written as $e_1 \cxteq e_2$, when for all contexts $D$ such that $\vdash D\hole[e_1]: \Unit$ and $\vdash D\hole[e_2]: \Unit$ we have
$
    \redconf{\emptyS}{D\hole[e_1]}\trm ~\text{iff}~
    \redconf{\emptyS}{D\hole[e_2]}\trm
$.
\end{definition}

We finally consider environments $\Gamma \in \Nat \xrightharpoonup{\mathsf{fin}} \Val$ which map natural numbers to closed values.
The concatenation of two such environments $\Gamma_1$ and $\Gamma_2$, written $\Gamma_1,\Gamma_2$ is defined when $\dom{\Gamma_1}\cap\dom{\Gamma_2}=\emptyset$.
We write $(\maps{i_1}{v_1}, \ldots, \maps{i_n}{v_n})$ for a concrete environment mapping $i_1, \ldots, i_n$ to $v_1, \ldots, v_n$, respectively.
Environment $\Gamma$ can be used to fill in holes of context $D$ with matching indices; we refer to the result as $D[\Gamma]$. 
When indices are unimportant we omit them and treat $\Gamma$ environments as lists.


\subsubsection*{Names and permutations.}

It is useful to introduce notation that allows us to easily reason on locations, abstract names and environment indices, which we collectively refer to as \emph{names}:
\[
  \Names = \Loc\cup\Nam\cup\Nat
\]%
These appear in the syntax and semantics of our language
in a \emph{nominal} way: the identity of a given name is immaterial\,--\,what is relevant is how the name compares to other names in its environment.
Technically speaking, our constructions are founded on \emph{nominal sets}~\cite{PittsNominal}. Below, we refer to elements in our syntax and semantics as \emph{objects}.

\begin{definition}[Permutations]\label{def:perms}
  We consider permutations of store locations ($\pi_l$), abstract names ($\pi_\alpha$, which are type-reserving) and environment indices ($\pi_i$), respectively. 
  We combine these in permutations $\pi$ of the form $\pi_l\uplus\pi_\alpha\uplus\pi_i$, which
  we compose as functions (e.g.\ we may write $\pi\circ\pi'$). 
  We restrict our attention to \emph{finitary} permutations $\pi$, i.e.\ such that the set $
      \supp{\pi}=\{x\in\Names\mid \pi(x)\not= x\}$ be finite. We let $\Perm$ be the set of all finitary permutations.
Given names $x,x'$ we write $(x\ x')$ for the permutation that swaps $x$ with $x'$ (and fixes all other names).
\end{definition}

Given an object $X$ and (finitary) permutation $\pi$, we write $\pi\cdot X$ for the result of \emph{applying $\pi$ on $X$}.
The result of applying a permutation on an object $X$ is done as expected, e.g.\ applying
a permutation $\pi_l$ to a store $s$, the former acts on both the domain and range of
the latter. When applying a permutation $\pi_i$, we treat environment index $i$ differently than other instances of the natural number $i$.

\begin{definition}[Nominal set and orbit-finiteness]
  Given an object $X$, its \emph{support} $\supp{X}$ is the least $S\subseteq\Names$ such that permutations fixing all $x\in S$ also fix $X$:
  \[
\forall\pi\in\Perm.\ (\forall x\in S.\,\pi(x)=x)\implies \pi\cdot X=X.
\]
We henceforth assume that all objects have finite support.
$X$ is called \emph{equivariant} if $\supp{X}=\emptyset$, in which case $\pi\cdot X=X$ for all $\pi$.
A set $\mathcal X$ of objects is called a \emph{nominal set} if it is closed under permutation, i.e.\ $\pi\cdot X \in\mathcal X$ for all $\pi\in\Perm$ and $X\in\mathcal X$.
Given $x\in\Names$, we say that \emph{$x$ is fresh for $X$}, and write
$x\fresh X$, when $x\notin\supp{X}$.
\\
Given object $X$, its \emph{orbit} is defined by: \
$
\orb{X}= \{\pi\cdot X\mid \pi\in\Perm\}
$.
Nominal set $\mathcal X$ is \emph{orbit-finite} if its  set of orbits
$
\{\orb{X}\mid X\in\mathcal X\}
$
is finite.
\end{definition}

Note that in finite objects (e.g.\ terms of \lang), the support of an object typically coincides with the set of free names featuring in it. In such a case, writing e.g.\ $\vec\alpha\fresh \vec X$ will stand for $\forall i,j.\,\alpha_i\notin\an{X_j}$.
Orbit-finiteness is central in computability with nominal sets~\cite{BojanczykKLT13,BojanczykKL14} and can be seen as the analogue of finiteness in nominal sets.


  \section{Stacked LTS and NF Bisimulation}
  \label{sec:lts}
  \newcommand{\realtrans}[2][]{\xmapsto[#1]{#2}}
\newcommand\hsig[1]{\mathsf{sig}(#1)}
\renewcommand{\conf}[5]{\ma{\langle #2 \mathop{;} #3 \mathop{;} #4 \mathop{;} #5\rangle}}
\renewcommand{\oldpconf}[5]
{\ma{\color{blue}\langle\color{black} #2 \mathop{;} #3 \mathop{;} #4 \mathop{;} #5 \color{blue}\rangle\color{black}}
}
\renewcommand{\oldoconf}[5]
{\ma{{\color{red}\langle} #2 \mathop{;} {#3} \mathop{;} #4  \mathop{;} #5 {\color{red}\rangle}}
}
\renewcommand{\pconf}[4]
{\ma{\color{blue}\llparenthesis\color{black} #2 \mathop{;} #3 \mathop{;} #4 \color{blue}\rrparenthesis\color{black}}
}
\renewcommand{\oconf}[4]
{\ma{\color{red}\llparenthesis\color{black} #2 \mathop{;} {#3} \mathop{;} #4 \color{red}\rrparenthesis\color{black}}
}

We next recall the LTS and NF bisimulation presented in~\cite{KoutavasLT22}.
As mentioned in \cref{sec:motivating-example}, the LTS is based on game semantics and uses opponent and proponent call and return transitions:
proponent transitions are the moves of an expression interacting with its context;
opponent transitions are the moves of the context surrounding the expression.
These transitions are over \emph{proponent, opponent and divergence configurations}, respectively:
\[\oldpconf{A}{\Gamma}{K}{s}{e},\;\;\oldoconf{A}{\Gamma}{K}{s}{\Ed}
\;\;\text{ and }\;\;\botconf
  .\]
$\botconf$ is a special configuration which is used in order to represent expressions that cannot perform given transitions (cf.~Remark~\ref{rem:bots}).
In other configurations:
\begin{itemize}
  \item $\Gamma$ is an environment indexing proponent functions known to opponent;
  \item $K$ is a stack of \emph{continuations} $\Ed$, created by opponent calls; a {continuation} is either an {evaluation context $E$} or the constant $\diamond$ (for the \emph{top-level}, empty continuation);
  \item $s$ is the store containing proponent locations;
  \item $\Ed$ is a {continuation},
    and is either {the most recent evaluation context $E$} or $\diamond$;
  \item $e$ is the expression reduced in proponent configurations.
  \end{itemize}
Given a configuration $C$, we write $C.\Gamma$ ($C.K$, etc.) for the first  (second, etc.) component of $C$; if $C=\botconf$ then by convention $C.\_=\bot$.
We shall use $\Phi$ to range over $\Ed$ and $e$.

Compared to~\cite{KoutavasLT22}, we have made minor technical modifications in the structure of configurations to ensure uniformity between this and the stackless LTS of the next section.
In particular we 
(1) drop sets of abstract names from configurations, replacing them with a mutual freshness condition for new abstract names in the bisimulation (\cref{def:bisim});
(2) separate the most recent evaluation context from those stacked in $K$ in opponent configurations.
To ensure that $\Ed=\diamond$ corresponds to a top-level configuration we require that opponent configurations $\oldoconf{A}{\Gamma}{K}{s}{\Ed}$ satisfy the condition on the left below,
while the push operation on stacks $K$ is defined as on the right.
\[
\begin{array}{l@\qquad|@{\qquad}l}
\Ed=\diamond\iff K=\noe
&
  \Ed,K =\begin{cases} E,K &\text{if }\Ed=E\\
          \noe &\text{if } \Ed=\diamond\text{ and } K={\noe}\\
          \text{undefined} &\text{otherwise}
          \end{cases}
\end{array}
\]
The LTS uses \emph{moves} of the forms:
$\eta\, ::=\, \lopapp{i}{D[\vec\alpha]}\mid \lopret{D[\vec\alpha]}\mid
  \lpropapp{\alpha}{D}\mid \lpropret{D}
$.
Underlined moves are \emph{opponent moves}, and the rest are \emph{proponent moves}.
Contexts $D$ are picked from the following restricted grammar (of values with higher-order holes):
\[
D^\bullet ::= c\mid\hole_{i,T\to T}\mid (\vec D_\bullet)
\]
Given such a context $D^\bullet$, we can derive its \emph{hole signature} $\hsig{D^\bullet}\in\Nat^*$ setting:
\[
\hsig{c}=\varepsilon,\quad
\hsig{\hole_{i,T}}=i,\quad
\hsig{(D_1^\bullet,\dots,D_n^\bullet)}=\hsig{D_1^\bullet},\dots,\hsig{D_n^\bullet}.
\]
We stipulate that $\hsig{D^\bullet}$ must be a non-repeating sequence (i.e.\ every hole appears exactly once).
Given a value $v$, we can extract its \emph{ultimate pattern} \cite{LassenL07}, which is a pair $(D^\bullet,\Gamma)$,
and extend $\ulpatt$ to types through the use of abstract function names:
\begin{align*}
  (D^\bullet,\Gamma)\in\ulpatt(v) &\iff v=D^\bullet[\Gamma]\land\dom{\Gamma}=\{i\mid i\in\hsig{D^\bullet}\}
\\
  (D^\bullet,\Gamma) \in\ulpatt(T) &\iff {}\vdash D^\bullet\hole[\Gamma] : T\land
\Gamma:\{i\mid i\in\hsig{D^\bullet}\}\to\Nam 
\end{align*}
In the latter case, we write $(D^\bullet,\Gamma)$ simply as $(D^\bullet,\vec\alpha)$, where $\vec\alpha=\Gamma(i_1),\dots,\Gamma(i_k)$ and $i_1,\dots,i_k=\hsig{D^\bullet}$.
For economy, we will henceforth denote these contexts by $D$.

\begin{definition}[Stacked LTS]\label{def:lts}
The LTS is defined by the rules in \cref{fig:lts}. We write $C\realtrans{\eta}C'$ if $C\trans{\eta}C'$ without using the \textsc{Response} rule. We write $C\downarrow$ if $C=\oldoconf{A}{\Gamma}{\noe}{s}{\diamond}$.
\end{definition}

\begin{figure*}[t] 

  \[\begin{array}{@{}lll@{}}
    \irule![PropCall][propappf]{
      (D,\Gamma') \in \ulpatt(v)
    }{\nbox{
      \oldpconf{A}{\Gamma}{K}{s}{E[\app {\alpha} v]}
      \trans{\lpropapp{\alpha}{D}} 
      \oldoconf{A}{\Gamma,\Gamma'}{K}{s}{E}
    }}
    \\
      \irule![PropRet][propretf]{
      (D,\Gamma') \in \ulpatt(v)
    }{
      \oldpconf{A}{\Gamma}{\Ed,K}{s}{v}
      \trans{\lpropret{D}}
      \oldoconf{A}{\Gamma,\Gamma'}{K}{s}{\Ed}
    }
    \\
     \irule![OpCall][opappf]{
       \nbox{
        \vec\alpha\fresh\Gamma,K,s,\Ed\land (D,\vec \alpha) \in \ulpatt(T)\\
       {\hspace{-4mm}}\land\Lambda_s \vdash \Gamma(i) : T \arrow T'\land
        \app {\Gamma(i)} {D\hole[\vec \alpha]} \funred e
      }
    }{
      \oldoconf{A}{\Gamma}{K}{s}{\Ed}
      \trans{\lopapp{i}{D\hole[\vec \alpha]}}
      \oldpconf{A\uplus\vec\alpha}{\Gamma}{\Ed,K}{s}{e}
    }
    \\
    \irule![OpRet][opretf]{
        \vec\alpha\fresh\Gamma,K,s,E\land (D,\vec \alpha) \in \ulpatt(T)
    }{
      \oldoconf{A}{\Gamma}{K}{s}{E\hole_T}
      \trans{\lopret{D\hole[\vec\alpha]}}
      \oldpconf{A\uplus\vec\alpha}{\Gamma}{K}{s}{E\hole[D[\vec\alpha]]}
    }
    \\
    \irule![Tau][tau]{
      \redconf{s}{e} \red \redconf{s'}{e'}
    }{
      \oldpconf{A}{\Gamma}{K}{s}{e}
      \trans{\tau}
      \oldpconf{A}{\Gamma}{K}{s'}{e'}
   }
    \\
    \irule![Response][dummy]{
      \nbox{\eta \not= \tau\text{ and }
      C\not\wtrans{\eta}\text{ from other rules}}
    }{
      C \trans{\eta} \botconf
    }
    \\[-1mm]
  \end{array}\]
  \hrule
  \caption{The stacked Labelled Transition System (following~\cite{KoutavasLT22}). We let $vu\funred e$ mean $e=\alpha u$ when $v=\alpha$; and $e=e'[u/x][\lam[f] x e'/f]$ when $v=\lam[f] x e'$.}\label{fig:lts}
\end{figure*}

We next introduce a notion of boundedness on terms that examines the sizes of all their possible descendant configurations in their LTS, ignoring stacks $K$.

\begin{definition}\label{def:cxt-free-term}
  Define a size function $\|\cdot\|$ for expressions inductively as:
  \begin{align*}
    \|c_\Int\| &= |c|+1\ & \|x\| &=1 &\|\alpha_{T\to T}\|&=1
    & \|\lam[f_{T\to T}] x e\| &= 1+\|e\|\\
\|c_T\| &= 1\,(T\neq\Int) &
 \|\deref l\| &=1 & \|l \asgn e\| &= 1+\|e\| &\|\new l v e\| &= 1+\|v\|+\|e\|
  \end{align*}
  and $\|\kappa(e_1,\dots,e_n)\|=1+\|e_1\|+\cdots+\|e_n\|$ for all other $n$-ary syntactic constructs $\kappa$. Extend this to
continuations by $\|E\|=\|E[x]\|$ and $\|\diamond\|=1$, and to
  configurations by:
\begin{align*}
\|\botconf\| &=1 & \|\conf{A}{\Gamma}{K}{s}{\Phi}\| &= \max\left(\sum\nolimits_{i\in\dom{\Gamma}}\|\Gamma(i)\|,\sum\nolimits_{l\in\dom{s}}\|s(l)\|,\|\Phi\|\right)
 \end{align*}
Call expression $\vdash e:T$ \emph{context-free with bound $k$} if the set 
$
\{ \|C\| \mid \exists t.\ \oldpconf{\emptyA}{\emptyG}{\emptyK}{\emptyS}{e}\transs{t}C\}
$
is upper-bounded by $k\in\mathbb N$. Let $e$ be \emph{context-free} if it is context-free with some bound $k$.
\end{definition}

Thus, an expression is context-free when its stacked LTS can be represented as a nominal pushdown system~\cite{ChengK98,MurawskiRT17}. We shall show equivalence is decidable for these expressions, {using the PDNF bisimulation in the following section}.

We next present NF bisimulation. We write that \emph{move $\eta$ introduces $\vec\alpha$} if $\eta$ is a proponent move and $\vec\alpha$ is empty, or $\eta\in\{\lopapp{i}{D[\vec\alpha]},\lopret{D[\vec\alpha]}\}$ for some $i,D$.
Moreover, $\wtrans{\eta}$ means ${\trans{\tau}}{}^*$, when $\eta=\tau$; and $\wtrans{\tau}\trans{\eta}\wtrans{\tau}$ otherwise.

\begin{definition}[NF Bisimulation]\label{def:bisim}
    Configurations $C_1,C_2$ are called \emph{compatible} whenever $\botconf\in\{C_1,C_2\}$, or
  $C_1,C_2$ have same polarity and $\dom{C_1.\Gamma}=\dom{C_2.\Gamma}$.
Relation \bisim{R} between compatible configurations is a \emph{weak simulation} when
  for all $C_1 \bisim*{R} C_2$:
  \begin{compactitem}
    \item if $C_1\downarrow$ then $C_2\downarrow$,
   \item if $C_1 \realtrans{\eta} C_1'$ with $\eta$ introducing $\vec\alpha\fresh C_2$ then  $C_2 \wtrans{\eta} C_2'$ and $C_1' \bisim*{R} C_2'$.
\end{compactitem}
  If \bisim{R}, $\bisim{R}^{-1}$ are weak simulations then
  \bisim{R} is a \emph{weak bisimulation}.
  Similarity $(\simil)$ and bisimilarity $(\bisimil)$ are the largest weak 
  simulation and bisimulation, respectively.
  \defqed
\end{definition}

\begin{remark}\label{rem:bots}
 Following~\cite{KoutavasLT22}, 
any proponent configuration that cannot match a standard bisimulation transition challenge can trivially respond to the challenge by transitioning into $\botconf$ by the \iref{dummy} rule in \cref{fig:lts}.
By the same rule, this configuration can trivially perform all non-$\tau$ transitions.
  While in \emph{loc.~cit.} there is an {explicit termination transition from top-level, non-$\botconf$ configurations, here we choose instead to use the termination predicate $\downarrow$ to signify the end of a complete trace}.
{To obtain determinacy, we also impose trivial transitions to $\botconf$ to take place only if same-labelled transitions are not possible by the other rules.
The differences made in this section are inessential
  leaving the full abstraction result of~\cite{KoutavasLT22} unaffected.}
\end{remark}
\begin{definition}[NF Bisimilar Expressions]
  Expressions $\vdash e_1: T$ and $\vdash e_2: T$
with $\an{e_1}=\an{e_2}=\emptyset$
  are \emph{NF bisimilar}, written $e_1 \bisimil e_2$, when 
  $\oldpconf{\emptyA}{\emptyG}{\emptyK}{\emptyS}{e_1} \bisimil
  \oldpconf{\emptyA}{\emptyG}{\emptyK}{\emptyS}{e_2}$.
  \defqed
\end{definition}


\begin{theorem}[Full abstraction~\cite{KoutavasLT22}]\label{theorem:SC}
  $e_1 \bisimil e_2$ iff $e_1 \cxteq e_2$.  
\qed\end{theorem}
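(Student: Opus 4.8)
The plan is to obtain this as a corollary of the full abstraction theorem of~\cite{KoutavasLT22}, by checking that the modifications made in this section (and collected in Remark~\ref{rem:bots}) leave the induced bisimilarity relation, hence the theorem, intact. Concretely, I would first exhibit a transition-preserving correspondence between the LTS of~\cite{KoutavasLT22} and the one of \cref{def:lts}. The two differ only in that here we (i) drop the explicit set of abstract names from configurations and instead enforce mutual freshness of newly introduced $\vec\alpha$ in \cref{def:bisim}; (ii) keep the most recent evaluation context $\Ed$ separate from the stack $K$ in opponent configurations, with the side-condition $\Ed=\diamond\iff K=\noe$; and (iii) replace the explicit top-level termination move of \emph{loc.\ cit.} by the predicate $\downarrow$, together with a determinised use of the \iref{dummy} rule (trivial $\botconf$-transitions only where no other rule applies). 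None of these changes alters which traces a configuration performs nor the move-matching discipline, so a relation is a weak (bi)simulation in the present system iff its image under the correspondence is one in~\cite{KoutavasLT22}; hence $\oldpconf{\emptyA}{\emptyG}{\emptyK}{\emptyS}{e_1}\bisimil\oldpconf{\emptyA}{\emptyG}{\emptyK}{\emptyS}{e_2}$ here precisely when $e_1,e_2$ are NF bisimilar there. The statement then follows immediately.

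If instead a self-contained argument is wanted, I would prove the two implications separately. For \emph{soundness} ($\mathord{\bisimil}\subseteq\mathord{\cxteq}$): establish a compositionality lemma stating that the reduction behaviour of $\redconf{\emptyS}{D[e]}$ is faithfully tracked by the interaction of the LTS of $e$ with the opponent play read off from $D$; this transports $\redconf{\emptyS}{D[e_1]}\trm$ to a complete trace of $e_1$, which $\bisimil$ matches by a complete trace of $e_2$, which transports back to $\redconf{\emptyS}{D[e_2]}\trm$ (and symmetrically). For \emph{completeness} ($\mathord{\cxteq}\subseteq\mathord{\bisimil}$): prove a definability lemma — every finite opponent play (a sequence of $\lopapp{i}{D[\vec\alpha]}$ and $\lopret{D[\vec\alpha]}$ moves, with the stores it visits) is realisable by some context $D$, with abstract names $\vec\alpha$ implemented as context-supplied callbacks — and then show that if $e_1\not\bisimil e_2$ a distinguishing play yields such a separating context, using $\bot$ to convert a trace mismatch into a termination mismatch.

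The main obstacle in the self-contained route is the definability lemma: one must simulate, inside a single context, the opponent's ability to re-invoke proponent functions recorded in $\Gamma$ at arbitrary later points and to supply proponent with fresh higher-order arguments $\vec\alpha$ that are themselves later called — i.e.\ to encode the entire stack discipline and the name-passing interface using only references and lambda abstraction. In the citation-based route, the only genuine work is the bookkeeping of points (ii)–(iii): verifying that separating the top evaluation context and using $\downarrow$ instead of an explicit end-of-trace transition does not change the set of complete traces, which is routine given the side-condition relating $\Ed=\diamond$ to $K=\noe$ and the push operation on $K$.
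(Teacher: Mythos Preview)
Your proposal is correct and matches the paper's approach: the paper gives no proof of this theorem at all, simply citing it from~\cite{KoutavasLT22} and relying on Remark~\ref{rem:bots} to justify that the cosmetic changes to the LTS (dropping the name set, separating $\Ed$ from $K$, replacing the termination move by $\downarrow$, determinising \textsc{Response}) leave the induced bisimilarity unchanged. Your citation-based route is exactly this, and your additional sketch of a self-contained soundness/completeness argument goes well beyond what the paper provides.
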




%
%
%

Finally,  similarity is a nominal set and closed under $\tau$-transitions (cf.~\cite{KoutavasLT22}).
\begin{lemma}\label{lem:closures}
  Given $C_1\simil C_2$, 
  if $C_i\wtrans{\tau}C_i'$ or
  $C_i'\wtrans{\tau}C_i$ (for $i=1,2$) then $C_1'\simil C_2'$.
  Moreover, for all $\pi\in\Perm$, $\pi\cdot C_1\simil \pi\cdot C_2$.
\qed\end{lemma}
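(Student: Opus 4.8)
The plan is to prove both parts by coinduction: for each statement, exhibit a relation $\mathcal R$ between compatible configurations, containing $\simil$ (possibly after closing it under some operation), and verify the two weak-simulation clauses; by maximality of $\simil$ this gives $\mathcal R\subseteq{\simil}$. I would settle equivariance first, since the $\tau$-closure argument uses it. For equivariance, fix $\pi\in\Perm$ and let $\mathcal R=\{(\pi\cdot C_1,\pi\cdot C_2)\mid C_1\simil C_2\}$. By inspection of the rules in \cref{fig:lts}, the transition relations $\trans{\eta}$ and $\wtrans{\eta}$, the predicate $\downarrow$, compatibility, and the notion of a move ``introducing $\vec\alpha$'' are all equivariant (the constants $\diamond$, $\noe$, $\botconf$ have empty support). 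Hence a challenge $\pi\cdot C_1\realtrans{\eta}D_1$ with $\eta$ introducing $\vec\alpha\fresh\pi\cdot C_2$ becomes, under $\pi^{-1}$, a challenge $C_1\realtrans{\pi^{-1}\cdot\eta}\pi^{-1}\cdot D_1$ with $\pi^{-1}\cdot\eta$ introducing $\pi^{-1}\cdot\vec\alpha\fresh C_2$; answering it via $C_1\simil C_2$ and acting by $\pi$ back yields the required response, and the termination clause transfers identically. So $\pi\cdot{\simil}\subseteq{\simil}$; instantiating at $\pi^{-1}$ and acting by $\pi$ gives the reverse inclusion, so $\simil$ is a nominal set and $\pi\cdot C_1\simil\pi\cdot C_2$ whenever $C_1\simil C_2$.

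Next I would record three structural facts about $\tau$, read off from the \textsc{Tau} rule and the determinacy of $\red$ (unique decomposition into evaluation context and redex): (i) $\trans{\tau}$ is a partial function, links only proponent configurations, and preserves the $\Gamma$- and polarity-components, so compatibility is maintained; (ii) reduction creates no abstract names, so $\an{C'}\subseteq\an{C}$ whenever $C\trans{\tau}C'$, hence any abstract name fresh for $C$ stays fresh for $C'$; (iii) a proponent configuration that can perform a $\tau$-step has no $\realtrans{\eta}$-transition with $\eta\neq\tau$, because its expression is reducible, hence neither a value nor of the form $E[\app{\alpha}{v}]$. These combine into a commutation principle: if $C\wtrans{\tau}C'$ and $C\wtrans{\eta}C''$ with $\eta\neq\tau$, then already $C'\wtrans{\eta}C''$ (by (i) both $\tau$-runs lie along a common line, by (iii) the one preceding the visible step is maximal, and the side-condition of the \textsc{Response} rule handles the degenerate case where the visible step is a trivial $\botconf$-transition).

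For the $\tau$-closure it suffices to prove four one-sided statements and compose them pairwise, each composition being immediate: assuming $C_1\simil C_2$, (a) $C_1\wtrans{\tau}C_1'$ implies $C_1'\simil C_2$; (b) $C_1'\wtrans{\tau}C_1$ implies $C_1'\simil C_2$; (c) $C_2\wtrans{\tau}C_2'$ implies $C_1\simil C_2'$; (d) $C_2'\wtrans{\tau}C_2$ implies $C_1\simil C_2'$. Three of these are routine. For (a) I take $\mathcal R={\simil}\cup\{(C_1',C_2)\mid C_1\simil C_2,\ C_1\wtrans{\tau}C_1'\}$ and prove its transfer property by an inner induction on the length of $C_1\wtrans{\tau}C_1'$, pushing the leading $\tau$-steps of the challenge-run through $C_2$ one at a time via $C_1\simil C_2$ (each introduces no names, and freshness of the challenge's names for $C_2$ survives in the $\tau$-descendants of $C_2$ by (ii)); the termination clause is vacuous since a non-$\bot$ configuration reached by a non-empty $\wtrans{\tau}$ is proponent, hence not $\downarrow$. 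For (b) I take ${\simil}\cup\{(C_1',C_2)\mid C_1'\wtrans{\tau}C_1\simil C_2\}$: by (iii) a still-$\tau$-reducible $C_1'$ can only be challenged along $\tau$, by (i) such a challenge lands on the unique $\tau$-successor of $C_1'$, still a $\tau$-ancestor of $C_1$, and the defender answers with the empty $\tau$-run. For (d) I take ${\simil}\cup\{(C_1,C_2')\mid C_1\simil C_2,\ C_2'\wtrans{\tau}C_2\}$: by (ii) names fresh for $C_2'$ are fresh for $C_2$, so a challenge to $C_1$ answered from $C_2$ is, after prepending the $\tau$-steps, answered from $C_2'$.

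The delicate case is (c), with $\mathcal R={\simil}\cup\{(C_1,C_2')\mid C_1\simil C_2,\ C_2\wtrans{\tau}C_2'\}$: a challenge $C_1\realtrans{\eta}C_1''$ may introduce names $\vec\alpha$ that are fresh for $C_2'$ yet not for $C_2$, since the run $C_2\wtrans{\tau}C_2'$ may have discarded them (by (ii) names can only be lost along $\tau$). My plan is to first $\alpha$-rename the challenge with a permutation $\pi$ swapping $\vec\alpha$ with fresh names $\vec\alpha'$ — legitimate by equivariance of the LTS since $\vec\alpha,\vec\alpha'\fresh C_1$ — then answer $\pi\cdot\eta$ from $C_2$ using $C_1\simil C_2$ (valid, as $\pi\cdot\eta$ introduces $\vec\alpha'\fresh C_2$), commute the $\tau$-steps between $C_2$ and $C_2'$ past that answer by the commutation principle, and finally act by $\pi$ again (which fixes $C_2'$ since $\vec\alpha,\vec\alpha'\fresh C_2'$), invoking equivariance of $\simil$ to relate the resulting targets; for $\eta=\tau$ the renaming is trivial and one concludes directly by (i). This reconciliation between the freshness discipline of the bisimulation game and the fact that $\tau$-absorption on the defender side is not transparent to the set of live abstract names is, I expect, the main obstacle; the only remaining subtlety is the case analysis on whether the visible step is a genuine move or a trivial $\botconf$-transition, needed to justify the commutation principle. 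Composing (a)--(d) then yields the lemma.
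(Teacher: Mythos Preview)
The paper does not give its own proof of this lemma: it is stated with a trailing \qed and attributed to~\cite{KoutavasLT22}. Your proposal is a correct and carefully structured coinductive argument that would serve as a self-contained proof; the decomposition into equivariance first and then the four one-sided $\tau$-closures (a)--(d), with (c) singled out as the case requiring a renaming detour, is the standard route.

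One small caveat: your structural fact (i), that $\trans{\tau}$ is a partial function, is only literally true up to the choice of fresh location in the $\mathsf{ref}$ rule. The paper records this as \cref{def:beta-move} (beta-moves are deterministic up to a permutation of freshly allocated locations). This does not invalidate your commutation principle or your arguments for (b) and (c), but it means that when you align two $\tau$-runs ``along a common line'' you may need to insert a location permutation and appeal to the equivariance you established first; you already do exactly this for abstract names in (c), so the extension is routine.
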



  \section{Stackless LTS and Pushdown NF Bisimulation}
  \label{sec:stackless-lts}
\renewcommand{\pconf}[4]
{\ma{\color{blue}\llparenthesis\color{black} #2 \mathop{;} #3 \mathop{;} #4 \color{blue}\rrparenthesis\color{black}}
}
\renewcommand{\oconf}[4]
{\ma{\color{red}\llparenthesis\color{black} #2 \mathop{;} {#3} \mathop{;} #4 \color{red}\rrparenthesis\color{black}}
}
\newcommand{\cconf}[4]{\ma{\llparenthesis  #2 \mathop{;} #3 \mathop{;} #4 \rrparenthesis}}
\newcommand\botcconf{\ma{\llparenthesis\bot\rrparenthesis}}
\newcommand{\betais}[1]{\ulcorner #1 \urcorner}
\newcommand{\betaisR}[2]{\llcorner #1 \lrcorner_{#2}}
\newcommand{\ocalls}[1]{\ma{\mathsf{ocalls}(#1)}}
\newcommand\transNR[1]{\trans[{\sf NR}]{#1}}
\newcommand\wtransNR[1]{\wtrans[{\sf NR}]{#1}}
\newcommand\transsNR[1]{\transs[{\sf NR}]{#1}}
\newcommand\similNR{\simil_{\sf NR}}
\newcommand\bisimilNR{\ma{\approx_{\sf NR}}}
\newcommand\realtransNR[1]{\realtrans[{\sf NR}]{#1}}
\newcommand\sigb{_{\Sigma,\beta}}
\newcommand\sigbr{_{\Sigma^{-1},\beta^{-1}}}
\newcommand\sigbrp{_{\Sigma^{-1},\beta'^{-1}}}
\newcommand\sigbb{_{\Sigma',\beta'}}
\newcommand\leftdiverge{\bisim*R_{\uparrow\bot}}
\newcommand\Kont{\mathsf{Kont}}
\newcommand\bisimilPD{\bisimil_{\mathsf{PD}}}
\newcommand\toSig[2][\Sigma]{\mathrel{\trans{#2}_{#1}}}
\newcommand\toSigs[2][\Sigma]{\mathrel{\transs{#2}_{#1}}}
\newcommand\toSigsD[2][\Sigma]{\mathrel{\transs{#2}_{#1}}\!\diamond}

\subsection{The Stackless LTS}\label{sec:the-stackless-lts}
In the LTS of \Cref{fig:lts} the stack is manipulated solely by rules
{\sc OpCall} (push $\Ed$ on the stack) and {\sc PropRet} (pop last $\Ed$). Thus, in a bisimulation game where both component configurations are not blocked (i.e.\ not $\botconf$), the stack operations of the two components are synchronised. For instance, if we currently are in configuration pair $(C_1,C_2)$ and a challenge is made in $C_1$ that pushes some $\Ed_1$ on $C_1.K$, then the response in $C_2$ must push some $\Ed_2$ on $C_2.K$. The bisimulation game can thus be seen as using a single stack of pairs $(\Ed_1,\Ed_2)$. This in turn allows us to remove the stack component from configurations and attach it to bisimulations as an \emph{environment} component (which we can then apply abstractions on). This is the intuition behind the stackless LTS that we present next.


We start off with stackless configurations, which  will  be triples of the forms:
\[
\text{Proponent: }\;  \pconf{A}{\Gamma}{s}{e}\,;\quad\text{Opponent: }\;
\oconf{A}{\Gamma}{s}{\Ed}\;\text{ or }\;\oconf{A}{\Gamma}{s}{\chi}\,;\quad
\text{Divergence: }\; \botcconf\,.
\]
Here the constant $\chi$
stands for an unspecified continuation and it is used merely as a placeholder so that we can later apply a substitution of the form $\sub{\chi}{\Ed}$. This will become clearer in \Cref{def:SLbisim}; for now we can think that a configuration of the form $\oconf{A}{\Gamma}{s}{\chi}$ may silently reduce to $\oconf{A}{\Gamma}{s}{\Ed}$ for selected previously encountered continuations $\Ed$.

\begin{definition}[Stackless LTS]\label{def:SLlts}
  The stackless LTS has the exact same rules as those in \cref{fig:lts}, with the exception that configurations are now stackless, $K$ is dropped from the side conditions, and rules {\sc OpCall} and {\sc PropRet} have the following transitions while maintaining the same side-conditions (see~\cref{sec:stackless-lts-full}):
  \begin{align*}
     \oconf{A}{\Gamma}{s}{\Ed}
      \trans{\lopapp{i}{D\hole[\vec \alpha]}}
      \pconf{A\uplus\vec\alpha}{\Gamma}{s}{e}
      &&
      \pconf{A}{\Gamma}{s}{v}
      \trans{\lpropret{D}}
      \oconf{A}{\Gamma,\Gamma'}{s}{\chi}
  \end{align*}
  {We write $C{\realtrans{\eta}}C'$ if $C{\trans{\eta}}C'$ without using the \textsc{Response} rule,
  and $C{\downarrow}$ if $C=\oconf{A}{\Gamma}{s}{\diamond}$.}
\end{definition}



\begin{example}\label{ex:SL-dummy}
  Recall below the equivalent terms $M$ and $N$ from our introductory \cref{ex:dummy}:
  \medskip

\noindent
  \begin{tabular}{@{}l@{\quad}|@{\quad}l@{\quad}|@{\quad}l@{}}
\begin{lstlisting}[boxpos=m]
$M=$ let x = ref 0 in
${\color{white}M=}$ fun f -> f(); !x

$N=$ fun f -> f(); 0
\end{lstlisting}
  &
\begin{lstlisting}[boxpos=m]
$V_l=$ fun f -> f(); !l
$(\text{for location }$l$)$

$V'=$ fun f -> f(); 0
\end{lstlisting}
    &
      \begin{lstlisting}[boxpos=m]
$e_{l,\alpha}=$ $\alpha$(); !l
$(\text{for abstract name }\alpha)$

$e_\alpha'=$ $\alpha$(); 0
\end{lstlisting}
\end{tabular}
\medskip

\noindent
Their stackless LTS's include transitions:
\begin{align*}
&\pconf{\emptyA}{\emptyG}{\emptyS}{M}\transs{\tau}\pconf{\emptyA}{\emptyG}{s_l}{V_l}\trans{\lpropret{\hole}}\oconf{\emptyA}{\Gamma_l}{s_l}{\chi}=C_\chi \text{ with }\Gamma_l=\maps{1}{V_l},\, s_l=\{l\mapsto 0\}\\
&C_\chi\sub{\chi}{\diamond}=  \oconf{\emptyA}{\Gamma_l}{s_l}{\diamond}\trans{\lopapp{1}{\alpha_1}}\pconf{\alpha_1}{\Gamma_l}{s_l}{e_{l,\alpha_1}}\trans{\lpropapp{\alpha_1}{()}}\oconf{\alpha_1}{\Gamma_l}{s_l}{E_l}\text{ with }E_l=\hole;{!l}
  \\
  &\oconf{\alpha_1}{\Gamma_l}{s_l}{E_l}\begin{cases}
\trans{\lopret{()}}\pconf{\alpha_1}{\Gamma_l}{s_l}{E_l[()]}\transs{\tau}\trans{\lpropret{0}}\oconf{\alpha_1}{\Gamma_l}{s_l}{\chi}\\
\trans{\lopapp{1}{\alpha_2}}\pconf{\alpha_1,\alpha_2}{\Gamma_l}{s_l}{e_{l,\alpha_2}}\trans{\lpropapp{\alpha_2}{()}}\oconf{\alpha_1,\alpha_2}{\Gamma_l}{s_l}{E_l}
    \end{cases}\\[2mm]\hline
  &\pconf{\emptyA}{\emptyG}{\emptyS}{N}\transs{\tau}\pconf{\emptyA}{\emptyG}{\emptyS}{V'}\trans{\lpropret{\hole}}\oconf{\emptyA}{\Gamma'}{\emptyS}{\chi}=C_\chi' \text{ with }\Gamma'=\maps{1}{V'}\\
  &C_\chi'\sub{\chi}{\diamond}=  \oconf{\emptyA}{\Gamma'}{\emptyS}{\diamond}\trans{\lopapp{1}{\alpha_1}}\pconf{\alpha_1}{\Gamma'}{\emptyS}{e'_{\alpha_1}}\trans{\lpropapp{\alpha_1}{()}}\oconf{\alpha_1}{\Gamma'}{\emptyS}{E'}\text{ with }E'=\hole;0
  \\
  &\oconf{\alpha_1}{\Gamma'}{\emptyS}{E'}\begin{cases}
\trans{\lopret{()}}\pconf{\alpha_1}{\Gamma'}{\emptyS}{E'[()]}\transs{\tau}\trans{\lpropret{0}}\oconf{\alpha_1}{\Gamma'}{\emptyS}{\chi}\\
\trans{\lopapp{1}{\alpha_2}}\pconf{\alpha_1,\alpha_2}{\Gamma'}{\emptyG}{e'_{\alpha_2}}\trans{\lpropapp{\alpha_2}{()}}\oconf{\alpha_1,\alpha_2}{\Gamma'}{\emptyG}{E'}
    \end{cases}
\end{align*}
Note that the above are not complete descriptions of the LTS's as
we have not explored all possible continuations that can instantiate the abstract continuation $\chi$. As we shall see next, in fact, only a restricted set of relevant instantiations needs to be considered.
  \end{example}

\subsection{Pushdown Normal Form Bisimulation}\label{sec:pdnf-bisim}
Having disengaged stacks from configurations, we need to engineer bisimulations to account for stack discipline during the bisimulation game. 
  Following~\cite{P4F}, we employ an abstraction  which
makes part of \emph{saturation algorithms}~\cite{BouajjaniEM97,FinkelWW97} that finitise pushdown systems with a finite number of control states. We abstract  stacks by so-called \emph{continuation graphs}, which consist of:
\begin{itemize}
  \item \textbf{Vertices ($\beta$, etc.):} these represent pairs of Proponent function entry points, that is, {Proponent} configuration pairs $(C_1,C_2)$ reached after an Opponent call (i.e.\ after a push). We shall write $\beta=\betais{C_1,C_2}$.
\item \textbf{Edges ($\beta'\xrightarrow{\Ed_1,\Ed_2}\beta$):} these are directed and labelled with pairs of continuations, and can be seen as procedure summaries.
  \todo{VK: does the term "procedure summary" come from some particular paper?}
  An edge $\betais{C_1',C_2'}\xrightarrow{(\Ed_1,\Ed_2)}\betais{C_1,C_2}$ means that playing the bisimulation game from $(C_1,C_2)$ we can reach a pair of Opponent configurations with evaluation contexts $\Ed_1,\Ed_2$ respectively from which, in turn, we can fire \textsc{OpCall} transitions and reach $(C_1',C_2')$.
\end{itemize}
Our bisimulation game  will now involve tuples of the form $(C_1,C_2,\Sigma,\beta)$ including a pair of stackless configurations along with a continuation graph $\Sigma$ and an encoding $\beta$ of the pair of entry points that is currently evaluated. 
At each \textsc{OpCall} step in the bisimulation game,
containing transitions $C_i\trans{\lopapp{i}{D[\vec\alpha]}}C_i'$ (for $i=1,2$),  
we shall extend $\Sigma$ by adding a new edge (if not already present) and update the current $\beta$:
\[
(C_1,C_2,\Sigma,\beta)\;  \xmapsto{(\lopapp{i}{D[\vec\alpha]})}\; (C_1',C_2',\Sigma[\beta'\xrightarrow{\Ed_1,\Ed_2}\beta],\beta')
\]
where $\beta'=\betais{C_1',C_2'}$ and $\Ed_i=C_i.\Ed$ (for $i=1,2$).

The above scenario accounts for points in the bisimulation game where a push operation needs to be performed. On the other hand, for pop operations, we need to turn to \textsc{PropRet} steps. Given a current 
tuple $(\hat C_1,\hat C_2,\Sigma,\beta')$ with $\beta'=\betais{C_1',C_2'}$, and assuming that $\hat C_1,\hat C_2$ are about to perform
 $\hat C_i\trans{\lpropret{D}}C_i$ (for $i=1,2$),  
 the set
\[
  \Sigma(\beta') = \{\,(\Ed_1,\Ed_2,\beta)\mid \beta'\xrightarrow{\Ed_1,\Ed_2}\beta\,\}
\]
contains all the push operations that have led to the pair of entry points  $(C_1',C_2')$ that we are currently evaluating. Though only one of them is the operation that has led to the current pair $(\hat C_1,\hat C_2)$, it is sound for the bisimulation game to pop back \emph{any of the operations} in $\Sigma(\beta')$. Thus, we can have:
\[
(\hat C_1,\hat C_2,\Sigma,\beta')\;  \xmapsto{(\lpropret{D})}\; (C_1[\Ed_1/\chi],C_2[\Ed_2/\chi],\Sigma,\beta)
\]
for any $(\Ed_1,\Ed_2,\beta)\in\Sigma(\beta')$.
We next make concrete this high-level presentation by formally introducing continuation graphs and defining the ensuing notion of bisimulation.

  To simplify presentation, we will
abuse notation and utilise $X$ to stand for some object $X$ or $\bot$ (for $X$ an environment $\Gamma$, a store $s$, a continuation $\Ed$ or a stack $K$). The constant $\bot$ denotes a dummy component of a divergent configuration. For continuations and stacks in particular, we extend the push operation by setting:
\[
  \Ed,K = \begin{cases}
    \Ed,K &\text{if {$\Ed\not=\bot$ and $K\not=\bot$} and $\Ed,K$ is defined {according to \cref{sec:lts}}}\\
    \bot &\text{if $\Ed=\bot$}\\
    \text{undefined} &\text{otherwise}
    \end{cases}
\]
Below we write $\Kont$ for the set of continuations $\Ed$, i.e.\
$\Kont=\EC\uplus\{\diamond,\bot\}$.

\begin{definition}\label{def:betaSigma}
  Define \emph{entry points} and \emph{continuation graphs} as follows:
  \begin{align*}
    \mathsf{EPoint} \ni \ \beta &::=\ \diamond\ \mid\ \betais{C_1,C_2}\\
    \mathsf{CGrph}\ni \ \Sigma &\subseteq_{\rm fin.orb.}^{\neq\emptyset} \mathsf{EPoint}\times\Kont\times\Kont\times\mathsf{EPoint}
  \end{align*}
  where $C_1,C_2$ are non-Opponent
  configurations, $\{C_1,C_2\}\neq\{\botcconf\}$, and each $\Sigma$ must satisfy the conditions (note we write $\beta'\toSig{\Ed_1,\Ed_2}\beta$ for $(\beta',\Ed_1,\Ed_2,\beta)\in\Sigma$ and let $\dom{\Sigma}$ contain all such $\beta'$):
  \begin{compactitem}
\item{} \emph{Reachability}.~For all $\beta\in\dom{\Sigma}$ there are evaluation stacks $K_1,K_2$ such that $\beta\toSigs{K_1,K_2}\diamond$, where $\toSigs{}$ is the transitive closure of $\toSig{}$. In particular, $\cdot\toSigs{}\diamond$ is defined inductively by:\vspace{-2mm}
  \[
    \irule{
    }{
      \diamond\toSigs{\noe\,,\noe}\diamond
    }
    \qquad
    \irule{
      \beta\toSigs{K_1,K_2}\diamond
      \\
      \beta'\toSig{\Ed_1,\Ed_2}\beta
    }{
      \beta'\toSigs{(\Ed_1,K_1),(\Ed_2,K_2)}\diamond
    }
\vspace{-2mm}  \]
\item \emph{Top and Divergence}.~For all $\beta'\toSig{\Ed_1,\Ed_2}\beta$ and $j\in\{1,2\}$:
  \begin{align*}
&    \diamond\in\dom{\Sigma}\land (\beta'=\diamond\implies\beta=\diamond) \land
    (\beta=\diamond\iff\Ed_j=\diamond) \\
&{}\land    (\beta'.j=\bot\iff \Ed_j=\bot) \land
(\beta.j=\bot\implies
    \Ed_j=\bot)
  \end{align*}
  where we write $\beta.1=\bot$ just if $\beta=\betais{\botcconf,C}$ (and similarly for $\beta.2=\bot$).
\item \emph{Nominal closure}.~For  all $\beta'\toSig{\Ed_1,\Ed_2}\beta$ and permutations $\pi$, \
$\pi\cdot\beta'\toSig{\pi\cdot\Ed_1,\pi\cdot\Ed_2}\pi\cdot\beta$.
\end{compactitem}
%
\end{definition}

\begin{remark}
Given any continuation graph $\Sigma$, 
the top condition  along with the fact that $\Sigma$ cannot be empty imply that $\Sigma$ contains the loop:
\[
  \Sigma_\diamond = \
\raisebox{-5pt}{  \begin{tikzpicture}
    \node[] (A) at (0,0) {$\diamond$};
    \path[->] (A) edge [loop right, looseness=20] node [right] {$\diamond,\diamond$} (A);
  \end{tikzpicture}}
\]
which is itself a continuation graph.
Note also that the divergence condition ensures that, for any edge $\cdot\toSig{\Ed_1,\Ed_2}\cdot$, we cannot have $\Ed_1=\Ed_2=\bot$.
\end{remark}
\begin{remark}
It is worth commenting on the nominal closure condition. The condition imposes that continuation graphs be closed under permutations so that e.g.\ extending a graph with an edge $\beta'\toSig{\Ed_1,\Ed_2}\beta$ in fact extends it with the whole orbit $\orb{\beta'\toSig{\Ed_1,\Ed_2}\beta}$.
The addition of elements of the orbit is sound and complete as, the behaviour that led $\beta$ to reach $\beta'$ while pushing $(\Ed_1,\Ed_2)$, can also be used by $\pi\cdot\beta$ to reach the corresponding $\pi\cdot\beta'$ while pushing $(\pi\cdot\Ed_1,\pi\cdot\Ed_2)$, for any permutation $\pi$.
The latter allows us to saturate continuation graphs in a finite amount of steps in examples like the ones we saw in \Cref{ex:motiv,ex:dummy}. More foundationally, the saturation of $\Sigma$'s under permutation amounts to treating the pushdown stack (of pairs $(\Ed_1,\Ed_2)$) and its abstraction $\Sigma$ \emph{nominally}, i.e.\ by means of representatives.
In effect, we are working with pushdown nominal automata~\cite{ChengK98,MurawskiRT17}, and that bring about decidability for context-free expressions (\Cref{thm:decide}).
\end{remark}

We next show that $\Sigma$ can only produce valid, compatible stacks (cf.\ \Cref{app:simple-lemmas}).

\begin{lemma}\label{lem:Ks}
  For any $\beta\toSigs{K_1,K_2}\diamond$, we have that $K_1,K_2$ are defined and:
  \begin{compactitem}
    \item $\beta=\diamond$ and $K_1=K_2=\noe$, or
    \item $\beta.1,\beta.2\neq\bot$ and 
      $|K_1|=|K_2|$, or
      \item $\exists j\in\{1,2\}.\ \beta.j=\bot \land K_j=\bot\neq K_{3-j}$
\end{compactitem}
where $|K|$ is the length of $K$ (if $K\neq\bot$). \qed
\end{lemma}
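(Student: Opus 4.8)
The plan is to proceed by induction on the derivation of $\beta\toSigs{K_1,K_2}\diamond$, built from the two rules of the Reachability condition in \Cref{def:betaSigma}. To make the induction go through I would prove the slightly strengthened statement obtained by adding the invariant that \emph{if $\beta=\diamond$ then $K_1=K_2=\noe$}; this extra clause is exactly what later guarantees that pushing $\diamond$ onto a stack, which is only defined when that stack is $\noe$ (see \Cref{sec:lts}), is always a legal operation in the inductive step. The base case is immediate: the rule with conclusion $\diamond\toSigs{\noe,\noe}\diamond$ yields $\beta=\diamond$ and $K_1=K_2=\noe$, so the first disjunct and the invariant both hold.

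For the inductive step the last rule has premises $\beta\toSigs{K_1,K_2}\diamond$ and an edge $\beta'\toSig{\Ed_1,\Ed_2}\beta$ of $\Sigma$, and conclusion $\beta'\toSigs{(\Ed_1,K_1),(\Ed_2,K_2)}\diamond$; the induction hypothesis gives that $K_1,K_2$ are defined, the invariant holds for $\beta$, and the trichotomy holds for $(\beta,K_1,K_2)$. From the Top-and-Divergence conditions on the edge I would read off, for $j\in\{1,2\}$: $\beta=\diamond\iff\Ed_j=\diamond$, $\beta'.j=\bot\iff\Ed_j=\bot$, $\beta.j=\bot\implies\Ed_j=\bot$, and $\beta'=\diamond\implies\beta=\diamond$; the divergence condition moreover rules out $\Ed_1=\Ed_2=\bot$. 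Then I would case-split on the disjunct satisfied by $(\beta,K_1,K_2)$. If $\beta=\diamond$ and $K_1=K_2=\noe$, then $\Ed_1=\Ed_2=\diamond$, so each $(\Ed_j,K_j)=(\diamond,\noe)=\noe$ is defined and $\beta'.j\neq\bot$; either $\beta'=\diamond$, giving the first disjunct and the invariant, or $\beta'\neq\diamond$ with both stacks empty, giving the second. If $\beta.1,\beta.2\neq\bot$ and $|K_1|=|K_2|$ (hence $\beta\neq\diamond$, since otherwise the invariant forces $K_1=K_2=\noe$ and we are in the previous case), then no $\Ed_j$ equals $\diamond$, and either both $\Ed_j$ are genuine evaluation contexts — so each $(\Ed_j,K_j)$ is defined with length $|K_j|+1$ and $\beta'.j\neq\bot$, giving the second disjunct — or exactly one $\Ed_j$ equals $\bot$, in which case that side's push is $\bot$, the other side's push is a strictly longer non-$\bot$ stack, and $\beta'.j=\bot$ precisely on the $\bot$ side, giving the third disjunct. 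Finally, if the third disjunct holds for $(\beta,K_1,K_2)$, say $\beta.1=\bot$, $K_1=\bot\neq K_2$, then $\beta\neq\diamond$, $\Ed_1=\bot$ forces $\Ed_2\neq\bot$ and hence $\Ed_2\neq\diamond$, so $(\Ed_1,K_1)=\bot$ while $(\Ed_2,K_2)$ is a longer non-$\bot$ stack and $\beta'.1=\bot$, again the third disjunct (symmetrically when $\beta.2=\bot$). In every branch $(\Ed_1,K_1),(\Ed_2,K_2)$ are defined and $(\beta',(\Ed_1,K_1),(\Ed_2,K_2))$ satisfies the trichotomy, and the invariant is preserved because $\beta'=\diamond$ implies $\beta=\diamond$, which by hypothesis lands us in the first branch.

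I expect the only genuine subtlety to be the strengthening itself: the bare statement does not preclude a derivation $\diamond\toSigs{K_1,K_2}\diamond$ with $|K_1|=|K_2|>0$, and were such a thing possible the push $(\diamond,K_j)$ would be undefined and break the step — carrying the invariant $\beta=\diamond\implies K_1=K_2=\noe$ disposes of this cleanly. The remaining $\bot$-bookkeeping is routine once the Top-and-Divergence conditions and the ``not both $\bot$'' observation are in hand, and essentially amounts to reading the definition of the extended push operation off \Cref{sec:pdnf-bisim} in each sub-case.
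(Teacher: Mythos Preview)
Your proposal is correct and follows the same rule-induction strategy as the paper's proof, with the same case split on the Top-and-Divergence conditions. The one place you are more explicit than the paper is in carrying the invariant $\beta=\diamond\implies K_1=K_2=\noe$: the paper's proof dispatches this with the phrase ``by definition $K_1=K_2=\noe$'' when the intermediate vertex is $\diamond$, whereas you thread it through the induction; both arrive at the same conclusion by the same route.
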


Continuation graphs will be updated in the bisimulation game using the following two operations.
{By definition, continuation graph updates that satisfy the \emph{Top and Divergence} condition produce valid continuation graphs.}
\begin{definition}\label{def:sigma-ext-restr}
  We can \emph{extend} $\Sigma$ with an edge $\beta'\trans{\Ed_1,\Ed_2}\beta$ or \emph{restrict} it to its reachable subgraph starting from $\beta\in\dom{\Sigma}$ as follows (note $\pi\cdot(x\trans{\eta}y)\defeq(\pi\cdot x)\trans{\pi\cdot\eta}(\pi\cdot y)$):
\begin{align*}
  \Sigma[\beta'\toSig{\Ed_1,\Ed_2}\beta] &=  \Sigma\cup\{\pi\cdot(\beta'\trans{\Ed_1,\Ed_2}\beta)\mid \pi\in\Perm\}\\ 
  \Sigma@\beta &= \{\pi\cdot(\beta'\trans{\Ed_1,\Ed_2}\beta'')\mid\pi\in\Perm\land\exists K_1,K_2.\ \beta\toSigs{K_1,K_2}\beta'\}
\end{align*}
      Moreover, entry points and continuation graphs can be left-right inverted as follows:
\begin{align*}
  \diamond^{-1}=\diamond\,,\;\;
  \betais{C_1,C_2}^{-1} = 
\betais{C_2,C_1}\,,\;\;
  \Sigma^{-1} = \{\beta'^{-1}\trans{\Ed_2,\Ed_1}\beta^{-1}\mid \beta'\toSig{\Ed_1,\Ed_2}\beta\}.
\end{align*}
\end{definition}
Bisimulations for stackless configurations will involve tuples of the form defined next.

\begin{definition}[Compatible (bi)simulation tuples] \label{def:compatible-conf}\label{def:compatible}
  Let us call configurations $C_1,C_2$ \emph{compatible} whenever $\{\botcconf\}\subsetneq\{C_1,C_2\}$, or
  $C_1,C_2$ have the same polarity,
    $\dom{C_1.\Gamma}=
    \dom{C_2.\Gamma}$ and $(C_1.\Ed=\diamond\iff C_2.\Ed=\diamond)
    $.\\
Tuple $(C_1,C_2,\Sigma,\beta)$ is  {compatible} if
  $C_1,C_2$ compatible, 
    $\Sigma@\beta=\Sigma$ and
  for $j\in\{1,2\}$:
  \begin{compactenum}
  \item[($\bot$)] if $\beta.j=\bot$ then $C_j=\botcconf$;
    \item[($\diamond$)] if $\beta=\diamond$ then $C_j.\Ed=\diamond$ or $\an{C_j.e}=\emptyset$; and if $C_j.\Ed=\diamond$ then $\beta=\diamond$.
    \end{compactenum}
\end{definition}

Condition~($\diamond$) above says that a top-level $\beta$ ($\diamond$) is only allowed in opponent configurations with top-level continuation ($\diamond$) and in initial proponent configurations evaluating the top-level term (e.g.\ $\pconf{\emptyA}{\emptyG}{\emptyS}{M}$ in \cref{ex:SL-dummy}); dually, any top-level continuation ($\diamond$) needs a top-level $\beta$ ($\diamond$).
We now give the definition of (bi)simulation for the stackless LTS.

\begin{definition}[PDNF Bisimulation]\label{def:SLbisim}
  A relation \bisim{R} with elements of the form $(C_1,C_2,\Sigma,\beta)$, and membership thereof denoted $C_1 \bisim*{R}_{\Sigma,\beta} C_2$, is called 
  \emph{weak simulation} when
  for all $C_1 \bisim*{R}_{\Sigma,\beta} C_2$ we have $(C_1,C_2,\Sigma,\beta)$ compatible
 and:
  \begin{enumerate}
    \item[0.]
      if $C_1\downarrow$
      then $C_2\downarrow$
    \item
      if $C_1 \realtrans{\lopret{D[\vec\alpha]}} C_1'$ with $\vec\alpha\fresh C_2,\beta$ 
      then $C_2 \trans{\lopret{D[\vec\alpha]}} C_2'$ such that
      $C_1' \bisim*{R}_{\Sigma,\beta} C_2'$
    \item
      if $C_1 \realtrans{\eta} C_1'$
      then $C_2 \wtrans{\eta} C_2'$ and
      $C_1' \bisim*{R}_{\Sigma,\beta} C_2'$, for $\eta\in\{\tau,\lpropapp{\alpha}{D}\}$
    \item
      if $C_1 \realtrans{\lopapp{i}{D\hole[\vec \alpha]}}  C_1'$ with
 $\vec\alpha\fresh C_2,\beta$ then $C_2 \trans{\lopapp{i}{D\hole[\vec\alpha]}}  C_2'$
such that      $C_1' \bisim*{R}_{\Sigma',\beta'} C_2'$
      with $\beta' = \betais{C_1',C_2'}$ and $\Sigma' = \Sigma[\beta'\trans{C_1.\Ed, C_2.\Ed}\beta]$
    \item
      if $C_1 {\realtrans{\lpropret{D}}} C_1'$
      and
      $\beta{\toSig{\Edb_1,\Edb_2}}\beta'$
      then $C_2 {\wtrans{\lpropret{D}}} C_2'$
      and
      $C_1'\sub{\chi}{\Edb_1} \bisim*{R}_{\Sigma@\beta',\beta'} C_2'\sub{\chi}{\Edb_2}$.
    \end{enumerate}
  Similarity $(\simil)$ is the largest weak simulation.
  Relation \bisim{R} is a \emph{weak bisimulation} when \bisim{R} and $\bisim{R}^{-1}$ are weak simulations, where $R^{-1}=\{(C_2,C_1,\Sigma^{-1},\beta^{-1})\mid (C_1,C_2,\Sigma,\beta)\in R\}$.
  Bisimilarity $(\bisimil)$ is the largest weak bisimulation.
    Expressions $\vdash e_1,e_2: T$ 
  with {$\an{e_i}=\emptyset$}
are \emph{PDNF bisimilar}, written $e_1 \bisimilPD e_2$,
when 
  $\pconf{\emptyA}{\emptyG}{\emptyS}{e_1} \bisimil_{\Sigma_\diamond,\diamond}
  \pconf{\emptyA}{\emptyG}{\emptyS}{e_2}$.
\end{definition}

\begin{remark}
  The definition above assumes that $\Sigma$ is well-defined, i.e.\ it satisfies the conditions of \Cref{def:betaSigma} and tuples satisfy the compatibility conditions of \Cref{def:compatible}. These conditions are preserved by the bisimulation, thus making them merely initial conditions for the construction of the relations.
\\
  In particular $\Sigma$ is well-defined when extending $\Sigma$ (case~3) as $\beta'$ is well-defined and:
  \begin{compactitem}
  \item Reachability and nominal closure are preserved by construction.
  \item Top follows from the fact that  $C_1.\Ed=\diamond$ iff $\beta=\diamond$ (by definition) and $(C_1.\Ed=\diamond\land C_2.\Ed\neq\diamond)$ is not possible due to compatibility.
  \item For divergence, given that $C_1$ does not diverge, it suffices to check the conditions for $j=2$. For the first one, we need to verify that we cannot have $C_2'=\botcconf\not=C_2$, which is indeed the case as $C_1$ can make the move $\lopapp{i}{D[\vec\alpha]}$ and $C_1,C_2$ are compatible.
For the second condition, if $\beta.2=\bot$ then by compatibility we have $C_2=\botcconf$, and therefore $C_2.\Ed=\bot$.
\end{compactitem}
On the other hand, by restricting $\Sigma$ (case~4) we get a smaller graph with all entry points reachable from $\beta'$, and its validity follows from the validity of $\Sigma$.\\
Furthermore,  compatibility is preserved in the target tuple in each case. For
each $j$,
 if $C_j=\botcconf$ then $C_j'=\botcconf$ (this is vacuously true for $j=1$).
Moreover, if $\beta=\diamond$ then:
  \begin{compactitem}
  \item if $C_j$ an opponent configuration then $C_j.\Ed=\diamond$ and  $\lopret{D[\vec\alpha]}$ is not possible;
    \item if $C_j$ is proponent then $\an{C_j.e}=\emptyset$ and $\lpropapp{\alpha}{D}$ is not possible;
    \end{compactitem}
        thus, in either case, $\beta$ is replaced by $\beta'$ and the validity of $\Sigma$ implies validity of the resulting tuple involving $C_1',C_2'$. Finally, if case~4 takes place and $C_j'\sub\chi{\Ed_j}.\Ed=\diamond$ then $\Ed_j=\diamond$ and therefore $\beta'=\diamond$.
\end{remark}


\subsection{Decidability of PDNF Bisimulation}\label{sec:decidability}

We previously mentioned  that PDNF bisimulation can be used to decide equivalence of context-free terms {(\cref{def:cxt-free-term})}. The first step in proving this is to show {bounded} PDNF bisimilarity decidable.

Given $k\in\mathbb N$ and edge $\beta'\trans{\Ed_1,\Ed_2}\beta$ (of $\Sigma$), we say that the edge is \emph{$k$-bounded} if:
\[ \max(\|\beta'\|,\|\Ed_1\|,\|\Ed_2\|,\|\beta\|)\leq k,\;\text{where }\|\betais{C_1,C_2}\|=\max(\|C_1\|,\|C_2\|)\text{ and }\|\diamond\|=1.
\]
Accordingly, 
tuple $(C_1,C_2,\Sigma,\beta)$ is {$k$-bounded} if all elements of $\Sigma$ are $k$-bounded and
$ \max(\|C_1\|,\|C_2\|,\|\beta\|)\leq k$. Finally, candidate weak bisimulation relation $\bisim R$ is $k$-bounded if all its elements are. Note below that $\pi\cdot\bisim R=\{\pi\cdot x\mid x\in\bisim R\}$.

\begin{lemma}
  If $\bisim R$ is $k$-bounded and equivariant then it is orbit-finite.
\end{lemma}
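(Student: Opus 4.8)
The plan is to exhibit $\bisim R$ as an equivariant subset of an orbit-finite nominal set, using that orbit-finiteness is inherited by equivariant subsets. First I would record the ingredients from the theory of nominal sets that are needed: that orbit-finiteness is preserved by (iterated) finite products with the diagonal $\Perm$-action, by passing to equivariant subsets, and by finite unions; and that, for every $k$, the set $\mathcal C_k$ of $k$-bounded configurations, the set $\Kont_k$ of $k$-bounded continuations, and the set $\mathcal P_k$ of $k$-bounded entry points are each orbit-finite. The last point is where the argument touches the syntax: for a configuration $C$ with $\|C\|\le k$, each of the three quantities inside the maximum defining $\|C\|$ is $\le k$, bounding separately the total size of its environment, of its store, and of its $\Phi$-component; hence $C$ contains only boundedly many occurrences of names, so up to a renaming permutation it is pinned down by one of finitely many parse structures together with the partition of its name slots into equality classes. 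The same reasoning gives orbit-finiteness of $\Kont_k$, and hence of $\mathcal P_k$ since an entry point is either $\diamond$ or a pair of $k$-bounded configurations.

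The key structural observation is that \emph{every continuation graph $\Sigma$ is equivariant}. The \emph{nominal closure} clause of \Cref{def:betaSigma} states exactly that $\Sigma$ is closed under the $\Perm$-action; thus $\pi\cdot\Sigma\subseteq\Sigma$ for all $\pi$, and instantiating this at $\pi^{-1}$ gives $\Sigma\subseteq\pi\cdot\Sigma$, so $\pi\cdot\Sigma=\Sigma$ and $\supp{\Sigma}=\emptyset$. Consequently the orbit of a tuple $(C_1,C_2,\Sigma,\beta)$ is $\{\,(\pi\cdot C_1,\pi\cdot C_2,\Sigma,\pi\cdot\beta)\mid\pi\in\Perm\,\}$: the graph component is fixed pointwise. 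I would then show that the set $\mathbb S_k$ of $k$-bounded continuation graphs is in fact \emph{finite}. The set $\mathcal E_k$ of $k$-bounded edges $\beta'\trans{\Ed_1,\Ed_2}\beta$ is an equivariant subset of $\mathcal P_k\times\Kont_k\times\Kont_k\times\mathcal P_k$ (using that $\|\cdot\|$ is permutation-invariant), hence orbit-finite, say with orbits $\mathcal O_1,\dots,\mathcal O_N$; by nominal closure any $k$-bounded $\Sigma$ is a union of some of the $\mathcal O_i$, so $|\mathbb S_k|\le 2^N$.

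Finally I would assemble the pieces. Every element of $\bisim R$ lies in $\mathcal C_k\times\mathcal C_k\times\mathbb S_k\times\mathcal P_k$, and $\bisim R$ is equivariant. Partition $\bisim R=\biguplus_{\Sigma\in\mathbb S_k}R_\Sigma$ according to the finitely many possible graph components, where $R_\Sigma=\{\,(C_1,C_2,\beta)\mid(C_1,C_2,\Sigma,\beta)\in\bisim R\,\}$; using $\pi\cdot\Sigma=\Sigma$ one checks that each $R_\Sigma$ is an equivariant subset of the orbit-finite set $\mathcal C_k\times\mathcal C_k\times\mathcal P_k$, hence orbit-finite, and an orbit of $\bisim R$ never mixes two different $\Sigma$'s, so $\bisim R$ has at most $\sum_{\Sigma\in\mathbb S_k}(\text{number of orbits of }R_\Sigma)$ orbits, a finite number. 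I expect the only genuinely delicate part to be the nominal-sets folklore invoked at the start\,---\,closure of orbit-finiteness under finite products, and orbit-finiteness of bounded finite syntactic objects, where one should be careful about exactly which data $\|\cdot\|$ accounts for (e.g.\ the types decorating abstract names); everything particular to our setting, notably the equivariance of continuation graphs that collapses the $\Sigma$-component to finitely many permutation-insensitive choices, is then routine bookkeeping.
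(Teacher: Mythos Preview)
Your proposal is correct and follows essentially the paper's route. The one point you spell out that the paper leaves implicit is why the $\Sigma$-component is harmless: the paper invokes ``closure of orbit-finiteness under equivariant powerset'', whereas you observe directly that the nominal-closure axiom makes every continuation graph equivariant, so the $k$-bounded $\Sigma$'s form a \emph{finite} set of unions of edge-orbits---this is the same observation unpacked, and your partition-by-$\Sigma$ finish is just a reorganisation of the paper's final product step.
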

\begin{proof}
  Note first that, for each $k\in\mathbb N$, the set of configurations with size at most $k$
  is orbit-finite. Similarly for the set 
  of continuations with size at most $k$. Accordingly, since (in nominal sets~\cite{PittsNominal}) orbit-finiteness is closed under cartesian products and equivariant subsets~\cite{BojanczykKL14}, the set of all $k$-bounded edges is also orbit-finite.
  As orbit-finiteness is also closed under equivariant powerset, the set of all $k$-bounded $\Sigma$'s is also orbit-finite. Since orbit-finiteness is closed under cartesian products, disjoint union and equivariant subsets, any $k$-bounded equivariant candidate weak bisimulation is orbit-finite. \qed
\end{proof}
\begin{theorem}\label{thm:decide}
  Given $e_1,e_2$ context-free {according to \cref{def:cxt-free-term}} with bound $k$,
  $e_1\, {\bisimilPD}\, e_2$ iff
  $\pconf{\emptyA}{\emptyG}{\emptyS}{e_1} \bisim R_{\Sigma_\diamond,\diamond}
  \pconf{\emptyA}{\emptyG}{\emptyS}{e_2}$
  for some $k$-bounded equivariant weak bisimulation $\bisim R$. Therefore, $\bisimilPD$ is decidable for context-free expressions.
\end{theorem}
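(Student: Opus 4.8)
The plan is to prove the two directions of the iff separately, then invoke the preceding lemma (orbit-finiteness of $k$-bounded equivariant candidate relations) together with a known decision procedure for orbit-finite systems to conclude decidability. For the left-to-right direction, suppose $e_1 \bisimilPD e_2$, so that the initial tuple $(\pconf{\emptyA}{\emptyG}{\emptyS}{e_1},\pconf{\emptyA}{\emptyG}{\emptyS}{e_2},\Sigma_\diamond,\diamond)$ belongs to the greatest weak bisimulation $\bisimil$. The candidate I would take is not $\bisimil$ itself but its restriction $\bisim R$ to tuples that are \emph{reachable} from the initial tuple via the transition clauses of \Cref{def:SLbisim} (i.e.\ the tuples that the bisimulation game can actually visit when starting from $(e_1,e_2,\Sigma_\diamond,\diamond)$), closed under all permutations (to make it equivariant). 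First I would argue $\bisim R$ is still a weak bisimulation: it is closed under the bisimulation clauses by construction together with \Cref{lem:closures}-style permutation-closure of $\bisimil$, and the permutation closure does not break the bisimulation property because the LTS, the graph operations $\Sigma[\cdot]$, $\Sigma@\beta$, and the compatibility conditions are all equivariant. Then I would show $\bisim R$ is $k$-bounded: every configuration appearing in a reachable tuple is of the form $C$ with $\oldpconf{\emptyA}{\emptyG}{\emptyK}{\emptyS}{e_i}\transs{t}C$ in the \emph{stacked} LTS after forgetting stacks — here one needs a correspondence lemma between the stackless LTS runs recorded by $\Sigma$ and the stacked LTS runs, but this is exactly the soundness/completeness relationship between the two LTSs set up in \Cref{sec:the-stackless-lts}, so the size of every such $C$ is $\le k$ by the definition of context-free with bound $k$; similarly every continuation $\Ed_1,\Ed_2$ recorded on an edge of $\Sigma$ arises as $C.\Ed$ for such a configuration, hence has size $\le k$ (recalling $\|E\| = \|E[x]\|$ and $\|\diamond\|=1$), and every vertex $\betais{C_1,C_2}$ pairs two such configurations, so every edge is $k$-bounded and every reachable tuple is $k$-bounded. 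For the right-to-left direction, any $k$-bounded equivariant weak bisimulation containing the initial tuple is in particular a weak bisimulation, so it is contained in $\bisimil$, giving $e_1 \bisimilPD e_2$ immediately.

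For decidability, having established that $e_1 \bisimilPD e_2$ holds iff there exists a \emph{$k$-bounded equivariant} weak bisimulation relating the initial tuple, and having the preceding lemma that every such relation is orbit-finite, I would reduce the problem to deciding bisimilarity in an orbit-finite nominal transition system. Concretely, the space of $k$-bounded tuples is itself orbit-finite (same argument as the preceding lemma, applied to tuples rather than relations), and the bisimulation game of \Cref{def:SLbisim} is an equivariant, finitely-branching-up-to-permutation relation on it — the only sources of branching are the choice of fresh $\vec\alpha$ (handled nominally, i.e.\ one representative per orbit suffices by equivariance of clauses 1 and 3), the finitely many ultimate patterns $(D,\vec\alpha)\in\ulpatt(T)$ of bounded size, the finitely many edges of $\Sigma(\beta')$ in clause 4 (finite since $\Sigma$ is a $k$-bounded orbit-finite set, so finitely many orbits, and we only need representatives), and the finitely many $\tau$-successors. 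This places us squarely in the setting of nominal pushdown automata / orbit-finite systems of \cite{ChengK98,MurawskiRT17,BojanczykKLT13,BojanczykKL14}, where bisimilarity (equivalently, the greatest fixpoint computation over an orbit-finite domain) is computable: one runs the standard partition-refinement / greatest-fixpoint iteration on orbit representatives, which terminates because the domain has finitely many orbits. Running this procedure starting from the orbit of the initial tuple decides whether it lies in the greatest $k$-bounded equivariant weak bisimulation, which by the iff is exactly whether $e_1 \bisimilPD e_2$.

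The main obstacle I expect is the $k$-boundedness argument in the forward direction, specifically pinning down the correspondence between continuations/entry points recorded in $\Sigma$ and configurations reachable in the \emph{stacked} LTS, so that \Cref{def:cxt-free-term}'s bound $k$ (which is phrased over stacked-LTS descendants with stacks ignored) actually applies. One has to check that whenever an edge $\beta'\toSig{\Ed_1,\Ed_2}\beta$ is added during the game, the continuations $\Ed_1,\Ed_2$ and the configurations inside $\beta,\beta'$ are genuine reducts (modulo the $\sub{\chi}{\Edb}$ substitutions, which only plug in previously recorded — hence bounded — continuations and do not increase size beyond $k$ since they reconstruct a stacked configuration of the form $E[\,\cdot\,]$ already accounted for) of the stacked LTS run on $e_i$; the permutation/nominal-closure bookkeeping and the $\bot$/divergence corner cases add routine but fiddly case analysis. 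A secondary point worth care is that the greatest $k$-bounded equivariant weak bisimulation exists and is itself a weak bisimulation — this follows because the union of $k$-bounded equivariant weak bisimulations is again one (boundedness, equivariance, and the simulation clauses are all union-closed), so the fixpoint iteration converges to it.
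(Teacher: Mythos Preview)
Your proposal is correct and follows essentially the same approach as the paper: both argue that the bisimulation game from the initial tuple visits only $k$-bounded tuples (since all configurations, entry-point components, and edge continuations arising are sourced from stacked-LTS reducts of $e_1$ or $e_2$, whose sizes are $\le k$ by context-freeness), and then invoke orbit-finiteness for decidability. The paper is terser on two points --- it takes $\bisim R$ to be the restriction of $\bisimilPD$ to $k$-bounded elements (where your reachable-and-permutation-closed construction is the more careful reading of that), and for the decision procedure it brute-force enumerates all $k$-bounded equivariant candidate relations, incrementing $k$ whenever a non-$k$-bounded tuple is encountered, rather than running your partition-refinement fixpoint --- but the underlying argument is the same.
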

\begin{proof}
  We observe that, since $e_1,e_2$ are context-free with bound $k$, in the bisimulation game starting from $(\pconf{\emptyA}{\emptyG}{\emptyS}{e_1},\pconf{\emptyA}{\emptyG}{\emptyS}{e_2},\Sigma_\diamond,\diamond)$ we only reach $k$-bounded tuples $(C_1,C_2,\Sigma,\beta)$. This is due to the fact that $C_1,C_2$ , all configurations contained in $\beta$ and in the vertices of $\Sigma$, and all continuations found in edges of $\Sigma$ are all sourced from components of (stacked) configurations found in the set:
  \[
    \{C\mid \exists i,t.\ \oldpconf{\emptyA}{\emptyG}{\emptyK}{\emptyS}{e_i}\trans{t}C\}
  \]
  which, by assumption, have size at most $k$. Thus,
  if $e_1  \bisimilPD e_2$ then 
  $\pconf{\emptyA}{\emptyG}{\emptyS}{e_1} \bisim R_{\Sigma_\diamond,\diamond}
  \pconf{\emptyA}{\emptyG}{\emptyS}{e_2}$ with $\bisim R$ being the restriction of $\bisimilPD$ to $k$-bounded elements. \\
  Now, given context-free expressions $e_1,e_2$ with $\vdash e_1,e_2:T$ and $\an{e_1,e_2}=\emptyset$, to decide whether $e_1 \bisimilPD e_2$ we pick a bound $k\in\mathbb{N}$ and try the weak bisimulation conditions on all $k$-bounded equivariant candidate weak bisimulation relations $\bisim R$.
  The examination of all such relations is possible due to orbit-finiteness.
  If in our examination we are led to a tuple $(C_1,C_2,\Sigma,\beta)$ that is not $k$-bounded, we restart with $k=k+1$.
  If a weak bisimulation is found, we Accept. If no weak bisimulation is found, we Reject.
\qed
\end{proof}

\subsubsection*{Example revisited}
We now show how PDNF bisimulation applies to the example from \cref{sec:motivating-example}. To simplify presentation, and relying on determinacy and a simple up-to beta reduction technique (see \cref{app:simple}), we shall restrict our attention to bisimulation challenges of the form:
\[
  C_1 \transs{\tau}\trans{\eta} C_1'
\]
where $\eta\neq\tau$, and only present the part of the bisimulation containing the corresponding configurations at the beginning and end of such transition sequences.

\begin{example}\label{ex:revisited}
  Recall again the equivalent terms $M$ and $N$ from \cref{ex:dummy}, as well as their reducts $V_l,V',e_{l,\alpha},e'_\alpha$ from \cref{ex:SL-dummy}. We construct a PDNF bisimulation $\bisim R$ by including:
  \begin{align*}
    \bisim R_0 &=\{(\pconf{\emptyA}{\emptyG}{\emptyS}{M},\pconf{\emptyA}{\emptyG}{\emptyS}{N},\Sigma_\diamond,\diamond)\}
\cup      
\orb{\{(\oconf{\emptyA}{\Gamma_l}{s_l}{\diamond},\oconf{\emptyA}{\Gamma'}{\emptyS}{\diamond},\Sigma_\diamond,\diamond)\}}\\
    \bisim R_1 &= \orb{\{(\pconf{A\uplus\{\alpha\}}{\Gamma_l}{s_l}{e_{l,\alpha}},\pconf{A\uplus\{\alpha\}}{\Gamma'}{\emptyS}{e'_\alpha},{\Sigma_1,\beta_{l,\alpha}}),(\oconf{A\uplus\{\alpha\}}{\Gamma_l}{s_l}{E_l},\oconf{A\uplus\{\alpha\}}{\Gamma'}{\emptyS}{E'},{\Sigma_1,\beta_{l,\alpha}})\}}\\
        \bisim R_2 &= \orb{\{(\pconf{A\uplus\{\alpha\}}{\Gamma_l}{s_l}{e_{l,\alpha'}},\pconf{A\uplus\{\alpha\}}{\Gamma'}{\emptyS}{e'_{\alpha'}},{\Sigma_2,\beta_{l,\alpha'}}),(\oconf{A\uplus\{\alpha\}}{\Gamma_l}{s_l}{E_l},\oconf{A\uplus\{\alpha\}}{\Gamma'}{\emptyS}{E'},{\Sigma_2,\beta_{l,\alpha'}}) \}}
  \end{align*}
for some $l\in\Loc$ and $\alpha\neq\alpha'\in\Nam$,  with
$s_l=\{l\mapsto 0\}$,
  $\beta_{l,\alpha}=\betais{(\Gamma_{l},s_l,e_{l,\alpha}),(\Gamma',\emptyS,e'_\alpha)}$ and:
\[
  \begin{aligned}
    \Gamma_l &=\maps{1}{V_l} & E_l&=\hole;{!l} & \Sigma_1&=\Sigma_\diamond[\beta_{l,\alpha}\trans{\diamond,\diamond}\diamond]\\
     \Gamma'&=\maps{1}{V'} & E'& =\hole;0 &\Sigma_2&=\Sigma_1[\beta_{l,\alpha'}\trans{E_l,E'}\beta_{l,\alpha}]
   \end{aligned}\left(
\begin{aligned}
  V_l &=  \lambda f.\ f(); !l
&e_{l,\alpha} &=\alpha(); !l
  \\
V' &= \lambda f. f(); 0
&
e_\alpha' &=\alpha(); 0
\end{aligned}\right)
\]
and taking $\bisim R=\bisim R_1\cup\bisim R_2\cup\bisim R_2$. Hence, $M \bisimilPD N$.
  \end{example}

  Note that in the example above the relation we build is infinite, due to the accumulation of edges e.g.\ in $\Sigma_2$:
  \[
    \cdots \beta_{l,\alpha_i}\trans{E_l,E'}\cdots
    \beta_{l,\alpha_3}\trans{E_l,E'}
    \beta_{l,\alpha_2}\trans{E_l,E'}
    \beta_{l,\alpha_1}\trans{\diamond,\diamond}\diamond
    \!\!\!\raisebox{-4pt}{  \begin{tikzpicture}
      \node[] (A) at (0,0) {$\text{ }$};
      \path[->] (A) edge [loop right, looseness=20] node [right] {$\diamond,\diamond$} (A);
    \end{tikzpicture}}
  \]
  Nonetheless, $\Sigma_2$ is {orbit-finite}, as it is the closure under permutation of this finite graph:
  \[
    \beta_{l,\alpha_2}\trans{E_l,E'}
    \beta_{l,\alpha_1}\trans{\diamond,\diamond}\diamond
    \!\!\!\raisebox{-4pt}{  \begin{tikzpicture}
      \node[] (A) at (0,0) {$\text{ }$};
      \path[->] (A) edge [loop right, looseness=20] node [right] {$\diamond,\diamond$} (A);
    \end{tikzpicture}}
  \]
  In fact, as seen by its definition, the bisimulation that we built above is also orbit-finite.

\subsection{Soundness}\label{sec:soundness}

\newcommand\toS[1]{\widetilde{(#1)}}
\newcommand\tC{\tilde{C}}
\newcommand{\tsR}{\ma{\mathrel{\tilde{\mathcal{R}}}}}
\newcommand{\tR}{\ma{\tilde{\mathcal{R}}}}

We next show that PDNF bisimulation is sound with respected to (standard) NF bisimulation. We will prove that if stackless configurations $C_1,C_2$ are related by a weak bisimulation $\bisim R$ then we can construct a weak bisimulation $\tR$ on stacked configurations containing $C_1,C_2$ with appropriate stacks attached.

Recall in \Cref{def:lts} the {stacked} LTS.
For notational convenience,
in this and the next section, 
we shall denote the configurations of the stacked LTS by $\tilde C$ and variants (and $C$ and variants is reserved for stackless). We move from stackless to stacked configurations by adding compatible stack components.

\begin{definition}
Given a configuration $C$ and stack $K$, we set:
\[
  \toS{C,K} = \begin{cases}
    \oldpconf{A}{\Gamma}{K}{s}{e} &\text{if }C=\pconf{A}{\Gamma}{s}{e}\land K\not=\bot\\
    \oldoconf{A}{\Gamma}{K}{s}{\Ed} &\text{if }C=\oconf{A}{\Gamma}{s}{\Ed}\land K\not=\bot 
    \land(\Ed=\diamond\iff K=\noe)\\
    \botconf &\text{if }C=\botcconf\\
    \text{undefined} &\text{otherwise}
    \end{cases}
  \]
\end{definition}


\begin{lemma}\label{lem:toS}
  For any $\beta\toSigsD{K_1,K_2}$ and $C_1,C_2$, if $\beta=\betais{C_1,C_2}$ or $(C_1,C_2,\Sigma,\beta)$ compatible
  then $\toS{C_1,K_1},\toS{C_2,K_2}$ are defined.
\qed\end{lemma}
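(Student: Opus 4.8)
The plan is to proceed by a case analysis steered by \Cref{lem:Ks}, which already tells us that $K_1,K_2$ are defined and that $(\beta,K_1,K_2)$ has one of three shapes; the only genuinely inductive ingredient, flagged below, is best extracted as a separate induction on the derivation of $\beta\toSigs{K_1,K_2}\diamond$ using the two rules of \Cref{def:betaSigma}. It is enough to show, for each $j\in\{1,2\}$, that $\toS{C_j,K_j}$ is defined, the argument being identical for both indices. The first split is on whether $C_j=\botcconf$: if so, $\toS{C_j,K_j}=\botconf$ by the third clause of the definition of $\toS{\cdot}$, whatever $K_j$ is, and we are done.

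Assume then $C_j\neq\botcconf$. I would first note that $\beta.j\neq\bot$: when $\beta=\betais{C_1,C_2}$ this holds because $\beta.1=\bot$ only if the first component of $\beta$ is $\botcconf$ (and dually for $j=2$), while when $(C_1,C_2,\Sigma,\beta)$ is compatible it is condition~$(\bot)$ of \Cref{def:compatible}. With $\beta.j\neq\bot$, \Cref{lem:Ks} cannot place $j$ as the witness of its third case, so $K_j\neq\bot$ and $K_j$ is a genuine stack. If moreover $C_j$ is a proponent configuration, then the first clause of $\toS{\cdot}$ applies (as $K_j\neq\bot$) and we are done; this settles the whole case $\beta=\betais{C_1,C_2}$, since by \Cref{def:betaSigma} the components of an entry point are non-Opponent, so $C_j\neq\botcconf$ forces $C_j$ to be proponent.

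What remains is: $(C_1,C_2,\Sigma,\beta)$ compatible, $C_j\neq\botcconf$, and $C_j$ an opponent configuration. Then the second clause of $\toS{\cdot}$ requires precisely that $C_j.\Ed=\diamond\iff K_j=\noe$. Condition~$(\diamond)$ of \Cref{def:compatible} already yields, for opponent $C_j$, that $C_j.\Ed=\diamond\iff\beta=\diamond$, so the task reduces to establishing $\beta=\diamond\iff K_j=\noe$ for any derivation of $\beta\toSigs{K_1,K_2}\diamond$. This is where the \emph{Top} and \emph{Divergence} conditions of \Cref{def:betaSigma} are invoked: they force every edge $\beta'\toSig{\Edb_1,\Edb_2}\beta$ to satisfy $\Edb_j=\diamond\iff\beta=\diamond$ and $\beta'=\diamond\implies\beta=\diamond$, and to propagate $\bot$ coherently; pushing these facts through the inductive definition of $\beta\toSigs{K_1,K_2}\diamond$, together with the fact that pushing $\diamond$ onto the empty stack leaves it empty, pins $K_j$ down as empty exactly when the source vertex is $\diamond$. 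Combining the three cases gives $\toS{C_j,K_j}$ defined for $j=1,2$, which is the statement.

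The step I expect to be the main obstacle is this last one — the induction establishing $\beta=\diamond\iff K_j=\noe$. Care is needed around the interaction of the top-level continuation $\diamond$ with the push operation (so that a $\diamond$-labelled edge never spuriously grows a stack, and no other edge leaves a stack empty when it should not), and one must appeal to the \emph{Divergence} clauses to guarantee that when $\beta.j\neq\bot$ no $\bot$ has entered $K_j$ along the way. Everything else is a direct unfolding of the definitions of $\toS{\cdot}$ and of the compatibility conditions.
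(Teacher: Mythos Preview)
Your proposal is correct and takes essentially the same approach as the paper. The only difference is economy: where you flag a separate induction on the derivation of $\beta\toSigs{K_1,K_2}\diamond$ to establish $\beta=\diamond\iff K_j=\noe$, the paper simply invokes \Cref{lem:Ks} for that equivalence and concludes; the remaining case analysis (on $C_j=\botcconf$, on proponent versus opponent, and the use of compatibility condition~$(\diamond)$) is identical.
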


Soundness can be shown by the following result (cf.~\Cref{app:simple-lemmas}).

\begin{lemma}[Soundness]\label{lem:soundness}
  If $\bisim R$ is a weak simulation then so is:
  \begin{align*}
     \tR = \{
           (\toS{C_1,K_1},\toS{C_2,K_2}) \mid&\; 
                                                         \exists\Sigma,\beta.\
  C_1\bisim*{R}_{\Sigma,\beta}C_2
\land
                                                         \beta\toSigs{K_1,K_2}\diamond
\}
  \end{align*}
  Moreover, if $\bisim R$ is a weak bisimulation then so is $\tR$.
\qed\end{lemma}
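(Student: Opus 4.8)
The plan is to prove \Cref{lem:soundness} by showing that $\tR$ satisfies the two clauses of \Cref{def:bisim}: the termination clause (if $\toS{C_1,K_1}\downarrow$ then $\toS{C_2,K_2}\downarrow$) and the transition-matching clause. The witness structure is already packaged in the definition of $\tR$: a pair $(\toS{C_1,K_1},\toS{C_2,K_2})\in\tR$ comes with a continuation graph $\Sigma$, an entry point $\beta$ with $C_1\bisim*{R}_{\Sigma,\beta}C_2$, and a derivation $\beta\toSigs{K_1,K_2}\diamond$. By \Cref{lem:toS} the stacked configurations are well-defined; compatibility of $\toS{C_1,K_1},\toS{C_2,K_2}$ follows from compatibility of $(C_1,C_2,\Sigma,\beta)$ (guaranteed since $\bisim R$ is a simulation) together with \Cref{lem:Ks}, which tells us $|K_1|=|K_2|$ (or both empty, or one side $\bot$). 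First I would record these structural facts and note that $\toS{C_1,K_1}\downarrow$ forces $C_1.\Ed=\diamond$ and $K_1=\noe$, whence by the $(\diamond)$ compatibility condition and the \emph{Top} condition on $\Sigma$ we get $\beta=\diamond$, and then $C_1\downarrow$, so $C_2\downarrow$ by clause~0 of \Cref{def:SLbisim}, giving $C_2.\Ed=\diamond$; combined with $\beta=\diamond\toSigs{K_1,K_2}\diamond$ and \Cref{lem:Ks} we obtain $K_2=\noe$, hence $\toS{C_2,K_2}\downarrow$.

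The core of the argument is a case analysis on the stacked transition $\toS{C_1,K_1}\realtrans{\eta}\tilde C_1'$, mirroring the five cases of \Cref{def:SLbisim}. For $\tau$ and $\lpropapp{\alpha}{D}$ transitions (clause~2), the stack is untouched in both LTSs, so the stackless simulation gives $C_2\wtrans{\eta}C_2'$ with $C_1'\bisim*{R}_{\Sigma,\beta}C_2'$ and we keep the same $\beta\toSigs{K_1,K_2}\diamond$. For $\lopret{D[\vec\alpha]}$ (clause~1), again no stack change, and the freshness condition $\vec\alpha\fresh C_2,\beta$ needed by the stackless simulation is implied by $\vec\alpha\fresh\toS{C_2,K_2}$ together with the fact that every configuration in $\beta$ and every continuation in $\Sigma$ appears inside $\toS{C_2,K_2}$ or is reachable from it — here I would invoke \Cref{lem:Ks}/the reachability shape, or simply observe that $\supp\beta\subseteq\supp{\toS{C_2,K_2}}$ when $\beta\toSigs{K_1,K_2}\diamond$ and $\beta=\betais{C_1,C_2}$-adjacent; this needs to be stated carefully. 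The two interesting cases are the \textsc{OpCall} case and the \textsc{PropRet} case. For \textsc{OpCall}: the stacked rule pushes $\Ed_i=C_i.\Ed$ onto $K_i$, yielding stacks $\Ed_1,K_1$ and $\Ed_2,K_2$; the stackless simulation (clause~3) gives $C_2\trans{\lopapp{i}{D[\vec\alpha]}}C_2'$ with $C_1'\bisim*{R}_{\Sigma',\beta'}C_2'$ where $\beta'=\betais{C_1',C_2'}$ and $\Sigma'=\Sigma[\beta'\trans{\Ed_1,\Ed_2}\beta]$; I then need $\beta'\toSigs{(\Ed_1,K_1),(\Ed_2,K_2)}\diamond$, which follows from the inductive rule for $\toSigs{}$ in \Cref{def:betaSigma} applied to the new edge $\beta'\trans{\Ed_1,\Ed_2}\beta$ and the old derivation $\beta\toSigs{K_1,K_2}\diamond$ (noting $\Sigma\subseteq\Sigma'$ so the old derivation still holds in $\Sigma'$, and $\Sigma'@\beta'=\Sigma'$ so the restriction in the target tuple is harmless). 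Hence $(\tilde C_1',\tilde C_2')\in\tR$.

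For the \textsc{PropRet} case, which I expect to be the main obstacle, the stacked rule pops: $\toS{C_1,K_1}$ must have $K_1=\Edb_1,K_1''$ for some $\Edb_1$ and the transition goes to $\toS{C_1'\sub{\chi}{\Edb_1},K_1''}$ in stackless terms. Since $K_1=\Edb_1,K_1''$, the derivation $\beta\toSigs{K_1,K_2}\diamond$ must have been built by the inductive rule, so there exist $\beta',\Edb_2,K_2''$ with $\beta\trans{\Edb_1,\Edb_2}\beta'$, $K_2=\Edb_2,K_2''$, and $\beta'\toSigs{K_1'',K_2''}\diamond$. Now clause~4 of \Cref{def:SLbisim} applied with exactly this edge $\beta\trans{\Edb_1,\Edb_2}\beta'$ gives $C_2\wtrans{\lpropret{D}}C_2'$ with $C_1'\sub{\chi}{\Edb_1}\bisim*{R}_{\Sigma@\beta',\beta'}C_2'\sub{\chi}{\Edb_2}$; and the stacked response of $\toS{C_2,K_2}$ is the \textsc{PropRet} transition popping $\Edb_2$ off $K_2=\Edb_2,K_2''$, landing at $\toS{C_2'\sub{\chi}{\Edb_2},K_2''}$ — here I must check the stacked $\wtrans{\lpropret{D}}$ actually has this shape, i.e.\ the intermediate $\tau$-steps do not touch the stack (true, \textsc{Tau} leaves $K$ alone) and the single \textsc{PropRet} step pops the top of $K_2$, which is $\Edb_2$. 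Finally, $\beta'\toSigs{K_1'',K_2''}\diamond$ holds in $\Sigma$ and, since $\Sigma@\beta'$ retains all edges reachable from $\beta'$, it also holds in $\Sigma@\beta'$; combined with $C_1'\sub{\chi}{\Edb_1}\bisim*{R}_{\Sigma@\beta',\beta'}C_2'\sub{\chi}{\Edb_2}$ this places the resulting stacked pair in $\tR$. The delicate points are: matching the nondeterministic choice of which edge to pop in the stackless game against the deterministic pop of the physical stack (resolved because $\beta\toSigs{K_1,K_2}\diamond$ picks out a specific edge $\beta\trans{\Edb_1,\Edb_2}\beta'$, which we feed to clause~4), and checking that $\toS{\cdot,\cdot}$ stays defined along the way, i.e.\ the continuation atop $K_i$ agrees with $C_i.\Ed$-discipline — this is where \Cref{lem:toS} and the \emph{Top and Divergence} invariants of \Cref{def:betaSigma} are used. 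The $\bisim R$-to-$\tR$ direction for bisimulations then follows by applying the simulation result to both $\bisim R$ and $\bisim R^{-1}$, using that $\toS{\cdot,\cdot}$ and the inversion operations $\Sigma^{-1},\beta^{-1}$ are compatible (so that $\tR^{-1}=\widetilde{(\bisim R^{-1})}$, modulo the obvious symmetry of $\toSigs{}$ under left-right swap).
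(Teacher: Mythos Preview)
Your overall structure matches the paper's proof closely: same case split on the stacked transition, same use of the derivation $\beta\toSigs{K_1,K_2}\diamond$ to recover the edge popped at \textsc{PropRet}, and the same symmetry argument for the bisimulation part. The termination clause, the $\tau$/\textsc{PropCall} cases, the \textsc{PropRet} case, and the final $\tR^{-1}=\widetilde{(\bisim R^{-1})}$ computation are all handled as in the paper.

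There is one genuine gap, and you flagged it yourself (``this needs to be stated carefully''). For the opponent moves \textsc{OpRet} and \textsc{OpCall}, the stackless simulation clauses~(1) and~(3) require $\vec\alpha\fresh C_2,\beta$, whereas the stacked challenge only gives you $\vec\alpha\fresh\toS{C_2,K_2}$. Your proposed bridge---that $\supp\beta\subseteq\supp{\toS{C_2,K_2}}$---is false in general: $\beta=\betais{C_1^0,C_2^0}$ records entry-point configurations from \emph{both} sides of the game, and names occurring in $C_1^0$ need not appear anywhere in $C_2$ or $K_2$. (You also omit the freshness discussion entirely in your \textsc{OpCall} paragraph, but the same issue arises there.)

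The paper's fix is to invoke equivariance (\Cref{lem:closures} and equivariance of the LTS) to assume without loss of generality that $\vec\alpha\fresh\beta$: since the fresh names $\vec\alpha$ are chosen fresh for $\tC_1$ and $\tC_2$, one may swap them for names $\vec\alpha'$ additionally fresh for $\beta$, run the argument with $\vec\alpha'$, and then permute back. This is the standard nominal manoeuvre, but it is not the argument you sketched; replacing your support-inclusion claim with the permutation argument closes the gap.
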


\subsection{Completeness}\label{sec:completeness}
In order to derive a pushdown bisimulation $\bisim R$ from a (standard) bisimulation $\tR$, we shall 
define an LTS that follows the bisimulation game in the stackless LTS but is faithful to the stack discipline.
 
 \begin{definition}\label{def:sat}
   The  \emph{saturated simulation LTS} contains transitions of the form:
  \[
    (C_1,C_2)\satrans{\eta}(C_1',C_2') 
  \]
  where
$C_1,C_2$ compatible and
  $\eta$ is either $\varepsilon$ or a pair $(\Edb_1,\Edb_2)$. 
  The rules for $\satrans{}$ are given in \cref{fig:Xlts} (note use of stackless LTS). 
We call a continuation graph $\Sigma$ \emph{sat-connected} if whenever $\betais{C_1',C_2'}\toSig{\Ed_1,\Ed_2}\betais{C_1,C_2}$ then
$
 (C_1,C_2)\satrans{\varepsilon}\cdot\satrans{(\Ed_1,\Ed_2)}(C_1', C_2').
$
\end{definition}

\begin{remark}
The LTS defined above is an adaptation of the saturation procedure for pushdown systems presented in~\cite{FinkelWW97}.
The intended meaning of
$    (C_1,C_2)\satrans{\eta}(C_1',C_2') $
is that, using synchronisation on visible moves and 
the stacked LTS, $(\toS{C_1,\noe},\toS{C_2,\noe})$ reduces to $(\toS{C_1',K_1},\toS{C_2',K_2})$  following the simulation game and:
  \begin{itemize}
  \item if $\eta=\varepsilon$ then $K_1=K_2=\noe$;
  \item if $\eta=(\Ed_1,\Ed_2)$ then $(K_1,K_2)=(\Ed_1,\Ed_2)$.
  \end{itemize}
\end{remark}

  
Standard (stacked) similarity is  closed under transitions in the saturated LTS.

\begin{figure*}[t] 
  \[\begin{array}{cc}
      \irule[Tau][tau]{
      C_i\realtrans{\tau}C_i'
      \\
      C_{3-i}=C_{3-i}'
      }{
      (C_1,C_2)
      \satrans{\varepsilon} 
      (C_1',C_2')
      }
      &
            \irule[PropCall][propapp]{
           C_1\realtrans{\lpropapp{\alpha}{D}}C_1'
      \\
      C_2\trans{\lpropapp{\alpha}{D}}C_2'
      }{
      (C_1,C_2)
      \satrans{\varepsilon} 
      (C_1',C_2')
           }
      \\[7mm]
        \irule[OpRet][opret]{
      C_1\realtrans{\lopret{D\hole[\vec\alpha]}}C_1'
      \;\;
      C_2\trans{\lopret{D\hole[\vec\alpha]}}C_2'
      }{
      (C_1,C_2)
      \satrans{\varepsilon} 
      (C_1',C_2')
        }&
              \irule[OpCall][opapp]{%
      C_1\realtrans{\lopapp{j}{D\hole[\vec\alpha]}}C_i'
           \;\;
           C_2\trans{\lopapp{j}{D\hole[\vec\alpha]}}C_2'
           }{
      (C_1,C_2)
      \satrans{(C_1.\Ed,C_2.\Ed)} 
      (C_1',C_2')
        }
      \\[7mm]
      \multicolumn{2}{c}{%
    \irule[PropRet][propret]{
      (C_1,C_2)
      \satrans{(\Edb_1,\Edb_2)} 
      (C_1',C_2')
      \satrans{\varepsilon} 
      (C_1'',C_2'')
      \\
      C_1''\realtrans{\lpropret{D}}C_1'''
      \\
      C_2''\trans{\lpropret{D}}C_2'''
      }{
      (C_1,C_2)
      \satrans{\varepsilon} 
      (C_1'''\sub{\chi}{\Ed_1},C_2'''\sub{\chi}{\Ed_2})
      }}
      \\[7mm]
      \irule[Refl][refl]{
      }{
      (C_1,C_2)
      \satrans{\varepsilon} 
      (C_1,C_2)
      }
      &
      \irule[Trans][trans]{
      (C_1,C_2)
      \satrans{\varepsilon} 
      (C_1',C_2')
      \satrans{\varepsilon} 
      (C_1'',C_2'')
      }{
      (C_1,C_2)
      \satrans{\varepsilon} 
        (C_1'',C_2'')
      }
   \end{array}\]
  \hrule
  \caption{The Saturated Simulation Labelled Transition System.}\label{fig:Xlts}
\end{figure*}


\begin{lemma}\label{lem:sat}
  Given $\toS{C_1,K_1}\simil \toS{C_2,K_2}$ and
  $(C_1,C_2)\satrans{\eta}(C_1',C_2')$:
  \begin{compactitem}
  \item if $\eta=\varepsilon$ then $\toS{C_1',K_1}\simil \toS{C_2',K_2}$,
      \item if $\eta=(\Ed_1,\Ed_2)$ then $\toS{C_1',(\Ed_1,K_1)}\simil \toS{C_2',(\Ed_2,K_2)}$.
  \end{compactitem}
\end{lemma}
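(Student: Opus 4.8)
The statement is that stacked similarity is closed under the saturated transitions: if $\toS{C_1,K_1}\simil\toS{C_2,K_2}$ and $(C_1,C_2)\satrans{\eta}(C_1',C_2')$, then the resulting configurations (with stacks unchanged for $\eta=\varepsilon$, or pushed by $(\Ed_1,\Ed_2)$ for $\eta=(\Ed_1,\Ed_2)$) remain similar. The plan is to proceed by induction on the derivation of $(C_1,C_2)\satrans{\eta}(C_1',C_2')$ using the rules in \cref{fig:Xlts}, appealing throughout to \Cref{lem:closures} (similarity is a nominal set and closed under $\tau$-transitions) and to the definition of (stacked) similarity in \Cref{def:bisim}. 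The key observation tying everything together is that each saturated transition rule corresponds to the ``synchronised'' stacked bisimulation game where $C_1$ challenges and $C_2$ responds: a $\satrans{}$ step performs exactly the LTS moves from \Cref{def:lts} on the corresponding $\toS{}$-images, with stacks tracked explicitly by $\eta$.

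First I would handle the base cases and the simple congruence cases. For \iref{refl} there is nothing to prove; for \iref{trans} and the $\varepsilon$-case of compositionality, I apply the induction hypothesis twice. For \iref{tau} I use that $\toS{C_i,K_i}\realtrans\tau\toS{C_i',K_i}$ is a valid $\tau$-transition in the stacked LTS (since $\tau$ leaves $K$ untouched), so closure of $\simil$ under $\tau$ from \Cref{lem:closures} applies — here I must be slightly careful that $C_1$ challenges on a $\tau$ move means $C_2$ matches with $\wtrans\tau$, which again preserves stacks and similarity. For \iref{propapp} and \iref{opret} the analysis is the same: these are visible moves that do not touch the stack, $C_1$'s move is a challenge in the stacked simulation game (using $\realtrans\eta$, i.e.\ not via \iref{dummy}, so it's a genuine challenge), and the definition of $\simil$ directly supplies a matching response $\toS{C_2,K_2}\wtrans\eta\toS{C_2',K_2}$ with $\toS{C_1',K_1}\simil\toS{C_2',K_2}$; freshness of the introduced names $\vec\alpha$ with respect to $C_2$ is handled by the $\vec\alpha\fresh C_2$ side-condition and nominal closure of $\simil$, possibly after a permutation of fresh names to avoid clashes with $K_2$.

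The two genuinely stack-manipulating cases are \iref{opapp} and \iref{propret}, and \iref{propret} is where I expect the main obstacle. For \iref{opapp}: the move $\lopapp{j}{D[\vec\alpha]}$ on the stacked side pushes $C_i.\Ed$ onto $K_i$, so $\toS{C_1,K_1}\trans{\lopapp{j}{D[\vec\alpha]}}\toS{C_1',(\Ed_1,K_1)}$ with $\Ed_1=C_1.\Ed$; invoking the $\simil$ clause for opponent calls gives a matching push of $\Ed_2=C_2.\Ed$ onto $K_2$, yielding exactly $\toS{C_1',(\Ed_1,K_1)}\simil\toS{C_2',(\Ed_2,K_2)}$ as required (one must check $\Ed,K$ is defined on both sides — this follows from compatibility of $C_1,C_2$ and the $\Ed=\diamond\iff K=\noe$ invariant baked into $\toS{}$). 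For \iref{propret}: we are given $(C_1,C_2)\satrans{(\Edb_1,\Edb_2)}(C_1',C_2')\satrans\varepsilon(C_1'',C_2'')$, and $C_i''\realtrans{\lpropret D}C_i'''$; the conclusion is $\toS{C_1''',(\text{pop})}$ — i.e.\ with the just-pushed $(\Edb_1,\Edb_2)$ removed and $\chi$ substituted by $\Edb_i$ — is similar. The argument is: apply the induction hypothesis to the first step to get $\toS{C_1',(\Edb_1,K_1)}\simil\toS{C_2',(\Edb_2,K_2)}$; apply it again (the $\varepsilon$-case) to the second step, obtaining $\toS{C_1'',(\Edb_1,K_1)}\simil\toS{C_2'',(\Edb_2,K_2)}$; then $C_1''$'s proponent-return challenge pops $\Edb_1$ and instantiates $\chi\mapsto\Edb_1$ inside the expression, matching \iref{propretf} of \cref{fig:lts} exactly — and likewise for $C_2''$. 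The subtlety is aligning the stackless rule \iref{propret}'s explicit $\sub\chi{\Ed_i}$ substitution with the stacked rule \iref{propretf}, which carries the popped continuation $\Edb$ into the opponent configuration rather than substituting it; this requires unpacking how $\chi$ is used as a placeholder (per the discussion before \Cref{def:SLlts}) and checking the $\toS{}$-image of $C_i'''\sub\chi{\Edb_i}$ with stack $K_i$ is precisely what \iref{propretf} produces from $\toS{C_i'',(\Edb_i,K_i)}$. I would also need to verify the side-condition in \iref{propret} that the $D$ in $C_1$'s and $C_2$'s returns agree (guaranteed by the rule) and that weak matching $\wtrans{\lpropret D}$ on the $C_2$ side still yields a configuration whose $\chi$-substitution behaves correctly — which it does since intervening $\tau$'s commute with $\sub\chi{\Edb_2}$ as $\chi$ does not occur in reducible positions.
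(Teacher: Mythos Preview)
Your proposal is correct and follows essentially the same approach as the paper: rule induction on the derivation of $(C_1,C_2)\satrans{\eta}(C_1',C_2')$, using \Cref{lem:closures} for \textsc{Tau}, and the simulation clauses for the visible moves, with the induction hypothesis applied twice in \textsc{PropRet}. One small point: where you write that ``the definition of $\simil$ directly supplies a matching response $\toS{C_2,K_2}\wtrans\eta\toS{C_2',K_2}$,'' the paper explicitly invokes \emph{determinacy} of the stacked LTS to argue that the simulation's response configuration coincides with the specific $C_2'$ fixed by the saturated rule's premise $C_2\trans{\eta}C_2'$ --- this is the step that identifies the two, and is worth naming. Also, your worry about ``intervening $\tau$'s'' on the $C_2$ side in \textsc{PropRet} is misdirected: the saturated rule's premise is a \emph{strong} transition $C_2''\trans{\lpropret{D}}C_2'''$, and the paper handles the alignment by noting that $C_2''$ (being a value configuration) has no other transitions, so determinacy again applies.
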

\begin{proof}
  We use rule induction.
The case of \textsc{Refl} is trivial, while that of \textsc{Trans} follows directly from induction hypothesis. 
For rule \textsc{Tau} we use \Cref{lem:closures}. For \textsc{PropCall} we  use determinacy of the stacked LTS. For \textsc{OpCall, OpRet}, suppose $C_i\trans{\eta}C_i'$ for $i=1,2$ and $\eta$ having abstract names $\vec\alpha\fresh C_1,C_2$.
Then, by hypothesis and determinacy,  $\tC_1'\simil\tC_2'$.
For \textsc{PropRet}, the induction hypothesis gives us $\toS{C_1'',(\Ed_1,K_1)}\simil \toS{C_2'',(\Ed_2,K_2)}$. Combining this with hypotheses
      $C_1''\realtrans{\lpropret{D}}C_1'''$ and
$C_2''\trans{\lpropret{D}} C_2'''$ (and using the fact that $C_2''$ has no other transitions),
we obtain $\toS{C_1'''\sub{\chi}{\Ed_1},K_1}\simil \toS{C_2'''\sub{\chi}{\Ed_2},K_2}$.
\qed
\end{proof}

The main result is the following (cf.\ \Cref{app:completeness}). Note $\simil$ is standard (stacked) similarity.

\begin{lemma}\label{lem:completeness}
The following is a weak (pushdown) simulation:
  \begin{align}
    \bisim{R} = \{\ &(C_1,C_2,\Sigma,\beta) \mid
\Sigma\text{ sat-connected}\land
                 (C_1,C_2,\Sigma,\beta)\text{ compatible}\notag\\
                    &{}\land\forall K_i.\ \beta\toSigsD{K_1,K_2}
                      \implies
                                                \toS{C_1,K_1}\simil\toS{C_2,K_2}\tag{A}\label{cond:a}\\
              & {}\land\forall C_i',K_i.\  \betais{C'_1,C'_2}\toSigsD{K_1,K_2}
                \implies
                \toS{C_1',K_1}\simil\toS{C_2',K_2}\tag{A$^*$}\label{cond:a1}\\
              & {}\land\beta\neq\diamond\implies\exists C_1',C_2'.\ \beta=\betais{C_1',C_2'}\land 
 (C_1',C_2')\satrans{\varepsilon}(C_1,C_2)\tag{B}\label{cond:b}
\    \}\\[-11mm]\notag
  \end{align}\qed
\end{lemma}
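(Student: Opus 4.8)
The plan is to verify that the relation $\bisim R$ in the statement satisfies each of the clauses 0--4 of \Cref{def:SLbisim}, given an arbitrary tuple $(C_1,C_2,\Sigma,\beta)\in\bisim R$. The tuple comes with three invariants: condition~\eqref{cond:a} (which expresses that $C_1,C_2$ with \emph{every} compatible stack obtained from $\beta$ are stacked-similar), condition~\eqref{cond:a1} (the same for every vertex of $\Sigma$, not just the current $\beta$), condition~\eqref{cond:b} (that a non-top $\beta$ is the entry point of a pair reachable via $\satrans\varepsilon$ from the configurations recorded in $\beta$), plus sat-connectedness of $\Sigma$ and compatibility of the tuple. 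The strategy for each clause is: take the stackless challenge from $C_1$, lift it via $\toS{\cdot,K}$ to a stacked challenge for a suitable stack $K$ supplied by \Cref{lem:toS} (using $\beta\toSigsD{K_1,K_2}$, which exists by reachability), use standard (stacked) similarity $\simil$ from \eqref{cond:a} to obtain a stacked response, project it back down to a stackless response, and then re-establish the three invariants for the target tuple.

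Concretely: clause~0 (termination) is immediate since $C_1\downarrow$ forces $C_1.\Ed=\diamond$, hence $\beta=\diamond$ by compatibility, hence $\toS{C_1,\noe}\simil\toS{C_2,\noe}$ are both top-level, and stacked similarity preserves $\downarrow$. Clauses~1 and~3 (OpRet, OpCall) and clause~2 ($\tau$, PropCall) correspond exactly to rules \iref{opret}, \iref{opapp}, \iref{tau}, \iref{propapp} of the saturated LTS of \Cref{fig:Xlts}: the stackless challenge from $C_1$ is matched by $C_2$ using $\simil$ on the lifted configurations (and \Cref{lem:sat} / \Cref{lem:closures} to propagate similarity along the $\satrans{}$ step), and then \eqref{cond:b} is extended by composing with the new $\satrans\varepsilon$ or $\satrans{(\Ed_1,\Ed_2)}$ step. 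In the OpCall case we must further check that the extended graph $\Sigma' = \Sigma[\beta'\trans{C_1.\Ed,C_2.\Ed}\beta]$ remains sat-connected (the new edge $\beta'\trans{C_1.\Ed,C_2.\Ed}\beta$ satisfies the sat-connectedness requirement precisely because $(C_1,C_2)\satrans{\varepsilon}\cdot\satrans{(C_1.\Ed,C_2.\Ed)}(C_1',C_2')$ holds — the left factor by \eqref{cond:b}, the right by rule \iref{opapp}) and that \eqref{cond:a} for the new $\beta'$ holds — this follows from \eqref{cond:a} for $\beta$ together with the stacked OpCall rule, since any stack for $\beta'$ is $(\Ed_1,K_1),(\Ed_2,K_2)$ with $(K_1,K_2)$ a stack for $\beta$. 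Nominal closure of $\Sigma'$ is handled because \eqref{cond:a1} is already stated orbit-wide (closed under permutation by \Cref{lem:closures}), and because the added orbit $\orb{\beta'\trans{\ldots}\beta}$ inherits similarity by equivariance.

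Clause~4 (PropRet) is the crux and the step I expect to be the main obstacle. Here $C_1\realtrans{\lpropret{D}}C_1'$ and an edge $\beta\toSig{\Edb_1,\Edb_2}\beta'$ is chosen; I must produce $C_2\wtrans{\lpropret{D}}C_2'$ with $C_1'\sub\chi{\Edb_1}\bisim*{R}_{\Sigma@\beta',\beta'}C_2'\sub\chi{\Edb_2}$. The idea is to instantiate \eqref{cond:b} and the edge $\beta\toSig{\Edb_1,\Edb_2}\beta'$ into the hypotheses of rule \iref{propret} of \Cref{fig:Xlts}: by sat-connectedness of $\Sigma$, the edge gives $(C_1'',C_2'')\satrans\varepsilon\cdot\satrans{(\Edb_1,\Edb_2)}$-reachability into the pair recorded in $\beta$ where $\beta'=\betais{C_1'',C_2''}$; composing with \eqref{cond:b} (which embeds $\beta$'s recorded pair via $\satrans\varepsilon$ to $(C_1,C_2)$) and the stackless PropRet transitions of $C_1$ and (via $\simil$ using $\toS{\cdot,(\Edb,K)}$) of $C_2$, rule \iref{propret} yields $(C_1''',C_2''')\satrans\varepsilon(C_1'\sub\chi{\Edb_1},C_2'\sub\chi{\Edb_2})$-style membership, whence \Cref{lem:sat} gives $\toS{C_1'\sub\chi{\Edb_1},K_1'}\simil\toS{C_2'\sub\chi{\Edb_2},K_2'}$ for the appropriate stacks of $\beta'$. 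The delicate points are (i) that the target graph is the \emph{restriction} $\Sigma@\beta'$, so I must check it is still sat-connected (it is a subgraph, so every remaining edge retains its sat-connectedness witness) and that \eqref{cond:a},\eqref{cond:a1} transfer to it — \eqref{cond:a1} for $\Sigma@\beta'$ is a subset of the \eqref{cond:a1} instances already available, and \eqref{cond:a} for $\beta'$ is one specific such instance; (ii) re-establishing \eqref{cond:b} for $\beta'$, which is exactly the sat-connectedness witness of the edge $\beta\toSig{\Edb_1,\Edb_2}\beta'$ applied in the right direction, modulo a $\satrans\varepsilon$ step absorbed by \iref{trans}; and (iii) bookkeeping the $\chi$-substitutions so that $\toS{\cdot,\cdot}$ stays defined — this is where the placeholder $\chi$ and the extended push operation on $\bot$-tagged components must be tracked carefully. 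I would relegate the fully spelled-out induction and these substitution-bookkeeping details to \Cref{app:completeness}.
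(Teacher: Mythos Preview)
Your high-level strategy matches the paper's: lift the stackless challenge to the stacked LTS via $\toS{\cdot,K}$, use stacked $\simil$ from \eqref{cond:a}, project back, and re-establish the invariants using the rules of the saturated LTS together with \Cref{lem:sat}. The termination, $\tau$, \textsc{PropCall}, \textsc{OpRet} and \textsc{PropRet} cases are essentially as in the paper.

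There is, however, a genuine gap in your \textsc{OpCall} case. You claim that ``any stack for $\beta'$ is $(\Ed_1,K_1),(\Ed_2,K_2)$ with $(K_1,K_2)$ a stack for $\beta$'', intending to deduce \eqref{cond:a} for $(\Sigma',\beta')$ directly from \eqref{cond:a} for $(\Sigma,\beta)$. This is not correct: the extended graph $\Sigma'=\Sigma[\beta'\trans{\Ed_1,\Ed_2}\beta]$ contains the \emph{entire orbit} of the new edge, and paths from $\beta$ (or any vertex) to $\diamond$ in $\Sigma'$ may freely interleave old $\Sigma$-edges with permuted copies of the new edge. So $(K_1,K_2)$ is a stack for $\beta$ in $\Sigma'$, not in $\Sigma$, and you cannot simply appeal to the old \eqref{cond:a}. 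The same applies to \eqref{cond:a1}: your equivariance remark only covers the new endpoint $\beta'$ and its orbit, not arbitrary $\Sigma'$-paths. The paper handles this by a \emph{rule induction} on the derivation of $\beta''\toSigsD[\Sigma']{K_1'',K_2''}$: at each step it uses the (just established) sat-connectedness of $\Sigma'$ to obtain a $\satrans{\varepsilon}\cdot\satrans{(\hat\Ed_1,\hat\Ed_2)}$ witness for the edge, and then invokes \Cref{lem:sat} on the inductive hypothesis. This induction is the technical heart of the \textsc{OpCall} case; you have placed the ``fully spelled-out induction'' in \textsc{PropRet}, but it belongs here.

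Two smaller points. First, in the \textsc{Op} cases the fresh names $\vec\alpha$ are only assumed fresh for $C_2,\beta$, not for the particular stacks $K_1,K_2$ you pick; the paper addresses this with a permutation swap $(\vec\alpha\ \vec\alpha')$ and \Cref{lem:closures}, which your outline omits. Second, in \textsc{PropRet} you write that ``\eqref{cond:a} for $\beta'$ is one specific instance of \eqref{cond:a1}'', but \eqref{cond:a} at the target tuple concerns $C_1'\sub\chi{\Edb_1},C_2'\sub\chi{\Edb_2}$, not the configurations recorded inside $\beta'$; the paper obtains it directly by lifting $C_i$ with stack $(\Ed_i,K_i')$ (valid for $\beta$ via the chosen edge) and firing the stacked \textsc{PropRet}.
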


We can now prove full abstraction.   Recall $\Sigma_\diamond=\{(\diamond,\diamond,\diamond,\diamond)\}$.

  \begin{theorem}\label{thm:complete}
  $\oldpconf{\emptyA}{\emptyG}{\emptyK}{\emptyS}{e_1} \simil \oldpconf{\emptyA}{\emptyG}{\emptyK}{\emptyS}{e_2}$
  iff 
  $\pconf{\emptyA}{\emptyG}{\emptyS}{e_1} \simil_{\Sigma_\diamond,\diamond} \pconf{\emptyA}{\emptyG}{\emptyS}{e_2}$.
\end{theorem}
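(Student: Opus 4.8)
The plan is to prove the two directions of the biconditional separately, using the two lemmas that have just been set up, namely \Cref{lem:soundness} for one direction and \Cref{lem:completeness} for the other. Both directions reduce to exhibiting an appropriate (bi)simulation relation and checking that the designated initial tuple sits inside it; since $\simil$ (in either LTS) is the \emph{largest} weak simulation, membership in any weak simulation suffices.

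For the right-to-left direction, assume $\pconf{\emptyA}{\emptyG}{\emptyS}{e_1} \simil_{\Sigma_\diamond,\diamond} \pconf{\emptyA}{\emptyG}{\emptyS}{e_2}$. I would take $\bisim R = {\simil}$ (the stackless similarity relation) and apply \Cref{lem:soundness} to obtain that the stacked relation $\tR$ is a weak simulation. It then remains to observe that the pair $(\oldpconf{\emptyA}{\emptyG}{\emptyK}{\emptyS}{e_1},\oldpconf{\emptyA}{\emptyG}{\emptyK}{\emptyS}{e_2})$ lies in $\tR$: indeed, $\pconf{\emptyA}{\emptyG}{\emptyS}{e_1} \simil_{\Sigma_\diamond,\diamond} \pconf{\emptyA}{\emptyG}{\emptyS}{e_2}$, and by the \emph{Reachability} rule $\diamond \toSigs{\noe,\noe}\diamond$ in $\Sigma_\diamond$, so with $K_1=K_2=\noe$ we get $\toS{\pconf{\emptyA}{\emptyG}{\emptyS}{e_i},\noe}=\oldpconf{\emptyA}{\emptyG}{\emptyK}{\emptyS}{e_i}$. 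Hence the stacked configurations are related by a weak simulation, so $\oldpconf{\emptyA}{\emptyG}{\emptyK}{\emptyS}{e_1}\simil\oldpconf{\emptyA}{\emptyG}{\emptyK}{\emptyS}{e_2}$.

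For the left-to-right direction, assume $\oldpconf{\emptyA}{\emptyG}{\emptyK}{\emptyS}{e_1} \simil \oldpconf{\emptyA}{\emptyG}{\emptyK}{\emptyS}{e_2}$. I would use the relation $\bisim R$ defined in \Cref{lem:completeness}, which is shown there to be a weak pushdown simulation, and check that $(\pconf{\emptyA}{\emptyG}{\emptyS}{e_1},\pconf{\emptyA}{\emptyG}{\emptyS}{e_2},\Sigma_\diamond,\diamond)$ belongs to it. This amounts to verifying the four conjuncts in the definition of $\bisim R$ for the initial tuple: (i) $\Sigma_\diamond$ is sat-connected — vacuously, since its only edge is $\diamond\toSig{\diamond,\diamond}\diamond$ and $(\diamond,\diamond)\satrans{\varepsilon}\cdot\satrans{(\diamond,\diamond)}(\diamond,\diamond)$ needs a little care but $\diamond$ plays no proper role (one uses \textsc{Refl}); (ii) the initial tuple is compatible — immediate from $\dom{\emptyG}=\dom{\emptyG}$ and $\beta=\diamond$ with the initial terms having no abstract names; (iii) condition~\eqref{cond:a}: the only $K_1,K_2$ with $\diamond\toSigsD{K_1,K_2}$ are $K_1=K_2=\noe$, and $\toS{\pconf{\emptyA}{\emptyG}{\emptyS}{e_i},\noe}=\oldpconf{\emptyA}{\emptyG}{\emptyK}{\emptyS}{e_i}$, so this is exactly the hypothesis; (iv) condition~\eqref{cond:a1} holds vacuously, as $\diamond\in\dom{\Sigma_\diamond}$ is not of the form $\betais{C_1',C_2'}$; and condition~\eqref{cond:b} holds vacuously since $\beta=\diamond$. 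Then $\bisim R$ being a weak simulation gives $\pconf{\emptyA}{\emptyG}{\emptyS}{e_1} \simil_{\Sigma_\diamond,\diamond} \pconf{\emptyA}{\emptyG}{\emptyS}{e_2}$.

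I expect the only genuinely delicate point to be the verification that the initial tuple really satisfies all the side-conditions packaged into \Cref{lem:completeness}'s relation — in particular checking that $\Sigma_\diamond$ counts as sat-connected and that the compatibility/validity bookkeeping of \Cref{def:betaSigma,def:compatible} holds for $(\Sigma_\diamond,\diamond)$; all of these are routine once one unwinds the definitions, and the real mathematical content has already been discharged in \Cref{lem:soundness,lem:completeness}. Strictly speaking the theorem as stated concerns only similarity, so no symmetric/bisimulation argument is needed; the bisimulation version would follow by additionally applying the same lemmas to the inverse relations, using that $R^{-1}$ and $\Sigma^{-1}$ are well-behaved.
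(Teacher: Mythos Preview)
Your proposal is correct and follows essentially the same approach as the paper: invoke \Cref{lem:soundness} for the right-to-left direction and \Cref{lem:completeness} for the left-to-right, verifying that the initial tuple $(\pconf{\emptyA}{\emptyG}{\emptyS}{e_1},\pconf{\emptyA}{\emptyG}{\emptyS}{e_2},\Sigma_\diamond,\diamond)$ satisfies the defining conditions of the relation in \Cref{lem:completeness}. Your hesitation about sat-connectedness of $\Sigma_\diamond$ is unnecessary: the condition in \Cref{def:sat} quantifies only over edges of the form $\betais{C_1',C_2'}\toSig{\Ed_1,\Ed_2}\betais{C_1,C_2}$, and $\Sigma_\diamond$'s sole edge has endpoints $\diamond$, so the condition is vacuous outright.
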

\begin{proof}
We let $C_i=\pconf{\emptyA}{\emptyG}{\emptyS}{e_i}$ and $\tC_i=\oldpconf{\emptyA}{\emptyG}{\emptyK}{\emptyS}{e_i}$, for $i=1,2$.
  Note first that $\toS{C_i,\noe}=\tC_i$.
  The right-to-left direction follows from \cref{lem:soundness}. For the converse, suppose  $\tC_1\simil\tC_2$. By  \cref{lem:completeness} there is weak simulation $\bisim{R}$ defined as in the lemma. We claim that $C_1\bisim*R_{\Sigma_\diamond,\diamond}C_2$. We have that $\Sigma_\diamond$ is sat-connected, $(C_1,C_2,\Sigma_\diamond,\diamond)$ is compatible, conditions \cref{cond:a1} and \cref{cond:b} are  vacuously true, while 
 \cref{cond:a} is simplified to $\tC_1\simil\tC_2$. Thus, $C_1\simil_{\Sigma_\diamond,\diamond} C_2$.\qed
\end{proof}
\begin{corollary}
  Contextual equivalence is decidable for context-free expressions.
\end{corollary}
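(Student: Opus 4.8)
The plan is to derive the corollary by chaining three results already available: the decidability of PDNF bisimilarity on context-free expressions (\cref{thm:decide}), the full abstraction of standard NF bisimulation for contextual equivalence (\cref{theorem:SC}), and the coincidence of NF and PDNF bisimilarity that is the content of \cref{sec:soundness,sec:completeness}. Concretely, the first task is to establish that for all closed, abstract-name-free $\vdash e_1,e_2 : T$ we have $e_1 \cxteq e_2$ iff $e_1 \bisimilPD e_2$; the corollary is then immediate by feeding any pair $e_1,e_2$ that is context-free in the sense of \cref{def:cxt-free-term} to the decision procedure of \cref{thm:decide}. Observe that full abstraction holds for \emph{all} expressions, so restricting to context-free ones loses nothing, and that the inputs are \emph{assumed} context-free, so there is no need to decide context-freeness itself.

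For the coincidence $e_1 \cxteq e_2$ iff $e_1 \bisimilPD e_2$: the equivalence $e_1 \cxteq e_2$ iff $e_1 \bisimil e_2$ is precisely \cref{theorem:SC}, so it remains to relate $\bisimil$ and $\bisimilPD$. The \emph{soundness} direction ($e_1 \bisimilPD e_2 \Rightarrow e_1 \bisimil e_2$) is immediate from \cref{lem:soundness}: a witnessing weak bisimulation $\bisim{R}$ with $\pconf{\emptyA}{\emptyG}{\emptyS}{e_1}\bisim*{R}_{\Sigma_\diamond,\diamond}\pconf{\emptyA}{\emptyG}{\emptyS}{e_2}$ lifts to a weak bisimulation $\tR$, and since $\diamond\toSigs{\noe,\noe}\diamond$ and $\toS{\pconf{\emptyA}{\emptyG}{\emptyS}{e_i},\noe}=\oldpconf{\emptyA}{\emptyG}{\emptyK}{\emptyS}{e_i}$, the pair $\oldpconf{\emptyA}{\emptyG}{\emptyK}{\emptyS}{e_1},\oldpconf{\emptyA}{\emptyG}{\emptyK}{\emptyS}{e_2}$ lies in $\tR$, giving $e_1 \bisimil e_2$. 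The \emph{completeness} direction ($e_1 \bisimil e_2 \Rightarrow e_1 \bisimilPD e_2$) is where \cref{thm:complete} and \cref{lem:completeness} come in, but these are phrased for similarity: from $\oldpconf{\emptyA}{\emptyG}{\emptyK}{\emptyS}{e_1}\bisimil\oldpconf{\emptyA}{\emptyG}{\emptyK}{\emptyS}{e_2}$ one extracts mutual similarity (in both directions, using that $\bisimil$ is symmetric and contained in $\simil$), applies \cref{thm:complete} twice to get $\pconf{\emptyA}{\emptyG}{\emptyS}{e_1}\simil_{\Sigma_\diamond,\diamond}\pconf{\emptyA}{\emptyG}{\emptyS}{e_2}$ together with the reverse, and must then re-assemble these into a PDNF \emph{bi}simulation.

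I expect that re-assembly to be the only genuinely non-routine step. It rests on the fact that the stacked LTS is (weakly) deterministic — exactly what the \iref{dummy} rule together with the priority side-condition of \cref{rem:bots} is designed to ensure — and that this is inherited by the stackless LTS and by the PDNF game, in the sense that the Defender's response to any Challenger move is uniquely determined (for \textsc{PropRet} the Challenger additionally picks an edge $\beta\toSig{\Edb_1,\Edb_2}\beta'$, but given that choice the target tuple, including $\Sigma@\beta'$, is forced); on such systems mutual weak similarity coincides with weak bisimilarity. I would make this precise by checking that the union of a PDNF weak simulation witnessing one direction with the $(\cdot)^{-1}$-image of a witness for the other direction satisfies the weak bisimulation clauses of \cref{def:SLbisim}: compatibility and continuation-graph well-definedness are already maintained by each simulation, the $\Sigma,\beta$ bookkeeping is harmless at the initial tuple because $\Sigma_\diamond^{-1}=\Sigma_\diamond$ and $\diamond^{-1}=\diamond$, and Defender-determinacy makes the clauses in the two directions close up on the same target tuples. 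With $\bisimil$ and $\bisimilPD$ thus identified, and $\cxteq$ equal to $\bisimil$ by \cref{theorem:SC}, the decidability of $\bisimilPD$ on context-free inputs from \cref{thm:decide} yields decidability of $\cxteq$ on context-free expressions.
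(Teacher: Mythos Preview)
Your approach is correct and matches the paper's one-line proof ``Follows from \Cref{thm:decide,thm:complete}'', which leaves implicit both the appeal to \Cref{theorem:SC} and the passage from the similarity statement of \Cref{thm:complete} to bisimilarity; you are right that Defender-determinacy is what closes that gap.

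One technical correction on the reassembly step: the relation you want is the \emph{intersection}
\[
\{(C_1,C_2,\Sigma,\beta)\mid C_1\simil_{\Sigma,\beta}C_2 \ \wedge\ C_2\simil_{\Sigma^{-1},\beta^{-1}}C_1\},
\]
not the union $S\cup(S')^{-1}$ of two witnessing simulations. On a tuple that lies in $(S')^{-1}$ but not in $S$, the forward simulation clause can fail, since $S'$ only guarantees that $C_1$ matches moves of $C_2$, not conversely; taking the union does not repair this. With the intersection, your determinacy argument goes through: a challenge from $C_1$ is answered via $\simil_{\Sigma,\beta}$, and determinacy forces the unique response $C_2'$ to also satisfy $C_2'\simil_{\Sigma'^{-1},\beta'^{-1}}C_1'$ (by running the $\Sigma^{-1},\beta^{-1}$ simulation on the same move), so the target tuple stays in the intersection.
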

\begin{proof}
  Follows from \Cref{thm:decide,thm:complete}.\qed
\end{proof}

  \section{Up-to Techniques}
  \label{sec:up-to}
  
PDNF bisimulation supports the standard techniques: 
up to identity, up to garbage collection, up to beta reductions and up to name permutations
(see Appendix~\ref{app:simple}).
Here we present an \emph{up to name reuse}, which is important for finitising examples such as those in \cref{sec:motivating-example}, and a redesigned \emph{up to separation} technique from \cite{KoutavasLT22}, which is effective in finitising the bisimulation game of many examples.
We develop our up-to techniques using the theory of bisimulation enhancements from \cite{PousS11,PousCompanion}, which is based on \emph{weak progression}, summarised below.


\begin{definition} \label{def:wp} 
  We write $\WP(R)$ for the monotone functional derived from \cref{def:SLbisim}.
%
\end{definition} 
\begin{definition}[Progressions ($\progress$)]\label{def:progression} 
  ~
  \begin{itemize}
    \item 
      $\bisim{R}$ \emph{weakly progresses} to $\bisim{S}$, and we write 
      ${\bisim{R}}\progress[\WP]{\bisim{S}}$ when ${\bisim{R}}\subseteq\WP(\bisim{S})$.

    \item For monotone functions 
      $f, g$ 
    we write $f \progress[\WP] g$ 
      when $f\comp\WP \sqsubseteq \WP\comp g$.
\defqed
  \end{itemize}
\end{definition} 

\begin{lemma}
  \bisim{R} is a weak simulation when ${\bisim{R}}\progress[\WP]{\bisim{R}}$.
  Also, $({\simil}) = (\gfp\WP)$.
  \qed
\end{lemma}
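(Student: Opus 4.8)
The plan is to obtain both claims directly by unfolding the definition of $\WP$ and then invoking Knaster--Tarski. Recall that, by \Cref{def:wp}, $\WP$ is the monotone operator on relations whose defining clause is precisely the conjunction of conditions (0)--(4) of \Cref{def:SLbisim}: a tuple $(C_1,C_2,\Sigma,\beta)$ lies in $\WP(\bisim{R})$ exactly when it is compatible, $C_1\downarrow$ implies $C_2\downarrow$, and every challenge transition out of $C_1$ is answered by a (weak) transition out of $C_2$ whose resulting tuple---with $\Sigma$ and $\beta$ updated as dictated by the relevant case---lies in $\bisim{R}$. Hence, by \Cref{def:progression}, the hypothesis ${\bisim{R}}\progress[\WP]{\bisim{R}}$ unfolds to ${\bisim{R}}\subseteq\WP(\bisim{R})$, which is literally the assertion that every element of $\bisim{R}$ is compatible and satisfies the transfer conditions of \Cref{def:SLbisim} relative to $\bisim{R}$ itself; that is, $\bisim{R}$ is a weak simulation. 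This settles the first claim, and the converse holds by the same unfolding (a weak simulation is a post-fixed point of $\WP$)---a fact I use next.

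For the equality $({\simil}) = (\gfp\WP)$ I would first record that $\WP$ is monotone: clauses (1)--(4) each assert the \emph{existence} of a response whose target tuple lies in the argument relation, so enlarging the argument can only enlarge the value of $\WP$, while clause (0) does not refer to the argument at all. Thus $\WP$ is a monotone endofunction on the complete lattice $(\mathcal P(\mathsf{Tup}),\subseteq)$, where $\mathsf{Tup}$ is the set of all compatible, well-formed tuples $(C_1,C_2,\Sigma,\beta)$---a genuine set, since expressions, stores, environments, finite name sets and orbit-finite continuation graphs all form nominal sets. By the Knaster--Tarski theorem, $\WP$ has a greatest fixed point, and it coincides with the union of all post-fixed points, $\bigcup\{\bisim{R}\mid\bisim{R}\subseteq\WP(\bisim{R})\}$. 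By the first claim, used in both directions, the post-fixed points of $\WP$ are exactly the weak simulations, so their union is the largest weak simulation, which is $\simil$ by definition. Hence $\simil = \gfp\WP$.

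There is no genuine obstacle here; the content is pure fixed-point bookkeeping. The one point meriting a moment's care is that the ambient lattice must be taken as the powerset of \emph{compatible, well-formed} tuples, so that $\WP$ is indeed an endofunction on it: for this one needs that the compatibility conditions of \Cref{def:compatible} and the well-formedness conditions on $\Sigma$ from \Cref{def:betaSigma} are preserved along the transfer clauses---the extension of $\Sigma$ in case~3 and its restriction in case~4. This preservation is exactly what is checked in the remark following \Cref{def:SLbisim}, so the Knaster--Tarski argument goes through verbatim.
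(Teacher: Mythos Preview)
Your proposal is correct and is precisely the standard unfolding the paper leaves implicit: the lemma is stated with a bare \qed{} and no proof, since by \Cref{def:progression} the condition $\bisim{R}\progress[\WP]\bisim{R}$ literally reads $\bisim{R}\subseteq\WP(\bisim{R})$, which is the post-fixed-point formulation of \Cref{def:SLbisim}, and the identification of $\simil$ with $\gfp\WP$ is then immediate Knaster--Tarski. Your extra care about the ambient lattice and preservation of compatibility along the transfer clauses is accurate (and indeed handled in the remark after \Cref{def:SLbisim}), though the paper treats all of this as routine.
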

The following gives the definition of an up-to technique, what it means to be sound, and the stronger notion of compatibility.
\begin{definition}~
  \begin{itemize}
    \item\emph{Simulation up-to:}
      \bisim{R} is a \emph{weak simulation up to $f$} when ${\bisim{R}} \progress[\WP] f(\bisim{R})$.

    \item\emph{Sound up-to technique:}
      Function $f$ is \emph{$\WP$-sound} when $\gfp{\WP\comp f} \subseteq \gfp\WP$.

  \end{itemize}
\end{definition}

\begin{lemma}[\cite{PousS11}, Thm. 6.3.9] \label{lem:compat-sound} 
  If $f\progress[\WP]f$ then it is $\WP$-sound.
  \qed
\end{lemma}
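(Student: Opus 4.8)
The plan is to give the standard order-theoretic argument that compatible functions are sound, which is exactly \cite{PousS11}, Thm.~6.3.9. Candidate relations form a complete lattice under inclusion; by \Cref{def:wp} the functional $\WP$ is monotone, and by \Cref{def:progression} the function $f$ appearing in the hypothesis is monotone as well, so every greatest fixpoint mentioned below exists by Knaster--Tarski, and I will freely use the coinduction principle: for monotone $h$, every post-fixpoint $x\subseteq h(x)$ satisfies $x\subseteq\gfp(h)$. Unfolding definitions, the hypothesis $f\progress[\WP]f$ reads $f\comp\WP\sqsubseteq\WP\comp f$, and $\WP$-soundness of $f$ is the statement $\gfp(\WP\comp f)\subseteq\gfp(\WP)$. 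Writing $\rho=\gfp(\WP\comp f)$, the goal is therefore $\rho\subseteq\gfp(\WP)=({\simil})$.

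The single computational step is to establish $f(\rho)\subseteq\rho$. Since $\rho$ is a fixpoint of the monotone operator $\WP\comp f$, we have $\rho=\WP(f(\rho))$. Instantiating the hypothesis at the point $f(\rho)$ gives $f(\WP(f(\rho)))\subseteq\WP(f(f(\rho)))$; substituting $\rho=\WP(f(\rho))$ on the left-hand side yields $f(\rho)\subseteq\WP(f(f(\rho)))=(\WP\comp f)(f(\rho))$. Hence $f(\rho)$ is a post-fixpoint of $\WP\comp f$, and by coinduction $f(\rho)\subseteq\gfp(\WP\comp f)=\rho$.

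To finish, I would combine $f(\rho)\subseteq\rho$ with monotonicity of $\WP$ to get $\rho=\WP(f(\rho))\subseteq\WP(\rho)$, so $\rho$ is a post-fixpoint of $\WP$, and by coinduction $\rho\subseteq\gfp(\WP)$, which is precisely $\WP$-soundness of $f$. There is no genuine obstacle here: the only points requiring care are that the monotonicity of both $\WP$ (\Cref{def:wp}) and $f$ (implicit in \Cref{def:progression}) is used silently, and that one relies on $\WP$ being the functional whose greatest fixpoint is $\simil$, as recorded in the lemma preceding the statement. All the content is in the one-line calculation $f(\rho)\subseteq\WP(f(f(\rho)))$.
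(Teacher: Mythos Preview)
Your argument is correct and is exactly the standard compatibility-implies-soundness proof from \cite{PousS11}; the paper itself does not spell out a proof but simply cites that theorem, so there is nothing further to compare.
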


\subsection{Up to Name Reuse}
\label{sec:up-to-nr}

In this section we define an up-to technique that allows us to reuse a single abstract function name in all opponent calls to a function in the knowledge environment, provided that the function does not contain any higher-order references.
In such cases, it is guaranteed that the function being called will not contain names from past calls.
We first define a name substitution that is only defined under these conditions.

\begin{definition}[$\mathsf{nr}$-Substitution]
  The partial substitution operation $(\cdot)\subnr{\alpha}{\alpha'}$ is defined only for terms that do not contain $\alpha$, or have no higher-order references and no occurrences of $\alpha'$. Note below $\Phi$ ranges over $\Ed$ and $e$.
  \begin{gather*}
  \begin{array}{@{}r@{\;}l@{}}
    \Phi\subnr{\alpha}{\alpha'}      & \defeq \left\{
      \begin{array}{l@{\quad\,}l}
        \Phi\sub{\alpha}{\alpha'} & \text{if $\alpha'\fresh\Phi$ and $\Phi$ contains no HO references} \\
        \Phi                      & \text{if $\alpha\fresh\Phi$}
      \end{array}
      \right. 
   \\[1ex]
   \Gamma\subnr{\alpha}{\alpha'}(i) & \defeq 
     (\Gamma(i))\subnr{\alpha}{\alpha'} 
   \\[1ex]
   C\subnr{\alpha}{\alpha'} &\defeq 
\cconf{C.A}{\Gamma\subnr{\alpha}{\alpha'}}{s}{\Phi\subnr{\alpha}{\alpha'}}
 \qquad\text{if }
C = \cconf{C.A}{\Gamma}{s}{\Phi}
   \\[1ex]
   \beta\subnr{\alpha}{\alpha'} & \defeq 
      \betais{C_1\subnr{\alpha}{\alpha'}, C_2\subnr{\alpha}{\alpha'}}
   \\[1ex]
    \beta\subnr{\alpha}{\alpha'} &\toSig[\Sigma\subnr{\alpha}{\alpha'}]{\Ed_1\subnr{\alpha}{\alpha'}, \Ed_2\subnr{\alpha}{\alpha'}} \beta'\subnr{\alpha}{\alpha'} 
    \qquad\quad\text{if }
    \beta\toSig{\Ed_1, \Ed_2}\beta'
  \end{array}
  \end{gather*}
\end{definition}

\begin{definition}[Up to Name Reuse]
  The function $\utnr{}$ on relations is defined as:
  \begin{gather*}
    C_1 \utnr*{\bisim{R}}_{\Sigma,\beta} C_2
    \quad\text{when}\quad
    \exists\alpha,\alpha'.\,C\subnr{\alpha}{\alpha'} \bisim*{R}_{\Sigma\subnr{\alpha}{\alpha'}@\beta\subnr{\alpha}{\alpha'},~\beta\subnr{\alpha}{\alpha'}} C_2\subnr{\alpha}{\alpha'}
  \end{gather*}
\end{definition}

This technique is useful when opponent applies the same higher-order function more than once, e.g. to names $\alpha_1$, $\alpha_2$, $\alpha_3$, etc. Immediately after the calls with arguments $\alpha_i$, $i > 1$, we can apply the substitution $\subnr{\alpha_i}{\alpha_1}$ and prove (bi)simulation of the resulting configurations, effectively using the same opponent name on all calls to the same function. 
  The following lemma shows that (bi)simulation shown after applying such a substitution implies (bi)simulation of the configurations before applying the substitution.

\begin{lemma}\label{lem:nrsound}
  Function $\utnr{}$ is a sound up-to technique.
\end{lemma}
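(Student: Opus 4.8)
The plan is to apply Lemma~\ref{lem:compat-sound}: it suffices to show that $\utnr{}$ weakly progresses to itself, i.e.\ $\utnr{}\progress[\WP]\utnr{}$. Unpacking the definitions, I must show that if $\bisim R\progress[\WP]\bisim S$ then $\utnr*{\bisim R}\progress[\WP]\utnr*{\bisim S}$. So I start from a tuple $(C_1,C_2,\Sigma,\beta)\in\utnr*{\bisim R}$, which means there are names $\alpha,\alpha'$ with $(C_1\subnr{\alpha}{\alpha'}, C_2\subnr{\alpha}{\alpha'}, \Sigma\subnr{\alpha}{\alpha'}@\beta\subnr{\alpha}{\alpha'}, \beta\subnr{\alpha}{\alpha'})\in\bisim R$, and I need to check that $(C_1,C_2,\Sigma,\beta)$ satisfies the five clauses of $\WP(\utnr*{\bisim S})$.

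The core of the argument is a simulation between transitions of $C_i$ and transitions of $C_i\subnr{\alpha}{\alpha'}$. First I would establish a small collection of technical facts about $\subnr{}{}$: (i) that $\subnr{\alpha}{\alpha'}$, when defined on $C_i$, commutes with the stackless LTS transitions in the appropriate sense, so that $C_1\realtrans{\eta}C_1'$ implies $C_1\subnr{\alpha}{\alpha'}\realtrans{\eta\subnr{\alpha}{\alpha'}}C_1'\subnr{\alpha}{\alpha'}$ (and conversely a matching response of $C_2\subnr{\alpha}{\alpha'}$ can be transported back to a response of $C_2$), using the fact that a configuration with no higher-order references stays that way after any opponent/proponent move and the key guarantee stated in the text: when the called function has no HO references, the freshly-returned configuration cannot contain any stale names, so $\alpha'$ remains fresh where required and the substitution stays defined; (ii) that $\subnr{\alpha}{\alpha'}$ is compatible with the continuation-graph operations $\Sigma[\beta'\trans{\Ed_1,\Ed_2}\beta]$, $\Sigma@\beta$ and with the substitution $\sub{\chi}{\Ed}$, which is essentially bookkeeping given the pointwise definition of $\Sigma\subnr{\alpha}{\alpha'}$; and (iii) that downward-closure ($\downarrow$) and compatibility of tuples are preserved and reflected by $\subnr{}{}$. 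With these in hand, each of clauses 0--4 of $\WP$ is discharged by: push the challenge in $C_1$ through $\subnr{\alpha}{\alpha'}$, obtain a response from $\bisim R$ (landing in $\bisim S$ up to the appropriate graph/continuation updates), pull the response back to $C_2$, and observe that the resulting target tuple lies in $\utnr*{\bisim S}$ by re-applying the same $\alpha,\alpha'$ (which remain available thanks to the no-stale-names guarantee).

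The main obstacle I anticipate is clause~4 (\textsc{PropRet}), together with keeping the substitution well-defined along the way. The subtlety is that $\subnr{\alpha}{\alpha'}$ is only a \emph{partial} operation, defined precisely when the relevant term either avoids $\alpha$ or is HO-reference-free and avoids $\alpha'$; I must argue that this side condition is never violated as the bisimulation game proceeds. This hinges on the semantic invariant that an HO-reference-free configuration reached by an opponent call to an HO-reference-free function contains only names introduced \emph{after} that call, so an $\alpha$ reused from an earlier call is genuinely fresh for it --- exactly the informal claim made when the technique is motivated. For \textsc{PropRet} specifically I also need that the continuation $\Edb_i$ popped from $\Sigma$ is itself HO-reference-free (or at least that $\subnr{\alpha}{\alpha'}$ is defined on $C_i'\sub\chi{\Edb_i}$), which follows from the same invariant applied to the graph $\Sigma$ whose edges record exactly such continuations; I would fold this into the technical lemma about $\subnr{}{}$ commuting with the graph operations. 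The remaining clauses (0--3) are comparatively routine once the commutation lemma is set up, since $\subnr{\alpha}{\alpha'}$ leaves $\tau$-steps and proponent calls essentially untouched and the graph component unchanged. Finally, the bisimulation case follows because $\subnr{}{}$ and the inversion $(\cdot)^{-1}$ on tuples commute, so $\utnr*{\bisim R}^{-1}=\utnr*{\bisim R^{-1}}$ up to the same reasoning, and Lemma~\ref{lem:compat-sound} gives $\WP$-soundness of $\utnr{}$.
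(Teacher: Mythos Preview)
Your proposal is correct and follows essentially the same approach as the paper: both establish $\WP$-compatibility $\utnr{}\progress[\WP]\utnr{}$ and then invoke Lemma~\ref{lem:compat-sound}, discharging each clause of $\WP$ by pushing transitions through $\subnr{\alpha}{\alpha'}$ and checking that the continuation-graph operations (extension and restriction $\Sigma@\beta$) commute with the substitution. The paper singles out \textsc{OpCall} as the most involved case (because the graph extension adds all permutations of the new edge, which must be shown to be captured by $\Sigma\subnr{\alpha}{\alpha'}@\beta\subnr{\alpha}{\alpha'}$), whereas you emphasise \textsc{PropRet} and the partiality invariant; both observations are apt and complement each other.
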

\begin{proof}
  We prove this by showing that $\utnr{} \progress[\WP] \utnr{}$, that is $\utnr{}\comp\WP(\bisim{R}) \subseteq \WP\comp \utnr{\bisim{R}}$ unfolding the definition of $\WP$.
  Note that we only need to prove this for a single substitution $\subnr{\alpha}{\alpha'}$.
  Proponent calls and returns may extend the knowledge environments with function containing the substitution, resulting in configurations captured by $\utnr{\bisim{R}}$.
  Proponent returns in particular involves showing that graph reachability $\Sigma@\beta$ is invariant to name substitution.
  Opponent returns from configurations in $\utnr{}\comp\WP(\bisim{R})$ will produce the same configurations as the same transitions from $\WP(\bisim{R})$, modulo the single substitution $\subnr{\alpha}{\alpha'}$.
  Opponent calls are a bit more involved as they extend the call graph with all permutations of the new edge of the call graph; this however is captured by $\Sigma\subnr{\alpha}{\alpha'}@\beta\subnr{\alpha}{\alpha'}$ in the above definition.
  Reductions preserve the conditions of the substitution and termination and compatibility are unaffected by it.\qed
\end{proof}

The lemma below, proven similarly to \Cref{lem:nrsound}, shows that up to name reuse is a complete technique. Namely, if after applying a name substitution the (bi)simulation conditions are broken then the configurations before the substitution are inequivalent.

\begin{lemma}
  The function $\utnrinv{}$ defined below is a sound up-to technique.
  \begin{gather*}
    C\subnr{\alpha}{\alpha'} \utnrinv*{\bisim{R}}_{\Sigma\subnr{\alpha}{\alpha'}@\beta\subnr{\alpha}{\alpha'},\beta\subnr{\alpha}{\alpha'}} C_2\subnr{\alpha}{\alpha'}
    \;\;\text{when}\;\;
    C_1 \bisim*{R}_{\Sigma,\beta} C_2\\[-11mm]
  \end{gather*}\qed
\end{lemma}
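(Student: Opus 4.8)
The plan is to reuse the structure of the proof of \Cref{lem:nrsound}. By \Cref{lem:compat-sound} it suffices to establish the progression $\utnrinv{} \progress[\WP] \utnrinv{}$, i.e.\ $\utnrinv{}\comp\WP(\bisim{R}) \subseteq \WP\comp\utnrinv{\bisim{R}}$ for every relation $\bisim{R}$, and, exactly as in \Cref{lem:nrsound}, it is enough to treat a single substitution $\subnr{\alpha}{\alpha'}$. Unfolding the definition of $\utnrinv{}$, an element of the left-hand side has the form $(C_1\subnr{\alpha}{\alpha'},C_2\subnr{\alpha}{\alpha'},\Sigma\subnr{\alpha}{\alpha'}@\beta\subnr{\alpha}{\alpha'},\beta\subnr{\alpha}{\alpha'})$, where $(C_1,C_2,\Sigma,\beta)\in\WP(\bisim{R})$ (so in particular a compatible tuple, with $\Sigma@\beta=\Sigma$) and $\subnr{\alpha}{\alpha'}$ is defined on all the displayed objects; the goal is to place it in $\WP(\utnrinv{\bisim{R}})$. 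Since $\subnr{\alpha}{\alpha'}$ preserves polarity, $\dom{\Gamma}$, and the $\diamond$- and $\bot$-status of continuations, and commutes with the graph operations $@$, inversion, and reachability of \Cref{def:sigma-ext-restr} (it is a name-respecting map, and every graph-building operation closes its result under $\Perm$), compatibility and the invariants of \Cref{def:betaSigma} transfer from $(C_1,C_2,\Sigma,\beta)$ to the target tuple.

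The technical core is a commutation lemma for $\subnr{\alpha}{\alpha'}$ with the stackless LTS, playing a role analogous to the one \Cref{lem:closures} plays for the (total, injective) permutation technique: whenever $C\subnr{\alpha}{\alpha'}$ is defined and $C\realtrans{\eta}C'$ — with the abstract names introduced by $\eta$ taken fresh for $\alpha$ and $\alpha'$ — we have $C\subnr{\alpha}{\alpha'}\realtrans{\eta'}C'\subnr{\alpha}{\alpha'}$, where $C'\subnr{\alpha}{\alpha'}$ is again defined and $\eta'$ is $\eta$ with $\subnr{\alpha}{\alpha'}$ applied to its context component (and a symmetric statement recovering every transition of $C\subnr{\alpha}{\alpha'}$ from one of $C$); the weak-transition versions follow. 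The point absent in the permutation case is that $\subnr{\alpha}{\alpha'}$ is partial and non-injective, so one must check that no transition destroys its defining side-condition, i.e.\ that no step introduces into a term — previously free of higher-order references and of $\alpha'$ — an occurrence of $\alpha'$ or a higher-order reference holding $\alpha$. This is exactly where the shape of the LTS rules is used: $\tau$-steps only rearrange existing subterms, opponent moves inject only fresh names, proponent moves extract only ultimate patterns of existing values, opponent calls evaluate only some $\Gamma(i)$ applied to fresh arguments, and the store (which can carry $\alpha$ only via a higher-order reference, excluded in the relevant case) is left untouched by $\subnr{\alpha}{\alpha'}$.

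Granting the commutation lemma, the five clauses of \Cref{def:SLbisim} go through one by one. Clause~0 is immediate, since $C_1\subnr{\alpha}{\alpha'}\downarrow$ iff $C_1\downarrow$, and likewise for $C_2$, so termination transfers from $(C_1,C_2,\Sigma,\beta)\in\WP(\bisim{R})$. For clauses~1--4: a challenge transition of $C_1\subnr{\alpha}{\alpha'}$ is pulled back, via the commutation lemma, to a transition of $C_1$; the matching clause of $(C_1,C_2,\Sigma,\beta)\in\WP(\bisim{R})$ supplies a matching (weak) response of $C_2$ whose target tuple lies in $\bisim{R}$; and pushing this response forward through $\subnr{\alpha}{\alpha'}$ yields the required response of $C_2\subnr{\alpha}{\alpha'}$, whose target, by definition of $\utnrinv{}$, lies in $\utnrinv{\bisim{R}}$. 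Clause~3 uses in addition that substitution commutes with graph extension — $\Sigma\subnr{\alpha}{\alpha'}[\beta''\subnr{\alpha}{\alpha'}\trans{\Ed_1\subnr{\alpha}{\alpha'},\Ed_2\subnr{\alpha}{\alpha'}}\beta\subnr{\alpha}{\alpha'}] = (\Sigma[\beta''\trans{\Ed_1,\Ed_2}\beta])\subnr{\alpha}{\alpha'}$, again by closure under $\Perm$ — so that the resulting graph and entry point on the two sides agree; and clause~4 uses that substitution commutes with $@$, with the edge relation $\toSig{}$, and with the continuation substitution $\sub{\chi}{\Edb}$, so that the continuation-substituted responses coincide with $\subnr{\alpha}{\alpha'}$ applied to the responses.

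I expect the commutation lemma, and within it the bookkeeping that $\subnr{\alpha}{\alpha'}$ remains defined along every reachable configuration (and on $\Sigma$, including its stored continuations), to be the main obstacle; once that is settled, $\WP$-soundness via \Cref{lem:compat-sound} and preservation of compatibility and of the graph invariants are routine and parallel the proof of \Cref{lem:nrsound}.
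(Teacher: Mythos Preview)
Your proposal is correct and takes essentially the same approach as the paper, which merely states that the lemma is ``proven similarly to \Cref{lem:nrsound}'' without giving further details. Your write-up is in fact more explicit than the paper's own treatment: you spell out the commutation of $\subnr{\alpha}{\alpha'}$ with the stackless LTS and with the graph operations, and you correctly identify that for $\utnrinv{}$ the challenge must be \emph{pulled back} along $\subnr{\alpha}{\alpha'}$ (the non-injective direction) before invoking $\WP(\bisim{R})$ and then pushed forward---this is the dual of the flow in \Cref{lem:nrsound}, and is exactly what ``similarly'' is meant to convey.
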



\subsection{Up to Separation}
  \label{sec:up-to-sep}

  We next develop an adaptation of up to separation from~\cite{KoutavasLT22}. This is an effective technique for reducing the state-space of the bisimulation exploration in our verification tool.
The intuition of this technique is that if different functions operate on disjoint parts of the store, they can be explored by bisimulation independently, removing interleaving of their calls. In cases where a function does not contain free locations, the effect of this technique is to allow bisimulation to apply it only once, as two copies of the function will not interfere with each other, even if the create new locations when run.

To define up to separation we need a separating conjunction for configurations.

\begin{definition}[Separating Conjuction]
  We define the partial function $\sepconj$ on stores and knowledge environments as:
  \begin{gather*}
    s_1 \sepconj[L] s_2 = s_1,s_2
    \qquad
    \Gamma_1 \sepconj[I,L] \Gamma_2 = \Gamma_1,\Gamma_2
  \end{gather*}
     when $\dom{s_1}\cap\dom{s_2} = \emptyset$ and $\fl{s_2}\subseteq\dom{s_2}=L$ and $\fl{s_1}\subseteq\dom{s_1}$, and
     when $\dom{\Gamma_1}\cap\dom{\Gamma_2} = \emptyset$ and $\dom{\Gamma_2}=I$ and $\fl{\Gamma_2}\subseteq L$ and $\fl{\Gamma_1}\cap L = \emptyset$.
    Moreover we write
    $ s_1 \sepconj s_2$ and $\Gamma_1 \sepconj \Gamma_2$ when there exist $I$, $L$ such that $ s_1 \sepconj[L] s_2$ and $\Gamma_1 \sepconj[I,L] \Gamma_2$, respectively.

  \begin{gather*}\begin{array}{l@{\;}c@{\;}l@{\;}c@{\;}l}
    \oconf{A_1}{\Gamma_1\sepconj[I,L]\Gamma}{s_1\sepconj s}{\Ed}
    &\sepconj[I][0]&
    \oconf{A_2}{\Gamma_2\sepconj[I,L]\Gamma}{s_2\sepconj[L] s}{\Ed}
    &=&
    \oconf{A_1\cup A_2}{\Gamma_1\sepconj\Gamma_2\sepconj\Gamma}{s_1\sepconj s_2\sepconj s}{\Ed}
    \\[1ex]
    \pconf{A_1}{\Gamma_1\sepconj[I,L]\Gamma}{s_1\sepconj[L] s}{e}
    &\sepconj[I][0]&
    \pconf{A_2}{\Gamma_2\sepconj[I,L]\Gamma}{s_2\sepconj[L] s}{e}
    &=&
    \pconf{A_1\cup A_2}{\Gamma_1\sepconj\Gamma_2\sepconj\Gamma}{s_1\sepconj s_2\sepconj s}{e}
    \\[1ex]
    \oconf{A_1}{\Gamma_1\sepconj[I,L]\Gamma}{s_1\sepconj[L] s}{\Ed_1}
    &\sepconj[I][j]&
    \oconf{A_2}{\Gamma_2\sepconj[I,L]\Gamma}{s_2\sepconj[L] s}{\Ed_2}
    &=&
    \oconf{A_1\cup A_2}{\Gamma_1\sepconj\Gamma_2\sepconj\Gamma}{s_1\sepconj s_2\sepconj s}{\Ed_j}
    \\[1ex]
    \pconf{A_1}{\Gamma_1\sepconj[I,L]\Gamma}{s_1\sepconj[L] s}{e_1}
    &\sepconj[I][1]&
    \oconf{A_2}{\Gamma_2\sepconj[I,L]\Gamma}{s_2\sepconj[L] s}{\Ed_2}
    &=&
    \pconf{A_1\cup A_2}{\Gamma_1\sepconj\Gamma_2\sepconj\Gamma}{s_1\sepconj s_2\sepconj s}{e_1}
    \\[1ex]
    \oconf{A_1}{\Gamma_1\sepconj[I,L]\Gamma}{s_1\sepconj[L] s}{\Ed_1}
    &\sepconj[I][2]&
    \pconf{A_2}{\Gamma_2\sepconj[I,L]\Gamma}{s_2\sepconj[L] s}{e_2}
    &=&
    \pconf{A_1\cup A_2}{\Gamma_1\sepconj\Gamma_2\sepconj\Gamma}{s_1\sepconj s_2\sepconj s}{e_2}
    \\[1ex]
    \hfill C_1 &\sepconj[I][0]& 
    \multicolumn{3}{@{}l@{}}{
        C_2
        =
        \botcconf
        \text{ when }
        C_1 = C_2 = \botcconf
    }
    \\[1ex]
    \hfill C_1 &\sepconj[I][j]&
    \multicolumn{3}{@{}l@{}}{
        C_2
        =
        \botcconf
        \text{ when } i\in\{1,2\} \text{ and }
        C_1 = \botcconf \text{ or } C_2 = \botcconf
    }
  \end{array}\end{gather*}
  Where
    $j\in\{1,2\}$,
    $\fl{e,\Ed} \subseteq \dom{s} = L$,
    $\fl{e_i,\Ed_i}\subseteq \dom{s_i}$ ($i\in\{1,2\}$),
\end{definition}

The intuition here is that instead of exploring bisimulation with the composite configuration, we can instead explore it only with the smaller, constituent configurations. Note that these configurations are allowed to contain a common store $s$ and knowledge environment $\Gamma$. This makes the technique possible in intermediate configurations where some state has already been allocated and functions in $\Gamma$ can access it.

The definition of up to separation shall use a product construction on continuation graphs and a dual merging operation. The key intuition is that (bi-)simulation is preserved by these operations for graphs and relations (cf.~\cref{sec:pair-bisim}).

\begin{definition}[Pair Entry Points and Pair Continuation Graphs]
  \begin{align*}
    \mathsf{PEPoint} \ni \ \bbeta &::=\ (\beta_1, \beta_2, k)  \qquad (k\in\{0,1,2\})\\
    \mathsf{PCGrph}\ni \ \bSigma &\subseteq_{\rm fin}^{\neq\emptyset} \mathsf{PEPoint}\times\Kont\times\Kont\times\mathsf{PEPoint}
  \end{align*}
 and each $\bSigma$ must satisfy the conditions: 
  \begin{compactitem}
  \item{} \emph{Reachability}.~For all $\bbeta\in\dom{\bSigma}$ there are stacks $K_1,K_2$ such that $\bbeta\toSigs[\bSigma]{K_1,K_2}\diamond$

  \item \emph{Top and Divergence}.~For all $\bbeta'^{k'}\toSig[\bSigma]{\Ed_1,\Ed_2}\bbeta^{k}$ and $j\in\{1,2\}$:
  \begin{align*}
    &    (\diamond,\diamond,0)\in\dom{\bSigma}\land (\bbeta'=(\diamond,\diamond,0)\implies\bbeta=(\diamond,\diamond,0)) \land
    (\bbeta=(\diamond,\diamond,0)\iff\Ed_j=\diamond) \\
&{}\land    (\bbeta'.k.j=\bot\iff \Ed_j=\bot) \land
(\bbeta.k.j=\bot\implies
    \Ed_j=\bot)
  \end{align*}
    where, if $\bbeta = (\beta_1,\beta_2, k)$ and $i \in\{1,2\}$, we write $\bbeta.i.1=\bot$ when $\beta_i=\betais{\botcconf,C}$ 
    and $\bbeta.i.2=\bot$ when $\beta_i=\betais{C,\botcconf}$.
  \item \emph{Nominal closure}.~For  all $\bbeta'\toSig[\bSigma]{\Ed_1,\Ed_2}\bbeta$ and permutations $\pi$, \
    $\pi\cdot\bbeta'\toSig[\bSigma]{\pi\cdot\Ed_1,\pi\cdot\Ed_2}\pi\cdot\bbeta$.
\end{compactitem}
Finally, we lift \cref{def:sigma-ext-restr} to pair graphs obtaining continuation graph extension
  $\bSigma[\bbeta'\mapsto(\Ed_1,\Ed_2,\bbeta'')]$
and restriction $\bSigma@\bbeta$.
We will write $\bbeta.i$ to mean $\beta_i$ ($i\in\{1,2\}$), and $\bbeta^m$ to mean $k=m$, when $\bbeta = (\beta_1,\beta_2,k)$.
\end{definition}

\begin{definition}
  Given two continuation graphs $\Sigma_1$, $\Sigma_2$ we construct the product graph:
  \begin{gather*}
    \irule{
    }{
      (\diamond,\diamond,0)\toSig[\Sigma_1 \otimes \Sigma_2]{ \diamond, \diamond} (\diamond, \diamond, 0)
    }
    \\
    \irule{
      (\beta_1,\beta_2,k) \in \dom{\Sigma_1\otimes\Sigma_2}
      \;\;
      \forall i \in \{1,2\}.\ \beta_i'\toSig[\Sigma_i]{\Ed, \Ed'}\beta_i
      \;\;
      \forall j \in \{1,2\}.~\beta'_1.j.e=\beta'_2.j.e
    }{
      (\beta_1',\beta_2',0)\toSig[\Sigma_1 \otimes \Sigma_2]{ \Ed, \Ed'} (\beta_1, \beta_2, k)
    }
    \\
    \irule{
      (\beta_1,\beta_2,k) \in \dom{\Sigma_1\otimes\Sigma_2}
      \\
      \beta_i'\toSig[\Sigma_i]{\Ed, \Ed'} \beta_i
      \\
      \beta_{\co{i}} = \beta_{\co{i}}'
      \\
      i\in\{1,2\}
    }{
      (\beta_1',\beta_2',i)\toSig[\Sigma_1 \otimes \Sigma_2]{ \Ed, \Ed'} (\beta_1, \beta_2, k)
    }
  \end{gather*}
  \end{definition}
  \begin{lemma}
    Suppose  $\Sigma_1, \Sigma_2$ well-formed continuation graphs; then $\Sigma_1 \otimes \Sigma_2$ is a well-formed pair continuation graph.
 \qed \end{lemma}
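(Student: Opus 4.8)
Write $\bSigma \defeq \Sigma_1\otimes\Sigma_2$. The plan is to verify, one by one, the three defining requirements of a pair continuation graph --- \emph{Reachability}, \emph{Top and Divergence}, and \emph{Nominal closure}. Since every edge of $\bSigma$ is produced by exactly one of its three introduction rules (the top-loop rule, the synchronised rule producing source tag $0$, and the single-side rules producing source tag $i\in\{1,2\}$), each verification is a case analysis over which rule produced the edge under scrutiny, drawing on the corresponding well-formedness conditions of $\Sigma_1,\Sigma_2$ (\cref{def:betaSigma}) together with the side conditions of the rule.

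\emph{Nominal closure} is the easiest and I would discharge it first: $\Sigma_1,\Sigma_2$ are nominally closed, and every premise of each $\otimes$-rule --- membership $\beta_i'\toSig[\Sigma_i]{\Ed,\Ed'}\beta_i$, the syntactic equalities $\beta_1'.j.e=\beta_2'.j.e$ and $\beta_{\co{i}}=\beta_{\co{i}}'$, and membership of the target in $\dom{\bSigma}$ --- is equivariant. Hence applying any $\pi\in\Perm$ to a derivation of an edge of $\bSigma$ yields a derivation of the $\pi$-image of that edge, and a simultaneous induction shows $\dom{\bSigma}$ is permutation-closed as well.

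For \emph{Top and Divergence}, fix an edge $\bbeta'\toSig[\bSigma]{\Ed_1,\Ed_2}\bbeta$ and check the conjuncts rule by rule. The top-loop case is immediate and also discharges $(\diamond,\diamond,0)\in\dom{\bSigma}$, which is needed globally. For the synchronised rule the target has tag $0$ and there are edges $\beta_i'\toSig[\Sigma_i]{\Ed_1,\Ed_2}\beta_i$ in each $\Sigma_i$; the clause $(\bbeta=(\diamond,\diamond,0)\iff\Ed_j=\diamond)$ and the two $\bot$-clauses then follow componentwise from the corresponding conditions of $\Sigma_1$ and $\Sigma_2$, using the convention that the $k=0$ projection $\bbeta.k.j$ ranges over both components. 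For a single-side rule with tag $i$ only the $i$-th component changes and $\beta_{\co{i}}$ is carried over, so the $k$-indexed projections $\bbeta.k.j$, $\bbeta'.k.j$ pick out the $i$-th component and the clauses reduce to the well-formedness of $\Sigma_i$ alone. A point to handle carefully here is the relationship between the target tag $k$ that indexes those projections and the source tag, since the condition reads the source $\bbeta'$ at the target's tag; I would make that relationship explicit before quoting the $\Sigma_i$ conditions.

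\emph{Reachability} is the step I expect to be the main obstacle: for each $\bbeta\in\dom{\bSigma}$ one must exhibit stacks $K_1,K_2$ with $\bbeta\toSigs[\bSigma]{K_1,K_2}\diamond$ \emph{such that every push occurring in that derivation is defined}. I would argue by induction following the derivation order of $\dom{\bSigma}$ --- each $\otimes$-rule requires its target to already lie in $\dom{\bSigma}$, so ``distance to $(\diamond,\diamond,0)$'' is well-founded --- with base case $(\diamond,\diamond,0)\toSigs[\bSigma]{\noe,\noe}\diamond$. In the inductive step, given an edge $\bbeta'\toSig[\bSigma]{\Ed_1,\Ed_2}\bbeta$ and $\bbeta\toSigs[\bSigma]{K_1,K_2}\diamond$ from the hypothesis, the only thing to check is that the pushes $(\Ed_j,K_j)$ are defined, and this is exactly where \emph{Top and Divergence} feeds back: if $\Ed_j=\diamond$ then $\bbeta=(\diamond,\diamond,0)$ and $K_j=\noe$, so the push is defined; if $\Ed_j=\bot$ then $K_j=\bot$ and the push is defined to be $\bot$; otherwise $\Ed_j$ is a genuine evaluation context and the push is unconditionally defined. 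It is cleanest to carry, as a strengthened induction invariant, a pair-graph analogue of \cref{lem:Ks} (namely $\bbeta=\diamond$ with $K_1=K_2=\noe$, or $|K_1|=|K_2|$ with no $\bot$, or one mixed-$\bot$ alternative), establishing it along the way so that the push-definedness bookkeeping is uniform across the three cases. Combining the three verified conditions yields that $\Sigma_1\otimes\Sigma_2$ is a well-formed pair continuation graph.
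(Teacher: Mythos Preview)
Your proposal is correct and follows essentially the same approach as the paper: induction on the inductive construction of $\Sigma_1\otimes\Sigma_2$, verifying each well-formedness condition separately, and appealing to the corresponding conditions on $\Sigma_1,\Sigma_2$ together with the side conditions of the product rules. The paper's proof is a two-line sketch of exactly this plan; your write-up supplies the case analysis and the push-definedness bookkeeping that the paper leaves implicit, and your flagging of the $k$-versus-$k'$ projection subtlety in \emph{Top and Divergence} is a genuine point of care that the paper glosses over.
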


\begin{definition}[Merging]
  Suppose
  \begin{align*}
    \beta_1 &= ((\Gamma_1 \sepconj[I,L] \Gamma_3), (\Gamma_1' \sepconj[I,L'] \Gamma_3'), (s_1 \sepconj[L] s_3), (s_1' \sepconj[L'] s_3'), e_1)\\
    \beta_2 &= ((\Gamma_2 \sepconj[I,L] \Gamma_3), (\Gamma_2' \sepconj[I,L'] \Gamma_3'), (s_2 \sepconj[L] s_3), (s_2' \sepconj[L'] s_3'), e_2)
  \end{align*}
  We define the partial merging function $\bmerge{\cdot}$ for pair nodes as:
  \begin{align*}
    \bmerge{ (\beta_1,~ \beta_2,~ k) }
    \defeq
      (\Gamma_1 \sepconj \Gamma_2 \sepconj \Gamma_3, \Gamma_1' \sepconj \Gamma_2' \sepconj \Gamma_3', s_1 \sepconj s_2 \sepconj s_3, s_1' \sepconj s_2' \sepconj s_3', e)
  \end{align*}
provided that when $k=0$ then $e_1=e_2=e$ and when $k\in\{1,2\}$ then $e_i = e$.
We extend merging to well-formed pair continuation graphs $\bSigma$:
  \begin{align*}
     \bmerge{\bbeta'}\toSig[\bmerge\bSigma]{\Ed_1, \Ed_2} \bmerge{\bbeta} 
    \quad\text{when}\quad
    \bbeta'\toSig[\bSigma]{\Ed_1, \Ed_2} \bbeta 
  \end{align*}

\end{definition}

\begin{definition}[Up to Separation] \label{def:uptosep} 
  The partial function $\utsepconj{}$ provides the up to separation technique:
  \[
    C_1 \sepconj[I][k] C_2 \utsepconj*{\bisim{R}}_{\Sigma,\bmerge{\bbeta}} C_1' \sepconj[I][k] C_2'
  \]
  when
  $C_i \bisim*{R}_{\Sigma_i,\beta_i} C_i'$ ($i\in\{1,2\}$)
  and
  $\Sigma=\bmerge{(\Sigma_1\otimes\Sigma_2)@\bbeta}$
  and
  $\bbeta= (\beta_1,\beta_2,k)$.
\end{definition} 

\begin{lemma}\label{lem:utsepconj-sound}
  Function \utsepconj{} is a sound up-to technique.
\qed\end{lemma}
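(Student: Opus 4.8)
The plan is to invoke the progression machinery of \cite{PousS11,PousCompanion}: by \Cref{lem:compat-sound} it suffices to establish that $\utsepconj{}$ weakly progresses to itself, i.e.\ $\utsepconj{}\comp\WP\sqsubseteq\WP\comp\utsepconj{}$ (more precisely, as is standard for weak-bisimulation enhancements, one progresses into $\utsepconj{}$ composed with the (compatible) $\WP$-closure, which accounts for the component not currently making a move being only one $\WP$-step old; I suppress this refinement below). Fixing a relation $\bisim{R}$, I would take a generic element of $\utsepconj{\WP(\bisim{R})}$, which by \Cref{def:uptosep} has the form $(C_1\sepconj[I][k]C_2,\, C_1'\sepconj[I][k]C_2',\, \Sigma,\, \bmerge{\bbeta})$ with $(C_i,C_i',\Sigma_i,\beta_i)\in\WP(\bisim{R})$ for $i=1,2$, $\bbeta=(\beta_1,\beta_2,k)$ and $\Sigma=\bmerge{(\Sigma_1\otimes\Sigma_2)@\bbeta}$, and verify clauses $0$--$4$ of \Cref{def:SLbisim} for it, with every successor tuple (modulo the refinement above) a separating composite of $\bisim{R}$-tuples over a product-and-merge graph.

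Before the transition analysis I would discharge the static obligations. Well-formedness of $\Sigma_1\otimes\Sigma_2$ is the content of the well-formedness lemma for $\otimes$, and it is preserved by the restriction $@\bbeta$ and by $\bmerge{\cdot}$ (merging only relabels vertices, so edges, reachability, the top/divergence conditions of \Cref{def:betaSigma} and nominal closure all survive); hence $\Sigma$ is a valid continuation graph with $\Sigma@\bmerge{\bbeta}=\Sigma$. Compatibility of the merged tuple then follows from compatibility of the two component tuples together with the side-conditions built into $\sepconj[I][k]$, $\otimes$ and $\bmerge{\cdot}$ (disjointness of the local stores and knowledge environments, $\fl{e_i,\Ed_i}\subseteq\dom{s_i}$, agreement of the shared $\Gamma$ and $s$, and the vertex-expression matching $\beta'_1.j.e=\beta'_2.j.e$). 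The genuinely new ingredient I would isolate as a sublemma is \emph{locality of the tag $k$}: $k=0$ means both $C_1,C_2$ are Opponent configurations (or both an initial Proponent configuration evaluating a common term), while $k=j$ means component $j$ is in Proponent position and $3-j$ is Opponent; moreover the non-active component never reads or writes the active component's local resources, so every transition of $C_1\sepconj[I][k]C_2$ is induced by a transition of a single component $C_j$ over its own (and, possibly, the shared) resources, with the other component carried along verbatim.

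Granting this sublemma, the core is a case split along the clauses of \Cref{def:SLbisim}. A $\tau$-move or a proponent call/return concerns only the active component $C_j$, so the relevant clause of $(C_j,C_j',\Sigma_j,\beta_j)\in\WP(\bisim{R})$ supplies a matching (weak) move of $C_j'$ whose target lies in $\bisim{R}$, and re-separating the result places the successor composite back in the candidate relation; for a proponent return one additionally uses that each edge $\bmerge{\bbeta}\trans{\Ed_1,\Ed_2}\bmerge{\bbeta''}$ of $\Sigma$ is, by the product/merge construction, the image of an edge $\beta_j\trans{\Ed_1,\Ed_2}\beta_j''$ of $\Sigma_j$ with the inactive vertex fixed, so the component's clause $4$ applies and $\Sigma@\bmerge{\bbeta''}=\bmerge{(\Sigma_1\otimes\Sigma_2)@\bbeta''}$ keeps the merged graph in the required shape. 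An opponent call $\lopapp{i}{D[\vec\alpha]}$ is dispatched to the component owning index $i$ (a call to a shared index is matched identically by both components and keeps $k=0$), with the graph extension shown to equal $\bmerge{(\Sigma_1'\otimes\Sigma_2')@\bbeta'}$ for $\Sigma_j'$ the extension of $\Sigma_j$ along that component's own \textsc{OpCall}; an opponent return acts on the active component's continuation; and termination (clause $0$) and the $\beta=\diamond$ corner cases reduce at once to the same facts about the components and the definition of $\bmerge{\cdot}$.

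The main obstacle, and where essentially all the real work lies, is the continuation-graph algebra: one must show that extension, restriction and left/right inversion commute appropriately with the product $\otimes$ and with $\bmerge{\cdot}$, that $\Sigma_1\otimes\Sigma_2$ generates \emph{exactly} the edges the composite bisimulation game would build — the side-condition $\beta'_1.j.e=\beta'_2.j.e$ of the $k=0$ product rule and the tag-mixing third rule of $\otimes$ being what pins this down — and that the nominal (orbit) closures on $\Sigma_1$, $\Sigma_2$ and $\Sigma$ stay mutually consistent under all these operations; this is the substance underlying \cref{sec:pair-bisim}. Everything else (the locality sublemma, the five simulation clauses, compatibility, termination, and the small bookkeeping around the non-moving component mentioned above) then goes through routinely.
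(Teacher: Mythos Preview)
Your proposal outlines a direct compatibility argument $\utsepconj{}\progress\utsepconj{}\circ g$ that, modulo the bookkeeping you flag, is plausible; but it is \emph{not} the route the paper takes, and you under-read what \cref{sec:pair-bisim} is doing. The paper does not attempt to progress $\utsepconj{}$ over an arbitrary $\bisim{R}$: it specialises the component relations to similarity $\simil$ and factors the entire argument through the auxiliary \emph{pair LTS and pair (bi)simulation} defined there. The key step is \cref{lem:bsound} --- if $\bisim{R}_1,\bisim{R}_2$ are simulations then $\bisim{R}_1\otimes\bisim{R}_2$ is a pair simulation --- which is applicable precisely because the components are taken in $\simil$, not in $\WP(\bisim{R})$. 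The proof of \cref{lem:utsepconj-sound} then reduces each move of the merged configuration $C_1\sepconj[I][k]C_2$ to a tagged move $(\eta,k)$ of the pair $\bconf{C_1}{C_2}$, invokes the pair simulation $\simil\otimes\simil$ to obtain the matching pair move, and only \emph{afterwards} merges back, with the algebra lemmas of \cref{sec:pair-lemmas} reconciling $\bmerge{\cdot}$ with product extension and restriction.

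So what you describe as ``the substance underlying \cref{sec:pair-bisim}'' is not just continuation-graph bookkeeping: it is a full intermediate bisimulation theory whose purpose is exactly to absorb the interleaving logic and the idle-component issue you identify, at the price of fixing the components to $\simil$ (so the soundness argument goes via \cref{lem:prfs}(\ref{lem:prfs-3}) rather than plain compatibility). Your direct route avoids that specialisation and would yield the stronger statement $\utsepconj{}\progress\utsepconj{}\circ g$, but it must then discharge by hand two things the pair layer gives for free: (i) that every edge out of $\bmerge{\bbeta}$ in the merged graph really arises from an edge out of $\bbeta$ itself in the product --- this needs care, since $\bmerge{\cdot}$ need not be injective on pair entry points and the restriction $@\bbeta$ may retain other preimages of $\bmerge{\bbeta}$; and (ii) the idle-component refinement you already note. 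Neither is insurmountable, but your sketch currently asserts (i) rather than arguing it.
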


This technique is also complete, which is important for our tool as it allows us to use it without backtracking (in contrast to 
up to weakening and garbage collection). 

\begin{lemma}\label{lem:utsepconj-complete}
  Suppose that
  $C_1 \sepconj[I][0] C_2 \simil_{\Sigma,\beta} C_1' \sepconj[I][0] C_2'$.
  Then, there exist  $\Sigma'$ and $\beta'$ such that
  $C_i \simil_{\Sigma',\beta'} C_i'$, for $i\in\{1,2\}$.
\qed\end{lemma}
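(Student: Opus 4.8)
Lemma~\ref{lem:utsepconj-complete} is the completeness counterpart of Lemma~\ref{lem:utsepconj-sound}: if the merged configuration $C_1 \sepconj[I][0] C_2$ is PDNF-similar (with graph $\Sigma$ and entry point $\beta$) to $C_1' \sepconj[I][0] C_2'$, then the two constituent pairs $(C_1,C_1')$ and $(C_2,C_2')$ are already similar, for a common $\Sigma'$ and $\beta'$.

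\medskip\noindent\textbf{Plan.} The plan is to transport similarity across the saturated-LTS characterisation of Lemma~\ref{lem:completeness} and Theorem~\ref{thm:complete}, rather than argue directly on PDNF bisimulation. First I would unfold the hypothesis: since $\simil_{\Sigma,\beta}$ relates the merged pair, by full abstraction (Theorem~\ref{thm:complete}, applied pointwise along reachable stacks via condition~\eqref{cond:a}) we obtain stacked similarity $\toS{C_1\sepconj[I][0]C_2,K}\simil\toS{C_1'\sepconj[I][0]C_2',K'}$ for all $\beta\toSigsD{K,K'}$. The key structural observation is that the separating conjunction $\sepconj$ is a \emph{decomposition with disjoint footprints}: the store splits as $s_1\sepconj s_2\sepconj s$ with $\fl{s_i}\subseteq\dom{s_i}$ and the shared region $s$ common to both sides, and likewise for $\Gamma$. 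Because opponent moves only pass values built from abstract names and constants (the restricted grammar $D^\bullet$), and proponent moves reveal values that reference only the footprint currently in play, a simulation play on the merged configuration never lets the two constituents' private regions interact. I would make this precise by showing that any transition sequence from the merged configuration \emph{projects} onto transition sequences of the constituents: a move touching only $s_1$'s region (and the shared $s,\Gamma$) is mirrored by the corresponding move of $C_1$ against $C_1'$, leaving the $s_2$ part inert, and symmetrically; this is essentially the ``non-interleaving'' intuition stated informally before Definition~\ref{def:uptosep}, run in reverse.

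\medskip\noindent\textbf{Key steps, in order.} (i) Characterise, via Lemma~\ref{lem:soundness}/Theorem~\ref{thm:complete}, the hypothesis as a family of stacked similarities on merged configurations along all reachable stacks. (ii) Define projection operators $\mathsf{proj}_1,\mathsf{proj}_2$ on stacked configurations that undo $\sepconj$, extracting $\toS{C_i,K_i}$-style data together with the retained shared $s,\Gamma$; establish that these are well-defined on the relevant (reachable, compatible) configurations, using the footprint conditions in the definition of $\sepconj$. (iii) Show that stacked similarity is preserved under projection: if $\tilde M\simil\tilde M'$ where $\tilde M,\tilde M'$ are merges of components with disjoint private footprints, then $\mathsf{proj}_i(\tilde M)\simil\mathsf{proj}_i(\tilde M')$. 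This is the heart of the argument and is proved by exhibiting the candidate simulation $\{(\mathsf{proj}_i(\tilde M),\mathsf{proj}_i(\tilde M'))\mid \tilde M\simil\tilde M',\ \tilde M,\tilde M'\text{ decomposable}\}$ and checking the simulation clauses of Definition~\ref{def:lts}/\ref{def:bisim}: every challenge move available to $\mathsf{proj}_i(\tilde M)$ can be ``re-embedded'' as a challenge move of $\tilde M$ (choosing fresh names disjointly, keeping the $s_{3-i}$ part frozen), the defender's response in $\tilde M'$ decomposes (by disjointness of footprints it cannot have altered the $s_{3-i}$ region nor introduced names into it), and projecting back gives the required response. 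Here I would reuse Lemma~\ref{lem:closures} for $\tau$-closure and name-permutation closure, and Lemma~\ref{lem:sat} / the saturated-LTS machinery to handle the push/pop bookkeeping when lifting back to PDNF graphs. (iv) Finally, convert the two stacked similarities back to PDNF form: take $\Sigma'$ and $\beta'$ to be obtained from the completeness construction of Lemma~\ref{lem:completeness} applied to $\mathsf{proj}_i(\tilde C_1)\simil\mathsf{proj}_i(\tilde C_2)$ — concretely, $\beta'$ is the projection/merge data $\bmerge{(\Sigma_1\otimes\Sigma_2)@\bbeta}$ read off in the expected way and $\Sigma'$ the sat-connected graph it induces — and note that, because both constituents live under the \emph{same} merged play, a single common $(\Sigma',\beta')$ works for $i=1$ and $i=2$.

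\medskip\noindent\textbf{Main obstacle.} The delicate point is step~(iii), specifically the re-embedding direction: when $\mathsf{proj}_1(\tilde M)$ issues an opponent call/return introducing fresh $\vec\alpha$, I must lift it to an opponent move of the \emph{merged} configuration whose freshness condition $\vec\alpha\fresh \Gamma,K,s,\Ed$ is against the larger store/environment (which also contains the $s_2,\Gamma_2$ parts); since those parts are finite this is fine, but I must then argue the defender's response in the merged game does not ``leak'' into the $s_2$-region — this follows from the footprint invariants ($\fl{s_2}\subseteq\dom{s_2}$, $\fl{\Gamma_1}\cap L=\emptyset$, etc.) being preserved along the play, which itself needs a small invariant-preservation lemma analogous to the well-formedness-preservation remarks after Definition~\ref{def:SLbisim}. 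A secondary subtlety is that Lemma~\ref{lem:utsepconj-complete} is stated only for the $k=0$ case (both constituents on the proponent/opponent side in lock-step); the $k\in\{1,2\}$ cases are genuinely different because one constituent is mid-evaluation while the other is a continuation, and the paper deliberately restricts to $k=0$, so I would not attempt the general case. Everything else (compatibility bookkeeping, graph reachability invariance under merging, orbit/nominal closure) is routine given the lemmas already established, particularly Lemma~\ref{lem:utsepconj-sound} whose proof structure can be dualised.
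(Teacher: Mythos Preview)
Your proposal is plausible but takes a substantial detour compared with the paper's argument. You go through full abstraction to the stacked LTS, define projection operators, run a fresh coinduction to show projection preserves stacked similarity, and then invoke the completeness construction (Lemma~\ref{lem:completeness}) to come back to PDNF form. The paper instead stays entirely within the PDNF world and reduces the lemma to two already-proven up-to techniques.

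Concretely, the paper observes that the merged configuration $C_1 \sepconj[I][0] C_2$ differs from $C_1$ only by (a) the extra knowledge-environment entries $\Gamma_2$ and (b) the extra store fragment $s_2$ (plus some abstract names $A_2$). First, soundness of up-to-weakening (Lemma~\ref{lem:utweak-sound}) is used to strip $\Gamma_2$ from both sides: since $\utweak[i]{}\sqsubseteq\companion$, one has $\utweak[i]{(\simil)}\subseteq\simil$, so removing $\Gamma_2$ index-by-index preserves similarity. Once $\Gamma_2$ is gone, nothing references $s_2$ (this is exactly what the footprint conditions $\fl{\Gamma_1}\cap L=\emptyset$, $\fl{s_2}\subseteq\dom{s_2}$ guarantee), so $s_2$ is garbage. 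Second, soundness of up-to-GC (Lemma~\ref{lem:utgc-sound}) then removes $s_2$ and $A_2$, yielding $C_1 \simil_{\Sigma',\beta'} C_1'$ directly. The case $i=2$ is symmetric.

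What your approach buys is a self-contained argument that does not rely on $\utweak{}$ and $\utgc{}$; what the paper's approach buys is brevity and modularity---the whole proof is a dozen lines and reuses existing infrastructure. Your step~(iii) is essentially re-proving, in the stacked setting, the combined effect of weakening-plus-GC soundness. Note also that your closing remark about ``dualising'' the proof of Lemma~\ref{lem:utsepconj-sound} is misleading: the paper does not dualise that argument at all, but uses an unrelated and much simpler route.
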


\section{Implementation and Evaluation}
\label{sec:imp}
We implemented PDNF bisimulation in a prototype tool called \SLHobbit{} that checks programs written in an ML-like syntax for \lang. The tool was developed by replacing the LTS and bisimulation definition in \Hobbit{}\cite{KoutavasLT22} with an implementation of our Stackless LTS (Fig. \ref{fig:SLlts}) and a Bounded Symbolic Execution of our PDNF bisimulation (Def. \ref{def:SLbisim}). As such, the tools share the same front-end, enhancement techniques, reduction semantics, and symbolic execution routine (calling Z3 to resolve constraints). They {otherwise} differ in the implementation of the LTS and bisimulation game, as well as in \Hobbit's up to reentry, which the \SLHobbit{} cannot use as it lacks a stack.

As a symbolic execution tool, \SLHobbit{} is \textit{sound} (reports only true positives and negatives) and \textit{bounded-complete} (exhaustively and precisely explores all paths up to a bound). {The bound used here is different than the bound in the decidability result in \cref{sec:decidability}, and it is intended to be used as a more straightforward timeout.} In \SLHobbit{}, we bound the number of proponent calls and both opponent calls and returns along an execution path, whereas \Hobbit{} bounds only calls. This is done because {the saturation procedure in PDNF bisimulation may lead to cycles in the continuation graph, which when explored by \SLHobbit{} lead to unbounded returns without the same number of corresponding calls}. We accumulate SAT/SMT constraints by extending the LTS with a \textit{symbolic environment} $\sigma$ for \textit{symbolic constants} $\kappa$ and reductions involving any $\kappa$; we branch on symbolic conditions as is standard of symbolic execution.
The exploration is performed over \textit{configuration pairs} $\langle C_1, C_2, \Sigma, \beta, \sigma, k_{\textsf{call}} , k_{\textsf{ret}} , k_{\textsf{int}} \rangle$ of related term configurations $C_1$ and $C_2$, continuation graph $\Sigma$, current call entry point $\beta$, {symbolic environment $\sigma$} and given bounds $k_{\textsf{call}}$ for calls, $k_{\textsf{ret}}$ for returns and $k_{\textsf{int}}$ for internal reductions.
As with \Hobbit{}, we make use of enhancements that help finitise the bisimulation exploration in some examples: explore-set memoisation {to discover cycles}; normalisation; store garbage collection; $\sigma$ garbage collection and simplification; up to separation (Sec. \ref{sec:up-to-sep}); and up to name reuse (Sec. \ref{sec:up-to-nr}). In addition, a normalisation procedure is implemented to ensure $\Sigma$ is effectively closed under permutation by capturing the complete orbit of every edge in $\Sigma$ via a canonical representation of the abstract names in said edge.

\subsubsection{Evaluation}
\newcommand{\titleAC}{\multicolumn{1}{@{\;\quad}c@{\;\quad}}{PDNF}}
\newcommand{\titleB}{\multicolumn{1}{c}{\Hobbit{}}}
\newcommand{\titleC}{\multicolumn{1}{c|}{H+reentry}}
\newcommand{\titleE}{\multicolumn{1}{c|}{\Hobbit{}}}
\newcommand{\tabitem}{~~\llap{\textbullet}~~}

\begin{table}[t]
\begin{center}
  \begin{tabular}{|cl|ccc|cc|}
\hline
   && \multicolumn{3}{c|}{~\Hobbit{} tests: 129 eq's and 78 ineq's~} & \multicolumn{2}{c|}{~\SLHobbit{} tests: 12 eq's} \\[1mm]\cline{3-5}\cline{6-7}
  &         & \titleAC & \titleB  & \titleC & \titleAC & \titleE \\ \hline
\multicolumn{1}{|r}{Eq.}
& Proven  & 72 & 62 & 67 & 11 & 0 \\\hline
\multicolumn{1}{|r}{Ineq.}
& Proven  & 77 & 78 & 78 & {\scriptsize N/A} &{\scriptsize N/A}           \\ \hline
  \end{tabular}\\[4mm]
      \begin{tikzpicture}[scale=0.75]
  \begin{axis}[
    width=0.8\textwidth,
    height=0.4\textwidth,
    xmode=log,
    ymode=log,
    title={},
    grid=major,
    legend pos=north west 
  ]
  \addplot[only marks, mark size=1pt] table [x=Hobbit, y=SLBisim, col sep=comma] {data.csv};
  \draw [gray] (rel axis cs:0,0) -- (rel axis cs:1,1);
  \legend{Time (s)}
  \end{axis}
\end{tikzpicture}\\
(X) \Hobbit{} vs. (Y) \SLHobbit{} over \Hobbit{}'s test suite
\end{center}
\caption{Summary of experiments comparing \SLHobbit{} to \Hobbit{}}
\label{table:results}\vspace{-5mm}
\end{table}

We evaluate here our tool against \Hobbit{} as a reference implementation of
the standard (stacked) bisimulation and because of its favourable comparison to
other tools in the higher-order program equivalence
landscape~\cite[Sec.~9]{KoutavasLT22}. Both tools were executed over two test
suites: (1) \Hobbit{}'s suite of 129 equivalences and 78 inequivalences; and
(2) our own suite of 12 equivalences (11 inspired by Event Handlers in
Android~\cite{android_events,android_timer}, JavaScript~\cite{mdn_web_docs},
Java Swing~\cite{java_swing}, jQuery~\cite{jquery}, and the DOM
Framework~\cite{dom_standard}; and 1 based on a simplification of a CDMA-WLAN
handoff protocol\cite{DBLP:journals/ejwcn/KimSSQ08}). Combined, the test suites
total 6701LoC\,---\,viz. 6182LoC in (1) with 3802LoC in equivalences and
2380LoC in inequivalences, and 519LoC in (2). For this comparison, we are interested in three scenarios: the performance of both approaches as fully automatic
techniques, for which all invariant, reentry and synchronisation
annotations were removed from the \Hobbit{} testsuite; \SLHobbit{} against \Hobbit{} assuming the reentry annotations have been placed correctly, to measure how PDNF
bisimulation fares in comparison to NF bisimulation with up to reentry to finitise reentrant calls; and \SLHobbit{} against \Hobbit{} on our own test suite, which aims to showcase the
difficulty of dealing with reentrant functions in the presence of changing
state. The tools were evaluated on an Intel Core i7 1.90GHz machine with 32GB
RAM running OCaml 4.10.0 and Z3 4.8.10 on Ubuntu 23.04. We record the results
of our comparison in Table~\ref{table:results}. Execution of each example was
capped to a 150-second timeout.

Firstly, \SLHobbit{} {verified 72 equivalences, which contain all 62 equivalences that \Hobbit{} verified}; without up to reentry \Hobbit{} {did not prove any examples that \SLHobbit{} could not prove.} Execution times were also not significantly different ($r = 0.74$). We can thus conclude that for equivalences \SLHobbit{} supersedes \Hobbit{} at fully-automatic verification by proving 9 additional examples on \Hobbit{}'s own test suite (without manual annotations) with minimal difference in performance. Note, however, that \Hobbit{} is more mature as a semi-automatic tool, and (from testing) is able to prove up to 95 examples when invariant, reentry and synchronisation annotations are appropriately used (albeit requiring significant effort and experience in formalising equivalence annotations). Additionally, one inequivalence is not proven by \SLHobbit{}. This example (c.f. {\small \textsf{invariants-4}}) is particularly difficult as no up-to techniques apply, memoisation is unable to finitise the path exploration, and the failing trace exhibits sequences of sub-traces that nest deeply. Both \SLHobbit{} and \Hobbit{} are able to solve this example on small parameters and both encounter an exponentially growing number of configurations, but \Hobbit{}'s more elementary transition system leads to a faster exploration. Bar implementation concerns, the slower analysis may be explained by a higher branching factor due to graph-based returns, which are additionally able to expand more deeply than in \Hobbit{} as returns can occur without the same number of corresponding calls.

Secondly, we observe in Table~\ref{table:results} that, \Hobbit{} was able to prove an additional 5 examples by turning on the up to reentry technique (but not the rest of the manual up-to techniques) and carefully adding reentry annotations in the right functions, leaving 5 examples from \Hobbit's testsuite that can be exclusively proven by \SLHobbit.
We can thus conclude that for our second scenario on equivalences, \SLHobbit{} still supersedes \Hobbit{} with semi-automatic reentry annotations.

Finally, on our own test suite, \SLHobbit{} is clearly superior to \Hobbit{} on higher-order stateful programs that feature reentrant calls with changing state as is common in, but not limited to, higher-order data structures, event-driven programming, and various protocols. In these, the stackless approach was able to quickly saturate the graph and prove equivalence, whereas \Hobbit{} was unable to finitise the bisimulation game and eventually timed out. Lastly, one of the examples in our test suite is provable by neither \SLHobbit{} nor \Hobbit{}. We include this example to illustrate a current limitation of our technique: it cannot finitise exploration concerning infinite state. To achieve this would require adapting \Hobbit{}'s invariant annotations technique to our framework.


\section{Related and Future Work}
\label{sec:outro}

Theorems of closed instantiation of uses (CIU theorems) were amongst the first operational techniques that reduced the contexts considered by contextual equivalence in languages with state \cite{mason_talcott_1991,felleisen1987calculi,gordon_hankin_lassen_1999}.
Applicative bisimulation \cite{applicative90} was the first application of bisimulation to a (pure) higher-order programming language, which reduces contexts further by considering applying top-level term functions to identical closed arguments.
Logical relations \cite{JaberTabareau15,state-dependent,HurDNV12} can be viewed as similarly reducing the examined contexts applying functions to related arguments.
Environmental bisimulation \cite{SumiiPierce07a,SumiiPierce07b,KoutavasW06,environmental} introduces stratification of bisimulations based on state and opponent knowledge, providing an effective proof technique due to being amenable to up-to techniques~\cite{BiernackiLP19,KoutavasW06,environmental,PousS11}, while applying functions to closed arguments derived by the congruence of the bisimulation.
Game semantics~\cite{AJM,HO,Nickau}, provides fully abstract denotational semantics for a range of higher-order languages, and in particular languages with higher-order state \cite{AbramskyHM98,Laird07,MurawskiT21}. Algorithmic interpretations thereof give rise to decision procedures for contextual equivalence for restricted language fragments~\cite{GhicaM03,hector,Dimovski14,coneqct}.
The \tool{SyTeCi} tool~\cite{syteci} combines notions from game semantics and logical relations, and manages to overcome some of the language restrictions of game-semantics tools.
Normal form bisimulation, discussed in the introduction, treats context-generated code symbolically, entirely removing quantification over context-generated code and leading to sound but not complete techniques, with the notable exception of the case of higher-order languages with: sequential control and state~\cite{StorvingLassenPOPL07}, state-only \cite{BiernackiLP19,KoutavasLT22}, and no effects \cite{KoutavasLT23}.
It has been shown~\cite{LassenL07,LassenLevy:nfbisimpoly,KoutavasLT22} that NF bisimulation relates to operational game semantics models where opponent-generated terms are also represented by names~\cite{Laird07,TzeGhica,Jaber15}.
The closest work to ours is~\cite{KoutavasLT22}, which combines game semantics and techniques from environmental bisimulations and up-to techniques to produce a fully abstract LTS suitable for NF bisimulation.

Unlike prior approaches, our treatment of the stack stems from model checking pushdown systems~\cite{BouajjaniEM97,FinkelWW97,Schwoon02} and exact-stack control-flow analyses of higher-order functional languages~\cite{P4F,CFA2-jour,PDCFA,AAC}, and allows us to
eliminate the need for a term/context call stack without loss of precision.
Our approach is related to~\cite{P4F}, where the use of a continuation graph
is proposed (called \emph{continuation store}). 
The reachability analysis of procedural code using pushdown systems and saturation techniques was first considered in~\cite{EsparzaKS01,Schwoon02}. Saturation typically relies on the fact that the underlying control state space is finite, which is not the case in our NF bisimulation games. We therefore follow an on-the-fly forward saturation procedure which over-approximates the saturation procedure devised in~\cite{FinkelWW97}. While this over-approximation is generally unsound (cf.~\cref{ex:dummy}), it is sound for reachability.
 
In conclusion, in this work we created a novel
fully abstract technique for contextual equivalence and implement a
bounded-complete prototype verification tool. Our tool is able to verify equivalence in a number of examples which were out of reach in previous work.
%
In the future we believe that our work can lead to useful verification tools, for example for regression verification~\cite{regressionverif,rvt,reve,symdiff} in higher-order languages with state, relational verification of assertion reachability in code, or even (single-program) contextual model checking in settings such as blockchain smart contracts.


%
%


{\let\thefootnote\relax\footnote{{For the purpose of Open Access, the author/s has/have applied a CC BY public copyright licence to any Author Accepted Manuscript version arising from this submission.}}}

\bibliographystyle{ACM-Reference-Format}
  \bibliography{references}

   \clearpage
   \emph{This appendix is provided for the benefit of the reviewers, and will not appear in a final version of this paper.}
   \appendix
%

\section{The Stackless LTS}\label{sec:stackless-lts-full}
\begin{figure*}[h] 
  \[\begin{array}{@{}ll@{\;\;}l@{}}
    \irule![PropCall][propappf]{
      (D,\Gamma') \in \ulpatt(v)
    }{\nbox{
      \pconf{A}{\Gamma}{s}{E[\app {\alpha} v]}
      \trans{\lpropapp{\alpha}{D}} 
      \oconf{A}{\Gamma,\Gamma'}{s}{E}
    }}
    \\
    \irule![PropRet][propretf]{
      (D,\Gamma') \in \ulpatt(v)
    }{
      \pconf{A}{\Gamma}{s}{v}
      \trans{\lpropret{D}}
      \oconf{A}{\Gamma,\Gamma'}{s}{\chi}
    }
    \\
     \irule![OpCall][opappf]{
       \nbox{
      \vec\alpha\fresh\Gamma,s,\Ed\land        (D,\vec \alpha) \in \ulpatt(T)
      \\{\hspace{-5mm}}\land
            \Sigma_s \vdash \Gamma(i) : T \arrow T'
\land
        \app {\Gamma(i)} {D\hole[\vec \alpha]} \funred e
      }
    }{
      \oconf{A}{\Gamma}{s}{\Ed}
      \trans{\lopapp{i}{D\hole[\vec \alpha]}}
      \pconf{A\uplus\vec\alpha}{\Gamma}{s}{e}
    }
    \\
    \irule![OpRet][opretf]{
      \vec\alpha\fresh\Gamma,s,\Ed\land  (D,\vec \alpha) \in \ulpatt(T)
    }{
      \oconf{A}{\Gamma}{s}{E\hole_T}
      \trans{\lopret{D\hole[\vec\alpha]}}
      \pconf{A\uplus\vec\alpha}{\Gamma}{s}{E\hole[D[\vec\alpha]]}
    }
    \\
    \irule![Tau][tau]{
      \redconf{s}{e} \red \redconf{s'}{e'}
    }{
      \pconf{A}{\Gamma}{s}{e}
      \trans{\tau}
      \pconf{A}{\Gamma}{s'}{e'}
    }
    \\
    \irule![Response][dummy]{
      \nbox{ \eta \not= \tau
      \text{ and } 
      C \not\wtrans{\eta} 
      \text{ from other rules}
    }
    }{
      C \trans{\eta} \botcconf
    }
   \end{array}\]
  \hrule
  \caption{The Stackless Labelled Transition System. Relation $\funred$ is defined as in \Cref{fig:lts}.}\label{fig:SLlts}
\end{figure*}

   \section{Typing rules of \lang}\label{sec:typing}

\newcommand\qweqwe{\quad}
\newcommand\preG[1][]{\Delta#1;\varSigma\vdash}
\newcommand\atype[1]{\mathsf{#1}}
\newcommand\vars{Vars}
\newcommand\refs{Refs}
\newcommand\meths{Meths}

\begin{gather*}
{\infer{\alpha\in\Nam_{T\to T'}}{ \preG \alpha : T\to T' }}
\qweqwe
{\infer{c\text{ cons.\ of type }T}{ \preG c : T }}
\qweqwe
\infer{ (x:T)\in\Delta }{ \preG x : T }
\qweqwe
\infer{ \preG e_1: T_1\quad\dots\quad \preG e_n: T_n }{ \preG (e_1,\dots,e_n) : T_1*\dots*T_n }
\\[1mm]
\infer{ op:\vec T\to T\quad  \preG (\vec e):\vec T }{ \preG \arithop{\vec e} : T }
\qweqwe
\infer{ \preG e : \atype{bool}\qquad  \preG (e_1,e_2) : T*T }{ \preG \cond{e}{e_1}{e_2} : T }
\\[1mm]
\infer{\preG v : T \quad  \Delta;\varSigma,l:T\vdash e : T' }{ \preG \new l v e : T' }
\qweqwe 
\infer{ (l:T)\in\varSigma }{ \preG {!l} : T }
\qweqwe
\infer{ (l:T)\in\varSigma \quad \preG e : T }{\preG l := e : \atype{unit} }
\\[1mm]
\infer
{ \preG e: T \rightarrow T' \quad \preG e' : T }
{ \preG ee' : T' }
\qweqwe 
\infer{ \preG[,f:T\to T',x:T] e : T' }{ \preG \lam{x}{e} : T\to T'}
\\[1mm]
\infer{ \preG[,x_1:T_1,\dots,x_n:T_n] e : T \quad  \preG e' : \vec T }{ \preG \elet{(\vec x)}{e'}{e} : T }
\end{gather*}


   \section{Simple lemmas}\label{app:simple-lemmas}

\paragraph{\cref{lem:Ks}}
  For any $\beta\toSigsD{K_1,K_2}$, we have that $K_1,K_2$ are defined and:
  \begin{itemize}
    \item $\beta=\diamond$ and $K_1=K_2=\noe$, or
    \item $\beta.1,\beta.2\neq\bot$ and 
      $|K_1|=|K_2|$, or
      \item $\exists j\in\{1,2\}.\ \beta.j=\bot \land K_j=\bot\neq K_{3-j}$
\end{itemize}
where $|K|$ is the length of $K$ (if $K\neq\bot$). 
\begin{proof}
  By rule induction. The base case is clear. Suppose now
  \[
      \irule{
      \beta'\toSigsD{K_1,K_2}
      \\
      \beta\toSig{\Ed_1,\Ed_2}\beta'
    }{
      \beta\toSigsD{(\Ed_1,K_1),(\Ed_2,K_2)}
    }
  \]
  By induction hypothesis, either $\beta'.1,\beta'.2\neq\bot$ and $|K_1|=|K_2|$, or (WLOG)
$\beta'.1=\bot$ and 
$K_1=\bot\neq K_2$. Suppose
the former is the case.
If $\beta'\neq\diamond$ then $\Ed_j\neq\diamond$ and $\Ed_j,K_j$ is defined (for $j=1,2$), hence
$\beta.1,\beta.2\neq\bot$ and
$|\Ed_1,K_1|=|\Ed_2,K_2|$, or (by divergence) there is exactly one $j$ such that  $\Ed_j,K_j=\bot$ and $\beta.j=\bot$. If $\beta'=\diamond$ then by definition $K_1=K_2=\noe$ and $\Ed_1=\Ed_2=\diamond$ and the claim follows. It remains to check the case $K_1=\bot\neq K_2$ and $\beta'.1=\bot$. By \Cref{def:betaSigma} (divergence) we have that $\Ed_1=\beta.1=\bot$. It remains to check that $\Ed_2,K_2$ is defined and not $\bot$. The former is clear as $\Ed_2\neq\diamond$ (as $\beta'\neq\diamond$) and $K_2\neq\bot$. For the latter, observe that $\Ed_1=\bot\neq\Ed_2$. \qed
\end{proof}

\paragraph{\cref{lem:toS}}
  For any $\beta\models_\Sigma(K_1,K_2)$ and $C_1,C_2$, if $\beta=\betais{C_1,C_2}$ or $(C_1,C_2,\Sigma,\beta)$ compatible
  then $\toS{C_1,K_1},\toS{C_2,K_2}$ are defined.
\begin{proof}
  By symmetry, it suffices to show that $\toS{C_1,K_1}$ is defined. WLOG assume $C_1\neq\botcconf$, so $\beta.1\neq\bot$. Then, by \cref{lem:Ks} we have that $K_1\neq\bot$. Finally, we need to check that if $C_1$ is an O-configuration and $(C_1,C_2,\Sigma,\beta)$ compatible then $C_1.\Ed=\diamond\iff K_1=\cdot$. By \Cref{lem:Ks}, $K_1=\cdot$ iff $\beta=\diamond$. By compatibility,  $\beta=\diamond$ iff $C_1.\Ed=\diamond$.
Thus,  $\toS{C_1,K_1}$ is defined. \qed
\end{proof}

\paragraph{\Cref{lem:soundness}}
  If $\bisim R$ is a weak simulation then so is:
  \begin{align*}
     \tR = \{
           (\toS{C_1,K_1},\toS{C_2,K_2}) \mid&\; 
                                                         \exists\Sigma,\beta.\
  C_1\bisim*{R}_{\Sigma,\beta}C_2
\land
                                                         \beta\toSigs{K_1,K_2}\diamond
\}
  \end{align*}
  Moreover, if $\bisim R$ is a weak bisimulation then so is $\tR$.
\begin{proof}
  Let $\Sigma,\beta,C_1,C_2,K_1,K_2$ be such that
  $\beta\toSigs{K_1,K_2}\diamond$
and
$C_1\bisim*{R}_{\Sigma,\beta}C_2$, and $\tilde C_i=\toS{C_i,K_i}$. We assume WLOG that $\tC_1\neq\botconf$ as otherwise the simulation conditions are vacuously true. If $\tC_1\lterm$ then $K_1=\noe$ and $C_1.\Ed=\diamond$, and hence $C_1\lterm$ and by 
$  C_1\bisim*{R}_{\Sigma,\beta}C_2$
we obtain $C_2\lterm$. 
Since   $\beta\toSigs{K_1,K_2}\diamond$
 by \cref{lem:Ks}
we have $K_2=\noe$ or $K_2=\beta.2=\bot$. The latter case is excluded by $C_2\neq\botcconf$, hence $\tC_2\lterm$.
Suppose now
$C_i=\cconf{A}{\Gamma_i}{s_i}{\Phi_i}$, for $i=1,2$,
so
 $\tilde C_i=\conf{A}{\Gamma_i}{K_i}{s_i}{\Phi_i}$. Cases:
\begin{asparaitem}
\item $\tC_1\realtrans{\lopapp{i}{D[\vec\alpha]}}\tC_1'$ with $\vec\alpha\fresh\tC_2$. By \Cref{lem:closures} and equivariance of the LTS, we can assume WLOG that $\vec\alpha\fresh\beta$, so $\vec\alpha\fresh C_2,\beta$.
  Then, $\Phi_i=\Ed_i$ ($i=1,2$) and 
  $C_1\realtrans{\lopapp{i}{D[\vec\alpha]}}C_1'$ with $\tC_1'=\toS{C_1',(\Ed_1,K_1)}$. Thus,
    $C_2\trans{\lopapp{i}{D[\vec\alpha]}}C_2'$ and $C_1'\bisim*{R}_{\Sigma',\beta'}C_2'$ with $\beta'=\betais{C_1',C_2'}$ and $\Sigma'=\Sigma[\beta'\trans{\Ed_1,\Ed_2}\beta]$.
Hence, $\tC_2\trans{\lopapp{i}{D[\vec\alpha]}}\tC_2'$ with $\tC_2'=\toS{C_2',(E_2,K_2)}$. The claim follows from the fact that $\beta'\toSigsD[\Sigma']{(\Ed_1,K_1),(\Ed_2,K_2)}$.
\item $\tC_1\realtrans{\lpropret{D}}\tC_1'$.
Then, $\Phi_i=v_i$ ($i=1,2$) and, assuming $K_i=\Ed_i,K_i'$ ($i=1,2$) and $\tC_1'=\oldoconf{A}{\Gamma_1'}{K_1'}{s_1}{\Ed_1}$,
we have $C_1\realtrans{\lpropret{D}}C_1'$ with $C_1'=\oconf{A}{\Gamma_1'}{s_1}{\chi}$.
Now,
if $\Ed_1,\Ed_2\not=\diamond$ then 
$\beta\toSigsD{K_1,K_2}$ must be due to some $\beta'\toSigsD{K_1',K_2'}$ and $\beta\toSig{\Ed_1,\Ed_2}\beta'$. Otherwise, $\Ed_1=\Ed_2=\diamond$ and $K_i=K_i'=\noe$ ($i=1,2$). In either case, there is $\beta'$ such that $\beta\toSig{\Ed_1,\Ed_2}\beta'$ and $\beta'\toSigsD[\Sigma']{K_1',K_2'}$, where $\Sigma'=\Sigma@\beta'$.
For that $\beta'$, by bisimilarity conditions, we have $C_2\wtrans{\lpropret{D}}C_2'$ and 
$C_1'\sub{\chi}{\Edb_1} \bisim*{R}_{\Sigma',\beta'} C_2'\sub{\chi}{\Edb_2}$, and hence
$\tC_2\wtrans{\lpropret{D}}\tC_2'$ with $\tC_2'=\oldoconf{A}{\Gamma_2'}{K_2'}{s_2}{\Ed_2}$ and
$C_2'=\oconf{A}{\Gamma_2'}{s_2}{\chi}$ for some $\Gamma_2'$ and $s_2$. Observing that $\tC_i'=\toS{C_i'\sub\chi{\Ed_i},K_i'}$ we obtain $\tC_1'\tsR\tC_2'$.
\item $\tC_1\realtrans{\tau}\tC_1'$.
Then, $\Phi_i=e_i$ ($i=1,2$) and, assuming $\tC_1'=\oldpconf{A}{\Gamma_1}{K_1}{s_1'}{e_1'}$, we have $C_1\realtrans{\tau} C_1'$ with $C_1'=\pconf{A}{\Gamma_1}{s_1'}{e_1'}$. Thus, by $C_1\bisim*R_{\Sigma,\beta}C_2$, we have $C_2\wtrans{\tau}C_2'$ with $C_1'\bisim*R_{\Sigma,\beta}C_2'$ for some $C_2'=\pconf{A}{\Gamma_2}{s_2'}{e_2'}$. But then $\tC_2\wtrans{\tau}\tC_2'=\oldpconf{A}{\Gamma_2}{K_2}{s_2'}{e_2'}$ and, by definition, $\tC_1'\tsR\tC_2'$.
\item 
  $\tC_1\realtrans{\lpropapp{\alpha}{D}}\tC_1'$ or $\tC_1\realtrans{\lopret{D[\vec\alpha]}}\tC_1'$. Similar to the previous cases.
\end{asparaitem}
For the case where $C_2=\botcconf$ we can  replay the same arguments as in the cases above.
%
\\
  Finally, if $\bisim R$ is a weak bisimulation then $\bisim R,\bisim R^{-1}$ are weak simulations. Moreover:
  \begin{align*}
\tR^{-1} &=  \{
    (\toS{C_2,K_2},\toS{C_1,K_1}) \mid \exists\Sigma,\beta.\ \beta\toSigsD{K_1,K_2}
\land
  C_1\bisim*{R}_{\Sigma,\beta}C_2
           \}\\
     &=  \{
    (\toS{C_2,K_2},\toS{C_1,K_1}) \mid \exists\Sigma,\beta.\ \beta^{-1}\toSigsD[\Sigma^{-1}]{K_2,K_1}
\land
  C_2\bisim*{R}_{\Sigma^{-1},\beta^{-1}}^{-1}C_1
       \}\\
    &=  \{
    (\toS{C_1,K_1},\toS{C_2,K_2}) \mid \exists\Sigma,\beta.\ \beta\toSigsD{K_1,K_2}
\land
  C_1\bisim*{R}_{\Sigma,\beta}^{-1}C_2
           \}
  \end{align*}
  and, hence, both $\tR,\tR^{-1}$ are weak simulations.
\qed
\end{proof}  


   \section{Proof of completeness}\label{app:completeness}

\paragraph{\cref{lem:completeness}}
The following is a weak (pushdown) simulation:
  \begin{align}
    \bisim{R} = \{\ &(C_1,C_2,\Sigma,\beta) \mid\;
\Sigma\text{ sat-connected}\land
                 (C_1,C_2,\Sigma,\beta)\text{ compatible}\notag\\
                    &{}\land\forall K_i.\ \beta\toSigsD{K_1,K_2}
                      \implies
                                                \toS{C_1,K_1}\simil\toS{C_2,K_2}\tag{A}\label{cond:a}\\
              & {}\land\forall C_i',K_i.\  \betais{C'_1,C'_2}\toSigsD{K_1,K_2}
                \implies
                \toS{C_1',K_1}\simil\toS{C_2',K_2}\tag{A$^*$}\label{cond:a1}\\
              & {}\land\beta\neq\diamond\implies\exists C_1',C_2'.\ \beta=\betais{C_1',C_2'}\land 
 (C_1',C_2')\satrans{\varepsilon}(C_1,C_2)\tag{B}\label{cond:b}
\    \}
  \end{align}
\begin{proof}
  Let $C_1\bisim*R_{\Sigma,\beta}C_2$, for $C_1\not=\botcconf$, and pick some $\beta\toSigsD{K_1,K_2}$.
By \cref{cond:a} we have $\tilde C_1\simil\tilde C_2$.
If $C_1\lterm$ then $C_1.\Ed=\diamond$ and, by compatibility, $\beta=\diamond$. But then $K_1=\noe$ and therefore $\tC_1\lterm$, so $\tC_2\lterm$. Again by compatibility we get $C_2.\Ed=\diamond$ hence $C_2\lterm$.
Suppose now $C_1\realtrans{\eta}C_1'$.
Cases:
\begin{asparaitem}
\item $\eta=\tau$. Then, $\tC_1\realtrans{\tau}\tC_1'=\toS{C_1',K_1}$ and hence, by $\tau$-closure, $\tC_1'\simil \tC_2$.
  It suffices to show that $C_1'\bisim*R_{\Sigma,\beta}C_2$. Condition \cref{cond:a1} holds as $\Sigma$ has not changed, while \cref{cond:b}
 follows from the hypothesis and \Cref{def:sat}.
Finally, for \cref{cond:a} we use \cref{lem:sat}.
\item $\eta=\lpropapp{\alpha}{D}$. Then, $\tC_1\realtrans{\lpropapp{\alpha}{D}}\tC_1'=\toS{C_1',K_1}$ and hence
$\tC_2\wtrans{\lpropapp{\alpha}{D}}\tC_2'$ with
$\tC_1'\simil\tC_2'$. We obtain $C_2\wtrans{\lpropapp{\alpha}{D}}C_2'$, with $\tC_2'=\toS{C_2',K_2}$, and it suffices to show conditions \cref{cond:a,cond:a1,cond:b}.
These are shown as in the previous case above.
\item $\eta=\lopret{D[\vec\alpha]}$ with $\vec\alpha\fresh C_2$.
Let $\vec\alpha'\fresh K_1,K_2,\vec\alpha$ be of the same length as $\vec\alpha$.
  Then, $\tC_1\realtrans{\lopret{D[\vec\alpha']}}\tC_1'$ and hence
$\tC_2\trans{\lopret{D[\vec\alpha']}}\tC_2'$ with
$\tC_1'\simil\tC_2'$. We obtain $C_2\trans{\lopret{D[\vec\alpha]}}C_2'$,
where $\tC_i'=\toS{\pi\cdot C_i',K_i}$ and $\pi=(\vec \alpha\ \vec\alpha')$ (permute component-wise $\vec \alpha$ with $\vec\alpha'$).
We need to show that $C_1'\bisim*R_{\Sigma,\beta}C_2'$, and in particular
 show conditions \cref{cond:a,cond:a1,cond:b}. Condition \cref{cond:a1} still holds as $\Sigma$ has not changed. For \cref{cond:a}, we use \Cref{lem:sat}. For \cref{cond:b}, assuming $\beta\neq\diamond$, by hypothesis there are $C_1'',C_2''$ such that $\beta=\betais{C_1'',C_2''}$ and $(C_1'',C_2'')\satrans{\varepsilon}(C_1,C_2)$. By \Cref{def:sat} we have $( C_1, C_2)\satrans{\varepsilon}( C_1', C_2')$, and we then use rule \textsc{Trans}.
\item $\eta=\lopapp{i}{D[\vec\alpha]}$ with $\vec\alpha\fresh C_2$.
  Let $\vec\alpha'\fresh K_1,K_2,\vec\alpha$ be of the same length as $\vec\alpha$.
Then, setting $\Ed_i=C_i.\Ed$, $\tC_1\realtrans{\lopapp{i}{D[\vec\alpha']}}\tC_1'$ and, hence, $\tC_2\trans{\lopapp{i}{D[\vec\alpha']}}\tC_2'$ with $\tC_1'\simil\tC_2'$.
We obtain $C_2\trans{\lopapp{i}{D[\vec\alpha]}}C_2'$ with 
 $\tC_i'=\toS{(\vec\alpha\ \vec\alpha')\cdot C_i',(\Ed_i,K_i)}$. We set $\beta'=\betais{C_1',C_2'}$ and $\Sigma'=\Sigma[\beta'\trans{\Ed_1,\Ed_2}\beta]$.
To verify 
that $\Sigma'$ is sat-connected, by equivariance of $\Sigma$ (and of the sat-LTS) it suffices to show that the edge $\beta'\trans{\Ed_1,\Ed_2}\beta$ preserves it.
By condition \cref{cond:b} (on $\Sigma$) we have $\beta=\betais{C_{10},C_{20}}$  and $(C_{10},C_{20})\satrans{\varepsilon}( C_1, C_2)$ and, by \Cref{def:sat}, $(C_1,C_2)\satrans{(\Ed_1,\Ed_2)}(C_1',C_2')$, so \
$
(C_{10},C_{20})\satrans{\varepsilon}\cdot\satrans{(\Ed_1,\Ed_2)}( C_1', C_2')
$ \
as required.\\
It remains to show $C_1'\bisim*R_{\Sigma',\beta'}C_2'$, and in particular that conditions \cref{cond:a,cond:a1,cond:b} hold. 
For \cref{cond:b}, we have that $\beta'=\betais{C_1',C_2'}$ so
we require that $(C_1',C_2')\satrans{\varepsilon}(C_1',C_2')$, which is trivial.
We also note that \cref{cond:a} follows from \cref{cond:a1} since $\beta'=\betais{C_1',C_2'}$.
For \cref{cond:a1}, let $\beta'',C_i'',K_i''$ be such that $\beta''=\betais{C_1'',C_2''}\toSigsD[\Sigma']{K_1'',K_2''}$. We show that $\toS{C_1'',K_1''}\simil\toS{C_2'',K_2''}$ using rule induction on 
$\beta''\toSigsD[\Sigma']{K_1'',K_2''}$. Since $\beta''\not=\diamond$, we must have:
  \[
    \irule{
      \hat\beta\toSigsD[\Sigma']{\hat K_1,\hat K_2}
      \\
      \beta''\toSig[\Sigma']{\hat\Ed_1,\hat\Ed_2}\hat\beta
    }{
      \beta''\toSigsD[\Sigma']{K_1'',K_2''}
    }
  \]
  for some $\hat\beta,\hat K_i,\hat\Ed_i$ with $K_i''=\hat\Ed_i,\hat K_i$. In the base case, $\hat\beta=\hat\Ed_i=\diamond$ and $\hat K_i=K_i''={\noe}$.
  If $\beta''\toSig{\hat\Ed_1,\hat\Ed_2}\hat\beta$ then $\beta''\toSigsD{K_1'',K_2''}$ and the claim follows by \cref{cond:a1} applied on $\Sigma$ (as $C_1\bisim*R_{\Sigma,\beta}C_2$). Otherwise, we must have $\beta''=\pi\cdot\beta'$ (for some $\pi$) and $\beta=\Ed_i=\diamond$, and so the claim follows from $\tC_1'\simil\tC_2'$  and closure under $\pi$ (as $C_i'=C_i''$).
  Suppose now $\hat\beta\neq\diamond$. By sat-connectedness, $\hat\beta =\betais{\hat C_1,\hat C_2}$
for some $\hat C_i$ such that 
$(\hat C_1,\hat C_2)\satrans{\varepsilon}\cdot\satrans{(\hat\Ed_1,\hat\Ed_2)}( C_1'', C_2'')$.
  By induction hypothesis we have $\toS{\hat C_1,\hat K_1}\simil\toS{\hat C_2,\hat K_2}$, so by \cref{lem:sat} we have $\toS{ C_1'', K_1''}\simil\toS{ C_2'', K_2''}$. 
\item $\eta=\lpropret{D}$ and $\beta\toSig{\Ed_1,\Ed_2}\beta'$.
  Note first that, for any $K_i'$ such that $\beta'\toSigsD{K_1',K_2'}$, we get $A\cup\tC_1\trans{\lpropret{D}}\tC_1'$ and hence $A\cup\tC_2\wtrans{\lpropret{D}}\tC_2'$ with $\tC_1'\simil\tC_2'$, where  $\tC_i=\toS{C_i,(\Ed_i,K_i')}$ and
 $\tC_1'=A\cup\toS{C_1'\sub{\chi}{\Ed_1},K_1'}$ (by compatibility, these are all defined).
Picking now any such $K_1',K_2'$,
we obtain $C_2\wtrans{\lpropret{D[\vec\alpha]}}C_2'$ for $C_2'$ such that $\tC_2'=A\cup\toS{C_2'\sub{\chi}{\Ed_2},K_2'}$, so it remains to show that $C_1'\sub{\chi}{\Ed_1}\bisim*R_{\Sigma',\beta'} C_2'\sub{\chi}{\Ed_2}$ for $\Sigma'=\Sigma@\beta'$. Condition \cref{cond:a1} still holds as $\Sigma'$ is a subgraph of $\Sigma$; while we saw above that condition \cref{cond:a} holds ($\tC_1'\simil\tC_2'$ is true for any valid choice of $K_i'$).
  For \cref{cond:b}, if 
  $\beta'\not=\diamond$
then $\beta\neq\diamond$. Hence, condition \cref{cond:b} on $C_1\bisim*R_{\Sigma,\beta}C_2$ implies 
$
(C_{10},C_{20})\satrans{\varepsilon}(C_1,C_2)
$
for $\beta=\betais{C_{10},C_{20}}$. Since $\Sigma$ is sat-connected,
assuming  $\beta'=\betais{C_1'',C_2''}$, we get 
\[
(C_{1}'',C_{2}'')\satrans{\varepsilon}\cdot\satrans{(\Ed_1,\Ed_2)}( C_{10}, C_{20})
\]
and by \Cref{def:sat} 
we obtain $(C_{1}'',C_{2}'')\satrans{\varepsilon}( C_1'\sub\chi{\Ed_1}, C_2'\sub\chi{\Ed_2})$. 
\qed
\end{asparaitem}
\end{proof}


   \section{Theory of Enhancements}\label{sec:enhancements}

Here we present additional definitions and results from \cite{PousS11,PousCompanion} omitted from \cref{sec:up-to}.
The main result of this section is a set of proof obligations with which we can proof an up-to technique sound, shown in
\cref{lem:prfs}.

\begin{definition}
  Consider monotone functions $f,g: \pow{X}\rightarrow \pow{X}$ on some set $X$.
  We write $f\comp g$ for the composition of $f$ and $g$, and
  $f \sqcup g$ for the function $\bisim{S} \mapsto f(\bisim{S}) \sqcup g(\bisim{S})$.
  For any set $F$ of functions, we write $\bigsqcup F$ for the function $\bisim{S} \mapsto \bigcup_{f\in F}f(\bisim{S})$.
  We also write $\constf{X}$ to be the constant function with range $\{X\}$.
  We let $f^0 \defeq \utId{}$ and $f^{n+1} \defeq f\comp f^{n}$. Moreover, we write
  $f^{\omega}$ to mean $\bigsqcup_{k<\omega} f^{k}$.
  We write $f\sqsubseteq g$ when, for all $\bisim{S}\in\pow{X}$, $f(\bisim{S}) \subseteq g(\bisim{S})$.
\end{definition}

\begin{lemma}[\cite{PousS11}, Lem. 6.3.12] \label{lem:compat-alternative}
  $f\progress[\WP]g$ if and only if 
  for all ${\bisim{R}}\progress[\WP]{\bisim{S}}$ we have $f\comp\WP(\bisim{R}) \subseteq \WP\comp g(\bisim{S})$.
  \qed
\end{lemma}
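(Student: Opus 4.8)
This is Lemma~6.3.12 of~\cite{PousS11}, so my first step would be to simply invoke it; for self-containedness I would then sketch the short argument, whose only ingredients beyond unfolding \cref{def:progression} are the monotonicity of $f$ and of $\WP$. Recall that $f\progress[\WP]g$ abbreviates $f\comp\WP\sqsubseteq\WP\comp g$, i.e.\ $f(\WP(\bisim{S}))\subseteq\WP(g(\bisim{S}))$ for every relation $\bisim{S}$, whereas $\bisim{R}\progress[\WP]\bisim{S}$ abbreviates $\bisim{R}\subseteq\WP(\bisim{S})$; hence both sides of the claimed equivalence reduce to plain set inclusions and no fixpoint machinery is needed.

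For the left-to-right direction I would assume $f\comp\WP\sqsubseteq\WP\comp g$ and take arbitrary $\bisim{R},\bisim{S}$ with $\bisim{R}\subseteq\WP(\bisim{S})$. Monotonicity of $f$ gives $f(\bisim{R})\subseteq f(\WP(\bisim{S}))$, and the hypothesis applied at $\bisim{S}$ gives $f(\WP(\bisim{S}))\subseteq\WP(g(\bisim{S}))$; composing the two inclusions yields $f(\bisim{R})\subseteq\WP(g(\bisim{S}))$, which is precisely the progression $f(\bisim{R})\progress[\WP]g(\bisim{S})$.

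For the converse I would exploit that the universal quantification ranges over \emph{all} progressing pairs, in particular over the canonical one $\WP(\bisim{S})\progress[\WP]\bisim{S}$, which holds trivially since $\WP(\bisim{S})\subseteq\WP(\bisim{S})$. Instantiating the hypothesis with $\bisim{R}\defeq\WP(\bisim{S})$ then delivers exactly $f(\WP(\bisim{S}))\subseteq\WP(g(\bisim{S}))$; as $\bisim{S}$ was arbitrary, this is $f\comp\WP\sqsubseteq\WP\comp g$, i.e.\ $f\progress[\WP]g$. I do not expect a real obstacle here: the only thing to keep straight is the bookkeeping between the function-level progression $f\progress[\WP]g$ and the relation-level progressions $\bisim{R}\progress[\WP]\bisim{S}$, so that the forward direction is seen to be a one-line monotonicity step and the backward direction a one-line instantiation at $\bisim{R}\defeq\WP(\bisim{S})$.
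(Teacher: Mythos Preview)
The paper gives no proof of its own here—it simply cites \cite{PousS11} and places a \qed—so your plan to invoke that reference is exactly what the paper does, and your additional self-contained sketch (monotonicity of $f$ for the forward direction, instantiation at $\bisim{R}\defeq\WP(\bisim{S})$ for the converse) is the standard argument and is correct.

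One small point worth flagging: the conclusion you reach in the forward direction is $f(\bisim{R})\subseteq\WP(g(\bisim{S}))$, i.e.\ $f(\bisim{R})\progress[\WP]g(\bisim{S})$, which is indeed the content of Pous--Sangiorgi's Lemma~6.3.12. The paper's displayed right-hand side literally reads $f\comp\WP(\bisim{R})\subseteq\WP\comp g(\bisim{S})$, i.e.\ with an extra $\WP$ applied to $\bisim{R}$ on the left; that variant is not what your sketch establishes (nor does it hold in general for arbitrary monotone $\WP$). This looks like a typographical slip in the paper, and the version you prove is the intended and correct one.
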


\begin{lemma}[\cite{PousS11}, Prop. 6.3.11 and 6.3.12]\label{lem:pous-basic}
  The following functions are $\WP$-compatible:
  \begin{itemize}
    \item the reflexive \utRefl{} and identity \utId{} functions;
    \item $f \comp g$, for any $\WP$-compatible monotone functions $f$, $g$;
    \item $\bigsqcup F$, for any set $F$ of $\WP$-compatible monotone functions.\qed
    \end{itemize}
\end{lemma}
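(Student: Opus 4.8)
The statement is the instantiation to our functional $\WP$ (\cref{def:wp}) of \cite[Prop.~6.3.11 and 6.3.12]{PousS11}, so the plan is simply to replay those short arguments in our concrete setting. The only place where our setting deviates from the abstract lattice-theoretic one of \cite{PousS11} is the need to know that $\WP$ is a monotone endofunction on the complete lattice of relations over compatible PDNF (bi)simulation tuples; this is immediate from \cref{def:SLbisim}, where the candidate relation occurs only positively, namely in the conclusions requiring the defender's responses to lie again in the relation. Throughout, recall that $\WP$-compatibility of a monotone $f$ is the inequality $f \comp \WP \sqsubseteq \WP \comp f$ (\cref{def:progression}), and that weak simulations are exactly the post-fixed points of $\WP$.

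For $\utId{}$ the inequality is the trivial equality $\utId{} \comp \WP = \WP = \WP \comp \utId{}$. For $\utRefl{}$, the key observation is that the set $\mathcal I$ of reflexive tuples, i.e.\ those of the form $(C,C,\Sigma,\beta)$ with $\Sigma$ and $\beta$ invariant under the inversion $(\cdot)^{-1}$ of \cref{def:sigma-ext-restr}, forms a weak simulation: every challenge from such a tuple is answered by the identical move on the coinciding component, and the updates to $\Sigma$ and $\beta$ prescribed by \cref{def:SLbisim} are applied symmetrically, so reflexive tuples go to reflexive tuples. Hence $\mathcal I \subseteq \WP(\mathcal I)$, and compatibility of $\utRefl{}$ follows by combining this with monotonicity of $\WP$ and the inclusion $R \subseteq \utRefl{}(R)$.

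For composition, assume $f \comp \WP \sqsubseteq \WP \comp f$ and $g \comp \WP \sqsubseteq \WP \comp g$. Then, for any $R$, we have $f(g(\WP(R))) \subseteq f(\WP(g(R)))$ by $\WP$-compatibility of $g$ together with monotonicity of $f$, and $f(\WP(g(R))) \subseteq \WP(f(g(R)))$ by $\WP$-compatibility of $f$ applied at $g(R)$; chaining these gives $(f \comp g) \comp \WP \sqsubseteq \WP \comp (f \comp g)$. For $\bigsqcup F$, for each $f \in F$ we get $f(\WP(R)) \subseteq \WP(f(R)) \subseteq \WP\bigl((\bigsqcup F)(R)\bigr)$, the last inclusion by monotonicity of $\WP$ since $f(R) \subseteq (\bigsqcup F)(R)$; taking the union over $f \in F$ yields $(\bigsqcup F)(\WP(R)) \subseteq \WP\bigl((\bigsqcup F)(R)\bigr)$.

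I do not expect a genuine obstacle here: once monotonicity of $\WP$ is in hand, the three bullets are pure inclusion-chasing, and the only model-specific ingredient, needed for $\utRefl{}$, is the elementary fact that the set of reflexive tuples is a weak simulation. The most error-prone step is checking that this set meets every clause of \cref{def:SLbisim}, in particular that the continuation-graph bookkeeping and the compatibility side-conditions are preserved when the two components of a tuple coincide; but this follows the same pattern as the preservation arguments given after \cref{def:SLbisim}, and is routine.
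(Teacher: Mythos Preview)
The paper gives no proof of its own: the lemma is stated with a terminal \qed and attributed to \cite[Prop.~6.3.11 and 6.3.12]{PousS11}. Your proposal therefore does more than the paper does—it unpacks the cited arguments in the present setting—and for $\utId{}$, composition, and $\bigsqcup F$ your inclusion-chasing is correct and matches the standard proofs in the reference.

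For $\utRefl{}$ your plan is right in spirit but your characterisation of the reflexive set $\mathcal I$ is slightly too permissive. Requiring only $\Sigma = \Sigma^{-1}$ and $\beta=\beta^{-1}$ allows $\Sigma$ to contain an edge $\beta \toSig{\Ed_1,\Ed_2} \beta'$ with $\Ed_1 \neq \Ed_2$ (together with its inverted twin, which is a different edge). In clause~4 of \cref{def:SLbisim} (\textsc{PropRet}), popping along such an edge from a tuple $(C,C,\Sigma,\beta)$ yields $(C'\sub{\chi}{\Ed_1}, C'\sub{\chi}{\Ed_2}, \Sigma@\beta', \beta')$, which is no longer reflexive. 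The fix is to take $\mathcal I$ to consist of tuples $(C,C,\Sigma,\beta)$ where every edge of $\Sigma$ carries equal continuation labels and every vertex is either $\diamond$ or of the form $\betais{C',C'}$; this set is closed under all clauses of \cref{def:SLbisim} (in particular the graph extension in clause~3 adds only such symmetric edges when $C_1=C_2$) and is exactly what the bisimulation game generates from symmetric initial data. With that adjustment your argument goes through. You flagged this as the error-prone step yourself; the above is the concrete repair.
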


Pous \cite{PousCompanion} extends the theory of enhancements with the notion of companion of $\WP$, the largest 
 $\WP$-compatible function.
\begin{definition}[Companion]
  $\companion[\WP] \defeq \bigsqcup\{ f:\mathcal{P}(X)\to\mathcal{P}(X) \mid  {f\progress[\WP]f} \}$.
\end{definition}

\begin{lemma}[\cite{PousCompanion}]\label{lem:companion-props}~
  \begin{enumerate}
    \item\label{lem:companion-props-1}
      \protect{\companion[\WP]} is $\WP$-compatible: $\companion[\WP]\progress\companion[\WP]$;
    \item\label{lem:companion-props-2}
      $\WP$ is $\WP$-compatible: $\WP\sqsubseteq\companion[\WP]$;
    \item\label{lem:companion-props-3}
      \protect{\companion[\WP]} is idempotent: ${\utId{}}\sqsubseteq\companion[\WP]$ and 
      $\companion[\WP]\comp\companion[\WP]\sqsubseteq\companion[\WP]$;
    \item\label{lem:companion-props-5}
      \companion[\WP] is $\WP$-sound: $\gfp{\WP\comp\companion[\WP]}\subseteq \gfp\WP$.
      \qed
  \end{enumerate}
\end{lemma}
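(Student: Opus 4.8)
The plan is to derive all four items from a single structural fact: the collection $\mathcal F=\{\,f:\mathcal P(X)\to\mathcal P(X)\mid f\progress[\WP]f\,\}$ of $\WP$-compatible functions is closed under arbitrary joins, so that its join $\companion[\WP]=\bigsqcup\mathcal F$ is itself a member of $\mathcal F$ and hence the \emph{largest} $\WP$-compatible function. Everything else then follows from this maximality together with the closure properties recorded in \Cref{lem:pous-basic} and the soundness criterion \Cref{lem:compat-sound}.

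First I would establish $\companion[\WP]\progress[\WP]\companion[\WP]$, which by definition means $\companion[\WP]\comp\WP\sqsubseteq\WP\comp\companion[\WP]$. For any $\bisim{S}$, unfolding the join gives $(\companion[\WP]\comp\WP)(\bisim{S})=\bigcup_{f\in\mathcal F}(f\comp\WP)(\bisim{S})$; since $f\comp\WP\sqsubseteq\WP\comp f$ for each $f\in\mathcal F$ this is contained in $\bigcup_{f\in\mathcal F}\WP(f(\bisim{S}))$; and since $f(\bisim{S})\subseteq\bigcup_{g\in\mathcal F}g(\bisim{S})=\companion[\WP](\bisim{S})$, monotonicity of $\WP$ bounds each $\WP(f(\bisim{S}))$, and hence their union, by $\WP(\companion[\WP](\bisim{S}))$. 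This is the only step needing a genuine argument: it uses only the set-theoretic unfolding of $\bigsqcup$ — legitimate because $\mathcal F$ is a bona fide \emph{set}, being a subcollection of the function space — and monotonicity of $\WP$, with no continuity of the members of $\mathcal F$ required.

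The remaining items are bookkeeping built on top of this. For $\WP\sqsubseteq\companion[\WP]$ and $\utId{}\sqsubseteq\companion[\WP]$ (the first half of idempotency), note that $\utId{}$ and $\WP$ both lie in $\mathcal F$, since $\utId{}\comp\WP=\WP=\WP\comp\utId{}$ and $\WP\comp\WP\sqsubseteq\WP\comp\WP$; hence they are below the upper bound $\companion[\WP]$. For the second half of idempotency, $\companion[\WP]\comp\companion[\WP]$ is a composite of two $\WP$-compatible functions (by the first item), hence $\WP$-compatible by \Cref{lem:pous-basic}, hence a member of $\mathcal F$, hence below its supremum $\companion[\WP]$. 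Finally, $\WP$-soundness of $\companion[\WP]$ is immediate from $\companion[\WP]\progress[\WP]\companion[\WP]$ and \Cref{lem:compat-sound}, which yields $\gfp{\WP\comp\companion[\WP]}\subseteq\gfp\WP$. Since all of this is classical (Pous), in the paper I would simply cite \cite{PousCompanion} and keep the above as a remark; the only real care needed is in the join-closure argument and, when invoking \Cref{lem:pous-basic} for composition, in checking that the monotonicity hypothesis is met — which holds in the present setting.
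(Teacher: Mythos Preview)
Your argument is correct and is precisely the standard derivation from \cite{PousCompanion}; the paper itself does not spell out a proof but simply cites that reference and places a \qed, so there is nothing to compare beyond noting that your sketch faithfully reconstructs the cited result. Your own closing remark (``in the paper I would simply cite \cite{PousCompanion}'') is exactly what the authors do.
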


This gives rise a proof technique for proving up-to techniques sound.
\begin{lemma}\label{lem:subseteq-companion-sound}
  Let $f\sqsubseteq\companion[\WP]$. Then $f$ is $\WP$-sound.
\end{lemma}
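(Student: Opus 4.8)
The plan is to read the result straight off the properties of the companion collected in \Cref{lem:companion-props}, using only elementary lattice reasoning. Recall that $f$ being $\WP$-sound means $\gfp{\WP\comp f}\subseteq\gfp\WP$; since $f$ is (by the standing assumption on up-to functions, cf.\ \Cref{def:wp,def:progression}) monotone, and $\WP$ is monotone, the composite $\WP\comp f$ is monotone and hence has a greatest fixpoint by Knaster--Tarski on the complete lattice of relations over (bi)simulation tuples.

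First I would observe that, by monotonicity of $\WP$, the hypothesis $f\sqsubseteq\companion[\WP]$ yields $\WP\comp f\sqsubseteq\WP\comp\companion[\WP]$ in the pointwise order. Next I would use the standard fact that the greatest-fixpoint operator is monotone with respect to the pointwise order on monotone functions: if $h_1\sqsubseteq h_2$ then $\gfp h_1 = h_1(\gfp h_1)\sqsubseteq h_2(\gfp h_1)$, so $\gfp h_1$ is a post-fixpoint of $h_2$ and therefore $\gfp h_1\subseteq\gfp h_2$. Instantiating this with $h_1=\WP\comp f$ and $h_2=\WP\comp\companion[\WP]$ gives $\gfp{\WP\comp f}\subseteq\gfp{\WP\comp\companion[\WP]}$.

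Finally I would invoke \Cref{lem:companion-props}(\ref{lem:companion-props-5}), i.e.\ that $\companion[\WP]$ is itself $\WP$-sound, which says precisely $\gfp{\WP\comp\companion[\WP]}\subseteq\gfp\WP$. Chaining the two inclusions gives $\gfp{\WP\comp f}\subseteq\gfp\WP$, which is exactly $\WP$-soundness of $f$. There is no real obstacle here; the only point worth stating explicitly is that the argument relies on $f$ being a monotone function (so that $\WP\comp f$ admits a greatest fixpoint at all), which is part of the ambient convention that up-to techniques are monotone maps on relations.
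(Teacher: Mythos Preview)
Your argument is correct and is essentially the same as the paper's, only more direct. The paper shows $f\sqcup\companion[\WP]\progress[\WP]f\sqcup\companion[\WP]$ and invokes \Cref{lem:compat-sound}; under the hypothesis $f\sqsubseteq\companion[\WP]$ this is just re-deriving \Cref{lem:companion-props}(\ref{lem:companion-props-5}) via \Cref{lem:companion-props}(\ref{lem:companion-props-1}), and then the passage from soundness of $f\sqcup\companion[\WP]$ (equivalently $\companion[\WP]$) to soundness of $f$ is the same monotonicity-of-$\gfp$ step you spell out explicitly.
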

\begin{proof}
  By showing that $f\cup\companion[\WP]\progress[\WP]f\cup\companion[\WP]$ and using \cref{lem:compat-sound}.
\qed\end{proof}

\begin{lemma}[Function Composition Laws]\label{lem:fcl}
  Consider monotone functions $f,g,h: \pow{X}\rightarrow \pow{X}$ and set ${\bisim{S}}\in\pow{X}$. We have
  \begin{enumerate}
    \item $\constf{\bisim{S}} \comp f = \constf{\bisim{S}}$
    \item $(f\sqcup g)\comp h = (f\comp h) \sqcup (g \comp h)$
    \item $h\comp(f\sqcup g) = (h\comp f) \sqcup (h\comp g)$
    \item $(f\sqcup g) \sqsubseteq (f\sqcup h)$ and
          $(f\comp g) \sqsubseteq (f\comp h)$ and
          $(g\comp f) \sqsubseteq (h\comp f)$, when $g\sqsubseteq h$.
    \item
      $f \sqsubseteq f^\omega$ and $f \comp f^\omega = f^\omega \comp f \sqsubseteq f^\omega \comp f^\omega \sqsubseteq f^\omega$.
      \item $f^\omega\circ g = \bigsqcup_{i<\omega}(f^i\circ g)$.
      \qed
  \end{enumerate}
\end{lemma}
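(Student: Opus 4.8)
The plan is to establish each of the six items by unfolding the definitions \emph{pointwise}, i.e.\ evaluating both sides at an arbitrary $\bisim R\in\pow X$ and using the definitions of $\comp$, $\sqcup$, $\bigsqcup$, $(\cdot)^n$ and $(\cdot)^\omega$ from the preceding definition. Items~(1) and~(2) are purely definitional and need nothing beyond this: $(\constf{\bisim S}\comp f)(\bisim R)=\constf{\bisim S}(f(\bisim R))=\bisim S$ regardless of $f$ and $\bisim R$; and $((f\sqcup g)\comp h)(\bisim R)=f(h(\bisim R))\cup g(h(\bisim R))$, which is exactly $((f\comp h)\sqcup(g\comp h))(\bisim R)$. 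Item~(4) follows solely from monotonicity: if $g\sqsubseteq h$ then $g(\bisim R)\subseteq h(\bisim R)$, which immediately gives $f(\bisim R)\cup g(\bisim R)\subseteq f(\bisim R)\cup h(\bisim R)$ and $g(f(\bisim R))\subseteq h(f(\bisim R))$, while $f(g(\bisim R))\subseteq f(h(\bisim R))$ uses monotonicity of $f$ on top.

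For item~(3) I would prove the inclusion $(h\comp f)\sqcup(h\comp g)\sqsubseteq h\comp(f\sqcup g)$ from monotonicity of $h$, since $f(\bisim R),g(\bisim R)\subseteq (f\sqcup g)(\bisim R)$ yields $h(f(\bisim R)),h(g(\bisim R))\subseteq h((f\sqcup g)(\bisim R))$; the converse inclusion is the one place in the lemma where bare monotonicity is not enough, and I would obtain it from the fact that the functions actually used in \cref{sec:up-to} (the up-to functions, $\constf{\bisim S}$, $\utId{}$, and $\WP$), being assembled from unions over LTS transitions, preserve the binary and countable joins involved. For the iteration laws I would first record the trivial identities $f^{n+1}=f\comp f^n=f^n\comp f$ and $f^i\comp f^j=f^{i+j}$ (induction on $n$, resp.\ $j$). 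Item~(6) is then a pure reindexing that needs no continuity: $(f^\omega\comp g)(\bisim R)=\bigcup_{i<\omega}f^i(g(\bisim R))=\bigl(\bigsqcup_{i<\omega}(f^i\comp g)\bigr)(\bisim R)$. For item~(5), $f\sqsubseteq f^\omega$ is immediate as $f=f^1$ is a summand of $f^\omega$; the equation $f\comp f^\omega=f^\omega\comp f$ and the inclusions $f^\omega\comp f\sqsubseteq f^\omega\comp f^\omega\sqsubseteq f^\omega$ then follow by distributing composition through the $\omega$-union (via item~(6) and $f^i\comp f^j=f^{i+j}$), once more using that the functions commute with the relevant $\omega$-unions.

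I do not expect a genuinely hard step: every clause is an elementary manipulation once the right-hand side is unfolded. The only subtlety, which I would flag explicitly rather than leave implicit, is the distinction between the clauses that hold for \emph{all} monotone functions (1, 2, 4, 6, and one inclusion each in 3 and 5) and those equalities (the converse inclusion of 3, and $f\comp f^\omega = f^\omega\comp f$ in 5) that additionally require the functions to preserve the joins being pushed through them. Since this holds for all the concrete functions used in the sequel, I would either restrict the statement accordingly or record it as a standing hypothesis, following the convention of the enhancement theory of~\cite{PousS11,PousCompanion}; with that in place all six items are discharged by the pointwise computations above.
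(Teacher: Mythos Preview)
The paper gives no proof at all: the lemma is stated with a trailing \qed and treated as routine. Your pointwise unfolding is exactly the standard way to discharge such identities, and for items (1), (2), (4), (6) and the easy inclusions in (3) and (5) your arguments are correct and complete.

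You in fact go further than the paper by flagging a real issue it glosses over: as stated for \emph{arbitrary} monotone $h$ and $f$, the equalities in (3) and the claims $f\comp f^\omega=f^\omega\comp f$ and $f^\omega\comp f^\omega\sqsubseteq f^\omega$ in (5) fail in general (a monotone but non-$\omega$-continuous $f$ can have $f(f^\omega(R))\supsetneq f^\omega(R)$). Your proposed fix---recording join-preservation as a standing hypothesis, as is customary in the enhancement theory of Pous--Sangiorgi---is the right one, and is tacitly what the paper relies on when it invokes the lemma in the proof of \cref{lem:prfs}. So your proposal is not only correct but strictly more careful than the paper's own treatment.
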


We distil this up-to technique to the following three proof obligations, each sufficient for proving the soundness of up-to techniques.
\begin{lemma}[POs for Up-To Soundness]\label{lem:prfs}
  Let $f$ be a monotone function and $\bisim{R}$ be a weak simulation; $f$ is $\WP$-sound when one of the following holds:
  \begin{enumerate}
    \item \label{lem:prfs-1}
      $f\progress[\WP]f$; or

    \item \label{lem:prfs-1.2}
      $f\progress[\WP]f^\omega$; or

    \item \label{lem:prfs-2}
      $f\progress[\WP](f\comp g)$, for some $g\sqsubseteq\companion[\WP]$; or


    \item \label{lem:prfs-3}
      $f = \bigsqcup_{f_i\in F}f_i\comp{\constf{\gfp\WP}}$,
      where $F$ is a set of monotone functions and, 
      for all $f_i\in F$, there exists $g_i\sqsubseteq\companion[\WP]$ such that
      $f_i\comp{\constf{\gfp\WP}}\progress[\WP](f\sqcup g_i)^{\omega}\comp{\constf{\gfp\WP}}$.

 \end{enumerate}
\end{lemma}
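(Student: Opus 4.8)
The plan is to prove each of the four proof obligations in \Cref{lem:prfs} by reducing them to the basic soundness criterion \Cref{lem:compat-sound} ($f\progress[\WP]f$ implies $f$ is $\WP$-sound), augmented with the companion machinery from \Cref{lem:companion-props,lem:subseteq-companion-sound}. The overarching strategy is always the same: exhibit a monotone function $h\supseteq f$ that is a \emph{compatible} enhancement (i.e.\ $h\progress[\WP]h$), and then observe that $\gfp(\WP\comp f)\subseteq\gfp(\WP\comp h)\subseteq\gfp\WP$, the last inclusion being \Cref{lem:compat-sound} applied to $h$. The candidate for $h$ differs per item, and the bookkeeping uses the function composition laws of \Cref{lem:fcl}.

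For item~\ref{lem:prfs-1} there is nothing to do: this is exactly \Cref{lem:compat-sound}. For item~\ref{lem:prfs-1.2}, I would take $h=f^\omega=\bigsqcup_{k<\omega}f^k$ and show $h\progress[\WP]h$. The key computation is $f^\omega\comp\WP=\bigsqcup_k f^k\comp\WP$, and then an induction on $k$ using $f\progress[\WP]f^\omega$ and \Cref{lem:compat-alternative}: the base case $f^0=\utId{}$ is compatible by \Cref{lem:pous-basic}, and the step pushes one $f$ through $\WP$ at the cost of replacing it with $f^\omega$, after which the remaining $f^{k-1}$ is absorbed since $f^{k-1}\comp f^\omega\sqsubseteq f^\omega$ by \Cref{lem:fcl}(5). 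Hence $f\sqsubseteq f^\omega\progress[\WP]f^\omega$, so $f^\omega$ is $\WP$-sound and $\gfp(\WP\comp f)\subseteq\gfp(\WP\comp f^\omega)\subseteq\gfp\WP$. For item~\ref{lem:prfs-2}, with $g\sqsubseteq\companion[\WP]$, the natural candidate is $h=(f\sqcup\companion[\WP])^\omega$. Using $g\sqsubseteq\companion[\WP]$, idempotence of $\companion[\WP]$ (\Cref{lem:companion-props}\ref{lem:companion-props-3}), and $\companion[\WP]\progress[\WP]\companion[\WP]$ (\Cref{lem:companion-props}\ref{lem:companion-props-1}), one checks that $f\comp g\sqsubseteq f\comp\companion[\WP]\sqsubseteq h$, so $f\progress[\WP]f\comp g\sqsubseteq h$; and each of $f$ and $\companion[\WP]$ progresses into $h$, so by the same inductive argument as in item~\ref{lem:prfs-1.2}, $h\progress[\WP]h$. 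Then $f\sqsubseteq h$ gives soundness as before. (Alternatively, one can invoke \Cref{lem:subseteq-companion-sound} once one shows $f\sqsubseteq\companion[\WP]$, which follows since $h\progress[\WP]h$ forces $h\sqsubseteq\companion[\WP]$ by maximality of the companion.)

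Item~\ref{lem:prfs-3} is the main obstacle, being the ``second-order'' soundness criterion used to justify up to separation and name reuse where the technique is only sound in the presence of $\gfp\WP$-contexts. Here $f=\bigsqcup_{f_i\in F}f_i\comp\constf{\gfp\WP}$, and each $f_i\comp\constf{\gfp\WP}$ only progresses into $(f\sqcup g_i)^\omega\comp\constf{\gfp\WP}$ rather than into $f$ itself. The plan is to take $h=(f\sqcup\companion[\WP])^\omega\comp\constf{\gfp\WP}\sqcup\companion[\WP]$, or more cleanly to argue directly that $f$ is $\WP$-sound by showing $\gfp(\WP\comp f)$ is a weak simulation. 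Concretely: let $X_0=\gfp(\WP\comp f)$, so $X_0\subseteq\WP(f(X_0))$; since $X_0\subseteq\gfp\WP=\constf{\gfp\WP}(X_0)$ would be circular, instead one proves $\gfp(\WP\comp f)\subseteq\gfp\WP$ by coinduction, showing the set $S=\bigcup_i (f_i\sqcup g_i)^\omega(\gfp\WP)$ (equivalently $h(\gfp\WP)$ with the constant-context slot filled) satisfies $S\subseteq\WP(S)$. This uses: the hypothesis $f_i\comp\constf{\gfp\WP}\progress[\WP](f\sqcup g_i)^\omega\comp\constf{\gfp\WP}$ unfolded via \Cref{lem:compat-alternative}; the fact that $\gfp\WP=\WP(\gfp\WP)$ so $\constf{\gfp\WP}$ is trivially ``compatible'' in the sense $\constf{\gfp\WP}\comp\WP\sqsubseteq\WP\comp\constf{\gfp\WP}$ by \Cref{lem:fcl}(1); the $\omega$-iteration lemma \Cref{lem:fcl}(6) to commute $(\cdot)^\omega$ past composition; and \Cref{lem:companion-props} to absorb the $g_i\sqsubseteq\companion[\WP]$ pieces. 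The delicate point is that the $\gfp\WP$ appearing in the constant function is itself what we are trying to relate things to, so the argument must be set up so that the coinductive candidate $S$ is a \emph{subset} of $\gfp\WP$-reachable behaviour; this is exactly where the ``$\constf{\gfp\WP}$'' discipline (as opposed to an arbitrary relation) is essential, and it is the part of the proof that needs the most care to state precisely. Once $S\subseteq\WP(S)$ is established, $S\subseteq\gfp\WP$, and since $\gfp(\WP\comp f)\subseteq f(\gfp(\WP\comp f))$-behaviour is captured within $S$ by monotonicity, we conclude $\gfp(\WP\comp f)\subseteq\gfp\WP$, i.e.\ $f$ is $\WP$-sound.
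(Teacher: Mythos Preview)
Your arguments for items~\ref{lem:prfs-1} and~\ref{lem:prfs-1.2} are essentially the paper's, and correct. For item~\ref{lem:prfs-2} your approach via $(f\sqcup\companion[\WP])^\omega$ works, but the paper's is sharper: take $h=f\comp\companion[\WP]$ and verify compatibility in one line,
\[
f\comp\companion[\WP]\comp\WP \;\sqsubseteq\; f\comp\WP\comp\companion[\WP] \;\sqsubseteq\; \WP\comp f\comp g\comp\companion[\WP] \;\sqsubseteq\; \WP\comp f\comp\companion[\WP],
\]
using $\companion[\WP]\progress[\WP]\companion[\WP]$, the premise, and $g\sqsubseteq\companion[\WP]$ with idempotence. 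No $\omega$-iteration is needed.

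For item~\ref{lem:prfs-3}, your first candidate $h=(f\sqcup\companion[\WP])^\omega\comp\constf{\gfp\WP}$ (the trailing ${}\sqcup\companion[\WP]$ is harmless but unnecessary) is exactly the paper's, and your observation that $\constf{\gfp\WP}$ is compatible (since $\constf{\gfp\WP}\comp\WP=\constf{\gfp\WP}$ by \Cref{lem:fcl}(1) and $\WP(\gfp\WP)=\gfp\WP$) is precisely the key fact. You should simply follow that through: show $h\progress[\WP] h$ by the same $k$-induction as item~\ref{lem:prfs-1.2}, where the base case $k=0$ is $\constf{\gfp\WP}\progress[\WP]\constf{\gfp\WP}$, and the inductive step splits the outer $(f\sqcup\companion[\WP])$ into $f=\bigsqcup_i f_i\comp\constf{\gfp\WP}$ (handled by the premise, with each $(f\sqcup g_i)^\omega$ absorbed into $(f\sqcup\companion[\WP])^\omega$ since $g_i\sqsubseteq\companion[\WP]$) and $\companion[\WP]$ (handled by $\companion[\WP]\progress[\WP]\companion[\WP]$). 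There is no circularity here: $\gfp\WP$ is a fixed, already-constructed object, not something being defined by the argument.

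Your ``cleaner'' alternative, by contrast, does not work as written. The candidate $S=\bigcup_i(f_i\sqcup g_i)^\omega(\gfp\WP)$ mismatches the hypothesis, which concerns $(f\sqcup g_i)^\omega$, not $(f_i\sqcup g_i)^\omega$: iterating the bare $f_i$ drops the $\constf{\gfp\WP}$ discipline, so at levels $k\geq 2$ the input to $f_i$ is no longer $\gfp\WP$ and the premise no longer applies. Repairing this by replacing each $g_i$ with $\companion[\WP]$ and each $f_i$ with $f$ collapses the union over $i$ and brings you straight back to your first candidate $h$; just stay with it.
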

\begin{proof}~
  \begin{enumerate}
    \item By \cref{lem:compat-sound}.
    \item 
      By \cref{lem:subseteq-companion-sound}, it suffices to show $f \sqsubseteq \companion[\WP]$.
      Because
        $f \sqsubseteq f^\omega$,
      it suffices to show that 
      $f^\omega \progress[\WP] f^\omega$.
      This is proven by showing that for all $k$,
      \begin{equation}\tag{$P(k)$}
        f^k \progress[\WP] f^\omega.
        \end{equation}
      We proceed by induction on $k$. The base case is straightforward:
      \begin{align*}
        f^0\comp\WP&=
         \utId{}\comp\WP =
         \WP\comp \utId{} =
         \WP\comp f^0 \sqsubseteq
         \WP\comp f^\omega
      \end{align*}
In the inductive case we assume $P(k)$ and prove $P(k+1)$ as follows:
      \begin{align*}
        f^{k+1}\comp\WP
        &=
        f\comp f^{k}\comp\WP
        \\&
        \sqsubseteq
        f\comp\WP\comp f^\omega
        &(P(k))
        \\&
        \sqsubseteq
        \WP\comp f^\omega\comp f^\omega
        &(f\progress[\WP]f^\omega)
        \\&
        =
        \WP\comp f^\omega
      \end{align*}

    \item
      By \cref{lem:subseteq-companion-sound}, it suffices to show $f \sqsubseteq \companion[\WP]$.
      Because $f\sqsubseteq f \comp(\utId{}\sqcup\companion[\WP])\sqsubseteq f\comp\companion[\WP]$,
      it suffices to show $f\comp\companion[\WP]\progress[\WP]f\comp\companion[\WP]$ by 
      unfolding definitions and the premise:
      \begin{align*}
        f \comp \companion[\WP] \comp \WP \sqsubseteq
        f \comp \WP \comp \companion[\WP] \sqsubseteq
        \WP\comp f \comp g \comp \companion[\WP] \sqsubseteq
        \WP\comp f \comp \companion[\WP].
      \end{align*}

    \item 
      Let $g=\bigsqcup_{f_i\in F}g_i$.
      By \cref{lem:subseteq-companion-sound}, it suffices to show $f \sqsubseteq \companion[\WP]$.
      Because
      \begin{align*}
        f_i\comp{\constf{\gfp\WP}}
        &=
        f_i\comp{\constf{\gfp\WP}}\comp{\constf{\gfp\WP}}
        \sqsubseteq
        f\comp{\constf{\gfp\WP}}
        \sqsubseteq
        (f\comp{\constf{\gfp\WP}}) \sqcup (\companion[\WP]\comp{\constf{\gfp\WP}})
        \\&=
        (f\sqcup\companion[\WP])\comp{\constf{\gfp\WP}}
        \sqsubseteq
        (f\sqcup\companion[\WP])^\omega\comp{\constf{\gfp\WP}}
      \end{align*}
      it suffices to show that 
      $(f\sqcup\companion[\WP])^\omega\comp{\constf{\gfp\WP}} \progress[\WP] (f\sqcup\companion[\WP])^\omega\comp{\constf{\gfp\WP}}$.
      This is proven by showing that for all $k$,
      \begin{equation}\tag{$P(k)$}
        (f\sqcup\companion[\WP])^k\comp{\constf{\gfp\WP}} \progress[\WP] (f\sqcup\companion[\WP])^\omega\comp{\constf{\gfp\WP}}.
        \end{equation}
      We proceed by induction on $k$. The base case is straightforward:
$$
\utId{}\circ\constf{\gfp\WP}\circ\WP =\constf{\gfp\WP} = \WP\circ \utId{}\circ\constf{\gfp\WP}\sqsubseteq
\WP\circ(f\sqcup\companion[\WP])^\omega\comp{\constf{\gfp\WP}}
$$
      In the inductive case we assume $P(k)$ and prove $P(k+1)$ as follows:
      \begin{align*}
        &(f\sqcup\companion[\WP])^{k+1}\comp{\constf{\gfp\WP}}\comp\WP
        \\&=
        (f\sqcup\companion[\WP])\comp(f\sqcup\companion[\WP])^{k}\comp{\constf{\gfp\WP}}\comp\WP
        \\&
        \sqsubseteq
        (f\sqcup\companion[\WP])\comp\WP\comp h
        &\text{($P(k),~h=(f\sqcup\companion[\WP])^\omega\comp{\constf{\gfp\WP}}$)}
        \\&
        \sqsubseteq
        (f\comp\WP\comp h) \sqcup (\companion[\WP]\comp\WP\comp h) 
        &\text{(\cref{lem:fcl})}
        \\&
        =
        \left(\bigsqcup_{f_i\in F}(f_i\comp{\constf{\gfp\WP}}\comp\WP\comp h)\right) \sqcup (\companion[\WP]\comp\WP\comp h)
        &\text{(definition of $f$ and \cref{lem:fcl})}
        \\&
        \sqsubseteq
        \left(\bigsqcup_{f_i\in F}(\WP\comp(f\sqcup g_i)^\omega\comp{\constf{\gfp\WP}}\comp h)\right)
        \sqcup (\companion[\WP]\comp\WP\comp{h})
        &\text{(premise)}
        \\&
        \sqsubseteq
        \left(\bigcup_{f_i\in F}(\WP\comp(f\sqcup \companion[\WP])^\omega\comp{\constf{\gfp\WP}})\right)
        \sqcup (\companion[\WP]\comp\WP\comp h)
        &\text{(\cref{lem:fcl} and premise on $g_i$)}
        \\&
        \sqsubseteq
        (\WP\comp(f\sqcup \companion[\WP])^\omega\comp{\constf{\gfp\WP}}) \sqcup (\WP\comp\companion[\WP]\comp h) 
        &\text{(\cref{lem:companion-props}~(\ref{lem:companion-props-1}))}
        \\&
        \sqsubseteq
        (\WP\comp\utId{}\comp h) \sqcup (\WP\comp\companion[\WP]\comp h) 
        &\text{(definition of $h$)}
        \\&
        =
        \WP\comp(\utId{}\sqcup\companion[\WP])\comp h
        &\text{(\cref{lem:fcl})}
        \\&
        = \WP\comp\companion[\WP]\comp h
        &\text{(\cref{lem:companion-props}~(\ref{lem:companion-props-3}))}
        \\&
        \sqsubseteq
        \WP\comp(f\sqcup\companion[\WP])\comp h
        &\text{(\cref{lem:fcl})}
        \\&
        \sqsubseteq
        \WP\comp(f\sqcup\companion[\WP])^\omega\comp {\constf{\gfp\WP}}
        &\text{(\cref{lem:fcl} and definition of h )}
         \\&&
        \qedhere
      \end{align*}
  \end{enumerate}\qed
\end{proof}

As we are only interested in weak progression, in the following we drop the $\WP$ annotation from progressions, compatibility and companion.

   \section{Simple Up-To Techniques}\label{app:simple}
We develop our up-to techniques using the theory of bisimulation enhancements from \cite{PousS11,PousCompanion} (see~Appendix~\ref{sec:enhancements}).
We start by presenting three straightforward up-to techniques which are needed to reduce the configurations considered by bisimulation, achieving finite LTSs in many examples.
These techniques are \emph{up to permutations}, \emph{beta reductions}, \emph{garbage collection}, and \emph{weakening of knowledge environments}.
To present these techniques we first need the following definitions.

\subsection{Up to Permutations}
\begin{definition}[Permutations]
We consider permutations of store locations, $\pi_l$, abstract names, $\pi_\alpha$ and environment indices, $\pi_i$, respectively.
We use juxtaposition to denote permutation composition.

When applying a permutation $\pi_i$ to an environment $\Gamma$, it only acts on its domain; other types of permutations only act on the codomain of $\Gamma$.
When applying a permutation $\pi_l$ to a store $s$, the former acts on both the domain and range of the latter; a permutation $\pi_\alpha$ acts on the codomain of $s$, and a permutation $\pi_i$ leaves $s$ unaffected.
When $\pi_1=\pi_{l_1}\pi_\alpha\pi_i$ and $\pi_2=\pi_{l_2}\pi_\alpha\pi_i$
and $\beta=\betais{C_1,C_2}$, we define
  \begin{gather*}
    (\pi_1,\pi_2)\cdot\beta = \betais{\pi_1 C_1, \pi_2 C_2}
    \\
    (\pi_1,\pi_2)\cdot\Sigma =
      \{((\pi_1,\pi_2)\cdot\beta', \pi_1\cdot\Ed_1,\pi_2\cdot\Ed_2,(\pi_1,\pi_2)\cdot\beta) \where (\beta',\Ed_1,\Ed_2,\beta)\in\Sigma\}
  \end{gather*}

  Moreover if $\pi_1=\pi_2=\pi$ we write $\pi\cdot\beta$ and $\pi\cdot\Sigma$ to mean $(\pi,\pi)\cdot\beta$ and $(\pi,\pi)\cdot\Sigma$, respectively.
\end{definition}
Note that call graphs $\Sigma$ are closed under $\pi_\alpha$ permutations, and thus are unaffected by such. However they are affected by $\pi_i$ and $\pi_l$ permutations.

\begin{lemma}[Permutation Invariance for Reductions]\label{lem:red-perm}
  Let $\pi_l,\pi_\alpha,\pi_i$ be permutations on locations,  abstract names and indices respectively, and $\pi=\pi_l\pi_\alpha\pi_i$.
  If $\redconf{s}{e} \redbase \redconf{s'}{e'}$ then
  $\pi\cdot\redconf{s}{e} \redbase \pi\cdot\redconf{s'}{e'}$.
  Moreover,
  if $\redconf{s}{e} \red \redconf{s'}{e'}$ then
  $\pi\cdot\redconf{s}{e} \red \pi\cdot\redconf{s'}{e'}$.
\end{lemma}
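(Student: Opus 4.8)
The plan is to prove the base-relation statement first, by case analysis on the rule of $\redbase$ used in $\redconf{s}{e} \redbase \redconf{s'}{e'}$ (the eight clauses in \cref{fig:lang}), and then to lift it to $\red$ using the evaluation-context rule. Throughout I will use the fact, visible from the definitions in \cref{sec:lang}, that applying $\pi = \pi_l\pi_\alpha\pi_i$ commutes with all syntactic constructors of \lang (it is defined compositionally on the structure of expressions, values, and stores), and that it therefore commutes with substitution: $\pi\cdot(e\sub{x}{v}) = (\pi\cdot e)\sub{x}{\pi\cdot v}$, and similarly for simultaneous substitutions $\sub{\vec x}{\vec v}$ and for the recursive-function unfolding $\sub{f}{\lam[f] x e}$. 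I would also record that $\pi$ is a bijection on \Loc, so $l\notin\dom{s}$ iff $\pi_l(l)\notin\dom{\pi\cdot s}$, and that it fixes constants and commutes with arithmetic operations, so $\mathop{op}^{\textsf{arith}}(\vec c)=w$ implies $\mathop{op}^{\textsf{arith}}(\pi\cdot\vec c)=\pi\cdot w$ (indeed $\pi\cdot c = c$ for constants).

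For the base cases: for $\arithop{\vec c}\redbase w$, apply the constant-fixing and $op$-commutation observations above. For the $\beta$-rule $\app{(\lam[f] x e)}{v}\redbase e\sub{x}{v}\sub{f}{\lam[f] x e}$, apply the substitution-commutation observation. The $\mathsf{let}$ rule is analogous with simultaneous substitution. For $\new l v e \redbase$ in store $s\stupd{l}{v}$, use that $\pi\cdot(s\stupd{l}{v}) = (\pi\cdot s)\stupd{\pi_l(l)}{\pi\cdot v}$ and that the side condition $l\notin\dom s$ transports along the bijection $\pi_l$. For $\deref l$ and $l\asgn v$, use that $\pi\cdot s$ maps $\pi_l(l)$ to $\pi\cdot v$ when $s$ maps $l$ to $v$. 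For the conditional, $\pi$ fixes $\true,\false$ and commutes with the constructor, so the branch selected is preserved. This establishes $\pi\cdot\redconf{s}{e}\redbase\pi\cdot\redconf{s'}{e'}$ in every case.

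For the $\red$ step, suppose $\redconf{s}{e}\red\redconf{s'}{e'}$ via the context rule: $e = E\hole[e_0]$, $e' = E\hole[e_0']$, $s' $ as produced, and $\redconf{s}{e_0}\redbase\redconf{s'}{e_0'}$. Since $\pi$ commutes with all constructors, $\pi\cdot E$ is again an evaluation context (a straightforward induction on the grammar of $\EC$, using that $\pi$ fixes the hole and its type annotation), and $\pi\cdot(E\hole[e_0]) = (\pi\cdot E)\hole[\pi\cdot e_0]$. By the base case just proved, $\pi\cdot\redconf{s}{e_0}\redbase\pi\cdot\redconf{s'}{e_0'}$, so applying the context rule with context $\pi\cdot E$ gives $\pi\cdot\redconf{s}{e}\red\pi\cdot\redconf{s'}{e'}$, as required.

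The only mildly delicate point is the bookkeeping around the store: one must be careful that $\pi_l$ acts on both domain \emph{and} range of $s$ (as stated in \cref{sec:lang}), so that lookups and updates are transported correctly, and that the Barendregt/freshness side condition in the $\mathsf{ref}$ rule is genuinely invariant under a bijective renaming of locations — but none of this is more than routine verification. I expect no real obstacle; the proof is a straightforward structural induction once the commutation-with-substitution and commutation-with-constructors facts are made explicit.
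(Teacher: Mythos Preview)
Your proposal is correct and is essentially an explicit unpacking of the paper's one-line proof, which simply reads ``By nominal sets reasoning (all reduction rules are closed under permutation).'' What you have written is precisely what that appeal to equivariance amounts to when spelled out case by case; the paper just compresses it into a standard nominal-sets slogan.
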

\begin{proof}
  By nominal sets reasoning (all reduction rules are closed under permutation).
\qed\end{proof}
\begin{lemma}\label{lem:lts-perm}
  Let $\pi_l$, $\pi_\alpha$, and $\pi_i$ be
  permutations on locations,  abstract names, and indices, respectively,
 and $\pi=\pi_l\pi_\alpha\pi_i$.
  If $C \trans{\eta} C'$ then
  $\pi\cdot C \trans{\pi_\alpha\pi_i\cdot \eta} \pi\cdot C'$.
\end{lemma}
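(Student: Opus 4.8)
The plan is to argue by a case analysis on the rule of the stackless LTS (\cref{fig:SLlts}) that derives $C \trans{\eta} C'$, relying throughout on two observations: (i) every ingredient appearing in the rules---configurations, stores, environments, ultimate patterns, the auxiliary relation $\funred$, typing judgements---is permutation-equivariant, so $\pi\cdot$ commutes with all of them; and (ii) types are fixed by every permutation we consider ($\pi_\alpha$ is type-preserving, and neither locations nor indices occur in types), while moves $\eta$ contain only indices and abstract names and no locations, so the induced action of $\pi=\pi_l\pi_\alpha\pi_i$ on $\eta$ is exactly the action of $\pi_\alpha\pi_i$. These two facts are what make the label in the conclusion $\pi_\alpha\pi_i\cdot\eta$ rather than $\pi\cdot\eta$.

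First I would dispatch all rules other than \textsc{Response}. For \textsc{Tau} the claim is immediate from \Cref{lem:red-perm} (noting $\pi_\alpha\pi_i\cdot\tau=\tau$). For \textsc{PropCall} and \textsc{PropRet} the only side condition is $(D,\Gamma')\in\ulpatt(v)$; since $\ulpatt(v)$ is defined by the equivariant conditions $v=D^\bullet[\Gamma']$, $\dom{\Gamma'}=\{i\mid i\in\hsig{D^\bullet}\}$ and a typing judgement, we get $(\pi\cdot D,\pi\cdot\Gamma')\in\ulpatt(\pi\cdot v)$, and the extended environment $\Gamma,\Gamma'$ in the target stays well-defined because $\pi_i$, being a bijection, preserves disjointness of domains. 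For \textsc{OpCall} and \textsc{OpRet} one additionally checks that the freshness premise $\vec\alpha\fresh\Gamma,s,\Ed$ is preserved (again by bijectivity of $\pi$), that the store-typing premise $\Lambda_s\vdash\Gamma(i):T\to T'$ transfers (equivariance of typing, with $T$, $T'$ untouched), and that $\funred$ is preserved (it is defined via substitution, which commutes with permutation). In every case the target configuration produced by the permuted instance of the rule is precisely $\pi\cdot C'$, and the move is $\pi_\alpha\pi_i\cdot\eta$.

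The one place that needs slightly more care is \textsc{Response}, whose side condition is negative: $C\trans{\eta}\botcconf$ fires exactly when $\eta\neq\tau$ and $C$ cannot make an $\eta$-move weakly using any of the other rules. Here I would exploit that $\pi$ is invertible: applying the equivariance just established for the non-\textsc{Response} rules (and for $\tau$-steps) both to $\pi$ and to $\pi^{-1}$ yields that $C$ can make such a move iff $\pi\cdot C$ can make the corresponding $\pi_\alpha\pi_i\cdot\eta$-move; hence the negated side condition transfers, and since $\pi\cdot\botcconf=\botcconf$ we are done (this also covers $C=\botcconf$, where only \textsc{Response} can fire from both $C$ and $\pi\cdot C$). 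This interaction between the negative side condition and the equivariance of the remaining rules is the only genuine subtlety; everything else is a routine verification that each rule is built from permutation-invariant data.
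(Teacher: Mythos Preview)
Your proof is correct and takes essentially the same approach as the paper: the paper's own proof is the one-liner ``By nominal sets reasoning (all transition rules are closed under permutation),'' and your case analysis simply unpacks what that sentence means, including the care needed for the negative premise of \textsc{Response}. Nothing in your argument diverges from the paper's intended reasoning; you have just made the equivariance checks explicit.
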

\begin{proof}
    By nominal sets reasoning (all transition rules are closed under permutation).
\qed\end{proof}
\begin{lemma}\label{lem:lts-permute}
  Let $C \trans{\eta} C'$; then
  for all finite $L_0,A_0,I_0$ there exists $\pi$ such that
 $C \trans{\pi\cdot\eta} \pi\cdot C'$
  and 
  \begin{align*}
    (\an{\pi\cdot C'}\sdif \an{C})\cap A_0 &= (\dom{\pi\cdot C'.s}\sdif\dom{C.s})\cap L_0 \\
    &= (\dom{\pi\cdot C'.\Gamma}\sdif\dom{C.\Gamma})\cap I_0=\emptyset
  \end{align*}
\end{lemma}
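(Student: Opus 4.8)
The plan is to reduce everything to equivariance of the LTS (\cref{lem:lts-perm}): a single transition introduces only finitely many new names, so we can rename those away from the given finite sets $L_0,A_0,I_0$ by a permutation that leaves $C$ itself untouched. The desired $\pi$ will be a suitable product of transpositions, and the transition $C\trans{\pi\cdot\eta}\pi\cdot C'$ will come for free from \cref{lem:lts-perm} once we know $\pi\cdot C = C$.

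First I would isolate the new names. If $C'=\botcconf$ the side-conditions are vacuous and $\pi=\mathrm{id}$ works, so assume $C'\neq\botcconf$ and set $X_\alpha=\an{C'}\setminus\an{C}$, $X_l=\dom{C'.s}\setminus\dom{C.s}$, $X_i=\dom{C'.\Gamma}\setminus\dom{C.\Gamma}$ and $X=X_\alpha\uplus X_l\uplus X_i$ (disjoint, since these are names of different sorts). A quick inspection of the rules in \cref{fig:SLlts} gives two facts. First, $X$ is finite and equals the growth of the components: new abstract names appear only in \textsc{OpCall}/\textsc{OpRet}, new indices only in \textsc{PropCall}/\textsc{PropRet} through ultimate patterns, new locations only in \textsc{Tau} via $\mathsf{ref}$, and both the store and the environment only ever grow, so $\dom{C'.s}=\dom{C.s}\uplus X_l$ and $\dom{C'.\Gamma}=\dom{C.\Gamma}\uplus X_i$. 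Second, every name in $X$ is fresh for $C$: a freshly introduced abstract name does not occur in $C$; the indices added to $\Gamma$ by an ultimate pattern are chosen disjoint from $\dom{C.\Gamma}$ and do not occur elsewhere in $C$; and a $\mathsf{ref}$-bound location is not free in $C$ by the Barendregt convention. Since for finite objects the support is the set of free names, this yields $X\cap\supp{C}=\emptyset$.

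Next I would build the renaming. As each type-indexed abstract-name set, $\Loc$, and $\Nat$ is countably infinite while $\supp{C}\cup A_0\cup L_0\cup I_0\cup X$ is finite, I can choose a sort- and type-preserving injection $f$ on $X$ whose image $Y=f(X)$ is disjoint from $\supp{C}\cup A_0\cup L_0\cup I_0\cup X$, and let $\pi$ be the product of the transpositions $(x\ f(x))$ for $x\in X$. Disjointness of $X$ and $Y$ makes these transpositions independent, so $\pi\in\Perm$ with $\supp{\pi}=X\cup Y$ disjoint from $\supp{C}$, hence $\pi\cdot C=C$. Writing $\pi=\pi_l\pi_\alpha\pi_i$ and using that moves carry no free locations (so $\pi\cdot\eta=\pi_\alpha\pi_i\cdot\eta$), \cref{lem:lts-perm} gives $C=\pi\cdot C\trans{\pi\cdot\eta}\pi\cdot C'$. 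The side-conditions then follow by computation: $\pi$ fixes $\an{C}$, so $\an{\pi\cdot C'}\setminus\an{C}=f(X_\alpha)\subseteq Y$, disjoint from $A_0$; since $\pi_l$ acts on both the domain and the range of a store and $\pi_i$ on the domain of an environment, $\dom{(\pi\cdot C').s}\setminus\dom{C.s}=f(X_l)$ and $\dom{(\pi\cdot C').\Gamma}\setminus\dom{C.\Gamma}=f(X_i)$, disjoint from $L_0$ and $I_0$ respectively.

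The step I expect to be the main obstacle is showing that every genuinely new name is fresh for $C$, so that one and the same permutation can fix $C$ while relocating all of $X$; this needs the small case analysis over the transition rules together with the Barendregt convention for $\mathsf{ref}$ and the freshness discipline on ultimate-pattern indices. Everything else is routine nominal-sets bookkeeping on top of \cref{lem:lts-perm}.
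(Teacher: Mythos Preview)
Your proposal is correct and follows essentially the same approach as the paper: pick a permutation that renames the finitely many new names in $C'$ to names fresh for $C$ and the given finite sets, observe that such a permutation fixes $C$, and invoke \cref{lem:lts-perm}. The paper compresses this to a single sentence, while you spell out the case analysis on the transition rules that justifies why every new name is indeed fresh for $C$ (via the freshness side-conditions on $\vec\alpha$, disjointness of $\dom{\Gamma'}$ from $\dom{\Gamma}$, and the Barendregt convention for $\mathsf{ref}$), and you make the construction of $\pi$ as a product of transpositions explicit.
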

\begin{proof}
  By \cref{lem:lts-perm}, picking permutations $\pi$ that rename new names in $C'$ that do not exist in $C$ to fresh ones, and observing that $\pi\cdot C=C$.
\qed\end{proof}
\begin{corollary}\label{lem:lts-permute-weak}
  Let $C \wtrans{\eta} C'$; then
  for all finite $L_0,A_0,I_0$ there exists $\pi$ such that
 $C \trans{\pi\cdot\eta} \pi\cdot C'$
  and 
  \begin{align*}
    (\an{\pi\cdot C'}\sdif \an{C})\cap A_0 &= (\dom{\pi\cdot C'.s}\sdif\dom{C.s})\cap L_0 \\
    &= (\dom{\pi\cdot C'.\Gamma}\sdif\dom{C.\Gamma})\cap I_0=\emptyset
  \end{align*}
\end{corollary}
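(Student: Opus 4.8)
The plan is to lift \Cref{lem:lts-permute} from single transitions to weak transitions; essentially no new idea is needed. First I would record the routine fact that $\wtrans{\eta}$ is itself closed under permutations: if $C \wtrans{\eta} C'$ then $\pi \cdot C \wtrans{\pi \cdot \eta} \pi \cdot C'$ for every $\pi \in \Perm$. This follows by an easy induction on the length of the transition sequence underlying $\wtrans{\eta}$, applying \cref{lem:lts-perm} at each transition step and using that $\tau$ is equivariant (so $\pi \cdot \tau = \tau$).

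Given $C \wtrans{\eta} C'$, the next step is to choose a suitable $\pi$. Let $N = (\supp{C'} \cup \supp{\eta}) \setminus \supp{C}$ be the finite set of \emph{new} names, and split it into its location, abstract-name, and index parts $N_l$, $N_\alpha$, $N_i$. Since each sort of name is drawn from an infinite set whereas $L_0, A_0, I_0, \supp{C}$ are finite, I can pick a finitary permutation $\pi = \pi_l \uplus \pi_\alpha \uplus \pi_i$ with $\supp{\pi} \cap \supp{C} = \emptyset$ (hence $\pi \cdot C = C$) and such that $\pi_l(N_l) \cap L_0 = \pi_\alpha(N_\alpha) \cap A_0 = \pi_i(N_i) \cap I_0 = \emptyset$: concretely, swap each element of $N_l$ lying in $L_0$ with a location fresh for $C, C', \eta$ and for $L_0$, and similarly for $N_\alpha$ against $A_0$ and $N_i$ against $I_0$, leaving every other name fixed.

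Applying the weak permutation-invariance fact to this $\pi$ gives $C = \pi \cdot C \wtrans{\pi \cdot \eta} \pi \cdot C'$, which is the transition demanded by the statement. To verify the three equalities, note that since $\pi$ fixes $\supp{C}$ we have $\an{\pi \cdot C'} \setminus \an{C} = \pi_\alpha \cdot (\an{C'} \setminus \an{C}) \subseteq \pi_\alpha(N_\alpha)$, which is disjoint from $A_0$ by construction; the store and environment claims are the same computation, using $\pi_l$ against $L_0$ and $\pi_i$ against $I_0$, together with the conventions (recalled around \Cref{def:perms}) that a location permutation acts on both the domain and range of a store while an index permutation acts only on the domain of an environment. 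The only point that needs a little care — exactly the subtlety already present in the proof of \cref{lem:lts-permute} — is that $\pi$ must relocate \emph{all} new names that fall in the relevant forbidden set, not just some of them, while keeping $\supp{C}$ untouched; this is standard nominal bookkeeping, so I do not foresee any genuine obstacle.
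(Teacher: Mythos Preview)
Your argument is correct and follows the same underlying idea as the paper's one-line proof (induction on the length of the weak transition, combined with permutation invariance and a suitable choice of $\pi$). The only organisational difference is that you first isolate equivariance of $\wtrans{\eta}$ via \cref{lem:lts-perm} and then pick a single global permutation, whereas the paper phrases it as a direct induction invoking \cref{lem:lts-permute} at each step; your decomposition is slightly cleaner since it avoids having to compose per-step permutations, but the content is the same.
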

\begin{proof}
  By induction on the length of the transition from $C$, using \cref{lem:lts-permute}.
\qed\end{proof} 
\begin{figure*}[t]
  \[\begin{array}{@{}c@{}}
    \irule*[UpToPerm][uptoperm]{
      C_1
      \bisim*{R}_{\Sigma,\beta}
      C_2
      \\
      \pi_1=\pi_{l_1}\pi_\alpha\pi_i
      \\
      \pi_2=\pi_{l_2}\pi_\alpha\pi_i
    }{
      \pi_1\cdot C_1
      \utperm*{\bisim{R}}_{(\pi_1,\pi_2)\cdot\Sigma,(\pi_1,\pi_2)\cdot\beta}
      \pi_2\cdot C_2
    }
  \end{array}\]
  \vspace{-1mm}
  \hrule
  \vspace{-1mm}
  \caption{Up to Permutations.}\label{fig:utperm}
\end{figure*}
\begin{lemma}\label{lem:upto-perm}
   Function $\utperm{}$ is a sound up-to technique.
\end{lemma}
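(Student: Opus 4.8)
The plan is to prove soundness of $\utperm{}$ via the machinery of weak progression recalled in Appendix~\ref{sec:enhancements}: by \Cref{lem:prfs}~(\ref{lem:prfs-1}) it suffices to establish that $\utperm{} \progress \utperm{}$, i.e. that $\utperm{}\comp\WP(\bisim{R}) \subseteq \WP\comp\utperm{\bisim{R}}$ for every relation $\bisim{R}$. So I would fix $\bisim{R}$ and a tuple $\pi_1\cdot C_1 \;\utperm*{\bisim{R}}_{(\pi_1,\pi_2)\cdot\Sigma,\,(\pi_1,\pi_2)\cdot\beta}\; \pi_2\cdot C_2$ witnessed by $C_1 \in \WP(\bisim{R})$-related to $C_2$ at $(\Sigma,\beta)$ with $\pi_1 = \pi_{l_1}\pi_\alpha\pi_i$, $\pi_2 = \pi_{l_2}\pi_\alpha\pi_i$, and show it lies in $\WP(\utperm{\bisim{R}})$. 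Concretely, this means replaying each of the five clauses of \Cref{def:SLbisim} (termination, opponent return, $\tau$/proponent call, opponent call, proponent return) for the permuted configurations.

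The key technical ingredient is the equivariance of the stackless LTS: \Cref{lem:red-perm} and \Cref{lem:lts-perm} (suitably restated for the stackless transitions of \Cref{fig:SLlts}) tell us that $C \trans{\eta} C'$ implies $\pi\cdot C \trans{\pi_\alpha\pi_i\cdot\eta} \pi\cdot C'$, and similarly for $\wtrans{}$ via \Cref{lem:lts-permute-weak}. So for each clause I would: (i) take a challenge transition out of $\pi_1\cdot C_1$; (ii) by equivariance, pull it back to a challenge out of $C_1$ along $\pi_1^{-1}$ — here one uses \Cref{lem:lts-permute}/\Cref{lem:lts-permute-weak} to choose the introduced fresh names $\vec\alpha$ suitably so that the freshness side-conditions ($\vec\alpha \fresh C_2,\beta$, and $\vec\alpha \fresh \pi_2\cdot C_2,(\pi_1,\pi_2)\cdot\beta$) are simultaneously met, relying on the fact that $\pi_1,\pi_2$ agree on their abstract-name component $\pi_\alpha$; (iii) use $C_1 \in \WP(\bisim{R})$ to get a matching response from $C_2$ landing in $\bisim{R}$; (iv) push the response forward along $\pi_2$ to get the matching response from $\pi_2\cdot C_2$; and (v) check that the resulting target tuple is again in $\utperm{\bisim{R}}$ with the correctly transported graph and entry point. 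For the graph-manipulating clauses (opponent call, clause~3, and proponent return, clause~4) one needs the compatibility lemmas: that $\Sigma[\beta'\trans{\Ed_1,\Ed_2}\beta]$ and $\Sigma@\beta'$ commute with the action $(\pi_1,\pi_2)\cdot(-)$, which follows from the definitions in \Cref{def:sigma-ext-restr} together with the observation that continuation graphs are already closed under the $\pi_\alpha$-component (so only the $\pi_l,\pi_i$ parts genuinely move them), and that $\betais{\pi_1\cdot C_1',\pi_2\cdot C_2'} = (\pi_1,\pi_2)\cdot\betais{C_1',C_2'}$. Compatibility of the permuted tuple follows from compatibility of the original plus equivariance of all the side-conditions in \Cref{def:compatible}.

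The main obstacle I anticipate is bookkeeping around freshness and the \emph{two different} permutations $\pi_1,\pi_2$ on the two sides. A challenge opponent move $\lopapp{i}{D[\vec\alpha]}$ or $\lopret{D[\vec\alpha]}$ out of $\pi_1\cdot C_1$ introduces names $\vec\alpha$ which must be fresh for $\pi_2\cdot C_2$ and $(\pi_1,\pi_2)\cdot\beta$; when pulled back along $\pi_1$ they become $\pi_1^{-1}\cdot\vec\alpha$, and these must in turn be fresh for $C_2$ and $\beta$ for the original bisimulation clause to apply. Because $\pi_1$ and $\pi_2$ share the same $\pi_\alpha$, one can arrange (via $\alpha$-renaming of $\vec\alpha$, i.e. applying an extra permutation as in \Cref{lem:lts-permute}, together with \Cref{lem:closures}'s permutation-closure of similarity / the fact that $\WP$ and $\utperm{}$ are equivariant) that all four freshness conditions hold for a common choice; establishing this carefully is the fiddly part but is routine nominal-sets reasoning. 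Everything else reduces to unwinding the definition of $\WP$ and applying equivariance clause by clause, and the bisimulation (as opposed to simulation) case is immediate since $\utperm{}$ is manifestly symmetric under the inversion $R^{-1}$ and $(\pi_1,\pi_2)^{-1}$.
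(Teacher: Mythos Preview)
Your proposal is correct and follows essentially the same route as the paper: show $\utperm{}\progress\utperm{}$ via \Cref{lem:prfs}~(\ref{lem:prfs-1}), pull back challenges along $\pi_1$ using the equivariance lemma \Cref{lem:lts-perm}, invoke $\WP(\bisim{R})$, push the response forward along $\pi_2$, and check that the graph operations in clauses~3 and~4 commute with the $(\pi_1,\pi_2)$-action. Your treatment of the freshness bookkeeping is in fact slightly more careful than the paper's own proof (which elides the $\pi_\alpha^{-1}$-translation of the fresh names and writes $\pi_1\pi_1=\mathsf{id}$ where $\pi_1^{-1}$ is meant), and your observation that the shared $\pi_\alpha$-component of $\pi_1,\pi_2$ is exactly what makes the freshness conditions transfer is the right reason this goes through without needing \Cref{lem:closures} or any extra renaming.
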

\begin{proof}
  From \cref{lem:prfs}~(\ref{lem:prfs-1}), it suffices to show that $\utperm{}$ is compatible; i.e.,
  ${\utperm{\WP(\bisim{R})}} \subseteq \WP(\utperm{\bisim{R}})$, for any configuration relation \bisim{R}.

  Let $C_1 \mathrel{\WP(\bisim{R})}_{\Sigma,\beta} C_2$ and 
  $\pi_1\cdot C_1 \utperm*{\bisim{R}}_{(\pi_1,\pi_2)\cdot\Sigma,(\pi_1,\pi_2)\cdot\beta} \pi_2\cdot C_2$,
  where $\pi_1=\pi_{l1}\pi_\alpha\pi_i$ and $\pi_2=\pi_{l2}\pi_\alpha\pi_i$.
  Moreover, let $\pi_1\cdot C_1 \realtrans{\eta} C_1'$.
  Because of $\pi_1 \pi_1 = \mathsf{id}$ and \cref{lem:lts-perm} we get
  $C_1 \realtrans{\pi_\alpha\pi_i\cdot\eta} C_1'\pi_1$.
  We proceed by definition of $\WP(\bisim{R})$, taking cases on $\eta$.

  When $\eta\in\{\tau,\lpropapp{\alpha}{D},\lopret{D[\vec\alpha]} \where \vec\alpha\fresh C_2\}$,  there exists $C_2'$ such that
  $C_2 \wtrans{\pi_\alpha\pi_i\cdot\eta} C_2'$ and $\pi_1\cdot C_1' \bisim*{R}_{\Sigma,\beta} C_2'$. By \cref{lem:lts-perm},
  $\pi_2\cdot C_2 \wtrans{\eta} \pi_2\cdot C_2'$, and by definition of $\utperm{\bisim{R}}$:
  $C_1' \utperm*{\bisim{R}}_{(\pi_1,\pi_2)\cdot\Sigma,(\pi_1,\pi_2)\cdot\beta} \pi_2\cdot C_2'$.

  When $\eta=\lopapp{i}{D\hole[\vec \alpha]}$, there exists $C_2'$ such that
  $C_2 \wtrans{\pi_\alpha\pi_i\cdot\eta}  C_2'$
  and
  $\pi_1\cdot C_1' \bisim*{R}_{\Sigma',\beta'} C_2'$
  with $\beta' = \betais{\pi_1\cdot C_1',C_2'}$ and $\Sigma' = \Sigma[\beta' \mapsto (\pi_1\cdot C_1.\Ed, C_2.\Ed, \beta)]$.
  By \cref{lem:lts-perm},
  $\pi_2\cdot C_2 \wtrans{\eta}  \pi_2\cdot C_2'$
  and by definition of $\utperm{\bisim{R}}$:
  $C_1' \bisim*{R}_{(\pi_1,\pi_2)\cdot\Sigma',(\pi_1,\pi_2)\cdot\beta'} \pi_2\cdot C_2'$.

  When $\eta=\lpropret{D}$, there exists 
  $(\Edb_1,\Edb_2,\beta') \in \Sigma(\beta)$
  and exists
  $C_2'$ such that
  $C_2 \wtrans{\pi_\alpha\pi_i\cdot\eta} C_2'$ and
  $\pi_1\cdot C_1'\sub{\chi}{\Edb_1} \bisim*{R}_{\Sigma',\beta'} C_2'\sub{\chi}{\Edb_2}$
  with $\Sigma' = \Sigma@\beta'$.
  By \cref{lem:lts-perm},
  $\pi_2\cdot C_2 \wtrans{\eta} \pi_2\cdot C_2'$, and by definition of $\utperm{\bisim{R}}$:
  $$C_1'\sub{\chi}{\pi_1\cdot\Edb_1}=\pi_1\cdot (\pi_1\cdot C_1'\sub{\chi}{\Edb_1})
    \utperm*{\bisim{R}}_{(\pi_1,\pi_2)\cdot\Sigma',(\pi_1,\pi_2)\cdot\beta'}
    \pi_2\cdot(C_2'\sub{\chi}{\Edb_2})=\pi_2\cdot C_2'\sub{\chi}{\pi_2\cdot\Edb_2}$$
  with $(\pi_1,\pi_2)\cdot\beta' = \betais{C_1',\pi_2\cdot C_2'}$ and
  $$(\pi_1,\pi_2)\cdot\Sigma' = (\pi_1,\pi_2)\cdot\Sigma[(\pi_1,\pi_2)\cdot\beta' \mapsto (\pi_1\cdot C_1.\Ed, \pi_2\cdot C_2.\Ed, (\pi_1,\pi_2)\cdot\beta)]$$
\qed
\end{proof}

\subsection{Up to Beta Moves}
\begin{lemma}[beta-move]\label{def:beta-move} 
  Any $\tau$-transition $C \realtrans{\tau} C'$ is called a \emph{beta-move}, and
  we write $C \betatrans C'$, because for all transitions $C \realtrans{\eta} C''$,
    we have $\eta=\tau$ and $C'=C''\pi_l$, for some location permutations $\pi_l$ on the locations in $\fl{C',C''}\setminus\fl{C}$.
\end{lemma}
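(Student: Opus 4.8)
The plan is to reduce the statement to two observations about the operational semantics of \lang: that the only transition rule other than \textsc{Response} carrying the label $\tau$ is \textsc{Tau}, and that the reduction relation $\red$ is deterministic up to renaming of freshly allocated locations. First, if $C\realtrans{\tau}C'$ then, inspecting the rules of \cref{fig:SLlts} (\textsc{Response} being excluded by the use of $\realtrans{}$), the transition must be an instance of \textsc{Tau}, so $C=\pconf{A}{\Gamma}{s}{e}$, $\redconf{s}{e}\red\redconf{s'}{e'}$, and $C'=\pconf{A}{\Gamma}{s'}{e'}$. Next I would show that no other non-\textsc{Response} rule fires from this $C$, which forces every transition $C\realtrans{\eta}C''$ to be a \textsc{Tau} step and hence $\eta=\tau$: rule \textsc{PropRet} requires $e$ to be a value, impossible since $e$ admits a $\redbase$-step; and rule \textsc{PropCall} requires $e=E[\app{\alpha}{v}]$, but by the unique decomposition of evaluation contexts a closed term is either a value or can be written uniquely as $E[r]$ with $r$ either a $\redbase$-redex or a stuck application $\app{\alpha}{v}$, and in the latter case $E[\app{\alpha}{v}]$ takes no $\red$-step, contradicting $\redconf{s}{e}\red\redconf{s'}{e'}$. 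Thus every such $C''$ has the form $\pconf{A}{\Gamma}{s''}{e''}$ with $\redconf{s}{e}\red\redconf{s''}{e''}$.

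It then remains to compare $C'$ and $C''$, for which I would invoke (and, if needed, prove by inspection of the reduction rules) the determinism-up-to-freshness property of $\red$: if $\redconf{s}{e}\red\redconf{s_1}{e_1}$ and $\redconf{s}{e}\red\redconf{s_2}{e_2}$, then there is a location permutation $\pi_l\in\Perm$ with $\supp{\pi_l}\subseteq(\dom{s_1}\cup\dom{s_2})\setminus\dom{s}$ such that $\pi_l\cdot\redconf{s_1}{e_1}=\redconf{s_2}{e_2}$. This holds because the unique decomposition $e=E[r]$ pins down both $E$ and the redex $r$, and $\redbase$ behaves as a partial function on $\redconf{s}{r}$ apart from the rule for $\new{l}{v}{e}$, whose only degree of freedom is the choice of a location outside $\dom{s}$, all such choices being related by a swap of fresh locations (here the Barendregt convention and the nominal treatment of $\Loc$ from \cref{sec:lang} are used). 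Applying this with $(s_1,e_1)=(s',e')$ and $(s_2,e_2)=(s'',e'')$, and noting that $A,\Gamma$ are untouched by $\red$ and that $\mathsf{ref}$ is a binder\,---\,so the locations in $(\dom{s'}\cup\dom{s''})\setminus\dom{s}$ are exactly those in $\fl{C',C''}\setminus\fl{C}$\,---\,gives $C'=C''\pi_l$ for the required location permutation. The same argument transfers verbatim to the stacked LTS of \cref{fig:lts}.

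I expect the determinism-up-to-freshness property of $\red$ to be the main obstacle: it is conceptually routine but needs the unique decomposition lemma for evaluation contexts together with nominal bookkeeping to confine $\supp{\pi_l}$ to the newly allocated locations, and to check that the remaining $\redbase$ rules are genuinely deterministic. Ruling out \textsc{PropCall} and \textsc{PropRet}, by contrast, is an immediate inspection of the rules.
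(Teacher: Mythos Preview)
Your proposal is correct and follows the same approach as the paper, which simply states ``By the deterministic nature of the reduction semantics.'' Your argument is a careful unfolding of that one line: ruling out \textsc{PropRet}/\textsc{PropCall} and then invoking determinism-up-to-fresh-locations of $\red$ is exactly what that phrase is meant to encapsulate.
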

\begin{proof}
  By the deterministic nature of the reduction semantics.
\qed\end{proof}
\begin{figure*}[t]
  \[\begin{array}{@{}c@{}}
     \irule*[UpToBeta][uptobeta]{
      C_1'
      \bisim*{R}_{\Sigma,\beta}
      C_2'
      \\\\
       C_1 \betatrans C_1' \text{ or } C_1 = C_1'
      \\\\
      C_2 \betatrans C_2' \text{ or } C_2 = C_2'
    }{
      C_1
      \utbeta*{\bisim{R}}_{\Sigma,\beta}
      C_2
    }
  \end{array}\]
  \vspace{-1mm}
  \hrule
  \vspace{-1mm}
  \caption{Up to Beta Moves.}\label{fig:utbeta}
\end{figure*}
\begin{lemma}\label{lem:upto-beta}
  Function $\utbeta{}$ is a sound up-to technique.
\end{lemma}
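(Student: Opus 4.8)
The plan is to use the enhancement machinery of \cref{sec:enhancements} and reduce the claim to a progression statement, following the template of the proof for up to permutations (\cref{lem:upto-perm}). Concretely, I aim to show that $\utbeta{}$ weakly progresses to itself, $\utbeta{}\progress\utbeta{}$, which by \cref{lem:compat-sound} (equivalently \cref{lem:prfs}(\ref{lem:prfs-1})) gives $\WP$-soundness. Unfolding, I must prove $\utbeta{}\comp\WP(\bisim{R})\subseteq\WP(\utbeta{\bisim{R}})$ for every relation $\bisim{R}$: I take a tuple $(C_1,C_2,\Sigma,\beta)$ obtained from some $(C_1',C_2',\Sigma,\beta)\in\WP(\bisim{R})$ with $C_1\betatrans C_1'$ or $C_1=C_1'$, and $C_2\betatrans C_2'$ or $C_2=C_2'$, and verify the weak-simulation conditions of \cref{def:SLbisim} for it against the target $\utbeta{\bisim{R}}$.

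The key ingredient is \cref{def:beta-move}: if a configuration admits a beta-move then $\tau$ is its \emph{only} transition and its target is determined up to renaming of freshly allocated locations. This pins down where a genuine beta-move can occur on the challenge side. Termination (condition~0): a configuration with $C_1{\downarrow}$ is an Opponent configuration, which performs no $\tau$, so $C_1=C_1'$; since $C_2'{\downarrow}$ is again Opponent, $C_2=C_2'$, giving $C_2{\downarrow}$. For an Opponent-move challenge ($\lopret{D[\vec\alpha]}$ or $\lopapp{i}{D[\vec\alpha]}$) or a $\lpropret{D}$ challenge, the challenge side is Opponent, resp.\ a value configuration, hence cannot beta-move, so $C_1=C_1'$, and (by \cref{def:compatible}, which also rules out $\botcconf$ beta-moving) $C_2=C_2'$; the matching transition of $(C_1',C_2',\Sigma,\beta)\in\WP(\bisim{R})$ then gives the response. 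The continuation-graph bookkeeping is inherited verbatim: in the $\lopapp{}{}$ case the new entry point and the pushed continuations $C_j.\Ed$ are exactly those of the $\WP(\bisim{R})$-tuple (since $C_j=C_j'$), and in the $\lpropret{}$ case the graph and entry point come from the relevant $\Sigma$-edge and are untouched by $\tau$; targets land in $\bisim{R}\subseteq\utbeta{\bisim{R}}$. Compatibility of $(C_1,C_2,\Sigma,\beta)$ is checked along the way: a beta-move preserves polarity, the name set $A$, $\dom{\Gamma}$, the Opponent continuation component and $\botcconf$-status, and the side conditions at $\beta=\diamond$ are inherited.

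The hard part is condition~2 with a $\tau$-challenge $C_1\realtrans{\tau}C_1^*$ when $C_1\betatrans C_1'$ (a $\lpropapp{\alpha}{D}$ challenge cannot involve a beta-move, again by \cref{def:beta-move}, since the active term is a stuck application). By determinacy $C_1^*$ equals $C_1'$ up to a location permutation on freshly created locations, which I absorb with the nominal/freshness reasoning already used for up to permutations; I then respond with $C_2\wtrans{\tau}C_2'$. The subtlety is that $(C_1',C_2',\Sigma,\beta)$ only lies in $\WP(\bisim{R})$, not in $\bisim{R}$, so it is not immediately of the shape demanded by $\utbeta{\bisim{R}}$ --- this is precisely the point where naive ``bisimulation up to $\tau$-reduction'' would be unsound, and it is the determinacy of beta-moves that saves us. The fix is to weaken the progression target, proving instead $\utbeta{}\progress\utbeta{}\comp\WP$ and invoking \cref{lem:prfs}(\ref{lem:prfs-2}) with $g=\WP$ (legitimate since $\WP\sqsubseteq\companion$ by \cref{lem:companion-props}(\ref{lem:companion-props-2})). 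With this target, $(C_1',C_2',\Sigma,\beta)\in\WP(\bisim{R})$ already witnesses membership in $\utbeta{}\comp\WP(\bisim{R})$ via the identity alternative of $\utbeta{}$, closing the case. I expect discharging this last case, together with the fresh-location renaming, to carry essentially all the work; the remaining conditions are routine case analysis mirroring \cref{lem:upto-perm}.
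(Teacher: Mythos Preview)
Your overall strategy matches the paper's: both argue via a weak-progression statement and appeal to \cref{lem:prfs}. The paper in fact claims the stronger $\utbeta{}\progress\utbeta{}$ (item~(\ref{lem:prfs-1})), and in the $\tau$-case concludes with the tuple lying in $\utbeta{\WP(\bisim{R})}$ rather than $\utbeta{\bisim{R}}$\,---\,precisely the mismatch you flagged. Your instinct to weaken the progression target and use item~(\ref{lem:prfs-2}) is the right repair.

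There is one small slip in your proposed target. With $g=\WP$ you must show $\utbeta{}\comp\WP(\bisim{R})\subseteq\WP(\utbeta{}\comp\WP(\bisim{R}))$, so \emph{every} resulting tuple must land in $\utbeta{\WP(\bisim{R})}$. Your $\tau$-case now closes (the target is in $\WP(\bisim{R})\subseteq\utbeta{\WP(\bisim{R})}$ via the identity alternative), but in the case $C_1=C_1'$ the $\WP(\bisim{R})$-hypothesis only yields a target in $\bisim{R}$, and $\bisim{R}\not\subseteq\WP(\bisim{R})$ in general. The fix is immediate: take $g=\WP\sqcup\utId{}$, which is still below the companion by \cref{lem:companion-props}(\ref{lem:companion-props-2}),(\ref{lem:companion-props-3}), so item~(\ref{lem:prfs-2}) still applies; then the $C_1=C_1'$ case lands in $\bisim{R}\subseteq\utbeta{(\WP\sqcup\utId{})(\bisim{R})}$ and the $\tau$-case in $\WP(\bisim{R})\subseteq\utbeta{(\WP\sqcup\utId{})(\bisim{R})}$. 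Everything else in your proposal (the determinacy-of-beta-moves argument, the polarity/compatibility checks, the observation that Opponent and value configurations cannot beta-move) is sound and is exactly what the paper does.
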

\begin{proof}
  From \cref{lem:prfs}~(\ref{lem:prfs-1}), it suffices to show that $\utbeta{}\progress[\WP]\utbeta{}$.
  Let $C_1 \utbeta*{\WP(\bisim{R})}_{\Sigma,\beta} C_2$
  and $C_1' \mathrel{\WP(\bisim{R})}_{\Sigma,\beta} C_2'$
  and $C_1 \betatrans C_1'$ or $C_1 = C_1'$
  and $C_2 \betatrans C_2'$ or $C_2 = C_2'$.
  
  When
  $C_1 \betatrans C_1'$, then by \cref{def:beta-move} $C_1$ can only take this real transition.
  This can be matched with the transition $C_2 \wtrans{\tau}C_2'$. Moreover,
  $C_1' \utbeta*{\WP(\bisim{R})}_{\Sigma,\beta} C_2'$ by definition.
  
  When $C_1 = C_1'$, let $C_1\realtrans{\eta} C_1''$, thus $C_1'\realtrans{\eta} C_1''$.
  By definition of $\WP(\bisim{R})$,
  there exists $C_4$,$\Sigma'$,$\beta',\Ed_1,\Ed_2$ such that
  $C_2' \wtrans{\eta} C_2''$
  and $C_1'' \bisim*{R}_{\Sigma',\beta'} C_2''$
  or $C_1''\sub{\chi}{\Ed_1} \bisim*{R}_{\Sigma',\beta'} C_2''\sub{\chi}{\Ed_2}$.
  Moreover, 
  $C_2 \wtrans{\tau} C_2' \wtrans{\eta} C_2''$.
  and $C_1'' \utbeta*{\bisim{R}}_{\Sigma',\beta'} C_2''$
  or $C_1''\sub{\chi}{\Ed_1} \utbeta*{\bisim{R}}_{\Sigma',\beta'} C_2''\sub{\chi}{\Ed_2}$ by definition.
\qed\end{proof}

\subsection{Up to Garbage Collection}

\begin{definition}[Garbage Collection] \label{def:asymp}
  Given a set of location names $S$,
  we define the following total operation on configurations:
  \[
    \cconf{A\uplus A_g}{\Gamma}{s,s_g}{\Phi}
    \gcrelinv_{S}
    \left\{\begin{array}{ll}
      \cconf{A}{\Gamma}{s}{\Phi}
      & \text{if }
      S=\dom{s_g},~
      \dom{s_g}\cap\fl{\Gamma,s,\Ed}=\emptyset
      \\
      \cconf{A\uplus A_g}{\Gamma}{s,s_g}{\Phi}
      &\text{otherwise}
    \end{array}\right.
  \]
  Moreover ${\asympgc_S}={\gcrel_S} \cup {\gcrelinv_S}$.
Given $S_1,S_2,\Sigma$ with $S_i\cap\fl{\Sigma.\Ed_i}=\emptyset$, 
  we also define $\Sigma^{gc}_{S_1,S_2}$ as follows:
  \begin{align*}
    \betais{C'_{11},C'_{21}}\toSig[\Sigma^{gc}_{S_1,S_2}]{\Ed_1,\Ed_2}\betais{C'_{12},C'_{22}}
  \end{align*}
  when
    $\betais{C_{11},C_{21}}\toSig{\Ed_1,\Ed_2}\betais{C_{12},C_{22}}$ and,
      for $i,j\in\{1,2\}$, we have $C_{ij}' \gcrel_{S_i} C_{ij}$.
\end{definition}
\begin{lemma}\label{lem:asymp-perm}
  If $C \asympgc_{S} C'$ then $\pic C \asympgc_{\pic S} \pic C'$.
  \qed
\end{lemma}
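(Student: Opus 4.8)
The plan is to reduce the statement to the one-directional claim for $\gcrelinv_S$ and then check that every conjunct in its defining clause (\Cref{def:asymp}) is equivariant, which is precisely the "nominal sets reasoning'' invoked for the analogous \Cref{lem:red-perm,lem:lts-perm}. Since $\asympgc_S = \gcrel_S \cup \gcrelinv_S$ is the symmetric closure of $\gcrelinv_S$, and a union of equivariant relations is equivariant, it suffices to prove: if $C \gcrelinv_S C'$ then $\pic C \gcrelinv_{\pic S} \pic C'$; the $\gcrel_S$ case follows by exchanging the roles of $C$ and $C'$.

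First I would unfold \Cref{def:asymp} and split into its two clauses. In the genuine-collection clause, write $C = \cconf{A\uplus A_g}{\Gamma}{s,s_g}{\Phi}$ and $C' = \cconf{A}{\Gamma}{s}{\Phi}$ with $\pic S = \dom{s_g}$ (after renaming, $S=\dom{s_g}$) and $\dom{s_g}\cap\fl{\Gamma,s,\Ed}=\emptyset$. Applying $\pic$ and using that permutation commutes with configuration projection, with disjoint union ($\pic(A\uplus A_g)=\pic A\uplus\pic A_g$), with store concatenation ($\pic(s,s_g)=\pic s,\pic s_g$), with $\dom{\cdot}$ ($\dom{\pic s_g}=\pic(\dom{s_g})$), and with $\fl{\cdot}$ and intersection ($\dom{\pic s_g}\cap\fl{\pic\Gamma,\pic s,\pic\Ed}=\pic(\dom{s_g}\cap\fl{\Gamma,s,\Ed})=\pic\emptyset=\emptyset$), one sees that $\pic C$, $\pic C'$ and $\pic S$ satisfy exactly the side conditions of the first clause, so $\pic C\gcrelinv_{\pic S}\pic C'$. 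In the reflexive clause $C=C'$, hence $\pic C=\pic C'$; here I would note that the negation of the (equivariant) collectability side condition is itself equivariant, so if $\pic C$ admitted a decomposition witnessing genuine collection with parameter $\pic S$, then applying $\pic^{-1}$ would yield one for $C$ with $S$, contradicting that we are in the "otherwise'' branch — therefore $\pic C\gcrelinv_{\pic S}\pic C$ as well.

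There is essentially no obstacle: the mild point needing care is the reflexive clause, where one must observe that "there is no collectable $s_g$ with $\dom{s_g}=S$'' is preserved under permutation, which is immediate from equivariance of the witnessing formula — exactly as the paper handles closure of the reduction and transition relations under $\Perm$ in \Cref{lem:red-perm,lem:lts-perm}.
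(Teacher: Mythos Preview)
Your proposal is correct and faithful to the paper's intent: the paper gives no proof at all (the statement is immediately followed by \qed), treating the lemma as an instance of routine nominal-sets equivariance reasoning. Your case analysis on the two clauses of \Cref{def:asymp}, including the observation that the ``otherwise'' branch is preserved because the collectability side condition is equivariant (hence so is its negation), is exactly the unpacking one would do if asked to spell out that \qed.
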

\begin{lemma}\label{lem:asymp-bisim}\label{lem:asymp-bisim-weak}
  Given $C_1\asympgc_{S} C_2$ and       $(\fl{C_1'}\sdif\fl{C_1})\cap \fl{S} = \emptyset$:
  \begin{itemize}
    \item if
 $C_1 \trans{\eta} C_1'$ then
  $C_{2} \trans{\eta} C_{2}'$ and $C_1' \asympgc_{S} C_2'$ 

\item if $C_1 \wtrans{\eta} C_1'$ then
  $C_2 \wtrans{\eta} C_2'$ and $C_1' \asympgc_{S} C_2'$.
\end{itemize}
\end{lemma}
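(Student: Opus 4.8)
The plan is to establish the single-step statement (first item) by a case analysis on the LTS rule, and then obtain the weak statement by an induction on the length of the transition sequence. Recall from \Cref{def:asymp} that ${\asympgc_S}$ is the symmetric closure of $\gcrelinv_S$, so it suffices to treat the two directions: either $C_1$ carries the extra garbage, $C_1 = \cconf{A\uplus A_g}{\Gamma}{s,s_g}{\Phi}$ and $C_2 = \cconf{A}{\Gamma}{s}{\Phi}$ with $\dom{s_g}=S$ disjoint from $\fl{\Gamma,s,\Phi}$, or symmetrically $C_2$ carries it. The degenerate case $C_1 = C_2 = \botcconf$ is immediate, since $\botcconf$ performs every non-$\tau$ move into $\botcconf$ and $\botcconf \asympgc_S \botcconf$.

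For a single step $C_1 \trans{\eta} C_1'$ I would argue that the garbage component $s_g$ (resp.\ $A_g$) is never inspected by any rule of \Cref{fig:SLlts} (equivalently \Cref{fig:lts}): its locations are, by hypothesis, disjoint from the free locations of the live roots, hence unreachable from the redex. Concretely, a \textsc{Tau} step is a base reduction $\redconf{s}{e}\red\redconf{s'}{e'}$ touching only the live store (the only rule that adds a location, $\mathsf{ref}$, picks one fresh for the whole configuration by the Barendregt convention, hence outside $S$), so $C_2$ performs the identical reduction on its live part; whereas \textsc{PropCall}, \textsc{PropRet}, \textsc{OpCall}, \textsc{OpRet} create no locations at all ($D$ is a location-free value context and the produced term comes from $\Gamma$), so $C_2$ replays the rule with the same pattern $D$ and, appealing to \Cref{lem:lts-permute} if necessary to choose the introduced abstract names $\vec\alpha$ fresh also for $C_2$, yields the matching $C_2'$. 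In each case one re-establishes $\gcrelinv_S$ between $C_1'$ and $C_2'$ with the same $S$: the only way the live free-location set can grow is a fresh $\mathsf{ref}$-location, and the side condition $(\fl{C_1'}\sdif\fl{C_1})\cap\fl S = \emptyset$ guarantees this location lies outside $S$; together with well-formedness of configurations (no dangling pointers, so $\fl{s_g}\subseteq\dom{s_g}\cup\dom{s}$) this also makes it fresh for the garbage store, so $C_2$ can indeed fire the same-labelled step. The \textsc{Response} case is trivial.

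For the weak statement, note first that for well-formed configurations $\fl{C}$ coincides with $\dom{C.s}$, and the store domain is non-decreasing along every LTS transition; hence along any witnessing sequence $C_1 = C^{(0)} \trans{} \cdots \trans{} C^{(n)} = C_1'$ one has $\fl{C^{(i)}} \subseteq \fl{C_1'}$, so the endpoint side condition descends to $(\fl{C^{(i)}}\sdif\fl{C^{(0)}})\cap\fl S = \emptyset$ at every stage. The weak statement then follows by iterating the single-step result along the sequence, using \Cref{lem:asymp-perm} and \Cref{lem:lts-permute-weak} to keep freshly introduced names and the set $S$ disjoint between consecutive steps.

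The part that needs the most care, rather than any deep idea, is the freshness bookkeeping: the requirement is that $C_2$ mimic the \emph{exact} label $\eta$ of $C_1$'s move, which forces us to show that every name freshly introduced by $\eta$ — a $\mathsf{ref}$-location for a $\tau$-step, or the abstract names $\vec\alpha$ in an \textsc{OpCall}/\textsc{OpRet} — can also be taken fresh for $C_2$. For locations this is exactly what the side condition buys (together with well-formedness, which confines the garbage store's free locations to $S\cup\dom s$); for abstract names it is handled by \Cref{lem:lts-permute}, and is in any case automatic when the lemma is used inside the bisimulation game, where the introduced $\vec\alpha$ are already required to be fresh for the opposite configuration.
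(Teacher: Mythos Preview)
Your approach matches the paper's: case analysis on the direction of $\asympgc_S$ combined with case analysis on the LTS rule for the single-step part, then induction on the length of the transition sequence for the weak part. The paper's proof is a two-line sketch (``induction on the derivation of $C_1\asympgc_S C_2$ and case analysis on the transition from $C_1$'' for the first item, ``induction on the length \ldots\ using first part'' for the second), so your write-up is strictly more detailed; your observation that the store domain is monotone along transitions, which lets the endpoint side condition propagate to every intermediate configuration, is exactly the bookkeeping needed to make the induction go through.
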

\begin{proof}
First part by induction on the derivation of $C_1 \asympgc_{S} C_2$ and case analysis on the transition from $C_1$. Second part 
  by induction on the length of the transition from $C_1$ and using first part.
\qed\end{proof}
\begin{figure*}[t]
  \[\begin{array}{@{}c@{}}
    \irule*[UpToGC][uptogc]{
      C_1
      \gcrelinv_{S_1}
      \bisim*{R}_{\Sigma^{gc}_{S_1,S_2},\betais{C_3',C_4'}}
      \gcrel_{S_2}
      C_2
      \\
      C_3'\gcrel_{S_1}C_3
      \\
      C_4'\gcrel_{S_2}C_4
    }{
      C_1
      \utgc*{\bisim{R}}_{\Sigma,\betais{C_3,C_4}}
      C_2
    }
  \end{array}\]
  \vspace{-1mm}
  \hrule
  \vspace{-1mm}
  \caption{Up-to garbage collection.}\label{fig:utsimple}
\end{figure*}
\begin{lemma}\label{lem:utgc-sound} 
  Function $\utgc{}$ is a sound up-to technique.
\end{lemma}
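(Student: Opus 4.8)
The plan is to follow the pattern of the soundness proofs for the other simple techniques (\cref{lem:upto-perm,lem:upto-beta}) and discharge one of the obligations of \cref{lem:prfs}. Because the renaming of freshly allocated locations on the defender side cannot in general be absorbed by $\utgc{}$ alone, I would aim to show $\utgc{}\progress[\WP](\utgc{}\comp\utperm{})$ and then conclude by \cref{lem:prfs}~(\ref{lem:prfs-2}), using that $\utperm{}$ is $\WP$-compatible (\cref{lem:upto-perm}) and hence $\utperm{}\sqsubseteq\companion[\WP]$. One could instead target plain compatibility $\utgc{}\progress[\WP]\utgc{}$ and phrase the renamings below as a ``without loss of generality'' step; composing with $\utperm{}$ just keeps the name bookkeeping explicit.

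Fix a relation $\bisim{R}$ and a tuple $C_1 \utgc*{\WP(\bisim{R})}_{\Sigma,\betais{C_3,C_4}} C_2$. Unfolding the rule \textsc{UpToGC} of \cref{fig:utsimple} yields location sets $S_1,S_2$ and configurations $\hat C_1,\hat C_2,C_3',C_4'$ with $C_1\gcrelinv_{S_1}\hat C_1$, $\hat C_2\gcrel_{S_2}C_2$, $C_3'\gcrel_{S_1}C_3$, $C_4'\gcrel_{S_2}C_4$ and $\hat C_1 \mathrel{\WP(\bisim{R})}_{\Sigma^{gc}_{S_1,S_2},\betais{C_3',C_4'}} \hat C_2$; note that $S_i\cap\fl{\Sigma.\Ed_i}=\emptyset$ is implicit in the notation $\Sigma^{gc}_{S_1,S_2}$. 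First the routine bookkeeping: by the side condition of \cref{def:asymp}, garbage collection changes neither polarity, nor $\dom{\cdot.\Gamma}$, nor the continuation component $\cdot.\Ed$, so compatibility of the target tuple is inherited, and $C_1\downarrow$ forces $\hat C_1\downarrow$, hence $\hat C_2\downarrow$, hence $C_2\downarrow$. A challenge $C_1\realtrans{\eta}C_1'$ is transported through the garbage relation by \cref{lem:asymp-bisim}: it lifts to a (real) transition $\hat C_1\realtrans{\eta}\hat C_1'$ with $C_1'\asympgc_{S_1}\hat C_1'$, the freshness hypothesis $(\fl{C_1'}\setminus\fl{C_1})\cap\fl{S_1}=\emptyset$ being automatic since $S_1\subseteq\dom{C_1.s}$; moreover the move carried by $\eta$ is still fresh for $\hat C_2$ and $\betais{C_3',C_4'}$ because garbage collection only shrinks the set of names in play. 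Dually, the response $\hat C_2\wtrans{\eta}\hat C_2'$ is transported back by \cref{lem:asymp-bisim-weak} to $C_2\wtrans{\eta}C_2'$ with $\hat C_2'\asympgc_{S_2}C_2'$, but this direction first requires a renaming (\cref{lem:lts-permute-weak}) of the locations freshly created along the internal $\tau$-steps so that they avoid $S_2$; since moves $D$ carry no locations, the visible label is unchanged by the renaming, and the resulting location permutation $\pi$ fixes the free locations of $\hat C_2$ (and, WLOG, of $\hat C_1'$), so it is absorbed by $\utperm{}$ acting on the right component only.

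The substantive part is the interaction of garbage collection with the continuation-graph machinery in cases~3 and~4 of \cref{def:SLbisim}, where I would establish three commutations. First, since edges of $\Sigma$ never mention $S_1,S_2$ (the precondition of \cref{def:asymp}), garbage collection induces a label-preserving bijection between the edges of $\Sigma$ and those of $\Sigma^{gc}_{S_1,S_2}$, and in particular preserves reachability between entry points together with the stacks it produces. Second, on an \textsc{OpCall} step (case~3) the graph extension $\Sigma[\betais{C_1',C_2'}\trans{C_1.\Ed,C_2.\Ed}\beta]$ matches the extension of $\Sigma^{gc}_{S_1,S_2}$ performed on the collected configurations: the edge labels agree ($C_i.\Ed=\hat C_i.\Ed$) and $S_i$ stays garbage for the new proponent configurations, because the resulting term has free locations only among those of $\Gamma(i)$ and the fresh $D[\vec\alpha]$ carries none; hence graph extension commutes with $(\cdot)^{gc}_{S_1,S_2}$ for the \emph{same} $S_1,S_2$. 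Third, on a \textsc{PropRet} step (case~4), since $\Edb_1,\Edb_2$ are garbage-free the $\chi$-substitution commutes with garbage collection ($C_1'\sub{\chi}{\Edb_1}\gcrelinv_{S_1}\hat C_1'\sub{\chi}{\Edb_1}$, and likewise on the right after the renaming above) and graph restriction commutes with it as well, i.e.\ $(\Sigma@\beta')^{gc}_{S_1,S_2}=\Sigma^{gc}_{S_1,S_2}@\hat\beta'$ where $\hat\beta'$ is the garbage-collected $\beta'$. Putting these together, in every case the target tuple is again of the shape demanded by \textsc{UpToGC}, now with $\pi$ applied to the right component, i.e.\ it lies in $\utgc{\utperm{\bisim{R}}}$, which is exactly what \cref{lem:prfs}~(\ref{lem:prfs-2}) requires.

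The main obstacle I anticipate is precisely this graph-commutation bookkeeping: one must check that a location collected as garbage is never ``resurrected'' by a later move (so that the single pair $S_1,S_2$ suffices along the whole derivation, rather than a growing family of garbage sets), that the precondition $S_i\cap\fl{\Sigma.\Ed_i}=\emptyset$ survives both graph extension and restriction, and that reachability between entry points is genuinely invariant under $(\cdot)^{gc}_{S_1,S_2}$. The freshness juggling for freshly allocated locations on the response side is a second, routine but fiddly, source of friction, handled by the composition with $\utperm{}$ as above. The remaining transition cases ($\tau$, \textsc{PropCall}, \textsc{OpRet}, and the trivial case $C_1=\botcconf$) do not touch the continuation graph and follow the same pattern without the graph commutations.
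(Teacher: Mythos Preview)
Your approach is exactly the one taken in the paper: the proof there reads, in full, ``By showing $\utgc{}\progress\utgc{}\comp\utperm{}$ and \cref{lem:prfs}~(\ref{lem:prfs-2}) and \cref{lem:asymp-bisim,lem:upto-perm}.'' You have simply fleshed out the details that the paper elides---the transport of transitions via \cref{lem:asymp-bisim}, the location renaming on the defender side absorbed by $\utperm{}$, and the graph-commutation bookkeeping for cases~3 and~4---and identified the same composition with $\utperm{}$ as the way to handle fresh locations.
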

\begin{proof}
  By showing $\utgc{}\progress\utgc{}\comp\utperm{}$ and
  \cref{lem:prfs}~(\ref{lem:prfs-2}) and \cref{lem:asymp-bisim,lem:upto-perm}. 
\qed\end{proof}

\subsection{Up to Opponent Knowledge Weakening}

\begin{lemma}\label{lem:lts-weaken}
  Let $C_1$ and $C_2$ be well formed configurations with $C_2 = C_1,\maps i v$, meaning that $C_2$ is identical to $C_1$ except it contains an additional value $v$ indexed by $i$ in $C_2.\Gamma$.
  Then the following hold:
  \begin{enumerate}
    \item
      If $C_1 \trans{\eta} C_1'$,
      where $\eta\not\in\{\lpropapp{\alpha}{i},\lpropret{i}\where \text{any } \alpha\}$,
      then
      $C_2 \trans{\eta} C_1',\maps i v$.
    \item
      If $C_2 \trans{\eta} C_2',\maps i v$,
      where $\eta\not\in\{\lopapp{i}{\alpha}\where \text{any } \alpha\}$,
      then $C_1 \trans{\eta} C_2'$.

  \end{enumerate}
\end{lemma}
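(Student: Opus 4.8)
The plan is to prove both implications by a case analysis on the rule of the stackless LTS (\cref{fig:SLlts}) that derives the given transition, exploiting that each rule refers to the knowledge environment only in a constrained way. If $C_1=\botcconf$ there is nothing to prove, so assume $C_1$ is a genuine configuration with knowledge environment $\Gamma$; then $C_2$ agrees with $C_1$ on its set $A$ of known names, its store and its term/continuation component, and has knowledge environment $\Gamma,\maps i v$ with $i\notin\dom{\Gamma}$. Two structural observations carry most of the work. First, the base reduction relation, the auxiliary application relation used in \textsc{OpCall}, and well-typedness under the store typing are all independent of the knowledge environment, so \textsc{Tau} steps and the function-application side conditions are unaffected by adding or removing $\maps i v$. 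Second, since $C_2$ is well formed we have $\an{v}\subseteq A$, hence any $\vec\alpha$ chosen fresh for $C_1$ (so fresh for $A$) is automatically fresh for $v$; consequently the freshness side conditions $\vec\alpha\fresh\Gamma,s,\Ed$ of \textsc{OpCall} and \textsc{OpRet} hold for $C_1$ iff the corresponding conditions with $\Gamma,\maps i v$ hold for $C_2$.

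For part~(1) I would go through the rules in turn. \textsc{Tau}, \textsc{OpCall}, \textsc{OpRet}, \textsc{PropCall} and \textsc{PropRet} are replayed verbatim from $C_1$ onto $C_2$: for \textsc{OpCall} the called index $j$ lies in $\dom{\Gamma}$, so $j\neq i$ and $(\Gamma,\maps i v)(j)=\Gamma(j)$, keeping the remaining side conditions intact; for \textsc{PropCall}/\textsc{PropRet} the fresh-index set of the ultimate pattern is $\dom{\Gamma'}=\{j\in\hsig{D}\}$, and the hypothesis $\eta\notin\{\lpropapp{\alpha}{i},\lpropret{i}\}$ is precisely what forces $i\notin\hsig{D}$, so $\dom{\Gamma,\maps i v}$ stays disjoint from $\dom{\Gamma'}$ and the target is $(\Gamma,\Gamma'),\maps i v$, i.e.\ $C_1',\maps i v$ after reordering the environment. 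The \textsc{Response} case is the delicate one: $C_1\trans{\eta}\botcconf$ holds precisely when $\eta\neq\tau$ and no other rule applies to $C_1$ with label $\eta$, and I must check the same for $C_2$; the point is that for every label outside the excluded set the set of applicable non-\textsc{Response} rules coincides for $C_1$ and $C_2$ — the only rules sensitive to the extra entry being \textsc{OpCall} at index $i$ and \textsc{PropCall}/\textsc{PropRet} whose pattern mentions $i$ — so $C_2\trans{\eta}\botcconf$ as well.

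Part~(2) is the mirror image, peeling $\maps i v$ off the target reached from $C_2$. Here \textsc{Tau}, \textsc{OpRet}, \textsc{PropCall} and \textsc{PropRet} transfer directly — note that a proponent move fired by $C_2$ automatically satisfies $i\notin\hsig{D}$, since otherwise the concatenation $(\Gamma,\maps i v),\Gamma'$ forming its target would be undefined — whereas for \textsc{OpCall} the hypothesis $\eta\notin\{\lopapp{i}{\alpha}\}$ forces the called index to differ from $i$, so the same function value is available in $C_1$ and the transition replays onto $C_1$. I expect the main obstacle in both directions to be the \textsc{Response}/$\botcconf$ bookkeeping: making precise that ``$C$ has no other $\eta$-labelled transition'' is preserved when the spare knowledge entry is added (part~1) or removed (part~2), and fixing the reading of the conclusion in the degenerate case where the common target is $\botcconf$ (e.g.\ when $C_1$ is an opponent configuration that can only respond trivially to a call at index $i$). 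The remaining cases are a routine rule-by-rule check once the two structural observations above are recorded.
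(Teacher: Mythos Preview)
Your approach is the same as the paper's: the paper's proof is the single line ``By case analysis on the transitions,'' and you carry out exactly that case analysis in detail, including the \textsc{Response} bookkeeping that the paper leaves implicit.

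One small wrinkle worth flagging: you justify the freshness side condition for \textsc{OpCall}/\textsc{OpRet} on $C_2$ by appealing to a set $A$ of known abstract names with $\an{v}\subseteq A$, but the main text explicitly drops $A$ from stackless configurations (they are declared to be triples, and freshness is instead enforced in the bisimulation definition). So strictly within the paper's stated formalism your argument ``$\vec\alpha$ fresh for $A$ hence fresh for $v$'' does not type-check. That said, the paper is itself inconsistent here (the appendix macros and several definitions still carry an $A$ slot), and in the intended use of the lemma---inside the up-to-weakening proof, on configurations reached along the bisimulation game---the required freshness for $v$ does hold. The paper's one-line proof glosses over this just as much as you do, so this is not a divergence from the paper's argument, only a point where both could be tightened.
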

\begin{proof}
By case analysis on the transitions.
\qed\end{proof}
\begin{definition}[Opponent Knowledge Weakening] \label{def:gamma-weak}
  We let $(\weakrel)$ as follows:
  \begin{itemize}

    \item $\oconf{A}{\Gamma}{s}{\Ed} \weakrel_i \oconf{A}{\Gamma,\maps{i}{v_1}}{s}{\Ed}$

    \item $\pconf{A}{\Gamma}{s}{e} \weakrel_i \pconf{A}{\Gamma,\maps{i}{v_1}}{s}{e}$

  We also define $\Sigma^{wk}_i$:
  \begin{align*}
    (\Ed_1,\Ed_2,\betais{C'_{12},C'_{22}})\in \Sigma^{wk}_i(\betais{C'_{11},C'_{21}})
  \end{align*}
  when
  \begin{itemize}
    \item
      $(\Ed_1,\Ed_2,\betais{C_{12},C_{22}})\in \Sigma(\betais{C_{11},C_{21}})$; and
    \item
      for $k,j\in\{1,2\}$, we have $C_{kj}' \weakrel_{i} C_{kj}$.
  \end{itemize}

  \end{itemize}
  \defqed
\end{definition}
\begin{figure*}[t]
  \[\begin{array}{@{}c@{}}
    \irule*[UpToWeakening][uptoweakening]{
      C_1 \weakrel_i \bisim*{R}_{\Sigma,\betais{C_3',C_4'}} \weakrelinv_i C_2
      \\
      C_3\weakrel_i C_3'
      \\
      C_4\weakrel_i C_4'
    }{
      C_1
      \utweak[i]*{\bisim{R}}_{\Sigma^{wk}_i,\betais{C_3,C_4}}
      C_2
    }
  \end{array}\]
  \vspace{-1mm}
  \hrule
  \vspace{-1mm}
  \caption{Up to Weakening of the Opponent Knowledge.}\label{fig:utweak}
\end{figure*}

\begin{lemma}\label{lem:utweak-sound}
  Function $\utweak[i]{}$ is a sound up-to technique.
\end{lemma}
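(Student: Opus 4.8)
The plan is to follow the template of \Cref{lem:utgc-sound} and reduce soundness to a single weak progression of the form $\utweak[i]{}\progress\utweak[i]{}\comp\utperm{}$, after which \cref{lem:prfs}~(\ref{lem:prfs-2}) applies: it suffices to exhibit $g$ below the companion with $\utweak[i]{}\progress\utweak[i]{}\comp g$, and $g=\utperm{}$ works because $\utperm{}$ is compatible by \cref{lem:upto-perm}, hence $\utperm{}\sqsubseteq\companion[\WP]$. Composing with up to permutations is genuinely needed: proponent calls and returns introduce fresh environment indices and opponent calls and returns introduce fresh locations and abstract names, and to invoke \cref{lem:lts-weaken} we must be able to pick these freshly introduced names so as to avoid the weakening index $i$ and the free locations of the padding value. \cref{lem:lts-permute} (and \cref{lem:lts-perm}) lets us perform exactly this renaming, so the progression lands inside $\utweak[i]{}\comp\utperm{}$ rather than merely $\utweak[i]{}$.

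Concretely, suppose $C_1 \utweak[i]*{\WP(\bisim{R})}_{\Sigma^{wk}_i,\betais{C_3,C_4}} C_2$, witnessed by $C_1\weakrel_i D_1$, $D_1 \mathrel{\WP(\bisim{R})}_{\Sigma,\betais{C_3',C_4'}} D_2$, $C_2\weakrel_i D_2$, $C_3\weakrel_i C_3'$ and $C_4\weakrel_i C_4'$, and let $C_1\realtrans{\eta}C_1'$. Since $i\notin\dom{C_1.\Gamma}$, $\eta$ is not an opponent call on $i$. First I would use \cref{lem:lts-permute} to replace $\eta,C_1'$ by $\pi\cdot\eta,\pi\cdot C_1'$ for a permutation $\pi$ chosen so that the names freshly introduced by $\pi\cdot\eta$ avoid $i$ and the free locations of the padding value; then $\pi\cdot\eta$ is not a proponent move of the form $\lpropapp{\alpha}{i}$ or $\lpropret{i}$. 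By \cref{lem:lts-weaken}(1), $D_1\realtrans{\pi\cdot\eta}((\pi\cdot C_1'),\maps{i}{v_1})$; matching this in $\WP(\bisim{R})$ gives $D_2\wtrans{\pi\cdot\eta}D_2'$ with the targets related in $\bisim{R}$ (extending the inner graph by $\Sigma[\cdot]$ on an opponent call, restricting it by $\Sigma@\cdot$ on a proponent return). As $\pi\cdot\eta$ is not $\lopapp{i}{\cdot}$, a weak-transition version of \cref{lem:lts-weaken}(2) yields $C_2\wtrans{\pi\cdot\eta}C_2'$ with $D_2'=(C_2',\maps{i}{v_1'})$. Hence $(\pi\cdot C_1')\weakrel_i((\pi\cdot C_1'),\maps{i}{v_1})\bisim*{R}D_2'\weakrelinv_i C_2'$, so $(\pi\cdot C_1')\utweak[i]*{\bisim{R}}C_2'$; applying $\pi^{-1}$ and using that $\weakrel_i$ and $\Sigma^{wk}_i$ are equivariant, together with $\utperm{\bisim{R}}$ being closed under permutation, gives $C_1'\utweak[i]{(\utperm{\bisim{R}})}(\pi^{-1}\cdot C_2')$, matched by $C_2\wtrans{\eta}(\pi^{-1}\cdot C_2')$ via \cref{lem:lts-perm}. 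The $\tau$, proponent-call and opponent-return cases are identical, and the $\downarrow$ clause is immediate since $\weakrel_i$ does not touch the continuation component.

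The routine-but-careful part is the bookkeeping of the graph component: I would check that on an opponent call $(\Sigma[\betais{D_1',D_2'}\trans{\Ed_1,\Ed_2}\betais{C_3',C_4'}])^{wk}_i = \Sigma^{wk}_i[\betais{C_1',C_2'}\trans{\Ed_1,\Ed_2}\betais{C_3,C_4}]$, which holds because $\weakrel_i$ acts componentwise on the endpoints and fixes the edge labels $\Ed_1,\Ed_2$, and that restriction commutes with $(\cdot)^{wk}_i$ because $\Sigma^{wk}_i$ is, edge for edge, the $i$-padding of $\Sigma$, so reachability of entry points (and the side conditions of \Cref{def:betaSigma}) transfer. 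The genuine obstacle, and the place I expect to spend the most care, is lifting \cref{lem:lts-weaken}(2) to weak transitions in the presence of interleaved $\tau$-steps, i.e.\ establishing the analogue of \Cref{lem:asymp-bisim-weak} for $\weakrel_i$, and making the permutation bookkeeping precise enough that the progression provably targets $\utweak[i]{}\comp\utperm{}$; the dual argument on $\bisim{R}^{-1}$ then yields soundness for bisimulation.
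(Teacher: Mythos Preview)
Your proposal is correct and follows essentially the same approach as the paper: the paper's proof is the one-line sketch ``By showing $\utweak{}\progress[\WP]\utweak{}\comp\utperm{}$ and \cref{lem:prfs}~(\ref{lem:prfs-2}), using \cref{lem:lts-weaken,lem:upto-perm}'', and you have faithfully unpacked exactly this argument, including the need for permutations to keep freshly introduced names away from the weakening index and the padding value, and the graph-component bookkeeping for $(\cdot)^{wk}_i$.
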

\begin{proof}
  By showing $\utweak{}\progress[\WP]\utweak{}\comp\utperm{}$ and
  \cref{lem:prfs}~(\ref{lem:prfs-2}), using \cref{lem:lts-weaken,lem:upto-perm}.
\qed\end{proof}

%



\section{Pair (Bi-)Simulation}\label{sec:pair-bisim}

In order to define and prove our up to separation technique, we need to extend the stackless LTS to pairs of configurations and  define a notion of bisimulation over it. 

\begin{definition}[Pair Configuration]
  We define pair configurations $\bconf{C_1}{C_2}$ for all stackless LTS configurations $C_1$, $C_2$.
  To enable symmetric reasoning we define $\co{1} \defeq 2$ and $\co{2}\defeq 1$.
  We let $\bC$ range over pair configurations, and write $\bC.i$ to get the $i$'th inner configuration.
  We write $\bC\downarrow$ when $\bC.1\downarrow$ and $\bC.2\downarrow$.
\end{definition}

\begin{definition}[Pair LTS]
  We extend the LTS of \cref{fig:SLlts} to pair configurations as follows:

  \begin{align*}
    \begin{array}{rlrlll}
    \irule!{
      C_i \trans{\eta} C_i'
      \text{ and }
      C_{\co{i}} = C'_{\co{i}}
      \text{ an opponent configuration and }
      i\in\{1,2\}
    }{
      &\bconf{C_1}{C_2} \trans{\eta,i} \bconf{C_1'}{C_2'}
    }
    \\
    \irule!{
      \nbox{
      C_1 \trans{\eta} C_1'
      \text{ and }
      C_2 \trans{\eta} C_2'
      \text{ and }
      C_1.e = C_2.e \text{ or } C_1.\Ed = C_2.\Ed 
      \\\hfill\text{ and }
      C_1'.e = C_2'.e \text{ or } C_1'.\Ed = C_2'.\Ed 
    }
    }{
      &\bconf{C_1}{C_2} \trans{\eta,0} \bconf{C_1'}{C_2'}
    }
    \end{array}
  \end{align*}
  We also write $\bC \realtrans{\eta,k} \bC'$ when the transition is derived from the above LTS, without using the \iref{dummy} rule of the stackless LTS in \cref{fig:SLlts}.
\end{definition}

The following definition lifts \cref{def:betaSigma} to the pair LTS.

\begin{definition}[Pair Entry Points and Pair Continuation Graphs]
  \begin{align*}
    \mathsf{PEPoint} \ni \ \bbeta &::=\ (\beta_1, \beta_2, k)  \qquad (k\in\{0,1,2\})\\
    \mathsf{PCGrph}\ni \ \bSigma &\subseteq_{\rm fin}^{\neq\emptyset} \mathsf{PEPoint}\times\Kont\times\Kont\times\mathsf{PEPoint}
  \end{align*}
 and each $\bSigma$ must satisfy the conditions: 
  \begin{compactitem}
  \item{} \emph{Reachability}.~For all $\bbeta\in\dom{\bSigma}$ there are evaluation stacks $K_1,K_2$ such that $\bbeta\toSigs[\bSigma]{K_1,K_2}\diamond$

  \item \emph{Top and Divergence}.~For all $\bbeta'^{k'}\toSig[\bSigma]{\Ed_1,\Ed_2}\bbeta^{k}$ and $j\in\{1,2\}$:
  \begin{align*}
    &    (\diamond,\diamond,0)\in\dom{\bSigma}\land (\bbeta'=(\diamond,\diamond,0)\implies\bbeta=(\diamond,\diamond,0)) \land
    (\bbeta=(\diamond,\diamond,0)\iff\Ed_j=\diamond) \\
&{}\land    (\bbeta'.k.j=\bot\iff \Ed_j=\bot) \land
(\bbeta.k.j=\bot\implies
    \Ed_j=\bot)
  \end{align*}
    where, if $\bbeta = (\beta_1,\beta_2, k)$ and $i \in\{1,2\}$, we write $\bbeta.i.1=\bot$ when $\beta_i=\betais{\botcconf,C}$ 
    and $\bbeta.i.2=\bot$ when $\beta_i=\betais{C,\botcconf}$.
  \item \emph{Nominal closure}.~For  all $\bbeta'\toSig[\bSigma]{\Ed_1,\Ed_2}\bbeta$ and permutations $\pi$, \
    $\pi\cdot\bbeta'\toSig[\bSigma]{\pi\cdot\Ed_1,\pi\cdot\Ed_2}\pi\cdot\bbeta$.
\end{compactitem}

Finally, we lift \cref{def:sigma-ext-restr} to pair graphs obtaining continuation graph extension
  $\bSigma[\bbeta'\mapsto(\Ed_1,\Ed_2,\bbeta'')]$
and restriction $\bSigma@\bbeta$.
We will write $\bbeta.i$ to mean $\beta_i$ ($i\in\{1,2\}$), and $\bbeta^m$ to mean $k=m$, when $\bbeta = (\beta_1,\beta_2,k)$.
\end{definition}

We define simulation on compatible pair configurations.

\begin{definition}[Compatible Pair Configurations and Pair (Bi)simulation Tuples]
  Configurations $\bconf{C_1}{C_2}$ and $\bconf{C_1'}{C_2'}$ are compatible, when $C_j$ and $C_j'$ are compatible according to \cref{def:compatible-conf}, for $j\in\{1,2\}$.

  A tuple $(\bconf{C_1}{C_2}, \bconf{C_1'}{C_2'}, \bSigma, (\beta_1,\beta_2,k))$ is compatible if 
  $\bconf{C_1}{C_2}$ and $\bconf{C_1'}{C_2'}$ compatible, $\bSigma@(\beta_1,\beta_2,k)=\bSigma$ and
  for $i,j\in\{1,2\}$:
  \begin{compactenum}
  \item[($\bot$)] if $\beta_i.1=\bot$ then $C_i=\botcconf$; if $\beta_i.2=\bot$ then $C_i'=\botcconf$;
  \item[($\diamond$)] if $\beta_i=\diamond$ then $C_i.\Ed=C_i'.\Ed=\diamond$ or $\an{C_i.e}=\an{C_i'.e}=\emptyset$; and if $C_i.\Ed=\diamond$ then $\beta_i=\diamond$.\footnote{And also $C_i'.\Ed=\diamond$ by compatibility of $C_i, C_i'$.}
    \end{compactenum}
    and moreover:
  \begin{itemize}
    \item if $C_i$ is a proponent and $C_{\co{i}}$ an opponent configuration, then $k=i$;
    \item if $C_1$ and $C_2$ are proponent configurations, then $k=0$;
    \item if $k=0$ then $\beta_1.e=\beta_2.e$ and $C_1.e = C_2.e$ or $C_1.\Ed = C_2.\Ed$, and the same for $C_1'$, $C_2'$.
  \end{itemize}
\end{definition}

\begin{definition}[Weak Pair (Bi)Simulation]
  A relation \bbisim{R} with elements of the form $(\bC_1,\bC_2,\bSigma,\bbeta)$, and membership thereof denoted $\bC_1 \bbisim*{R}_{\bSigma,\bbeta} \bC_2$, is called 
  \emph{weak pair simulation} when
  for all $\bC_1 \bbisim*{R}_{\bSigma,\bbeta^k} \bC_2$ we have $(\bC_1,\bC_2,\bSigma,\bbeta^k)$ compatible
 and:
  \begin{enumerate}
    \item[0.]
      if $\bC_1\downarrow$
      then $\bC_2\downarrow$
    \item
      if $\bC_1 \realtrans{\lopret{D[\vec\alpha]},k} \bC_1'$ with  $\vec\alpha\fresh \bC_2$ 
      then $\bC_2 \wtrans{\lopret{D[\vec\alpha]},k} \bC_2'$ and
      $\bC_1' \bbisim*{R}_{\bSigma,\bbeta^k} \bC_2'$
    \item
      if $\bC_1 \realtrans{\eta,k} \bC_1'$
      then $\bC_2 \wtrans{\eta,k} \bC_2'$ and
      $\bC_1' \bbisim*{R}_{\bSigma,\bbeta^k} \bC_2'$, for $\eta\in\{\tau,\lpropapp{\alpha}{D}\}$
    \item
      if $\bC_1 \realtrans{\lopapp{i}{D\hole[\vec \alpha]},j}  \bC_1'$ with  $\vec\alpha\fresh \bC_2$ 
      then $\bC_2 \trans{\lopapp{i}{D\hole[\vec \alpha]},j}  \bC_2'$
      and
      $\bC_1' \bbisim*{R}_{\bSigma',\bbeta'} \bC_2'$
      and $\bSigma' = \bSigma[\bbeta' \mapsto (\Ed_1, \Ed_2, \bbeta^k)]$
      with 
      \begin{gather*}\begin{array}{lllll}
        \bbeta' = (\betais{\bC_1'.1,\bC_2'.1}, \betais{\bC_1'.2,\bC_2'.2}, 0) & \Ed_1 = \bC_1.1.\Ed = \bC_1.2.\Ed, & \Ed_2 = \bC_2.1.\Ed = \bC_2.2.\Ed & \text{ if } j = 0; \text{ or}\\
        \bbeta' = (\betais{\bC_1'.1,\bC_2'.1}, \bbeta.2, 1)                   & \Ed_1 = \bC_1.1.\Ed                & \Ed_2 =\bC_2.1.\Ed                & \text{ if } j = 1; \text{ or}\\
        \bbeta' = (\bbeta.1, \betais{\bC_1'.2,\bC_2'.2}, 2)                   & \Ed_2 = \bC_1.2.\Ed                & \Ed_2 =\bC_2.2.\Ed                & \text{ if } j = 2\phantom{; \text{ or}}
      \end{array}\end{gather*}
    \item
      if $\bC_1 \realtrans{\lpropret{D},k} \bC_1'$
      and
      $\bbeta^k{\toSig[\bSigma]{\Edb_1,\Edb_2}}\bbeta'$
      then $\bC_2 \wtrans{\lpropret{D},k} \bC_2'$
      and
      $\bC_1'\sub{\chi}{\Edb_1} \bbisim*{R}_{\bSigma',\bbeta'} \bC_2'\sub{\chi}{\Edb_2}$ with $\bSigma'=\bSigma@\bbeta'$.
    \end{enumerate}
  Pair Similarity $(\bsimil)$ is the largest weak simulation.
  Relation \bbisim{R} is a \emph{weak pair bisimulation} when \bbisim{R} and $\bbisim{R}^{-1}$ are weak pair simulations, where $R^{-1}=\{(\bC_1,\bC_2,\bSigma^{-1},\bbeta^{-1})\mid (\bC_2,\bC_1,\bSigma,\bbeta)\in \bbisim{R}\}$.
  Pair bisimilarity $(\bbisimil)$ is the largest weak bisimulation.
\end{definition}

\begin{definition}
  Given two continuation graphs $\Sigma_1$, $\Sigma_2$ we construct the product graph:
  \begin{gather*}
    \irule{
    }{
      ((\diamond,\diamond,0), \diamond, \diamond, (\diamond, \diamond, 0)) \in \Sigma_1 \otimes \Sigma_2
    }
    \\
    \irule{
      (\beta_1,\beta_2,k) \in \dom{\Sigma_1\otimes\Sigma_2}
      \\
      \forall i \in \{1,2\}.~(\beta_i', \Ed, \Ed', \beta_i) \in \Sigma_i
      \\
      \forall j \in \{1,2\}.~\beta'_1.j.e=\beta'_2.j.e
    }{
      ((\beta_1',\beta_2',0), \Ed, \Ed', (\beta_1, \beta_2, k)) \in \Sigma_1\otimes\Sigma_2
    }
    \\
    \irule{
      (\beta_1,\beta_2,k) \in \dom{\Sigma_1\otimes\Sigma_2}
      \\\\
      (\beta_i',\Ed, \Ed', \beta_i) \in \Sigma_i
      \\
      \beta_{\co{i}} = \beta_{\co{i}}'
      \\
      i\in\{1,2\}
    }{
      ((\beta_1',\beta_2',i), \Ed, \Ed', (\beta_1, \beta_2, k)) \in \Sigma_1\otimes\Sigma_2
    }
  \end{gather*}
  \begin{lemma}
    Suppose  $\Sigma_1, \Sigma_2$ well-formed continuation graphs; then $\Sigma_1 \otimes \Sigma_2$ is a well-formed pair continuation graph.
  \end{lemma}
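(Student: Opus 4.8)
The plan is to prove the claim by \emph{rule induction} on the inductive definition of $\Sigma_1\otimes\Sigma_2$, verifying in turn the defining conditions of a pair continuation graph: non-emptiness, \emph{Nominal closure}, \emph{Top and Divergence}, and \emph{Reachability} (in that order, since Reachability will rely on Top and Divergence). Non-emptiness is immediate from the base rule, which puts the loop $(\diamond,\diamond,0)\toSig[\Sigma_1\otimes\Sigma_2]{\diamond,\diamond}(\diamond,\diamond,0)$ into the graph; and since every edge of $\Sigma_1\otimes\Sigma_2$ combines at most one edge of each $\Sigma_i$, its vertices and labels are drawn from the (orbit-)finitely many vertices of $\Sigma_1$ and $\Sigma_2$, the three-element index set $\{0,1,2\}$, and the continuations occurring in $\Sigma_1,\Sigma_2$, so orbit-finiteness is inherited just as in the proof of \cref{thm:decide}.

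For Nominal closure I would run a simultaneous induction establishing both that $\dom{\Sigma_1\otimes\Sigma_2}$ is closed under $\Perm$ and that its edge set is. The base edge is equivariant. For the synchronised and asynchronous rules, applying $\pi\in\Perm$ to the premise edges $\beta_i'\toSig[\Sigma_i]{\Ed_1,\Ed_2}\beta_i$ gives, by Nominal closure of $\Sigma_1$ and $\Sigma_2$, the edges $\pi\cdot\beta_i'\toSig[\Sigma_i]{\pi\cdot\Ed_1,\pi\cdot\Ed_2}\pi\cdot\beta_i$; the synchronised rule's $e$-agreement side condition is preserved because $\pi$ commutes with the projections $(\cdot).j.e$; and the in-domain premise for the target vertex is stable under $\pi$ by the induction hypothesis. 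Re-firing the rule on the $\pi$-images then yields $\pi$ applied to the conclusion, as required.

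The bulk of the work is \emph{Top and Divergence}, which I would establish by case analysis on which rule produced the edge $\bbeta'\toSig[\Sigma_1\otimes\Sigma_2]{\Ed_1,\Ed_2}\bbeta$. The conjunct $(\diamond,\diamond,0)\in\dom{\Sigma_1\otimes\Sigma_2}$ always holds by the base rule, and for the base edge the remaining conjuncts are trivial. For the synchronised rule, the edges $\beta_i'\toSig[\Sigma_i]{\Ed_1,\Ed_2}\beta_i$ give, via the Top and Divergence condition of each $\Sigma_i$ and for every $j\in\{1,2\}$, that $\Ed_j=\diamond\iff\beta_i=\diamond$, that $\beta_i'.j=\bot\iff\Ed_j=\bot$, and that $\beta_i.j=\bot\Rightarrow\Ed_j=\bot$ (for $i=1,2$); combining the two copies and using that the synchronised rule emits third index $0$, these assemble into the required product conjuncts (the divergence ones tracking, via the inner components $\beta_1,\beta_2$, which inner configurations are $\botcconf$), notably $\bbeta=(\diamond,\diamond,0)\iff\Ed_j=\diamond$. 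The asynchronous rule is analogous: only the $i$-th side moves, so the $i$-th conjuncts come from the $\Sigma_i$ edge while the $\co{i}$-th conjuncts are inherited from the fact that $(\beta_1,\beta_2,k)$ is already a legal vertex (being the target of an edge validated earlier, or $(\diamond,\diamond,0)$) together with $\beta_{\co{i}}=\beta_{\co{i}}'$. With Top and Divergence in hand, Reachability follows by an induction along the derivation: $(\diamond,\diamond,0)$ reaches itself with empty stacks; and for any newly added edge $\bbeta'\toSig[\Sigma_1\otimes\Sigma_2]{\Ed_1,\Ed_2}\bbeta$, the target $\bbeta$ is already in the domain, so by induction $\bbeta\toSigs[\Sigma_1\otimes\Sigma_2]{K_1,K_2}(\diamond,\diamond,0)$ for some $K_1,K_2$, and the pushes $\Ed_j,K_j$ are defined because $\Ed_j=\diamond$ forces $\bbeta=(\diamond,\diamond,0)$ and hence $K_j=\noe$, while $\Ed_1=\Ed_2=\bot$ is ruled out by Divergence; this mirrors \cref{lem:Ks}.

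The step I expect to be the main obstacle is the bookkeeping in Top and Divergence around the degenerate copies of the $\Sigma_\diamond$ loop. The asynchronous rule can introduce vertices of the shape $(\diamond,\diamond,i)$ with $i\neq0$, and one has to check that the synchronised rule cannot then manufacture an edge \emph{into} such a vertex, which would violate $\bbeta'=(\diamond,\diamond,0)\Rightarrow\bbeta=(\diamond,\diamond,0)$: this hinges on the synchronised rule's $e$-agreement side condition being inapplicable when $\beta_i'=\diamond$ (since $\diamond$ has no $j$-projection), so that the only edge the synchronised rule ever contributes with $\Ed_1=\Ed_2=\diamond$ is the base self-loop. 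Pinning down the precise reading of these side conditions, and threading the third index $k$ consistently through the divergence conditions, is where the argument is delicate; everything else is routine rule induction.
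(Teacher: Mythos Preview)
Your proposal is correct and follows essentially the same approach as the paper: both proceed by induction on the construction of $\Sigma_1\otimes\Sigma_2$, checking each well-formedness condition separately, appealing to the corresponding condition on the component graphs $\Sigma_i$, and using the induction hypothesis for Reachability and the domain-closure part of Nominal closure. The paper's proof is a two-sentence sketch; you have supplied the detail it omits, including the observation that Nominal closure of the product follows from Nominal closure of each $\Sigma_i$, which is exactly what the paper singles out.
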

  \begin{proof}
    By induction on the construction of $\Sigma_1 \otimes \Sigma_2$, reasoning about each condition separately and using the induction hypothesis for proving reachability and the second part of closure. Because $\Sigma_i$ ($i\in\{1,2\}$) are closed under permutations of their tuples, so is the product graph.
\qed  \end{proof}
  Moreover, given two stackless configuration relations $\bisim{R}_1$,$\bisim{R}_2$, we define the pair relation $\bisim{R}_1\otimes\bisim{R}_2$ by induction:
  \begin{gather*}
    \irule{
      C_1 \bisim*{R}_{1\,\Sigma_1,\beta_1} C_1'
      \\
      C_2 \bisim*{R}_{2\,\Sigma_2,\beta_2} C_2'
      \\\\
      C_1.\Phi = C_2.\Phi
      \\
      C_1'.\Phi = C_2'.\Phi
      \\\\
      \bbeta = (\beta_1,\beta_2,0) \in \dom{\Sigma_1\otimes\Sigma_2}
    }{
      \bconf{C_1}{C_2}
      \mathrel{(\bisim{R}_1\otimes\bisim{R}_2)_{(\Sigma_1\otimes\Sigma_2)@\bbeta,\;\bbeta}}
      \bconf{C_1'}{C_2'}
    }
    \\
    \irule{
      C_1 \bisim*{R}_{1\,\Sigma_1,\beta_1} C_1'
      \\
      C_2 \bisim*{R}_{2\,\Sigma_2,\beta_2} C_2'
      \\\\
      \bbeta = (\beta_1,\beta_2,i) \in \dom{\Sigma_1\otimes\Sigma_2}
      \\
      C_{\co{i}},C_{\co{i}}' \text{ opponent configurations}
    }{
      \bconf{C_1}{C_2}
      \mathrel{(\bisim{R}_1\otimes\bisim{R}_2)_{(\Sigma_1\otimes\Sigma_2)@\bbeta,\;\bbeta}}
      \bconf{C_1'}{C_2'}
    }
  \end{gather*}
\end{definition}

\begin{lemma}\label{lem:bsound}
  Let $\bisim{R}_1$ and $\bisim{R}_2$ are stackless simulations; then $\bisim{R}_1\otimes\bisim{R}_2$ is a pair simulation.
\end{lemma}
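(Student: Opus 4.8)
The plan is to verify directly that $\bisim{R}_1\otimes\bisim{R}_2$ meets every clause of weak pair simulation; since it is given by an explicit (inductive) definition rather than as a greatest fixpoint, it suffices to check the clauses, using the rule induction built into $\otimes$ only where a pair‑graph edge must be unwound. First I would unfold a membership $\bconf{C_1}{C_2}\mathrel{(\bisim{R}_1\otimes\bisim{R}_2)_{\bSigma,\bbeta}}\bconf{C_1'}{C_2'}$, with $\bbeta=(\beta_1,\beta_2,k)$ and $\bSigma=(\Sigma_1\otimes\Sigma_2)@\bbeta$: it arises from $C_1 \bisim*{R}_{1\,\Sigma_1,\beta_1} C_1'$ and $C_2 \bisim*{R}_{2\,\Sigma_2,\beta_2} C_2'$, together with, when $k=0$, the synchronisation invariant $C_1.\Phi=C_2.\Phi$ and $C_1'.\Phi=C_2'.\Phi$, and, when $k=i\in\{1,2\}$, the requirement that $C_{\co i}$ and $C_{\co i}'$ be opponent configurations. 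I would then record that compatibility of the pair tuple follows from compatibility of the two component tuples (\cref{def:compatible-conf}) plus these side conditions, that $\bSigma@\bbeta=\bSigma$ holds by construction, and that $\bSigma$ is a well‑formed pair continuation graph by the earlier lemma on well‑formedness of $\Sigma_1\otimes\Sigma_2$ (inherited nominal closure of $\Sigma_1,\Sigma_2$).

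Next I would dispatch every pair challenge $\bconf{C_1}{C_2}\realtrans{\eta,m}\bconf{\hat C_1}{\hat C_2}$ to the component LTS(es). If $m=i\in\{1,2\}$ the challenge is a single component move $C_i\realtrans{\eta}\hat C_i$ with $C_{\co i}=\hat C_{\co i}$ opponent; apply the matching clause of $\bisim{R}_i$ to obtain $C_i'\wtrans{\eta}\hat C_i'$, keep $C_{\co i}'=\hat C_{\co i}'$ fixed (the pair‑LTS side condition holds since it is opponent), and observe the target tuple is back in $\bisim{R}_1\otimes\bisim{R}_2$ with the $i$‑th slot and the graph updated exactly as $\bisim{R}_i$ updates $\Sigma_i$. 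If $m=0$ the challenge is a synchronised move carrying the same label on both slots; apply the corresponding clauses of $\bisim{R}_1$ and $\bisim{R}_2$ separately and recompose the two defender weak transitions into one $\wtrans{\eta,0}$. Clause~0 is immediate: $\bC{\downarrow}$ holds iff both slots terminate, and this is preserved slotwise by $\bisim{R}_1$ and $\bisim{R}_2$.

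The graph bookkeeping is then routine given the three rules for $\otimes$ and the definition of the restriction $@$. For \textsc{OpCall} (clause~3) I would check that $\bSigma[\bbeta'\mapsto(\Ed_1,\Ed_2,\bbeta^k)]$ equals $(\Sigma_1'\otimes\Sigma_2')@\bbeta'$, where $\Sigma_i'$ is the extension of $\Sigma_i$ carried out inside $\bisim{R}_i$'s own \textsc{OpCall} step: the fresh pair edge is precisely the product of the fresh component edges (third component $0$ when $m=0$, and $i$ with the untouched side copied when $m=i$), and cutting down to the reachable subgraph commutes with this. For \textsc{PropRet} (clause~4) an edge $\bbeta^k\toSig[\bSigma]{\Edb_1,\Edb_2}\bbeta'$ decomposes, by the $\otimes$‑rules, into component edges $\beta_i\toSig[\Sigma_i]{\Edb_1,\Edb_2}\beta_i'$ for the relevant $i$; feeding these to the \textsc{PropRet} clauses of the $\bisim{R}_i$ yields the matching defender returns and places the $\chi$‑substituted configurations back in $\bisim{R}_1\otimes\bisim{R}_2$ under $(\Sigma_1\otimes\Sigma_2)@\bbeta'=\bSigma@\bbeta'$.

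I expect the crux to be the synchronised case $m=0$: because $\bisim{R}_1$ and $\bisim{R}_2$ choose their weak responses independently, I must argue that the two responses stay aligned on their $\Phi$‑components, so that they recompose into a genuine $\wtrans{\eta,0}$ and the target tuple again satisfies $\hat C_1'.\Phi=\hat C_2'.\Phi$. This is exactly where the invariant $C_1'.\Phi=C_2'.\Phi$ carried by the product relation is used, together with determinacy of the stackless LTS (cf.\ \cref{def:beta-move} and the determinacy imposed by the \textsc{Response} rule): a visible move out of a configuration is uniquely determined up to the choice of fresh names, and internal reduction is deterministic, so two configurations with identical term (resp.\ continuation) issue identical labelled challenges and identical reduction sequences, whence their defender responses proceed in lockstep and again agree on $\Phi$. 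Freshness is handled as usual\,---\,when $\eta\in\{\lopapp{i}{D\hole[\vec\alpha]},\lopret{D\hole[\vec\alpha]}\}$ I would first invoke \cref{lem:lts-permute} to rename $\vec\alpha$ to names fresh for both slots of $\bconf{C_1'}{C_2'}$. This alignment argument, rather than the graph combinatorics, is the step I anticipate needing the most care.
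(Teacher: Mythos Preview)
Your outline matches the paper's brief proof: split on the index $m$ of the pair move, dispatch $(\eta,i)$ to the $i$-th component simulation, handle $(\eta,0)$ by running both component simulations, and invoke the graph lemmas of \cref{sec:pair-lemmas} for the $\bSigma$-bookkeeping. You also correctly single out the synchronised $m=0$ case as the delicate one, which the paper glosses as ``the constituent configurations can perform the same moves.''

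The gap is in your proposed resolution of that crux. You argue that ``two configurations with identical term \dots issue \dots identical reduction sequences,'' but this is false in general: the $\tau$-rule reads the store, so $\pconf{}{\Gamma_1'}{s_1'}{e}$ and $\pconf{}{\Gamma_2'}{s_2'}{e}$ with $s_1'(l)\neq s_2'(l)$ for some $l\in\fl{e}$ reduce to different expressions, and the product relation places no constraint on stores beyond $\Phi$-agreement. So determinacy alone does not give lockstep on the defender side. The paper's sketch is no more detailed on this point; to make the step go through for arbitrary weak simulations you would need an additional ingredient. Two natural fixes: (i) observe that the only use of the lemma is with $\bisim{R}_i={\simil}$, which is closed under $\tau$ on the challenger side (the PDNF analogue of \cref{lem:closures}), so the defender can respond with zero steps and keep $C_1'.\Phi=C_2'.\Phi$ intact; or (ii) carry the separating-conjunction invariant ($\fl{e,\Ed}\subseteq L=\dom{s}$) explicitly, so that the shared $\Phi$ only reads the shared part of the store and your determinacy argument is then valid. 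Either way, the claim as you wrote it (``identical reduction sequences'') needs one of these qualifications.
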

\begin{proof}
  By showing that $(\eta,i)$ transitions are matched because $\bisim{R}_i$  is a simulation (for $i\in\{1,2\}$), and $\eta,0$ transitions are matched because both $\bisim{R}_1$, $\bisim{R}_2$ are simulations and the constituent configurations can perform the same moves.
  In addition we use simple lemmas to show that the extension and reachability operations of the pair continuation graph is captured by the construction of the product relation (cf.~\cref{sec:pair-lemmas}).
\qed\end{proof}


\subsection{Pair Configuration Lemmas}
\label{sec:pair-lemmas}

\begin{lemma}
  Suppose $\Sigma_i = \Sigma_i@\beta_i (i\in\{1,2\})$ are valid call graphs.
  Then $(\beta_1,\beta_2,k)    \in \dom{\Sigma_1\otimes\Sigma_2}$.
\qed\end{lemma}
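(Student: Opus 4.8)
The plan is to recast the claim as a reachability statement inside $\Sigma_1\otimes\Sigma_2$ — that there is a path of product edges from $(\beta_1,\beta_2,k)$ down to $(\diamond,\diamond,0)$ — and to prove it by induction on the combined length of the reachability witnesses for $\beta_1$ and $\beta_2$. First I would unpack the hypotheses: since each $\Sigma_i$ is a valid continuation graph, the \emph{Reachability} clause of \cref{def:betaSigma} supplies stacks with $\beta_i\toSigs[\Sigma_i]{K_1^i,K_2^i}\diamond$; and since $\Sigma_i=\Sigma_i@\beta_i$, every edge of $\Sigma_i$ lies on a path issuing from $\beta_i$, hence $\beta_i\in\dom{\Sigma_i}\cup\{\diamond\}$, and each $\Sigma_i$ contains the self-loop $\diamond\toSig[\Sigma_i]{\diamond,\diamond}\diamond$ because it must include $\Sigma_\diamond$. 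The inductive definition of the product domain then tells us that a triple belongs to $\dom{\Sigma_1\otimes\Sigma_2}$ exactly when it has a product-path to $(\diamond,\diamond,0)$, so it suffices to build such a path.

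The induction is on $n=|K_1^1|+|K_1^2|$, with the statement strengthened to range over all vertices $\gamma_i$ on the reachability path of $\beta_i$ and all $k$ compatible with the well-formedness conditions on pair entry points. In the base case $\gamma_1=\gamma_2=\diamond$: the base rule of the product construction yields $(\diamond,\diamond,0)$, and one application of the third product rule against the $\diamond$-self-loop of $\Sigma_k$ yields $(\diamond,\diamond,k)$ for $k\in\{1,2\}$. In the inductive step, if $k=i\in\{1,2\}$ I would peel the first edge $\gamma_i\toSig[\Sigma_i]{\Ed,\Ed'}\gamma_i'$ of the (nonempty) reachability path of $\gamma_i$, keep coordinate $\co{i}$ fixed, invoke the induction hypothesis on the triple $(\gamma_1'',\gamma_2'',k')$ with $\gamma_i''=\gamma_i'$, $\gamma_{\co{i}}''=\gamma_{\co{i}}$, and $k'$ taken to be $i$ while $\gamma_i''\ne\diamond$ and switched to $0$ or $\co{i}$ once coordinate $i$ has reached $\diamond$, and then close with the third product rule; the side conditions $\gamma_{\co{i}}''=\gamma_{\co{i}}$ and the \emph{Top and Divergence} constraint (which lets a $(\diamond,\diamond)$-labelled edge only point at $(\diamond,\diamond,0)$) are checked directly. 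If $k=0$ and not both $\gamma_i$ are $\diamond$, I would instead peel synchronised edges $\gamma_i\toSig[\Sigma_i]{\Ed,\Ed'}\gamma_i'$ out of the two graphs, apply the induction hypothesis on $(\gamma_1',\gamma_2',0)$, and close with the second product rule.

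The $k=0$ step is the crux, since it is the only case forcing the two graphs to move in lockstep: I must justify that $\Sigma_1$ and $\Sigma_2$ genuinely offer matching edges $\gamma_i\toSig[\Sigma_i]{\Ed,\Ed'}\gamma_i'$ carrying a common label and with $\gamma_1.j.e=\gamma_2.j.e$ for $j\in\{1,2\}$. This is where $\Sigma_i=\Sigma_i@\beta_i$ earns its keep: restricting each graph to the fragment reachable from its own entry point, together with the $k=0$ compatibility condition (the two underlying configurations share the same $\Phi$), pins the reachable-from-$\beta_i$ fragments of $\Sigma_1$ and $\Sigma_2$ to the same edge/label shape, so the synchronised peeling is always available. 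The remaining obligations — that each produced pair edge satisfies the \emph{Top and Divergence} and \emph{Nominal closure} conditions of a pair continuation graph, and that the $k$-labels stay consistent along the constructed path — are routine and mirror the bookkeeping in the proof of \cref{lem:bsound}.
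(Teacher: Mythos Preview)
The paper gives no proof for this lemma (it ends with \qed immediately), so there is nothing to compare against; I evaluate your argument on its own.

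Your treatment of $k\in\{1,2\}$ is essentially correct: peel edges from coordinate $i$ along its reachability path to $\diamond$ using the third product rule, then switch to coordinate $\co{i}$ and descend to $(\diamond,\diamond,0)$, which is in the domain by the base rule. The self-loop $\diamond\toSig[\Sigma_i]{\diamond,\diamond}\diamond$ (present because every valid $\Sigma_i$ contains $\Sigma_\diamond$) lets you glue the two descents together.

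The $k=0$ case has a genuine gap. To place $(\beta_1,\beta_2,0)$ in $\dom{\Sigma_1\otimes\Sigma_2}$ for $\beta_1,\beta_2\neq\diamond$ you must fire the second product rule, which demands edges $\beta_i\toSig[\Sigma_i]{\Ed,\Ed'}\gamma_i$ in \emph{both} graphs carrying the \emph{same} label $(\Ed,\Ed')$, together with $\beta_1.j.e=\beta_2.j.e$. Nothing in the lemma's hypotheses guarantees this: $\Sigma_1$ and $\Sigma_2$ are arbitrary valid graphs satisfying $\Sigma_i=\Sigma_i@\beta_i$, with no assumed relationship between them. Your appeal to a ``$k=0$ compatibility condition'' imports a constraint from the definition of compatible pair \emph{tuples}, which is not among the premises of this lemma. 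Even granting that $\beta_1$ and $\beta_2$ encode configurations with the same $\Phi$, the two graphs may still have entirely different outgoing edges from those nodes --- they could have been built along unrelated histories. The assertion that reachability-from-$\beta_i$ ``pins the fragments to the same edge/label shape'' is therefore unjustified. Either the lemma is meant only for $k\in\{1,2\}$, or it tacitly assumes additional hypotheses relating $\Sigma_1,\Sigma_2,\beta_1,\beta_2$ that hold in the up-to-separation context; in either case your $k=0$ argument does not go through as written.
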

\begin{lemma}
  Suppose
  \begin{align*}
    \Sigma_i'              & = \Sigma_i[\beta_i' \mapsto (E_a, E_b, \beta_i)] & (i\in\{1,2\}) \\
    \Sigma_i               & = \Sigma_i@\beta_i                               & (i\in\{1,2\})
  \end{align*}
  Then
  \begin{align*}
    (\beta_1',\beta_2',k')                             & \in \dom{\Sigma_1'\otimes\Sigma_2'} \qquad\text{and}\\
    (\Sigma_1'\otimes\Sigma_2')@(\beta_1',\beta_2',k') & =
    ((\Sigma_1\otimes\Sigma_2)@(\beta_1,\beta_2,k))[(\beta_1',\beta_2',k')\mapsto(E_a,E_b,k)]
  \end{align*}
\qed\end{lemma}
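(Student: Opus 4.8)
The plan is to establish the two claims separately, working from the inductive definition of the product graph $\otimes$, the definitions of graph extension $[\cdot\mapsto\cdot]$ and restriction $@$, and the hypothesis $\Sigma_i=\Sigma_i@\beta_i$. Two elementary facts will be used throughout. First, \emph{monotonicity of $\otimes$}: if $\Sigma_1\subseteq\Sigma_1'$ and $\Sigma_2\subseteq\Sigma_2'$ then $\Sigma_1\otimes\Sigma_2\subseteq\Sigma_1'\otimes\Sigma_2'$, which follows by a routine induction on the construction of $\Sigma_1\otimes\Sigma_2$, since each of the three product rules mentions only factor edges and domain elements that persist under the enlargement. Second, the \emph{freshness} of the pushed entry point: $\Sigma_i'$ differs from $\Sigma_i$ exactly by the orbit of the single edge $\beta_i'\toSig{E_a,E_b}\beta_i$, and $\beta_i'$ is the entry point reached by the \textsc{OpCall} push, hence $\beta_i'\notin\dom{\Sigma_i}$ and $\beta_i'$ occurs in no edge of $\Sigma_i$. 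This freshness invariant is exactly what the pair bisimulation maintains at \textsc{OpCall} steps, so it is available wherever this lemma is applied, notably in \Cref{lem:bsound}.

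For the first claim, by the preceding lemma $(\beta_1,\beta_2,k)\in\dom{\Sigma_1\otimes\Sigma_2}$, hence also $(\beta_1,\beta_2,k)\in\dom{\Sigma_1'\otimes\Sigma_2'}$ by monotonicity. Since each $\Sigma_i'$ contains $\beta_i'\toSig{E_a,E_b}\beta_i$, I apply product rule~2 when $k'=0$, or product rule~3 when $k'=i$, with these edges. The side conditions hold because the step producing $(\beta_1',\beta_2',k')$ is the \textsc{OpCall} step of the pair bisimulation: when $j=0$ the two reached proponent configurations agree on their evaluated term, giving $\beta_1'.j.e=\beta_2'.j.e$ for $j\in\{1,2\}$, while when $j=i$ the $\bar i$-th component is untouched, giving $\beta_{\bar i}'=\beta_{\bar i}$. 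This produces the edge $(\beta_1',\beta_2',k')\toSig[\Sigma_1'\otimes\Sigma_2']{E_a,E_b}(\beta_1,\beta_2,k)$, whence $(\beta_1',\beta_2',k')\in\dom{\Sigma_1'\otimes\Sigma_2'}$.

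For the second claim I prove the two inclusions, after isolating the key sub-claim that $\Sigma_1'\otimes\Sigma_2'=(\Sigma_1\otimes\Sigma_2)\cup N$, where $N$ is the orbit of $(\beta_1',\beta_2',k')\toSig{E_a,E_b}(\beta_1,\beta_2,k)$. This sub-claim is proved by induction on the construction of $\Sigma_1'\otimes\Sigma_2'$: applying a product rule to only old factor edges reproduces an edge of $\Sigma_1\otimes\Sigma_2$; any application using a new factor edge must use the entire edge $\beta_i'\toSig{E_a,E_b}\beta_i$ (the rules consume single factor edges, not paths), and by freshness of $\beta_i'$ this can only instantiate rules~2 or~3 so as to yield an edge of $N$. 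The right-hand side of the lemma is $\big((\Sigma_1\otimes\Sigma_2)@(\beta_1,\beta_2,k)\big)$ extended by $N$. For ``$\supseteq$'': the $N$-edges lie in $(\Sigma_1'\otimes\Sigma_2')@(\beta_1',\beta_2',k')$ as they are directly out of $(\beta_1',\beta_2',k')$ and the restriction is orbit-closed, while the edges of $(\Sigma_1\otimes\Sigma_2)@(\beta_1,\beta_2,k)$ lie there because they are edges of $\Sigma_1'\otimes\Sigma_2'$ reachable from $(\beta_1',\beta_2',k')$ via the $N$-edge into $(\beta_1,\beta_2,k)$ followed by their own reachability witness. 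For ``$\subseteq$'': an edge $e$ of $(\Sigma_1'\otimes\Sigma_2')@(\beta_1',\beta_2',k')$ is reachable from $(\beta_1',\beta_2',k')$; by freshness, $(\beta_1',\beta_2',k')$ mentions $\beta_i'$, so every path out of it begins with an $N$-edge landing in the orbit of $(\beta_1,\beta_2,k)$, and after this first step no edge mentions any $\beta_i'$, hence all remaining edges lie in $\Sigma_1\otimes\Sigma_2$ and are reachable there from $(\beta_1,\beta_2,k)$; thus $e\in N$ or $e\in(\Sigma_1\otimes\Sigma_2)@(\beta_1,\beta_2,k)$. The hypothesis $\Sigma_i=\Sigma_i@\beta_i$ is what ensures this last restriction drops nothing spurious.

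I expect the main obstacle to be the bookkeeping behind the sub-claim $\Sigma_1'\otimes\Sigma_2'=(\Sigma_1\otimes\Sigma_2)\cup N$ and, relatedly, the reachability argument in the ``$\subseteq$'' direction: making rigorous the statement that the extension adds the orbit of exactly one product edge and creates no new reachability except through it. All of this hinges on the single non-syntactic input $\beta_i'\notin\dom{\Sigma_i}$; once that is fixed, the rest is a careful but mechanical case analysis over the three product rules and the definitions of $@$ and $[\cdot\mapsto\cdot]$.
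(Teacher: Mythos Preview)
The paper records this lemma with a bare \qed and gives no argument, so there is no paper proof to compare against; I can only assess the proposal on its own merits.

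There is a genuine gap. Your whole plan rests on the claim that $\beta_i'\notin\dom{\Sigma_i}$, which you justify by saying $\beta_i'$ ``is the entry point reached by the \textsc{OpCall} push, hence'' fresh. That inference is wrong, and in fact runs against the central mechanism of the paper. The continuation-graph abstraction works precisely because entry points recur: when the bisimulation game replays an Opponent call, the resulting $\beta'=\betais{C_1',C_2'}$ may already lie in $\dom\Sigma$, and the extension $\Sigma[\beta'\trans{\Ed_1,\Ed_2}\beta]$ simply adds one more outgoing edge at an existing node. Nominal closure makes this unavoidable: once any permutation of $\beta_i'$ is present, so is $\beta_i'$ itself. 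Example~\ref{ex:revisited} exhibits exactly this behaviour --- after two calls the graph already contains $\beta_{l,\alpha}$ for every name $\alpha$, so the next call lands on a non-fresh entry point. Once freshness fails, so does your key sub-claim $\Sigma_1'\otimes\Sigma_2'=(\Sigma_1\otimes\Sigma_2)\cup N$: if $\beta_1'$ already has outgoing edges in $\Sigma_1$, product rule~3 can pair them with the new $\Sigma_2'$-edge out of $\beta_2'$, producing product edges that lie neither in $\Sigma_1\otimes\Sigma_2$ (they mention $\beta_2'$, which may be absent from $\Sigma_2$) nor in the single-edge orbit $N$. Your ``$\subseteq$'' direction then collapses, since paths out of $(\beta_1',\beta_2',k')$ need not begin with an $N$-edge.

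A corrected argument must discard the freshness shortcut and instead exploit the hypothesis $\Sigma_i=\Sigma_i@\beta_i$ directly to control reachability in the product, tracking along any product path from $(\beta_1',\beta_2',k')$ which of the consumed factor edges are new versus old, without assuming anything about which factor nodes are new.
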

\begin{lemma}
  Suppose
  \begin{align*}
    \Sigma_i                & = \Sigma_i@\beta_i             & (i\in\{1,2\}) \\
    (\beta_i',E_{1a},E_{2b},\beta_i) & \in \Sigma_i          & (i\in\{1,2\}) \\
  \end{align*}
  Then
  \begin{align*}
    ((\Sigma_1\otimes\Sigma_2)@(\beta_1',\beta_2',k'))@(\beta_1,\beta_2,k) =
    (\Sigma_1\otimes\Sigma_2)@(\beta_1,\beta_2,k)
  \end{align*}
\qed\end{lemma}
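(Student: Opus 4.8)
\noindent\emph{Proof plan.}
The plan is to isolate a general idempotence property of the restriction operator $\cdot@\cdot$ on (pair) continuation graphs and then reduce the statement to a one-step reachability check inside the product graph. First I would record the following general fact, for an arbitrary well-formed (hence equivariant) continuation graph $G$ and entry points $\gamma,\gamma'$ such that $\gamma$ is reachable from $\gamma'$ in $G$ (possibly $\gamma=\gamma'$): then $(G@\gamma')@\gamma = G@\gamma$. Writing $\mathrm{Reach}_G(\gamma)$ for the set of entry points reachable from $\gamma$ in $G$ (with $\gamma$ included), the graph $G@\gamma$ is, up to the permutation-closure wrapper of \Cref{def:sigma-ext-restr}, exactly the set of edges of $G$ whose source lies in $\mathrm{Reach}_G(\gamma)$; since every graph in play is equivariant, the wrapper is inert and it suffices to track sources. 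Two observations then close this fact: (i) if $\gamma$ is reachable from $\gamma'$ then $\mathrm{Reach}_G(\gamma)\subseteq\mathrm{Reach}_G(\gamma')$, so every edge of $G@\gamma$ already survives the outer restriction to $G@\gamma'$; and (ii) any path from $\gamma$ to some $\delta$ inside $G$ stays inside $G@\gamma'$, because each of its vertices lies in $\mathrm{Reach}_G(\gamma)\subseteq\mathrm{Reach}_G(\gamma')$, so reachability from $\gamma$ computed inside $G@\gamma'$ coincides with reachability from $\gamma$ computed inside $G$. Combining (i) and (ii), $(G@\gamma')@\gamma$ has precisely the edges of $G$ with source in $\mathrm{Reach}_G(\gamma)$, i.e.\ it equals $G@\gamma$.

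Next I would instantiate this with $G=\Sigma_1\otimes\Sigma_2$, $\gamma'=(\beta_1',\beta_2',k')$ and $\gamma=(\beta_1,\beta_2,k)$, so that it only remains to exhibit a $G$-path from $\gamma'$ to $\gamma$. From $\Sigma_i=\Sigma_i@\beta_i$ for $i\in\{1,2\}$, the first lemma of \Cref{sec:pair-lemmas} yields $(\beta_1,\beta_2,k)\in\dom{\Sigma_1\otimes\Sigma_2}$, which is the domain-membership premise of the product-graph construction rule. Feeding that rule the two hypothesised edges $(\beta_i',E_{1a},E_{2b},\beta_i)\in\Sigma_i$ — which carry the same pair of continuation labels, as the rule requires — together with the compatibility side condition $\beta_1'.j.e=\beta_2'.j.e$ for $j\in\{1,2\}$ (which holds by well-formedness of the inputs to the construction), produces the single product edge $(\beta_1',\beta_2',0)\toSig[\Sigma_1\otimes\Sigma_2]{E_{1a},E_{2b}}(\beta_1,\beta_2,k)$; in particular $k'=0$ and $\gamma$ is reachable from $\gamma'$ in one step. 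The general restriction fact then gives the claimed equality directly.

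The main obstacle is less the high-level argument than two pieces of bookkeeping. The first is making observation (ii) above fully precise: that the restriction operator, even when applied twice, preserves the \emph{entire} subgraph reachable from the new root; this rests on transitivity of reachability together with the fact that the stack witnesses $K_1,K_2$ in the inductive definition of $\toSigs{}$ recombine (by prepending continuations) along a composed path. The second is discharging the side conditions of the \emph{partial} product-graph construction — domain membership (via the first lemma of \Cref{sec:pair-lemmas}) and the top/divergence and $e$-matching conditions — so that the product edge genuinely exists. The permutation-closure subtlety of $\cdot@\cdot$ is real in principle but vanishes here since all graphs involved are equivariant. I expect the written-out proof to be short once the general restriction fact is stated as an auxiliary lemma.
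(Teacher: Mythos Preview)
The paper gives no proof for this lemma (it carries only a \qed), so there is no argument to compare against; your high-level plan---prove an idempotence law $(G@\gamma')@\gamma = G@\gamma$ whenever $\gamma$ is reachable from $\gamma'$, then exhibit the needed reachability in the product graph---is exactly the right shape and is presumably what the authors had in mind.

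Two places in your reachability step need tightening. First, the side condition $\beta_1'.j.e=\beta_2'.j.e$ of the $j{=}0$ product rule is \emph{not} a consequence of ``well-formedness of the inputs'': it relates entry points across the two graphs $\Sigma_1,\Sigma_2$, and nothing in the individual well-formedness of $\Sigma_1$ or $\Sigma_2$ forces it. What does force it is domain membership: the only product rule that can place a node with tag $0$ (other than $(\diamond,\diamond,0)$) in $\dom{\Sigma_1\otimes\Sigma_2}$ is the second rule, which carries that side condition as a premise. So argue from $(\beta_1',\beta_2',0)\in\dom{\Sigma_1\otimes\Sigma_2}$, not from well-formedness. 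Second, your ``in particular $k'=0$'' does not address $k'\in\{1,2\}$, which the lemma as stated also covers. For those tags the $j{=}0$ rule is unavailable, but the third product rule gives a two-step path: for $k'=1$ take $(\beta_1',\beta_2',1)\to(\beta_1,\beta_2',2)\to(\beta_1,\beta_2,k)$ using first the $\Sigma_1$-edge then the $\Sigma_2$-edge (each step needing only that its target is already in the domain, which follows from the first lemma of \cref{sec:pair-lemmas} and the just-constructed intermediate node); the case $k'=2$ is symmetric. This route avoids the $e$-matching condition entirely. With these two fixes your argument goes through for all $k'$.
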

\begin{lemma}
  Suppose $\bbeta\in\dom\bSigma$; then:
  \begin{enumerate}
    \item $\bmerge{\bSigma}@\bmerge{\bbeta} = \bmerge{\bSigma@\bbeta}$
    \item $\bmerge{\bSigma}[\bmerge{\bbeta'} \mapsto (\Ed_1,\Ed_2,\bmerge{\bbeta})] = \bmerge{\bSigma[\bbeta' \mapsto (\Ed_1,\Ed_2,\bbeta)]}$\qed
  \end{enumerate}
\end{lemma}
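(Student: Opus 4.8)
The plan is to isolate three structural facts about the merging operation $\bmerge{\cdot}$ and read off both equalities from them. \textbf{(i)} On graphs $\bmerge{\cdot}$ acts \emph{edge-wise} and preserves continuation labels: $\bmerge{\bbeta_a}\trans{\Ed_1,\Ed_2}\bmerge{\bbeta_b}$ is an edge of $\bmerge{\bSigma}$ exactly when $\bbeta_a\trans{\Ed_1,\Ed_2}\bbeta_b$ is an edge of $\bSigma$; hence $\bmerge{\cdot}$ distributes over unions of graphs. \textbf{(ii)} $\bmerge{\cdot}$ commutes with permutations, $\bmerge{\pi\cdot x}=\pi\cdot\bmerge{x}$ for $x$ a pair entry point or an edge, because every constituent of the merge — the separating conjunctions $\sepconj$ on stores and environments, and the shared parts $\Gamma_3,s_3,\dots$ — is built from equivariant operations and the disjointness/freshness side conditions that make $\bmerge{\cdot}$ defined are stable under permutation. \textbf{(iii)} $\bmerge{\cdot}$ preserves and reflects reachability: $\bbeta\toSigs[\bSigma]{K_1,K_2}\bbeta_0$ holds iff $\bmerge{\bbeta}\toSigs[\bmerge{\bSigma}]{K_1,K_2}\bmerge{\bbeta_0}$. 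I would record (i)--(iii) as a short preliminary sub-lemma, checking along the way that $\bmerge{\cdot}$ stays defined on every sub-object used (in particular $\bSigma@\bbeta$ only drops edges, so it is mergeable whenever $\bSigma$ is).

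Part~(2) is then immediate. By \cref{def:sigma-ext-restr} (lifted to pair graphs) $\bSigma[\bbeta'\mapsto(\Ed_1,\Ed_2,\bbeta)]=\bSigma\cup\orb{\bbeta'\trans{\Ed_1,\Ed_2}\bbeta}$; applying $\bmerge{\cdot}$, and using distribution over unions from~(i) together with the identity $\bmerge{\orb{\bbeta'\trans{\Ed_1,\Ed_2}\bbeta}}=\orb{\bmerge{\bbeta'}\trans{\Ed_1,\Ed_2}\bmerge{\bbeta}}$ (a one-line consequence of~(ii)), we obtain $\bmerge{\bSigma}\cup\orb{\bmerge{\bbeta'}\trans{\Ed_1,\Ed_2}\bmerge{\bbeta}}$, which is exactly $\bmerge{\bSigma}[\bmerge{\bbeta'}\mapsto(\Ed_1,\Ed_2,\bmerge{\bbeta})]$.

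Part~(1) follows by double inclusion against the definition of the restriction $\bSigma@\bbeta$ (edges of $\bSigma$ whose source is reachable from $\bbeta$, closed under permutation). For $\bmerge{\bSigma@\bbeta}\subseteq\bmerge{\bSigma}@\bmerge{\bbeta}$: an element of the left-hand side is $\pi\cdot(\bmerge{\bbeta_a}\trans{\Ed_1,\Ed_2}\bmerge{\bbeta_b})$ for an edge $\bbeta_a\trans{\Ed_1,\Ed_2}\bbeta_b$ of $\bSigma$ with $\bbeta\toSigs[\bSigma]{K_1,K_2}\bbeta_a$ (pulling $\pi$ out via~(ii)); by the forward half of~(iii) $\bmerge{\bbeta}\toSigs[\bmerge{\bSigma}]{K_1,K_2}\bmerge{\bbeta_a}$, and $\bmerge{\bbeta_a}\trans{\Ed_1,\Ed_2}\bmerge{\bbeta_b}$ is an edge of $\bmerge{\bSigma}$ by~(i), so the element lies in $\bmerge{\bSigma}@\bmerge{\bbeta}$. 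For the converse: an element of $\bmerge{\bSigma}@\bmerge{\bbeta}$ is $\pi\cdot(\gamma'\trans{\Ed_1,\Ed_2}\gamma'')$ with $\bmerge{\bbeta}\toSigs[\bmerge{\bSigma}]{K_1,K_2}\gamma'$ and $\gamma'\trans{\Ed_1,\Ed_2}\gamma''$ an edge of $\bmerge{\bSigma}$; by~(i) the latter is $\bmerge{\bbeta_a}\trans{\Ed_1,\Ed_2}\bmerge{\bbeta_b}$ for some edge of $\bSigma$, and by the reflecting half of~(iii) the path $\bmerge{\bbeta}\toSigs[\bmerge{\bSigma}]{K_1,K_2}\bmerge{\bbeta_a}$ lifts to $\bbeta\toSigs[\bSigma]{K_1,K_2}\bbeta_a$, so $\bbeta_a\trans{\Ed_1,\Ed_2}\bbeta_b\in\bSigma@\bbeta$ and the element lies in $\bmerge{\bSigma@\bbeta}$.

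The crux is the \emph{reflecting} half of~(iii): a priori $\bmerge{\cdot}$ could identify two pair entry points, and then a $\bmerge{\bSigma}$-path from $\bmerge{\bbeta}$ need not lift to a $\bSigma$-path from $\bbeta$, since consecutive lifted endpoints might merge to the same node without being joined by an edge. I expect this to be dispatched by noting that the pair graphs to which the lemma is applied arise from the product construction $\Sigma_1\otimes\Sigma_2$ (and its updates), on which $\bmerge{\cdot}$ is injective on reachable nodes: the tag $k\in\{0,1,2\}$ is recoverable from the polarity and active expression of the merged configuration, and the splits of the environment and store into the $\Gamma_i,s_i$ are pinned down by the $\sepconj$-annotations that $\otimes$ carries along. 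An alternative that avoids this invariant is to carry a chosen $\bSigma$-preimage of each edge explicitly through an induction on the length of the $\bmerge{\bSigma}$-path, using that any pair entry point merging to a given $\gamma'$ offers the same outgoing labels. Either way, everything apart from~(iii) is the routine bookkeeping of permutation wrappers and of the partiality side conditions, which is why the statement is recorded without proof.
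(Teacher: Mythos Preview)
The paper records this lemma with a bare \qed and gives no proof at all, so there is nothing to compare against at the level of argument. Your proposal is considerably more detailed than what the paper offers, and the overall architecture---isolating edge-wise action, equivariance, and preservation/reflection of reachability, then reading off Part~(2) from distribution over unions and Part~(1) by double inclusion---is the natural way to proceed.

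You are right that the only non-routine point is the reflecting direction of your fact~(iii), and your diagnosis is accurate: if $\bmerge{\cdot}$ identified two distinct pair entry points of $\bSigma$, a $\bmerge{\bSigma}$-path from $\bmerge{\bbeta}$ could splice together pieces that do not lift to a single $\bSigma$-path from $\bbeta$, and Part~(1) would fail. The lemma as stated is therefore only correct under an injectivity hypothesis on $\bmerge{\cdot}$ restricted to the nodes of $\bSigma$, which the paper leaves implicit. Your first proposed resolution---that in the intended application $\bSigma$ arises as (a restriction/extension of) a product $\Sigma_1\otimes\Sigma_2$, on which the $\sepconj$-annotated splits and the tag $k$ are recoverable from the merged node---is exactly the invariant the paper is silently relying on (cf.\ \cref{def:uptosep} and the proof of \cref{lem:utsepconj-sound}, where merging is only ever applied to such graphs). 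Your alternative path-lifting argument would also work but needs the same injectivity on nodes to pin down the lifted endpoint at each step, so it does not actually bypass the issue. In short: your plan is correct, and you have identified a genuine gap in the paper's statement that you would need to close by recording the injectivity hypothesis explicitly.
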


\section{Up to Separation}

\subsection{Soundness of Up to Separation}
\label{appx:uptosep-sound}

\begin{proof}(\cref{lem:utsepconj-sound})
  Let
  \begin{gather*}
    C_1 \sepconj[I][k] C_2 \utsepconj*{\bisim{R}}_{\Sigma,\bmerge{\bbeta_{12}}} C_1' \sepconj[I][k] C_2'\\
    C_1 = \pconf{A_1}{\Gamma_1\sepconj[I,L]\Gamma}{s_1\sepconj[L] s}{e_1} \qquad
    C_2 = \oconf{A_2}{\Gamma_2\sepconj[I,L]\Gamma}{s_2\sepconj[L] s}{\Ed_2} \\
    C_1' = \pconf{A_1'}{\Gamma_1'\sepconj[I,L']\Gamma}{s_1'\sepconj[L'] s'}{e_1'} \qquad
    C_2' = \oconf{A_2'}{\Gamma_2'\sepconj[I,L']\Gamma}{s_2'\sepconj[L'] s'}{\Ed_2'}
  \end{gather*}
  and
  $C_i \simil_{\Sigma_i,\beta_i} C_i'$ ($i\in\{1,2\}$)
  and
  $\Sigma=\bmerge{(\Sigma_1\otimes\Sigma_2)@\bbeta_{12}}$
  and
  $\bbeta_{12}= (\beta_1,\beta_2,k)$.
  We show the case where $k=1$ and the two interesting subcases of $\WP$:
  
  \paragraph{$\bullet$ Case 
  $C_1 \sepconj[I][1] C_2 \realtrans{\eta} C_{32}$
  and
  $\eta=\lpropret{D}$
  and
  $(\Edb_3,\Edb_3',\bmerge{\bbeta_{32}}) \in \Sigma(\bmerge{\beta_{12}})$:}~\\
  We have
  $C_1 \realtrans{\eta} C_3$ and
  \begin{gather*}
    C_3 = \oconf{A_1}{\Gamma_1,\Gamma_3\sepconj[I,L]\Gamma}{s_1\sepconj[L] s}{\chi}
    \qquad
    C_{32} = C_3\sepconj[I][k] C_2
    \\
    (\Edb_3,\Edb_3',\beta_3) \in \Sigma_1(\beta_1)
    \qquad
    \bbeta_{32}=(\beta_3,\beta_2, k_{32})
  \end{gather*}
  We then have
  \begin{gather*}
    \bconf{C_1}{C_2} \realtrans{\eta,1} \bconf{C_3}{C_2}
    \qquad
    (\Edb_3,\Edb_3',\bbeta_{32}) \in (\Sigma_1\oplus\Sigma_2)(\bbeta_{12})
  \end{gather*}
  By \cref{lem:bsound}, and $\bconf{C_1}{C_2} \mathrel{(\simil \otimes \simil)_{(\Sigma_1\otimes\Sigma_2)@\bbeta_{12},~\bbeta_{12}}} \bconf{C_1'}{C_2'}$ :
  \begin{gather*}
    \bconf{C_1'}{C_2'} \wtrans{\eta,1} \bconf{C_3'}{C_2}
    \qquad
    C_3' = \oconf{A_1'}{\Gamma_1',\Gamma_3'\sepconj[I,L']\Gamma'}{s_1'\sepconj[L] s'}{\chi}
    \\
    \bconf{C_3}{C_2}\sub{\chi}{\Edb_3}=
    \bconf{C_3\sub{\chi}{\Edb_3}}{C_2}
    \mathrel{(\simil \otimes \simil)_{(\Sigma_1\otimes\Sigma_2)@\bbeta_{32},~\bbeta_{32}}}
    \bconf{C_3'\sub{\chi}{\Edb_3'}}{C_2}=
    \bconf{C_3'}{C_2}\sub{\chi}{\Edb_3'}
  \end{gather*}
  Moreover,
  \begin{gather*}
    {C_1'}\sepconj[I][1]{C_2'} \wtrans{\eta} {C_3'}\sepconj[I][1]{C_2'}
    \\
    C_3\sub{\chi}{\Edb_3} \sepconj[I][k_{32}] C_2
    \utsepconj*{\simil}_{\bmerge{(\Sigma_1\otimes\Sigma_2)@\bbeta_{32}},~\bmerge{\bbeta_{32}}}
    C_3'\sub{\chi}{\Edb_3} \sepconj[I][k_{32}] C_2'
  \end{gather*}

  \paragraph{$\bullet$ Case
  $C_1 \sepconj[I][1] C_2 \realtrans{\eta} C_{32}$
  and
  $\eta=\lopapp{i}{D\hole[\vec \alpha]}$:}~\\
  If $i\in\dom{\Gamma_1}$, we have
  $C_1 \realtrans{\eta} C_3$ and
  \begin{gather*}
    C_3 = \oconf{A_1}{\Gamma_1\sepconj[I,L]\Gamma}{s_1\sepconj[L] s}{\Ed_3}
    \qquad
    C_{32} = C_3\sepconj[I][1] C_2
  \end{gather*}
  We then have
  $
    \bconf{C_1}{C_2} \realtrans{\eta,1} \bconf{C_3}{C_2}
  $.
  By \cref{lem:bsound}, $\bconf{C_1}{C_2} \mathrel{(\simil \otimes \simil)_{(\Sigma_1\otimes\Sigma_2)@\bbeta_{12},~\bbeta_{12}}} \bconf{C_1'}{C_2'}$ :
  \begin{gather*}
    \bconf{C_1'}{C_2} \trans{\eta,1} \bconf{C_3'}{C_2}
    \qquad
    \beta_e = \betais{C_3, C_3'}
    \qquad
    \bbeta_{32}=(\beta_3,\beta_2, 1)
    \\
    \bSigma_{32} = (\Sigma_1\oplus\Sigma_2)[\bbeta_{32} \mapsto (\Edb_3,\Edb_3',\bbeta_{12})]
    \qquad
    \bconf{C_3}{C_2}
    \mathrel{(\simil \otimes \simil)_{\bSigma_{32},~\bbeta_{32}}}
    \bconf{C_3'}{C_2}
  \end{gather*}
  Moreover,
  \begin{gather*}
    {C_1'}\sepconj[I][1]{C_2'} \trans{\eta} {C_3'}\sepconj[I][1]{C_2'}
    \\
    C_3 \sepconj[I][1] C_2
    \utsepconj*{\simil}_{\bmerge{\bSigma_{32}@\bbeta_{32}},~\bmerge{\bbeta_{32}}}
    C_3' \sepconj[I][1] C_2'
  \end{gather*}

  If $i\in\Gamma$ the proof is similar, with the exception that we consider transition $\eta,0$ from the pair configurations.
\qed\end{proof}

\subsection{Completeness of Up to Separation}
\label{appx:uptosep-complete}

\begin{proof}(\cref{lem:utsepconj-complete})
  Let $\bisim{R}$ be a simulation and
  $C_1 \sepconj[I][0] C_2 \simil_{\Sigma,\beta} C_1' \sepconj[I][0] C_2'$.
  We will show that $C_1 \simil_{\Sigma',\beta'} C_1'$, for some $\Sigma',\beta'$ (the proof for $C_2,C_2'$ is symmetric).
  We unfold the definition for
  $C_1 \sepconj[I][0] C_2$ and
  $C_1' \sepconj[I][0] C_2'$, considering cases.
  The case where $C_1 \sepconj[I][0] C_2 = C_1=\botconf$ is trivial.
  The remaining two cases are similar and we only show one:
  \begin{align*}
    C_1&=\oconf{A_1}{\Gamma_1\sepconj[I,L]\Gamma}{s_1\sepconj s}{\Ed}
    \\
    C_2&=\oconf{A_2}{\Gamma_2\sepconj[I,L]\Gamma}{s_2\sepconj[L] s}{\Ed}
    \\
    C_1 \sepconj[I][0] C_2 &= 
    \oconf{A_1\cup A_2}{\Gamma_1\sepconj\Gamma_2\sepconj\Gamma}{s_1\sepconj s_2\sepconj s}{\Ed}
    \\
    C_1 \sepconj[I][0] C_2 &\mathrel{{\weakrelinv}^{(n)}}
    \oconf{A_1\cup A_2}{\Gamma_1\sepconj\Gamma}{s_1\sepconj s_2\sepconj s}{\Ed} = C_3
  \end{align*}
  where ${\weakrelinv}^{(n)} =\; \weakrelinv_{i_1}\ldots\weakrelinv_{i_n}$ and $\{i_1,\ldots,i_n\}=\dom{\Gamma_2}$.
  Similarly
  \begin{align*}
    C_1'&=\oconf{A_1'}{\Gamma_1'\sepconj[I,L']\Gamma'}{s_1'\sepconj s'}{\Ed'}
    \\
    C_2'&=\oconf{A_2'}{\Gamma_2'\sepconj[I,L']\Gamma'}{s_2'\sepconj[L'] s'}{\Ed'}
    \\
    C_1' \sepconj[I][0] C_2' &= 
    \oconf{A_1'\cup A_2'}{\Gamma_1'\sepconj\Gamma_2'\sepconj\Gamma'}{s_1'\sepconj s_2'\sepconj s'}{\Ed'}
    \\
    C_1' \sepconj[I][0] C_2' &\mathrel{{\weakrelinv}^{(n)}}
    \oconf{A_1'\cup A_2'}{\Gamma_1'\sepconj\Gamma'}{s_1'\sepconj s_2'\sepconj s'}{\Ed'} = C_3'
  \end{align*}
  By \cref{lem:utweak-sound}, we have
  $C_3 \simil_{\Sigma_3,\beta_3} C_3'$, for some $\Sigma_3,\beta_3$. Moreover,
  \begin{align*}
    C_3 & \gcrelinv_{A_2,\dom{s_2}} \oconf{A_1}{\Gamma_1\sepconj\Gamma}{s_1\sepconj s}{\Ed} = C_1
    \\
    C_3' & \gcrelinv_{A_2',\dom{s_2'}} \oconf{A_1'}{\Gamma_1'\sepconj\Gamma'}{s_1'\sepconj s'}{\Ed'} = C_1'
  \end{align*}
  By \cref{lem:utgc-sound}, we have
  $C_1 \simil_{\Sigma_4,\beta_4} C_1'$, for some $\Sigma_4,\beta_4$.
\qed\end{proof}


\end{document}